\newtheorem{theorem}{Theorem}
\newtheorem{definition}{Definition}[section]
\newtheorem{claim}[definition]{Claim}
\newtheorem{lemma}[definition]{Lemma}
\newtheorem{corollary}[definition]{Corollary}
\newtheorem{remark}[definition]{Remark}
\newcommand{\sgn}{\mathrm{sign}}
\renewcommand{\sgn}{\mathrm{sgn}}
\newcommand{\bx}{\ensuremath{\mathbf{x}}\xspace}
\newcommand{\bb}{\ensuremath{\mathbf{b}}\xspace}
\newcommand{\bzero}{\ensuremath{\mathbf{0}}\xspace}
\newcommand{\poly}{\ensuremath{\textrm{poly}}}
\newcommand{\polylog}{\text{polylog}}
\newcommand{\eps}{\ensuremath{\varepsilon}\xspace}
\renewcommand{\epsilon}{\ensuremath{\eps}}
\newcommand{\EPS}{\ensuremath{\mathcal{E}}}
\newcommand{\Ex}{\ensuremath{\mathbf{E}}}
\newcommand{\Artur}[1]{\footnote{\textbf{Artur:} \textcolor[rgb]{0.00,0.07,1.00}{#1}}}
\newcommand{\Christian}[1]{\footnote{\textbf{Christian:} \textcolor[rgb]{0.00,0.07,1.00}{#1}}}
            \renewcommand{\Artur}[1]{}\renewcommand{\Christian}[1]{}
\newcommand{\junk}[1]{}
\DeclareMathOperator*{\relu}{ReLU}
\def\RR{{\mathbb{R}}}
\def\NN{{\mathbb{N}}}
\newcommand{\val}{\ensuremath{\mathrm{val}}}
\newcommand{\out}{\ensuremath{\mathrm{output}}}
\newcommand{\sset}{\ensuremath{\mathbb{S}}}
\newcommand{\PP}{\ensuremath{\mathcal{P}}\xspace}   
\newcommand{\NE}{\ensuremath{\mathfrak{N}}\xspace}  
\renewcommand{\NE}{\ensuremath{\mathcal{N}}\xspace} 
\newcommand{\DD}{\ensuremath{\mathcal{D}}\xspace}   
\newcommand{\FN}{\ensuremath{\mathfrak{F}}\xspace}  
\renewcommand{\FN}{\ensuremath{\mathbb{F}}\xspace}  
\newcommand{\OV}{\ensuremath{\mathfrak{B}}\xspace}  
\renewcommand{\OV}{\ensuremath{\mathcal{B}}\xspace} 
\newcommand{\ob}{\ensuremath{\mathfrak{b}}\xspace}  
\newcommand{\IN}{\ensuremath{\mathfrak{I}}\xspace}  
\renewcommand{\IN}{\ensuremath{\mathbb{I}}\xspace}  
\newcommand{\probp}{\ensuremath{\lambda^*}}
\renewcommand{\probp}{\ensuremath{\Xi}}
\renewcommand{\probp}{\ensuremath{\xi}}
\title{\textbf{Testing Neural Networks}}
\title{Property Testing for Networks that Compute}
\title{Property Testing for Computational Networks}
\title{\textbf{Property Testing of Computational Networks}}
\author{Artur Czumaj\thanks{Department of Computer Science and DIMAP, University of Warwick, and University of Cologne. URL: https://www.dcs.warwick.ac.uk/$\sim$czumaj/. Research partially supported by the Key Profile Area (KPA): ``Intelligent Methods
for Earth System Sciences'' at the University of Cologne and by the Centre for Discrete Mathematics and its Applications (DIMAP) at the University of Warwick.} \and Christian Sohler \thanks{Department of Mathematics and Computer Science, University of Cologne.}}
\date{}
\begin{document}


\maketitle

\begin{abstract}
In this paper we initiate the study of \emph{property testing of weighted computational networks 
viewed as computational devices}. Our goal is to design property testing algorithms that for a given computational network with oracle access to the weights of the network, accept (with probability at least $\frac23$) any network that computes a certain function (or a function with a certain property) and reject (with probability at least $\frac23$) any network that is \emph{far} from computing the function (or any function with the given property). We parameterize the notion of being far and want to reject networks that are \emph{$(\epsilon,\delta)$-far}, which means that one needs to change an $\epsilon$-fraction of the description of the network to obtain a network that computes a function that differs in at most a $\delta$-fraction of inputs from the desired function (or any function with a given property).

To exemplify our framework, we present a case study involving simple neural Boolean networks with ReLU activation function. As a highlight, we demonstrate that for such networks, any near constant function is testable in query complexity independent of the network's size. We also show that a similar result cannot be achieved in a natural generalization of the distribution-free model to our setting, and also in a related vanilla testing model.
\end{abstract}


    \thispagestyle{empty}\clearpage\pagenumbering{roman}\tableofcontents\clearpage\pagenumbering{arabic}\setcounter{page}{1}

\section{Introduction}

Since its development in the late 90s \cite{RS96,GGR98}, the area of \emph{property testing} has emerged as an important area of modern theoretical computer science. It considers the relaxation of classical decision problems by distinguishing inputs that belong to a given set (have a given property \PP) from those that are \emph{``far''} from any input that belongs to the set (are \emph{far} from having property \PP). In its most classical setting (see, e.g., survey expositions in \cite{Goldreich17,BY22}), a property \PP is a collection of objects (binary strings, graphs, etc), and being \emph{``far''} is measured by the Hamming distance, namely, in how many places of its representation should an input object be changed so as to have the property \PP. 
As a useful and powerful way of relaxation of the classical decision problems, property testing has found numerous applications and established itself as a significant topic in the study of function properties, graph properties, and beyond. Examples include testing properties of functions such as being a low degree polynomial, being monotone, depending on a specified number of attributes, testing properties of graphs such as being bipartite and being triangle-free, testing properties of strings, and testing properties of geometric objects and visual images such as being well-clustered and being convex (for a more complete picture, see, e.g., \cite{Goldreich17,BY22,CS06,CS10,RS11} and the references therein). Furthermore, property testing is in the center of recent advances in \emph{sublinear algorithms}, as property testing algorithms are frequently ``superfast'' and rely on inspecting small portions of objects and making the evaluation based on such an inspection.

In property testing one assumes an indirect access to the input and its representation that is provided by an \emph{oracle}; e.g., one can query the value of a function on a \emph{specific input}, or 
the existence of a \emph{specific edge} of a graph. The \emph{complexity of the tester} is the number of queries to the oracle.


In this paper, \emph{we study property testing in the context of complex computational devices} (e.g., \emph{computational networks}). We consider a setting in which one wants to study functions that are represented in an indirected way (e.g., as circuits or computational networks, or via complicated algebraic expressions) by computational devices used to determine the value of the functions. One may think about functions that are given \emph{implicitly}, and in order to compute $f$ one has to perform a possibly non-trivial computation on the input. For example, we may have a computational network that \emph{computes} a function $f: \RR^n \rightarrow \{0,1\}$; or
we may have~a function $f: \{0,1\}^n \rightarrow \RR^r$ defined
by a matrix $A \in \RR^{r \times n}$ as $f(x) := Ax$, or $f: \{0,1\}^n \rightarrow \{0,1\}^r$ defined as\footnote{\label{footnote-sign-ReLU}We use standard \textbf{ReLU} and \textbf{signum} functions to ensure that $f$ has range $\{0,1\}^r$; recall that $\relu(z) := \max\{z,0\}$ and $\sgn(z) := 1$ if $z > 0$, $\sgn(0) := 0$, $\sgn(z) := -1$ if $z < 0$. For vectors, the operations are for each coordinate. (Note that the use of these standard definitions make the combination $\sgn(\relu(\cdot))$ \emph{not} redundant: $\sgn(\relu(Ax)) \not\equiv \sgn(Ax)$ for matrices $A$ containing \emph{negative} values --- as it is the case studied in our paper.)} $f(x) := \sgn(\relu(Ax))$.

\junk{
In this paper, \emph{we study property testing in the context of complex computational devices} (e.g., \emph{computational networks}). We consider a setting in which one wants to study functions that are represented in a complicated way (e.g., as circuits or computational networks, or via complicated algebraic expressions) by computational devices used to determine the value of the functions. Here one may think about functions that are given \emph{implicitly}, and so in order to compute $f$ one has to perform some possibly non-trivial computation on the input. For example, we may have a computational network (or a circuit) that \emph{computes} a function $f: \RR^n \rightarrow \{0,1\}$; or
we may have~a function $f: \{0,1\}^n \rightarrow \{0,1\}^r$ defined
by a matrix $A \in \RR^{r \times n}$ as $f(x) := Ax$ or\footnote{We use standard \textbf{ReLU} and \textbf{signum} functions to ensure that $f$ has range $\{0,1\}^r$; recall that $\relu(z) := \max\{z,0\}$ and $\sgn(z) := 1$ if $z > 0$, $\sgn(0) := 0$, $\sgn(z) := -1$ if $z < 0$. For vectors, the operations are for each coordinate. (Note that the use of these standard definitions make the combination $\sgn(\relu(\cdot))$ \emph{not} redundant: $\sgn(\relu(Ax)) \not\equiv \sgn(Ax)$ for matrices $A$ containing \emph{negative} values --- as it is the case studied in our paper.)} $f(x) := \sgn(\relu(Ax))$.
}


Our goal is to determine whether a given computational device computes a function having a given property or if it is \emph{far} from any computational device computing function with a given property.
For example, if for a matrix $A \in \RR^{r \times n}$ we have
    $f(x) := Ax$,
then, informally, we would call the ``device ''
    $Ax$
\emph{far from having property \PP} if there is no matrix $A^* \in \RR^{r \times n}$ that differs from $A$ only in
a small fraction of its entries (i.e., $\|A^*-A\|_0$ is small) such that function
    $f^*(x) := A^*x$
has the property \PP.
Furthermore, to make our study more general and applicable, we will allow to enhance our definition to devices that err on a small number of inputs.\Artur{Can we justify it well? This is an important part to win the argument! I think, the only other work that studied a similar model is testing mixing property of Markov chains due to Batu et al.\ \cite{BFRSW13}, and they use a similar parameterization on both \eps and $\delta$. Maybe one could connect it somehow?}

Observe that while our model can be seen as a special case of the classical property testing framework (see, e.g., references to a similar setting in \cite[Chapter~1]{Goldreich17}), the approach presented in our paper offers a different perspective than that most frequently studied: \emph{our framework focuses on the computational device} --- and its analysis should depend on specific parameters of the device rather than on the inputs or the function $f$. We want to determine if one has to make many changes in the device's representations in order to compute some function having desirable property \PP, or only a small number of changes suffices. And so, for example, if for a matrix $A \in \RR^{r \times n}$ we define $f(x) := Ax$ or $f(x) := \sgn(\relu(Ax))$, then it is natural to define the oracle to access individual entries from $A$ (rather than to allow queries to the value of $f(x)$ for any specific input~$x$).


\begin{figure}[t]
\centerline{\includegraphics[width=.98\textwidth]{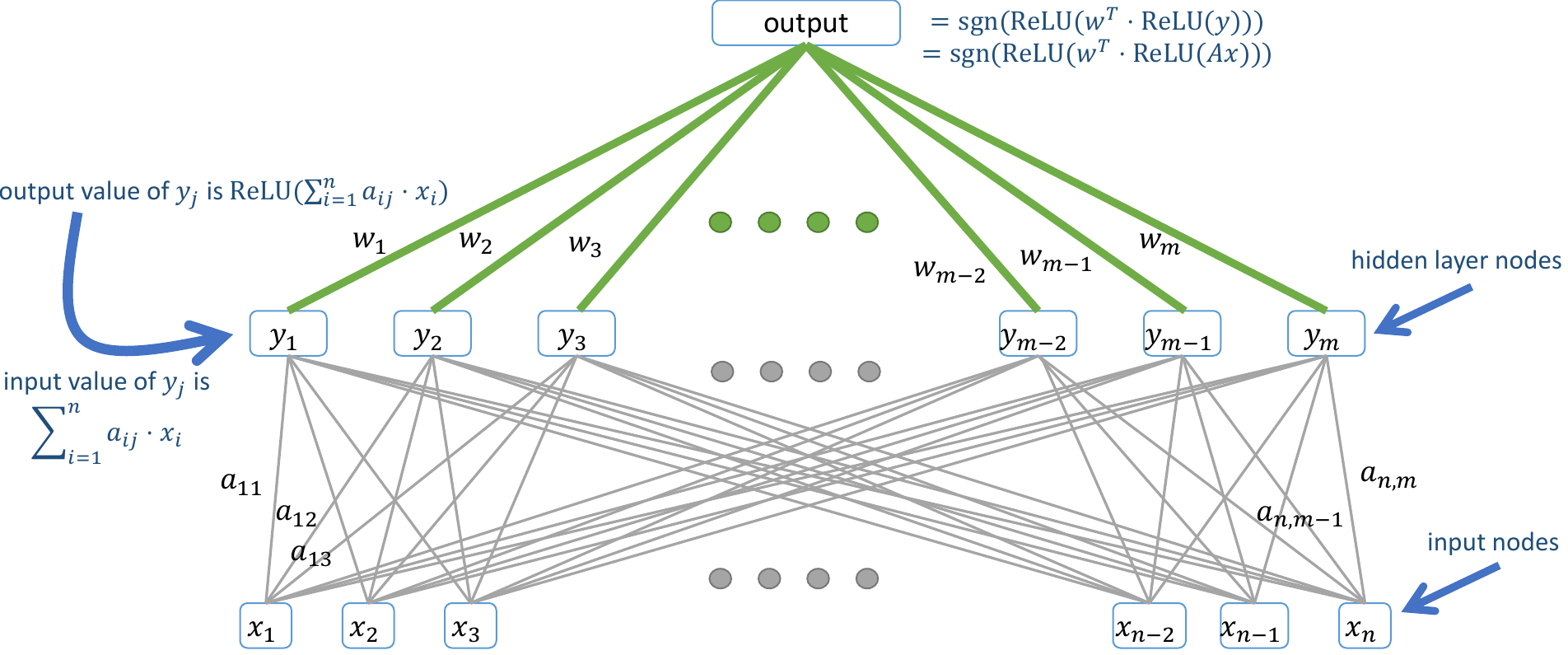}}
\caption{\small
A ReLU network with $n$ input nodes with inputs $x_1, \dots, x_n$, a single hidden layer with $m$ hidden layer nodes with values $y_1, \dots, y_m$, and one output node. Every input node $x_i \in \{0,1\}$ is connected to every hidden layer node $y_j$ by an edge with real weight $a_{ij} \in [-1,1]$ and every hidden layer node $y_j$ is connected to the output node by an edge of real weight $w_j \in [-1,1]$. The value of node $y_j$ is $\sum_{i=1}^na_{ij} x_i$, which after applying the ReLU activation function gives $\relu(\sum_{i=1}^na_{ij} x_i)$.
The Boolean function computed by the network is equal to $\sgn\left(\relu\left(\sum_{j=1}^m \left(w_j \cdot \relu(\sum_{i=1}^na_{ij} x_i)\right)\right)\right) \equiv \sgn\left(\relu(w^T \cdot \relu(Ax))\right)$.
}
\label{fig:ReLU-network-intro}
\end{figure}


\paragraph{Case study: Testing basic neural ReLU networks.}
Our setting is very open-ended, and in order to make our study more concrete, we will consider our framework for a specific example of complex computational networks. Our goal is to take a representative example of computational networks that is generic and simple, yet complex enough to go beyond the study relying on random sampling. We will study an example of the most basic yet highly non-trivial \emph{neural ReLU networks} seen as computational devices from \emph{point of view of property testing}. Consider a trained feedforward network and view it as a circuit that receives an input vector $x \in \{0,1\}^n$ and computes some output. In the most basic case of the network with one output, one hidden layer, and the ReLU activation function (see also Figure~\ref{fig:ReLU-network-intro}), when a network is interpreted as a binary classifier, for a vector $w \in [-1,1]^m$ and matrix $A \in [-1,1]^{m \times n}$ it computes a Boolean function of the form
%
\begin{align*}
    f(x) &:= \sgn\left(\relu(w^T \cdot \relu(Ax))\right) \enspace.
\end{align*}
%

\noindent
(Our setting can be naturally extended to the study of more complex neural networks (functions) with \emph{multiple outputs} and \emph{multiple hidden layers}, where for matrices $W_0, \dots, W_{\ell}$ with $W_i \in [-1,1]^{m_{i+1} \times m_i}$, we compute a function $f:\{0,1\}^{m_0} \rightarrow \{0,1\}^{m_{\ell+1}}$ defined as $f(x) := \sgn(\relu(W_{\ell} \cdot \ \dots \ \cdot \relu(W_1 \cdot \relu(W_0 \cdot \relu(x))) \dots ))$; for more details, see \Cref{def:ReLU-mol} and Figure~\ref{fig:ReLU-network-mL}.)

In order to facilitate our analysis in the property testing framework, we will consider property testing algorithms that are given the oracle access to the weights of the network\footnote{Which for networks with one output and one hidden layer, is to query for the entries of matrix $A$ and vector~$w$.} and the objective is to accept networks that compute functions that have a certain property of interest and reject networks that compute functions that are far from having that property for almost all inputs.


\paragraph{Our motivation.}
Our study is motivated by the modeling of computational networks in the framework of property testing, and to instantiate our framework in the context of basic neural networks. Taking the point of view of property testing and complexity theory, our goal is to obtain a better understanding of the relation between the (local) structure of a computational network
and the function it computes. Our
motivation stems from the fact that property testing has been a successful tool to reveal relations between the local structure of an object (here, a computational device) and its global properties that can provide combinatorial insights about the structure of an object. Examples of such insights
include the characterization of property testing for dense graphs and its relation to the regularity lemma \cite{AS08,AFNS09}, the relation of property testing of hyperfinite graphs to the theory of amenable groups \cite{E12,NS13}, or the line of research that started with testing connectivity \cite{GR02} and then studied sublinear algorithms for approximating MST cost \cite{CRT05,CEFMNRS05,CS09} and whose structural findings formed a building block in finding the currently fastest approximation algorithms for low dimensional MST approximation \cite{AM16}.

To showcase the versatility and power of our framework, we study \emph{networks that are simple}, yet whose \emph{computational properties are complex}. The reliance on \emph{non-linear ReLU operators} is not only something what makes neural deep networks so powerful, but this also makes their study challenging (observe that, say, for $f(x) := Ax$, simple random sampling of entries of matrix $A$ would provide sufficient information to understand many properties of such function; the non-linear ReLU operators make the study and the analysis more complex and interesting).
Feedforward neural networks, as those studied in the simplest form in our paper, form a fundamental part of modern deep networks \cite{LBH15} such as transformers and the attention mechanism  \cite{VSPUJGKP17}, which in turn are the main building block of large language models such as ChatGPT \cite{RNSS18}. Unfortunately, despite their relatively simple setting, we still lack a good understanding of the computation process of these networks, leading to serious questions on the trust and transparency of the underlying learning algorithms. Therefore, even though \emph{our main focus is on the complexity study}, we hope that the lens of property testing will contribute to a better understanding of the relation between the network structure and the computed function or properties of that function. Since the analysis of property testing algorithms relies on establishing close links between the object's local structure and the tested property, we believe that the study of simple yet fundamental models of ReLU networks may provide a new insight into such networks. Additionally, we observe that the basic notion of a relaxed decision problem in property testing has some appealing properties in the context of understanding neural networks: Typically, we would like a neural network to be robust to small changes in the network, as otherwise there is a higher risk of adversarial attacks \cite{SZI13}.
Furthermore, during network training,
\emph{dropout} \cite{SHKSS14} is often used to improve generalization, i.e., a random fraction of input nodes is dropped during training for each training example.


\subsection{Overview of our results}

In this paper, we \emph{introduce a property testing model for computational networks} through a study of \emph{basic neural networks}. We first focus on a simple network structure: a fully connected feedforward network with the ReLU activation function, one hidden layer, and one output node (see Figure~\ref{fig:ReLU-network-intro} and \Cref{def:ReLU}). Later, we generalize our model to a more complex setting with multiple hidden layers and multiple output nodes (see Figure~\ref{fig:ReLU-network-mL} and \Cref{def:ReLU-mol}). We view the network as an \emph{arithmetic circuit} that computes a Boolean function. Our objective is to design property testing algorithms that sample weights of the network and accept, if the network computes a function with a specific property, and reject, if the network is $(\epsilon,\delta)$-far from computing any function with the property, by which we mean that one has to change more than an $\epsilon$-fraction of the network to obtain a function that differs from the target functions on less than a $\delta$-fraction of the inputs.

We believe that one important contribution of this paper is \emph{introduction of a model that allows a non-trivial study of computational networks (e.g., neural networks) from the lens of property testing}.

Next, we provide a \emph{comprehensive study of our model}, focusing on the example of the constant $0$ and the OR-function, for which we design constant-complexity property testers. Further, we consider other natural variants of our model for which we provide super-constant lower bounds demonstrating their complexity limitations. This shows that our model introduces a useful notion of being far that leads to interesting questions and results from the algorithmic and complexity perspective. Our results extend beyond the $0$/OR-functions and we give some positive results showing the testability of some Boolean functions and classes of functions.

Our \emph{main technical contributions} are (see \cref{section:Overview} for more details):
%
\begin{enumerate}[(1)]
\item we study our new model on the problem of \emph{testing some very simple yet fundamental functions}, 
\item we demonstrate limitations of possibility of extending our results to a \emph{distribution-free model},
\item we make \emph{first steps towards understanding of constant time testable properties} in our model, 
\item we extend the model and results to \emph{more complex networks with multiple outputs and layers}.
\end{enumerate}

From the technical perspective, the main and most challenging contributions are (2) the \emph{lower bound for the distribution-free model} and (4) the \emph{extension to multiple layers and outputs}.\Artur{Sounds as overdoing: first we call (1)--(4) as main contributions, and now (2),(4) as main contributions.}

In the following, after briefly reviewing related work, we introduce our property testing model in detail. Then, in \cref{section:Overview}, we summarize our results and provide a detailed technical overview.


\subsection{Related work}
\label{subsec:related-works}


\paragraph{Property testing.}
Since its development in the late 90s \cite{RS96,GGR98}, the area of property testing has emerged as an important area of modern theoretical computer science, see survey expositions in \cite{BY22,Goldreich17,Ron08}. Although a close relationship between property testing and properties of functions, and in the context of learning has been known since the beginning of the area (these topics were explored as early as in \cite{RS96,GGR98}), \emph{the setting presented in this paper has received little attention}. The classical framework of testing properties of functions typically assumes an oracle access to the functions and/or the inputs rather than a computational device, see \cite{BY22,Goldreich17}. (For example,
most function property testers assume oracle access to the values of a given function on a specific input, and typically, graph testers assume access to the edges of the input graph
.)
Similarly, one normally studies networks in the context of their combinatorial properties rather than by analyzing ``functions'' they compute. The only work that we are aware of that considers a model similar to ours is the analysis of testing mixing property of Markov chains by Batu et al.\ \cite{BFRSW13}. And so, we are not aware of any property testing work for basic logical or arithmetic circuits (even a related testing satisfiability in \cite{AS02} uses a different setting), though some works on testing branching programs \cite{FNS04,N02} or whether a function can be represented by a small/shallow circuit/branching program \cite{DLMORSW07,FKRSS04,PRS02}, are of similar flavor. Other related works include property testing of some matrix properties (e.g., \cite{BLW19}, where testing the rank of a matrix $A$ has implications for some properties of $Ax$), but these works use a different setting, rely on linear operators, and do not apply to non-linear operators like those studied in our paper.


\paragraph{Neural networks.}
The complexity of neural networks has received significant attention in the past. In particular, threshold circuits
have been studied as a model of computation and for corresponding complexity class $\textsf{TC}^0$ we know $\textsf{AC}^0 \subsetneq \textsf{TC}^0 \subseteq \textsf{NC}^1$ \cite{V99}. The problem of verifying whether there exist weights such that an input circuit computes an input Boolean function
    (threshold circuit loading)
is known to be \textsf{NP}-complete \cite{J88}, even when the circuit is restricted to three neurons \cite{BR88}. For more details on the computational complexity of
    networks, see, e.g., \cite{O94}.

In general, feedforward networks (acyclic networks) are universal approximators when viewed as computing a continuous function, i.e., they can arbitrarily well approximate any Borel-measurable function between finite dimensional spaces provided a sufficient number of neurons (nodes) in the hidden layer exist \cite{H89}.
Furthermore, width-bounded networks (but generally with a large number of layers) are universal approximators \cite{LPWHW17}. The VC-dimension of neural networks with linear threshold gates has been studied and there is an upper bound of $O(m\log m)$ where $m$ is the number of linear threshold gates \cite{BH88}. These results imply bounds on the number of samples required to train the network. However, for our setting they do not seem to be directly relevant since one cannot evaluate a training example without querying the whole network.

There has been extensive research on the computational complexity of training constant depth ReLU networks. It is known that even for a single ReLU unit, the problem of finding weights minimizing the squared error of a given training set is hard to approximate \cite{GKMR21,MR18,DWX20} (cf.\! \cite{BDL22} for a slightly different proof/result). Further results show that even for a single neuron, the training problem parameterized by the dimension $d$ of the training data is $W[1]$-hard and there is a $n^{\Omega(d)}$ conditional lower bound assuming the exponential time hypothesis \cite{FHN22}. Studies have also explored learning linear combinations of ReLUs from Gaussian distributions (e.g., the first algorithm for a constant number of ReLU units has been recently developed \cite{CDGKM23}).


\junk{
\subsection{Organization}
\label{subsec:intro-organization}

In \cref{sec:property-testing-model}, we describe in detail our framework of property testing on an example of simple computational neural networks: networks with the ReLU activation function and one hidden layer.

\cref{section:Overview} contains \emph{technical overview} of the results of the paper.
In \cref{subsection:Overview-0-and-OR}, we discuss our first contribution that the \emph{constant $0$-function and the OR-function are testable} in our model of feedforward networks with ReLU activation function, one hidden layer, and single output. That section discusses also testing of these functions with one-sided error and compares our framework with the analysis of a vanilla testing algorithm for these problems.
Next, in \cref{subsec:distribution-free model}, we compare our model with a \emph{distribution-free model}
. While the study of the distribution-free model is very tempting, unfortunately, we show that no sublinear-time testers are possible even for the constant $0$ function.
In \cref{subsection:Overview-classification}, we study general classes of properties of ReLU networks that are testable. We demonstrate that every network is either close to compute the constant $0$-function or it is close to computing the OR-function; further, we discuss implications to testing of \emph{symmetric Boolean functions} and \emph{monotone Boolean functions}. Next, we study \emph{monotone properties} and show that every monotone property is testable with query complexity depends only logarithmically on the size of its generator.
In \cref{subsection:Overview-multiple-outputs+layers}, we extend our framework to more complex neural networks. Our main result 
extends our results from \cref{subsection:Overview-0-and-OR} 
(for ReLU networks with one hidden layer and single output) and shows that every \emph{near constant functions} is testable in feedforward networks with ReLU activation function with multiple outputs and a constant number of hidden layers.
}


\section{Property testing model}
\label{sec:property-testing-model}

We begin our study with a feedforward network with one hidden layer and ReLU activation function, see Figure~\ref{fig:ReLU-network-intro}.

The network is fully connected between the layers, and has $n$ input nodes labeled $\{1,\dots, n\}$, a hidden layer with $m$ nodes labeled $\{1,\dots,m\}$, and one output node. (We later generalize our model to multiple hidden layers and output nodes, see
    \cref{remark:1-layer->multiple-layers}.)
We associate a \emph{binary vector} $x = (x_1,\dots,x_n)$ with the input nodes, so that input node $i$ has value $x_i$. Between the input nodes and the hidden layer there is a \emph{complete bipartite graph with real edge weights} from $[-1,1]$. We define $a_{ij}$ to be the weight between input node $i$ and node $j$ of the hidden layer. The \emph{value of an input node is $x_i$}. The \emph{value computed at a hidden layer node $j$ is $\sum_{i} a_{ij} x_i$}. Thus, we can write the values at the hidden layer by a matrix multiplication $Ax$ for an $m \times n$ matrix $A$.

To the values at the hidden layer, we apply the \emph{ReLU activation function}, which is defined as $\relu(y) := \max\{0,y\}$. For a vector $y = (y_1,\dots,y_m)^T$, we define $\relu(y)$ to be $\relu(y) := (\relu(y_1), \dots, \relu(y_m))^T$. We can now write the output of the hidden layer nodes as $\relu(Ax)$.

Between the hidden layer and the output node there is a complete bipartite graph with real weights from $[-1,1]$. Let $w_j$ denote the weight of the edge connecting hidden layer node $j$ to the output node. If $z = \relu(Ax)$ is the vector of hidden layer outputs, then $w^Tz$ is the \emph{value of output node}, where $w = (w_1, \dots, w_m)^T$. Thus, on input $x$ the value at the output node is $w^T \cdot \relu(Ax)$.


\paragraph{ReLU network as a binary classifier.}

We interpret the network as a \emph{binary classifier}: the network assigns each input vector to one of two classes, defined as $0$ and $1$. If the value at the output node is at most $0$ we assign class $0$, if it is larger than $0$ we assign class~1. In this paper we focus on the setting when $x \in \{0,1\}^n$ is a binary vector. Thus, we may view the neural network as a function $f: \{0,1\}^n \rightarrow \{0,1\}$ that computes the value $1$ if the output value of the network is greater than $0$, and $0$ if the output value is at most $0$. We can write\footnote{Note that the outer $\relu$ operation is \emph{not redundant} in $\sgn(\relu(\cdot))$ because the argument $w^T \cdot \relu(Ax)$ can be negative, in which case $\sgn(\cdot)$ would return $-1 \notin \{0,1\}$; see also the discussion in footnote~\ref{footnote-sign-ReLU}.} $f(x) = \sgn(\relu(w^T \cdot \relu(Ax)))$. 
For a given input $x \in \{0,1\}^n$, we call $f(x)$ the \emph{output of the~network}.

Our goal is to understand how the local structure of the network is related to the function $f$ or to properties of that function. We will address this question by studying the function from a \emph{property testing} point of view. With the above discussion, we can make the following definition.

\begin{definition}[\textbf{ReLU network with \emph{one hidden layer} and \emph{single output}}]
\label{def:ReLU}
A ReLU network with $n$ input nodes, $m$ hidden layer nodes, and one output node is a pair $(A,w)$, where $A\in[-1,1]^{m\times n}$ and $w\in[-1,1]^m$. The binary function $f:\{0,1\}^n \rightarrow \{0,1\}$ computed by the ReLU network $(A,w)$~is
%
\begin{align*}
    f(x) &:= \sgn(\relu(w^T \cdot \relu(Ax))).
\end{align*}
%
\end{definition}


\begin{remark}\rm
\label{remark:1-layer->multiple-layers}
For simplicity of presentation, we describe here the model only for ReLU networks with one output and one hidden layer. Our study can be naturally extended to ReLU networks with multiple outputs (see \cref{sec:multiple-outputs}) and multiple hidden layers (see Sections~\ref{sec:multiple-layers}--\ref{sec:ReLU-testing-constant-function-2-sided-mL-mo}), where for matrices $W_0, \dots, W_{\ell}$ with $W_i \in [-1,1]^{m_{i+1} \times m_i}$, the network computes a function $f: \{0,1\}^{m_0} \rightarrow \{0,1\}^{m_{\ell+1}}$
\begin{align}
    f(x) &:=
    \sgn(\relu(W_{\ell} \cdot \ \dots \ \cdot \relu(W_2 \cdot \relu(W_1 \cdot \relu(W_0 \cdot \relu(x)))) \dots ))
    \enspace.
    \label{eq:def-f-mL}
\end{align}
%

\noindent
For more details, see Figure \ref{fig:ReLU-network-mL} and the 
analysis in Sections \ref{sec:multiple-outputs}--\ref{sec:ReLU-testing-constant-function-2-sided-mL-mo} (see also \cref{subsection:Overview-multiple-outputs+layers}).
\end{remark}


\subsection{Property testing: ReLU networks with \emph{one hidden layer} and \emph{single output}}
\label{subsec:property-testing-model}

We introduce a new property testing model for computational networks on an example of neural ReLU networks. We view (see Figure~\ref{fig:ReLU-network-intro}) the network as a weighted graph $G = (V, E)$, where $V = I \cup H \cup O$ with $I$ being the set of \emph{input nodes}, $H$ the set of \emph{hidden layer nodes}, and $O$ the set of \emph{output nodes} (frequently $|O| = 1$). The nodes in $I, H$ and $O$ are labelled from $1$ to $|I|, |H|$ and $|O|$, respectively. Furthermore, $E$ contains all edges between $I$ and $H$, and between $H$ and $O$. The edges between $I$ and $H$ are also referred to as \emph{first layer edges} and the edges from $H$ to $O$ as \emph{second layer edges}. We assume that we have \emph{query access to the edge weights}: we can specify nodes number $i$ in $I$ and number $j$ from $H$, and query the value $a_{ij}$ of that edge in $O(1)$ time. Similarly, we can access the entries of $w$ by specifying the corresponding hidden layer node.

\junk{
Our objective is to design (randomized) sampling algorithms that decide whether a given neural network has a certain property or is far away from the property. To achieve this goal, we must first define a distance measure of neural networks. We view the network as a weighted graph and use the (relative) edit distance. However, since we want to take into consideration that there is a different number of edges on each layer, \emph{for two networks $(A_1,w_1)$ and $(A_2,w_2)$} with $n$ input nodes, $m$ hidden layer nodes, and one output node, we \emph{define their distance} as
$
    \max \left\{ \tfrac{\|A_1-A_2\|_0}{mn}, \tfrac{\|w_1-w_2\|_0}{m}\right\},
$
where we use the standard notation $\|\cdot\|_0$ to denote the number of non-zero entries in a matrix/vector.
}

Our objective is to design (randomized) sampling algorithms that decide whether a given neural network has a certain property or is far away from the property. To achieve this goal, we must first define a distance measure of neural networks. We view the network as a weighted graph and use the (relative) edit distance. However, since we want to take into consideration that there is a different number of edges on each layer, \emph{for two networks $(A_1,w_1)$ and $(A_2,w_2)$} with $n$ input nodes, $m$ hidden layer nodes, and one output node, we \emph{define their distance} as
%
\begin{align*}
    \max \left\{ \tfrac{\|A_1-A_2\|_0}{mn}, \tfrac{\|w_1-w_2\|_0}{m}\right\} \enspace,
\end{align*}
where we use the standard notation $\|\cdot\|_0$ to denote the number of non-zero entries in a matrix/vector.

In this paper, we view neural networks as computing Boolean functions and study the question whether a neural network computes 
a function that has a certain \emph{property}. We begin with the study of networks that compute a specific function. Since there are often many networks that compute the same function, the question of whether a given network computes a certain function may be viewed as testing whether an input has a certain specific property in standard property testing.

Given a function $f:\{0,1\} \rightarrow \{0,1\}$ to be tested, a \emph{property tester} is an algorithm with query access to the weights of a network (with $n$ input nodes and $m$ hidden layer nodes) that accepts with probability at least $\frac23$ every network that computes $f$ and rejects with probability at least $\frac23$ every network that is \emph{far} from computing $f$, where we parameterize the notion of being far with parameters $\eps, \delta$. Parameter $\epsilon$ refers to the distance between the networks (defined above) and
$\delta$ refers to the distance between the computed function with respect to the uniform~distribution\footnote{While distance to a property is often parameterized using only the parameter \eps, a relaxation with another parameter has been studied before. In fact, a very similar model has been used in \cite{BFRSW13} to test if a Markov chain is mixing. Similarly to our problem, the graph/Markov chain can be seen as a computational model parameterized in terms of \eps and $\delta$, where \eps refers to the edit distance of the Markov chain and $\delta$ to the $\ell_1$-distance between the $t$-step distributions. We believe that a similar relaxation is useful in our case as the relation between the network and the function is rather complex, and we have little hope to obtain results without this second parameter~$\delta$.}.

Before we proceed with our definitions, let us recall the notion of \emph{property of Boolean functions}, which is any family $\PP = \bigcup_{n \in \NN} \PP_n$, where $\PP_n$ is a set of Boolean functions $f:\{0,1\}^n \rightarrow \{0,1\}$.

\junk{
\begin{definition}[\textbf{ReLU network being far from computing a function}]
\label{def:ReLU-farness-from-function}
Let $(A,w)$ be a ReLU network with $n$ input nodes and $m$ hidden layer nodes. $(A,w)$ is called \textbf{$(\epsilon,\delta)$-close to computing a function} $f:\{0,1\}^n \rightarrow \{0,1\}$, if one can change the matrix $A$ in at most $\epsilon nm$ places and the weight vector $w$ in at most $\epsilon m$ places to obtain a ReLU network that computes a function $g$ such that $\Pr[g(x)\ne f(x))]\le \delta$, where $x$ is chosen uniformly at random from $\{0,1\}^n$. If $(A,w)$ is not $(\epsilon,\delta)$-close to computing $f$ we say that it is \textbf{$(\epsilon,\delta)$-far from computing $f$}.
\end{definition}

We can extend this definition to properties of functions.
\begin{definition}[\textbf{Properties of Boolean functions}]
\label{def:props-of-functions}
A \textbf{property of Boolean functions} is a family $\PP = \bigcup_{n \in \NN} \PP_n$, where $\PP_n$ is a set of Boolean functions $f: \{0,1\}^n \rightarrow \{0,1\}$.
\end{definition}
}

\begin{definition}[\textbf{ReLU network being far from a property of functions}]
\label{def:ReLU-farness-from-property}
Let $(A,w)$ be a ReLU network with $n$ input nodes and $m$ hidden layer nodes. $(A,w)$ is called \textbf{$(\epsilon,\delta)$-close to computing a function $f:\{0,1\}^n \rightarrow \{0,1\}$ with property $\PP = \bigcup_{n \in \NN} \PP_n$}, if one can change the matrix $A$ in at most $\epsilon nm$ places and the weight vector $w$ in at most $\epsilon m$ places to obtain a ReLU network that computes a function $g$, such that there exists a function $f \in \PP_n$ for which $\Pr[g(x)\ne f(x)] \le \delta$, where $x$ is chosen uniformly at random from $\{0,1\}^n$. If $(A,w)$ is not $(\epsilon,\delta)$-close to computing $f$ with property \PP then we say that it is \textbf{$(\epsilon,\delta)$-far from computing $f$ with property \PP}.
%
%
\end{definition}

\noindent
If $|\PP_n| = 1$, $f \in \PP_n$, then in \cref{def:ReLU-farness-from-property} we say that a \emph{network is $(\epsilon,\delta)$-close/far from computing~$f$}.

\begin{remark}\label{footnote:not-triangle-inequality}\rm
One could argue that it is highly undesirable that the notion of $(\eps,\delta)$-distance employed in this paper does not satisfy the triangle inequality. However, our model is a natural relaxation of the model with $\delta = 0$, which is exactly the standard model of property testing with a property \PP being the set of networks computing a given function (like the 0-function) \cite{Goldreich17}. So, \emph{it is an inherent feature of property testing that the distance to a property/function often does not satisfy the triangle inequality}: Just as two graphs $G$ and $H$ being close to 3-colorable does not imply that $G$ and $H$ are close to each other, or two networks close to computing the 0-function are not necessarily close.
\end{remark}

\begin{definition}
\label{def:ReLU-testability}
A property $\PP = \bigcup_{n \in \NN} \PP_n$ of Boolean functions is \textbf{testable} (for ReLU networks) \textbf{with query complexity $q(\epsilon, \delta, n, m)$},
if for every $n, m \in \NN$, every $\epsilon\in (0,1)$, and every $\delta \in (0,1)$
there exists an algorithm that receives parameters $n, m, \epsilon, \delta$ as input and
is given query access to an arbitrary ReLU network with $n$ input nodes, $m$ hidden layer nodes and one output node, and that
\begin{itemize}
\item queries the ReLU network in at most $q(\epsilon, \delta, n ,m)$ many places,
\item accepts with probability at least $\frac23$ every network that computes a function from $\PP_n$, and
\item rejects with probability at least $\frac23$ every network that computes a function 
$(\epsilon,\delta)$-far from \PP.
\end{itemize}
If the function $q$ depends only on $\epsilon$ and $\delta$, then we just say that the property \PP is \textbf{testable}.\footnote{In property testing, traditionally (cf. \cite{BY22,Goldreich17}), the ultimate goal of the analysis is to \emph{obtain testers whose query complexity is independent of the input size}, depending only on \eps, $\delta$. Testers with query complexity $o(mn)$ are of interest, but often not as the main objective. The importance of having the query complexity independent of \eps and $\delta$ is especially amplified in our setting: if we query the ReLU network in $q(\epsilon, \delta, n, m)$ many places, then normally the tester will be computing the function on the subnetwork determined by the tested edges \emph{for all possible $2^n$ inputs}. However, the number of \emph{relevant} input bits is always upper bounded by $q(\epsilon, \delta, n, m)$ (since only input bits adjacent to the queried edges impact the function computed), and therefore if the query complexity is independent of the input size, 
then the number of inputs to be considered is at most $2^{q(\epsilon, \delta)}$, which is independent on $n$ and $m$. In other words, \emph{testers with query complexity independent of $n$ and $m$ have their running time independent of $n$ and $m$} too.}

A tester that accepts every function from \PP with probability $1$ is called \textbf{one-sided tester}.
\end{definition}

\junk{
We remark that the above notion also applies to the setting of testing whether a network computes a certain function: then, the property will simply consist of the studied function.
}


\section{Technical overview}
\label{section:Overview}


\mbox{We initiate the study of property testing of computational networks on an example of neural networks.}


\subsection{Studying variants of the model on a simple function}
\label{subsection:Overview-0-and-OR}

We begin with the study of the constant $0$-function (the same results are true for the OR function) in our basic model of feedforward \emph{networks with ReLU activation function, one hidden layer, and single output}.
We start by showing that the constant $0$-function is testable in such model. This is the case even though (see \cref{thm:NP-completeness-of-0-function} in \cref{sec:ReLU-testing-0-function-NP-completeness}) the problem of verifying if the network computes the constant $0$-function is NP-hard (note that if we had no ReLU we could simply compute $w^TA$ and check whether $w^TA$ has positive entries; so here ReLU makes the problem more difficult). Our result is summarized in the following theorem (for a proof, see \cref{sec:ReLU-testing-0-OR-function-2-sided}):

\begin{restatable}{theorem}{ZeroORFunction}
\label{thm:ReLU-testing-0-OR-function-2-sided}
Let $(A,w)$ be a ReLU network with $n$ input nodes, $m$ hidden layer nodes, and a single output. Let $\delta \ge e^{-n/16}$, $\frac{1}{m} < \epsilon < \frac{1}{2}$, and $0 < \lambda < \frac{1}{2}$.
%
\begin{enumerate}[(1)]
\item There is a tester that queries $O(\frac{\ln^2(1/\epsilon \lambda)}{\epsilon^6})$ entries from $A$ and $w$, and
\begin{inparaenum}[(i)]
\item accepts with probability at least $\ge 1-\lambda$, if the ReLU network $(A,w)$ computes the constant $0$-function, and
\item rejects with probability at least $1-\lambda$, if $(A,w)$ is $(\epsilon,\delta)$-far from computing the constant $0$-function.
\end{inparaenum}

\item There is a tester with identical properties for computing the OR-function.  \Artur{It's a shorted version; earlier we had a long 3-lines long text: \it (2) There is a tester that queries $O(\frac{\ln^2(1/\epsilon \lambda)}{\epsilon^6})$ entries from $A$ and $w$, and
\begin{inparaenum}[(i)]
\item accepts with probability at least $1-\lambda$, if the ReLU network $(A,w)$ computes the OR-function, and
\item rejects with probability at least $1-\lambda$, if $(A,w)$ is $(\epsilon,\delta)$-far from computing the OR-function.
\end{inparaenum}
}
\junk
{
\item There is a tester that queries $O(\frac{\ln^2(1/\epsilon \lambda)}{\epsilon^6})$ entries from $A$ and $w$, and
\begin{inparaenum}[(i)]
\item accepts with probability at least $1-\lambda$, if the ReLU network $(A,w)$ computes the OR-function, and
\item rejects with probability at least $1-\lambda$, if $(A,w)$ is $(\epsilon,\delta)$-far from computing the OR-function.
\end{inparaenum}
}
\end{enumerate}
\end{restatable}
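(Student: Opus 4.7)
My plan consists of three parts: two reductions that leave only the constant $0$-function to test, plus the sampling tester and its analysis.

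First, the reductions. Testing the OR-function reduces to testing the constant $1$-function: since OR and the constant $1$ agree on all inputs except $x=\mathbf{0}$, and the latter has uniform probability $2^{-n}$, which is dominated by $\delta \ge e^{-n/16}$, an $(\epsilon,\delta)$-tester for the constant $1$-function (with $\delta$ slackened by $2^{-n}$) also serves as an $(\epsilon,\delta)$-tester for OR. Testing the constant $1$-function on $(A,w)$ in turn reduces to testing the constant $0$-function on $(A,-w)$: the transformation $w\mapsto -w$ negates $w^T\relu(Ax)$ and swaps the outputs $0$ and $1$, while preserving the $\ell_0$-based distance budgets. So it suffices to give an $(\epsilon,\delta)$-tester for the constant $0$-function with the claimed query complexity.

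Second, the structural characterization that drives the tester. Call a hidden node $j$ \emph{active} if $w_j>0$ and row $j$ of $A$ has more than $\epsilon n/2$ positive entries. If at most $\epsilon m/2$ hidden nodes are active, then $(A,w)$ is $(\epsilon,0)$-close to computing the constant $0$-function, via the explicit repair: (i) set $w_j:=0$ for every active $j$ (at most $\epsilon m/2$ changes in $w$), and (ii) for each remaining $j$ with $w_j>0$, zero every positive entry of row $j$ (at most $m\cdot\epsilon n/2=\epsilon mn/2$ changes in $A$). After the repair, every positive-weight node has all non-positive row entries, so $\relu((Ax)_j)=0$ and $w^T\relu(Ax)\le 0$ for every $x$, hence the repaired network computes the constant $0$-function exactly.

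Third, the algorithm. It samples $s_1=\Theta(\log(1/\epsilon\lambda)/\epsilon^3)$ hidden node indices, queries each $w_j$, and for every sampled $j$ with $w_j>0$ samples $s_2=\Theta(\log(1/\epsilon\lambda)/\epsilon^3)$ column indices and queries the corresponding $A_{ji}$, giving total query complexity $s_1 s_2 = O(\log^2(1/\epsilon\lambda)/\epsilon^6)$. Using the empirical fraction of positive entries in the sampled row, it decides whether each sampled $j$ looks active, and rejects iff the empirical fraction of active sampled nodes exceeds a threshold of order $\epsilon$. Standard Chernoff/Hoeffding bounds, applied both to each per-row estimator (with precision $\Theta(\epsilon)$) and to the aggregate fraction (with precision $\Theta(\epsilon)$), control both directions of the analysis.

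The main obstacle is the soundness direction in the presence of cancellation: a network can compute the constant $0$-function while having many active nodes, because positive-weight contributions may be exactly cancelled by negative-weight nodes, so ``many active nodes implies far'' is not straightforward. To close this gap, I expect the full argument to exploit the $\delta$-slack by refining the statistic to a \emph{signed} aggregate such as $\frac{1}{mn}\sum_{j,i} w_j A_{ji}\mathbf{1}[A_{ji}>0]$, whose non-positivity is forced by the constant-$0$ property (since specializing to inputs $x=e_i$ gives $\sum_{j:A_{ji}>0} w_j A_{ji}\le 0$ for each $i$). A small value of this statistic should imply that the total ``uncancelled'' positive mass can be removed within the $\epsilon m$ and $\epsilon m n$ budgets, with the $\delta$-slack absorbing the low-probability residual inputs where the repair fails. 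Making this precise, together with the parameter tuning that yields the $\epsilon^{-6}$ sample size, is the delicate core of the proof.
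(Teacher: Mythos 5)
Your reduction from OR-testing to constant-$0$-testing on $(A,-w)$ is correct, and in fact coincides with the paper's own symmetry: running \textsc{AllZeroTester} on $(A,-w)$ is literally \textsc{ORTester} on $(A,w)$ with the sign of the bias flipped. Your structural claim that at most $\epsilon m/2$ active nodes implies $(\epsilon,0)$-closeness is also correct, with the repair you describe. But the tester you build on top of it has a genuine gap in \emph{both} directions, and the missing idea is not a cleverer statistic.

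The basic test (reject if the empirical fraction of active sampled nodes is large) fails completeness: with $w_j=1$ on half the hidden nodes, $w_j=-1$ on the other half, and $A_{ji}=1$ for all $i,j$, every $x$ gives $w^T\relu(Ax)=0$ by exact cancellation, so the network computes the constant $0$-function, yet every positive-weight node is active. Your refined statistic $S=\frac{1}{mn}\sum_{j,i}w_jA_{ji}\mathbf{1}[A_{ji}>0]=\frac{1}{n}\sum_i w^T\relu(Ae_i)$ does vanish there, but it only sees singleton inputs, and the ReLU lets a far network hide from it. Split the hidden nodes into $P$ ($w_j=1$) and $N$ ($w_j=-1$), $m/2$ each; set $A_{ji}=1$ for $j\in P,\ i\le n/2$ and for $j\in N,\ i>n/2$, else $A_{ji}=0$. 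Then
\[
w^T\relu(Ax)=\tfrac{m}{2}\bigl(\|x_{[1:n/2]}\|_1-\|x_{[n/2+1:n]}\|_1\bigr),
\]
which equals $mn/4$ at $x=(\mathbf{1},\mathbf{0})$, while any $\epsilon$-modification shifts the output by at most $4\epsilon mn$ uniformly over $x$ (each $A$-entry change contributes $\le 2$ and each $w$-entry change $\le 2n$, since $\relu$ is $1$-Lipschitz). For all small enough constant $\epsilon$, the output exceeds $4\epsilon mn$ on at least an $e^{-\Theta(\epsilon^2 n)}\ge e^{-n/16}$ fraction of inputs, so the network is $(\epsilon,\delta)$-far. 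Yet $S=\frac{1}{mn}\bigl[(m/2)(n/2)-(m/2)(n/2)\bigr]=0$, and a threshold test on $S$ accepts. The obstruction is intrinsic: far-from-$0$ is a statement about $\max_x w^T\relu(Ax)$ over all $2^n$ inputs, and a positively-contributing block of rows can stay perfectly balanced on every $e_i$ while dominating on a structured multi-bit $x$.

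The paper's tester is not a statistical estimator of any per-row or per-entry quantity. It samples $s=\Theta(\epsilon^{-2}\ln(1/\epsilon\lambda))$ input nodes and $t=\Theta(\epsilon^{-4}\ln(1/\epsilon\lambda))$ hidden nodes (giving $\Theta(st)=\Theta(\epsilon^{-6}\ln^2(1/\epsilon\lambda))$ queries), rescales the induced subnetwork by $nm/(st)$, subtracts a bias of $\frac{1}{16}\epsilon nm$ at the output, and then \emph{exhaustively checks all $2^s$ inputs} supported on the sampled input coordinates, rejecting iff some input makes the biased output positive. Soundness uses \cref{lemma:output} (far implies some $x_0$ with $w^T\relu(Ax_0)>\frac18\epsilon mn$) plus \cref{lemma:sampling} (the rescaled subnetwork value on that fixed $x_0$ concentrates within $\pm\frac1{16}\epsilon mn$ of the truth); completeness (\cref{lem:ReLU-testing-0-OR-function-2-sided-compute}) takes a union bound over the $2^s$ restricted inputs, which is why $t$ is chosen larger than $s$, and uses the bias to absorb the per-input sampling error. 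This exhaustive search over the subnetwork's input space is exactly the mechanism your tester lacks, and it is what handles the ReLU nonlinearity that your singleton aggregate cannot see.
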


The query complexity of our tester does not depend on $\delta$ and we only require $\delta \ge e^{-\Omega(n)}$.
\emph{Assuming that $n$ and $m$ are polynomially related, one could also remove this dependence} by introducing a factor $\polylog (1/\delta)$ in the complexity, since for $\delta < e^{-n/16}$ we can read the whole network.

Observe that the \emph{query complexity of our tester in \cref{thm:ReLU-testing-0-OR-function-2-sided} is independent of the network size}. Further, for constant \eps and $\lambda$, the query complexity of our is $O(1)$.

Our property tester for the constant $0$-function (see Algorithm~\ref{alg:AllZeroTester} in \cref{sec:ReLU-testing-0-OR-function-2-sided}) samples a subset of $\poly(1/\eps) \cdot \polylog(1/\lambda)$ input and hidden layer nodes and evaluates whether the induced network with a \emph{suitable bias} at the output node computes the constant $0$-function. While this approach seems natural, the analysis and its use of the bias is non-standard and highly non-trivial.

The first step of our analysis shows that if the network is $(\epsilon,\delta)$-far from computing the constant $0$-function then there is an input $x_0 \in \{0,1\}^n$ on which the network outputs value $\Omega(\epsilon nm)$. Then, we show that sampling a constant number of hidden layer nodes approximates the scaled output value for this particular $x_0$ with sufficient accuracy, giving the scaled output value $\Omega(\epsilon n m)$. In the next step, we show that we can sample a constant number of input nodes and approximate the value at the sampled hidden layer nodes so that the value at the output node is positive. Combining all the steps yields that if the network is $(\epsilon,\delta)$-far then our tester rejects with sufficient probability.

In order to show that the algorithm accepts a network computing the constant $0$-function, we will argue that for all inputs $x \in\{0,1\}^n$ the sampled network computes the constant $0$-function. For that, we take a different view and consider the sample from the input nodes as being fixed. We then observe that restricting our network to the sampled input nodes can also be done by restricting the input to vectors $x\in\{0,1\}^n$ that are 0 for all input nodes that do not belong to the sample. This allow us to interpret our sampling approach as only sampling from the hidden layer. Hence, we can apply a concentration bound and take a union bound over all choices of input vectors that are zero on the nodes not in the sample. We also observe that while the analysis considers $n$-dimensional input vectors we know that these are $0$ except for the sampled input nodes. Thus, in order to compute the output value we only need to query edges between sampled nodes.


\paragraph{Comparison to a vanilla testing algorithm.}
The above result may be viewed as saying that a network that is far from computing the constant $0$-function (or from computing the OR-function) is so already on most of small subnetworks. While this may not seem to be too surprising in the first place, it turns out to be a useful addition to a \emph{vanilla testing algorithm} that samples inputs and evaluates the network on these inputs. This vanilla testing algorithm is arguably the most natural and routinely used approach to test the functionality of neural networks. In fact, it corresponds to the most \emph{standard setting in property testing of functions} (see, e.g., \cite{Goldreich17,BY22}) in which \emph{the oracle allows to query the value of the network on a given input $x$}, rather than to query the network's weights.
In \cref{subsec:comparisons-to-vanilla}, we demonstrate the versatility of our property testing model and show that there are networks that \emph{can be recognized as $(\epsilon,\delta)$-far} (for constant $\epsilon$ and exponentially small $\delta$) from computing the $0$-function \emph{by our tester}, but the \emph{vanilla tester in expectation requires an exponential in $n$ number of samples} in order to find an input that does not evaluate to $0$.


\paragraph{Testing with one-sided error.}
We also consider the testability of the constant $0$-function and the OR-function with \emph{one-sided error}. In \cref{sec:ReLU-testing-0-OR-function-1-sided-upper}, we design a tester with one-sided error and query complexity of $\widetilde{O}(m/\epsilon^2)$. The one-sided error property tester samples a subset of $O(\log m/\epsilon^2)$ input nodes $S$ and rejects only if there exists an input that is zero on all nodes outside of $S$ and that evaluates to $1$ or $0$, respectively. If such an input is found, then it is a counter-example to computing the constant $0$-function (or the OR-function, respectively) and the algorithm can reject.

In \cref{sec:ReLU-testing-0-OR-function-1-sided-hardness} we give a near matching lower bound: any tester (for any constant \eps) for the constant 0-function (or for the OR-function) that has one-sided error has query complexity of $\Omega(m)$. The lower bound follows from the fact that if too few weights are known, one can always ''complete'' the network to enforce it to compute the constant $0$-function
. Thus, the tester always accepts, but there are inputs that are $(\epsilon,\delta)$-far from computing the constant $0$-function.


\paragraph{Why testing of the 0-function?}
We consider the constant 0- and the OR-functions to be representative for natural, simple, yet rich, and highly non-trivial functions in our setting. Furthermore, note that for
functions $f$ and $g$, $f \equiv g$ iff the network $\relu(f-g) + \relu(g-f)$ computes the 0 function, and therefore testing the 0-function is closely related to the question of two networks computing the same output values for all inputs. 
Other related problems that might be in reach are dictatorship and juntas (which look like essentially equivalent to 0-testing, but they are not).


\subsection{A lower bound in a \emph{distribution-free model} (or on the choice of our model)}
\label{subsec:distribution-free model}
It is natural to try to extend our framework to a \emph{distribution-free model} (see, e.g., \cite[Section~2]{GGR98}, \cite[Chapter~12.4]{Goldreich17}
), where the notion of being $(\epsilon,\delta)$-far from a property defines the farness with respect to the second parameter $\delta$ in terms of \emph{arbitrary distribution}. In the distribution-free model the distance between Boolean functions is measured with respect to an unknown distribution rather than the uniform distribution, as in our model considered above. It is also assumed that a property tester has sample access to this distribution. In our case, we will assume \emph{query access to bits of the sample} since our goal is to be sublinear in the input size $n$ or even to have a constant query complexity. While the study of this model is very tempting, unfortunately, we show a \emph{lower bound that rules out sublinear-time testers even for the constant $0$ function} (see \cref{sec:distribution-free-lower-bound}).

\begin{restatable}{theorem}{LowerBound}
\label{thm:testing-distribution-free-0}
For every constant $k>1$, every property tester for the constant $0$-function in the distribution-free model has a query complexity of $\Omega(n^{1-1/k})$.
\end{restatable}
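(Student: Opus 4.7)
The plan is to apply Yao's minimax principle: exhibit two distributions $\mathcal{H}^+$ and $\mathcal{H}^-$ over inputs (pairs consisting of a ReLU network and a sample distribution $D$ on $\{0,1\}^n$) such that every pair in the support of $\mathcal{H}^+$ is a yes-instance (the network computes the constant $0$-function) while every pair in the support of $\mathcal{H}^-$ is a no-instance (the network is $(\epsilon,\delta)$-far from computing $0$ with respect to its own $D$), and then to prove that no deterministic algorithm making fewer than $c\cdot n^{1-1/k}$ queries---to the weights of the network or to bits of samples drawn from $D$---can distinguish $\mathcal{H}^+$ from $\mathcal{H}^-$ with advantage exceeding $1/3$.

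For the construction I would use a \emph{common} sample distribution $D$ in both $\mathcal{H}^+$ and $\mathcal{H}^-$, so that bit-queries to samples carry no information about the yes/no bit. A natural choice is to build $D$ around a uniformly random secret set $T \subseteq [n]$ of cardinality $\Theta(n^{1/k})$: for instance, $D$ uniform on the standard basis vectors $\{e_i : i \in T\}$, or uniform over sparse inputs whose $1$-bits lie inside $T$. In $\mathcal{H}^+$ I take the all-zero network, which trivially computes $0$. In $\mathcal{H}^-$ I plant weights into the columns of $A$ indexed by $T$ together with the matching entries of $w$, so that on every input in the support of $D$ the network outputs $1$; the plant is engineered with enough built-in redundancy that no simultaneous change to at most $\epsilon mn$ entries of $A$ and $\epsilon m$ entries of $w$ can bring $\Pr_{x\sim D}[f(x)\neq 0]$ below $\delta$.

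The indistinguishability argument is a standard hybrid/coupling over the tester's queries. Since $D$ is the same in $\mathcal{H}^+$ and $\mathcal{H}^-$, bit-queries produce identically distributed answers; hence only weight-queries can carry information about the yes/no bit. The YES and NO networks, however, differ only on weights lying in the $T$-columns of $A$ (and on the $T$-indexed entries of $w$), i.e., on a $\Theta(n^{1/k-1})$ fraction of the weights. Processing queries one by one and conditioning on the event that no $T$-weight has been hit so far, the next query hits a $T$-weight with probability at most $O(n^{1/k-1})$, because the samples and answers observed so far restrict the posterior over $T$ only mildly (at most one element of $T$ per fully-identified sample). A union bound over $q$ adaptive queries then bounds the probability of \emph{ever} hitting a $T$-weight by $O(q\cdot n^{1/k-1})$, which is $o(1)$ whenever $q = o(n^{1-1/k})$; in that event the tester's transcript is identically distributed under $\mathcal{H}^+$ and $\mathcal{H}^-$, so it cannot decide correctly with probability $\geq 2/3$.

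The main obstacle will be verifying that the $\mathcal{H}^-$ instances are genuinely $(\epsilon,\delta)$-far from computing $0$. The plant must be redundant enough inside each secret column that the $\epsilon mn$-edit budget on $A$ cannot re-route its contribution to the output, and replicated enough across the $T$-columns that the $\epsilon m$-edit budget on $w$ cannot silence all planted hidden nodes at once---yet it must stay confined to $|T|=\Theta(n^{1/k})$ columns so that the hitting-probability argument remains tight. Balancing these constraints against $\epsilon$, $\delta$, and $k$, and carefully accounting for how much information the adaptive samples from $D$ leak about $T$ so that the per-query hit probability stays $O(n^{1/k-1})$ throughout the entire interaction, is the technical heart of the argument.
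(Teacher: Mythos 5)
There is a genuine gap, and it lives precisely in the spot you flag at the end as ``the main obstacle'': your planted NO-instances cannot be $(\epsilon,\delta)$-far from computing $0$ for any constant $\epsilon$. You take the YES network to be the all-zero network and the NO network to differ from it only on the columns of $A$ indexed by a secret set $T$ of size $\Theta(n^{1/k})$ (plus possibly some entries of $w$). But the editor is then allowed to simply zero out all first-layer weights in those columns: that costs $|T|\cdot m = \Theta(n^{1/k}m)$ changes to $A$, an $O(n^{1/k-1})$ fraction, which is $o(\epsilon)$ for $k>1$ and $n$ large. After that edit $A \equiv 0$, so $\relu(Ax)\equiv 0$ and the network computes $0$ regardless of $w$; hence the NO network was $(\epsilon,0)$-close to computing $0$ all along. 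This failure is structural, not a matter of tuning: whenever YES and NO networks differ only on an $o(\epsilon)$-fraction of weights, the NO network is $\epsilon$-close to a yes-network, and whenever they differ on an $\Omega(\epsilon)$-fraction, $O(1/\epsilon)$ uniform weight-queries detect the difference. A ``rare planted weights'' construction cannot escape this dichotomy.

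The paper's construction escapes it by making the YES and NO networks differ on a \emph{constant} fraction of the first-layer weights (all edges from input nodes to the positive half $P$ of the hidden layer), while keeping the two weight ensembles indistinguishable on any small set of queried entries. Concretely, in the YES distribution the $P$-edges are i.i.d.\ uniform $\{\pm1\}$, and in the NO distribution they are $(k-1)$-wise independent uniform $\{\pm1\}$ but carry a parity correlation inside each secret size-$k$ block of input nodes; the NO network is far from computing $0$ because, combined with the ReLU nonlinearity, this correlation forces a $\Theta(n^2)$ output value on every input in $\DD$'s support, and repairing even one such input costs $\Omega(n)$ weight changes across $\Omega(n/k)$ disjoint blocks. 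The indistinguishability argument is then that a transcript which never queries all $k$ members of a single secret block is identically distributed under YES and NO, and (after granting the tester the extra power of reading, per queried input node, its entire column of $A$ and the corresponding bit of every sample) a birthday-paradox count shows that with $o(n^{1-1/k})$ input-node queries one hits a complete block with probability $o(1)$. Your sketch has the outer Yao shell and the birthday bound in roughly the right place, but it misses the central idea that the hardness must come from undetectable $k$-wise correlations among weights that are marginally identical, not from sparsely hidden weight differences.
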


The proof of \cref{thm:testing-distribution-free-0} is very technical and elaborate, but we will sketch here the main idea of the proof for $k=2$ (where for purpose of exposition we use different constants than in the general proof). We define two distributions $\DD_1$ and $\DD_2$ on pairs of a network and an input distribution (see also Figure~\ref{fig:ReLU-network-distribution-free}) such that the networks from the first distribution are (whp.) computing the constant $0$-function and such that the pairs of network and input distribution $\DD_I$ from the second distribution are (whp.) $(\epsilon,\delta)$-far from computing the constant $0$-function with respect to $\DD_I$.

For $\DD_1$ the network will (whp.) compute the constant $0$-function, so the underlying distribution is not relevant. To construct the network we partition the hidden layer nodes into sets $N$ and $P$ such that the nodes in $N$ are connected to the output by edges of weight $-1$ and the nodes in $P$ by edges of weight $1$. In the first layer, the edges to nodes from $N$ are $1$ with probability $\frac23$ and $-1$, otherwise, and the edges to nodes from $P$ are $1$ with probability $\frac12$. Ignoring the effect of the ReLU it is not too hard to see that the network will evaluate (whp.) to $0$ on any fixed input and one can apply a union bound to show that this holds also for all inputs. As we show in the proof, this intuition still holds under the ReLU function (however, it is technically more challenging).

Networks from $\DD_2$ are similar, except that there is a random matching of input nodes and the connections to the nodes from $P$ are sampled for one node of the matching pair and copied for the other. This implies that with probability $\frac12$ the pair of nodes contributes $2$ to the hidden layer node, while the expected contribution to $N$ is $\frac89$ (here the edge weights are sampled independently and the matched pair of input nodes contributes only if both first layer edges are $1$, which happens with probability $\frac49$ and the contribution is $2$). Thus, for any matching pair the expected contribution to the output is linear in the size of the hidden layer. We now define the unknown input distribution to be the uniform distribution on the pairs of matched inputs. The network is far from computing the $0$-function since ''fixing'' one of the $\Omega(n)$ matched pairs requires $\Omega(m)$ changes in the network.

Finally, we observe that distinguishing between the two distributions is expensive as the answers to the queries are identically distributed as long as no matched pair is queried. Having access to samples from the distribution does not help as it is impossible to ''find'' input bits that are 1. Since the matching of input bits is random, we prove that one needs $\Omega(\sqrt{n})$ samples to find a matching pair. We formalize this intuition and parameterize it by replacing pairs of inputs by $k$-sets, where the first $k-1$ inputs are chosen randomly with probability $\frac12$ to be $1$ (and $-1$, otherwise) and the last one is $1$, if there is an odd number of 1s among the first $k-1$. We then modify our construction to get a lower bound for any $k$. The lower bound is one of our main technical contributions, relying on a novel controlling of the expectation of ReLU of sums of binomially distributed random~variables.


\subsection{First steps towards a classification of testable properties?}
\label{subsection:Overview-classification}

Our next goal is to try to understand \emph{which properties of ReLU networks are testable}. Our first result shows that some classes of functions,
\emph{symmetric Boolean functions} or \emph{monotone Boolean functions} have trivial testers
    (that always accept)
  
We prove this by showing that every network is close to computing the constant $0$ function or is close to the OR-function (with $\epsilon, \delta$ not too small).
\footnote{We would like to stress that if two networks are close to computing a given function, then \emph{they are not necessarily} pairwise close to each other (cf. \cref{footnote:not-triangle-inequality}). Similarly, notice that while the closeness result in \cref{theorem:0and1} may on the first glance seem to be unnatural, similar and even stronger claims are known in various scenarios in property testing. For example, in the classical dense graph model (see \cite{Goldreich17}), for $\eps > \frac1n$, every graph is \emph{both}, \eps-close to be connected and \eps-close to be disconnected (
every graph on $n$ vertices can be made connected using at most $n$ edge changes, and can be made disconnected using at most $n$ edge changes). Therefore, the fact that in our property testing framework all properties are close to computing the constant $0$-function or close to computing the OR-function, does not mean that testing of these two properties is trivial, nor it shows that the framework considered here is unnatural.}

\begin{restatable}{theorem}{Closeness}
\label{theorem:0and1}
Let $(A,w)$ be a ReLU network with $n$ input nodes, $m$ hidden layer nodes, and a single output. Let $0 < \delta \le \frac12$ and $\max\{\frac1m, 2 \sqrt{\log(2/\delta)/n}\} \le \epsilon$. Then $(A,w)$ is $(\epsilon,\delta)$-close to computing the constant $0$-function or it is $(\epsilon,\delta)$-close to computing the OR-function.

\end{restatable}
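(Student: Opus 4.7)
My plan is constructive: I will describe a modification of $(A,w)$---pushing the network toward OR when the expected output is non-negative and toward the $0$-function otherwise---and apply McDiarmid's inequality to show it satisfies the $(\epsilon,\delta)$ budget. Let $v(x):=w^T\relu(Ax)$, $h_j(x):=\sum_i a_{ij}x_i$, and take $x$ uniform on $\{0,1\}^n$. Define the per-node expected contributions $\mu_j := w_j\cdot \Ex_x[\relu(h_j(x))]$; then $\Ex_x[v(x)]=\sum_j\mu_j$, and $|\mu_j|\le n/2$ because $0\le \relu(h_j(x))\le k(x):=\sum_i x_i$ (using $a_{ij}\le 1$) and $\Ex[k(x)]=n/2$. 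The sign of $\Ex_x[v(x)]$ dictates my direction of modification.

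In the push-up case $\Ex_x[v(x)]\ge 0$, let $S\subseteq[m]$ be the $\epsilon m$ indices with smallest $\mu_j$ (nonempty since $\epsilon\ge 1/m$), and form $(A',w')$ by overwriting each row $j\in S$ of $A$ with the all-ones vector and setting $w_j:=+1$ for $j\in S$. This uses exactly $\epsilon n m$ edits in $A$ and $\epsilon m$ in $w$. Since the average of the $\epsilon m$ smallest $\mu_j$'s is at most the overall average, $\sum_{j\in S}\mu_j\le \epsilon\,\Ex_x[v(x)]$, while each modified node now contributes $\Ex_x[\relu(k(x))]=n/2$ in expectation. Thus
\[
\Ex_x[v'(x)] \;\ge\; (1-\epsilon)\,\Ex_x[v(x)] + \epsilon m n/2 \;\ge\; \epsilon m n/2.
\]
In the symmetric push-down case $\Ex_x[v(x)]<0$, I take $S$ to be the $\epsilon m$ \emph{largest} $\mu_j$'s and set $w_j=-1$ for $j\in S$; this yields $\Ex_x[v'(x)]\le -\epsilon m n/2$ analogously.

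To conclude, I view $v'(x)$ as a function of the independent Bernoulli bits $x_1,\dots,x_n$. Its bounded-differences constants satisfy $D_i\le \sum_j|w'_j||a'_{ij}|\le m$---preserved by the modification because the new entries all remain in $[-1,1]$---so $\sum_i D_i^2\le nm^2$. McDiarmid's inequality then yields, in the push-up case,
\[
\Pr_x[v'(x)\le 0] \;\le\; \exp\!\Bigl(-\tfrac{2(\epsilon mn/2)^2}{nm^2}\Bigr) \;=\; \exp(-\epsilon^2 n/2) \;\le\; (\delta/2)^2 \;\le\; \delta,
\]
where the next-to-last bound invokes $\epsilon\ge 2\sqrt{\log(2/\delta)/n}$; a symmetric bound $\Pr_x[v'(x)>0]\le\delta$ holds in the push-down case. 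Since the modified network automatically satisfies $f'(\bzero)=0=\mathrm{OR}(\bzero)$, push-up gives $\Pr_x[f'(x)\ne\mathrm{OR}(x)]\le\delta$ and hence $(\epsilon,\delta)$-closeness to OR, while push-down gives $(\epsilon,\delta)$-closeness to the $0$-function. The main delicacy I foresee is ensuring the modification does not inflate the bounded-differences constant beyond $m$ (it does not, precisely because $|w'_j|,|a'_{ij}|\le 1$ throughout), so that the McDiarmid exponent lands at exactly $\epsilon^2 n/2$---the quantity the hypothesis on $\epsilon$ was calibrated to dominate.
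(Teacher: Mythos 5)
Your push-up direction is correct and takes a genuinely cleaner route than the paper's. The paper's proof of this theorem case-splits on the number of non-positive entries of $w$: when there are at most $\epsilon m$ of them it constructs a modification that computes the OR-function (or $0$-function) \emph{exactly}, and when there are more it modifies $\epsilon m$ such entries and applies concentration. You instead always select the $\epsilon m$ indices with smallest $\mu_j := w_j\Ex_x[\relu(h_j(x))]$ and use the elementary averaging inequality $\sum_{j\in S}\mu_j\le \epsilon\,\Ex_x[v(x)]$ to get $\Ex_x[v'(x)]\ge(1-\epsilon)\Ex_x[v(x)]+\epsilon mn/2\ge\epsilon mn/2$ in one step, with no subcase split. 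Your McDiarmid application, the accounting of the bounded-differences constants $D_i\le\sum_j|w'_j||a'_{ij}|\le m$ (preserved because all new weights remain in $[-1,1]$), and the final numerical chain $\exp(-\epsilon^2 n/2)\le(\delta/2)^2\le\delta$ are all correct for the uniform distribution on $\{0,1\}^n$.

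There is, however, a genuine gap in your push-down case. You say only ``set $w_j=-1$ for $j\in S$'' without specifying what happens to the corresponding rows of $A$. As written, the new expected contribution of a modified node is $-\Ex_x[\relu(h_j(x))]$, which lies anywhere in $[-n/2,0]$ and in particular can be arbitrarily close to $0$ (e.g.\ if row $j$ of $A$ is entirely non-positive, in which case it is exactly $0$). The claimed bound $\Ex_x[v'(x)]\le -\epsilon mn/2$ then fails. The fix is to mirror the push-up modification: overwrite row $j$ of $A$ with the all-ones vector as well, so the node contributes $-k(x)$ with expectation $-n/2$; then, writing $S$ for the $\epsilon m$ \emph{largest} $\mu_j$'s, one gets $\Ex_x[v'(x)]=\sum_{j\notin S}\mu_j-\epsilon mn/2\le(1-\epsilon)\Ex_x[v(x)]-\epsilon mn/2\le-\epsilon mn/2$. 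Do note that setting row $j$ of $A$ to all $-1$'s (which might look like the ``symmetric'' move) would \emph{not} work: then $h_j(x)=-k(x)\le 0$, so $\relu(h_j(x))\equiv 0$ and the node's contribution is identically zero, which is not negative enough.
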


The proof of \cref{theorem:0and1} (see \cref{subsec:everything-close}) first argues that the value at the output (before we apply $\sgn$ and ReLU) typically deviates from its expectation by at most $O(\sqrt{nm})$. Then we show that we can modify the network in an $\epsilon$-fraction of its inputs and increase (or decrease) the expected value at the output bit by more than $\frac12 \epsilon mn$. These two facts can be shown to yield \cref{theorem:0and1}.

Note that the property of monotone functions includes both the 
$0$-function \emph{and} the OR-function. Thus an immediate consequence of \cref{theorem:0and1} is that \emph{monotonicity has a trivial tester} that always accepts if $\epsilon \ge \max\{\frac1m, 2 \sqrt{\log(2/\delta)/n}\}$, and reads the whole input if $\epsilon < \max\{\frac1m, 2 \sqrt{\log(2/\delta)/n}\}$. When $n$ and $m$ are polynomially related, the query complexity of the tester is $\poly(1/\epsilon) \cdot \polylog (1/\delta)$.

\medskip

\Artur{Does the title of \cref{subsection:Overview-classification} fits well the discussion about monotone properties and \cref{thm:monotone-properties-general}?}%
Finally, we study \emph{monotone properties} (which are different objects than monotone functions). A property \PP is \emph{monotone}, if it is closed under flipping bits in the truth table of any function having the property \PP from $0$ to $1$. One can think of monotone properties as being defined by the set of minimal functions under the operation of flipping zeros to ones. We call such a set a \emph{generator}.

We show that every monotone property is testable with query complexity logarithmically on the size of the generator (we assume that the algorithm gets the generator as input; this assumption is justified as this depends on the property only).
In \cref{sec:monotone-properties} we prove the following result.

\begin{restatable}{theorem}{MonotoneFull}
\label{thm:monotone-properties-general}
Let $\frac1m < \epsilon < 1$, $e^{-n/16} \le \delta \le 1$. Let $\PP = \bigcup_n \PP_n$ be an $f(n)$-generatible monotone property with generator $G = \bigcup_n G_n$ with $|G_n| \le f(n)$ for all $n$. There is a property tester with query access to a ReLU network \NE with $n$ input nodes and $m$ hidden layer nodes that
\begin{inparaenum}[(i)]
\item accepts with probability at least $1-\lambda$, if \NE computes a function in \PP,
\item rejects with probability at least $1-\lambda$, if \NE is $(\epsilon,\delta)$-far from computing a function in \PP.
\end{inparaenum}
The tester has query complexity $O(\ln(\frac{f(n)}{\delta\epsilon\lambda})/(\delta \epsilon^4))$.
\end{restatable}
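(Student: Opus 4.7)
The plan is a reduction: I would decompose the monotone property $\PP$ into its generators and, for each $h \in G_n$, run a per-generator sub-tester that checks whether the network is close to a function $\ge h$, sharing a single weight sample across all sub-testers. Because $\PP$ is monotone with generator $G_n$, we have $f \in \PP_n$ iff $f(x) \ge h(x)$ pointwise for some $h \in G_n$; and for any candidate modified network function $g$, the nearest function dominating $h$ is $g \vee h$, at distance $\Pr_x[h(x) = 1 \wedge g(x) = 0]$. Hence the network is $(\eps, \delta)$-close to $\PP$ iff there exist $h \in G_n$ and a modification (changing at most $\eps m n$ entries of $A$ and $\eps m$ entries of $w$) producing network function $g$ with $\Pr_x[h(x) = 1 \wedge g(x) = 0] \le \delta$, reducing testing $\PP$ to testing, for each $h$, the property ``close to dominating $h$'' and taking OR.

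For each $h \in G_n$ I would build a sub-tester $T_h$ by mirroring the 0-function tester of \cref{thm:ReLU-testing-0-OR-function-2-sided}, swapping ``output $0$ everywhere'' for ``output $1$ everywhere $h(x)=1$''. All sub-testers share a single sample: a set $S$ of hidden-layer nodes and a set $T$ of input nodes, sized so that $|S|\cdot |T| = O(\ln(f(n)/(\delta \eps \lambda))/(\delta \eps^4))$, together with the induced weights and biases. $T_h$ iterates over the restricted inputs $\tilde x$ supported on $T$, checks whether $h(\tilde x) = 1$, and compares the induced subnetwork's bias-corrected output to zero; it accepts iff at most a $\delta$-like fraction of the restricted inputs $\tilde x$ with $h(\tilde x) = 1$ yield a non-positive output. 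Correctness of $T_h$ rests on a structural lemma analogous to the 0-function one: if the network is $(\eps, \delta)$-far from dominating $h$, then an $\Omega(\delta)$-fraction of inputs $x$ with $h(x) = 1$ have pre-activation output $\le -\Omega(\eps m n)$, proved by contrapositive: if too few such inputs had strongly negative outputs, a budget of $\eps m n$ first-layer and $\eps m$ second-layer changes could lift each remaining non-dominated output above zero (each repurposed hidden node contributing $\Theta(n)$ to the output on a chosen input), contradicting farness. Hoeffding-style concentration of the subnetwork approximation, uniform over the restricted inputs, then shows $T_h$ distinguishes the two cases.

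The algorithm runs every $T_h$ on the shared sample and accepts iff some $T_h$ accepts; amplifying each $T_h$ to failure probability $\le \lambda/f(n)$ (a $\ln f(n)$ factor in sample size) and union bounding over $|G_n| \le f(n)$ generators gives total failure $\le \lambda$ on both sides. Since the sample is shared, the overall query complexity equals that of a single sub-tester, yielding $O(\ln(f(n)/(\delta \eps \lambda))/(\delta \eps^4))$. The main obstacle will be the structural lemma itself: while the 0-function tester concerns absolute magnitudes of outputs on \emph{some} input, here one must argue conditionally on $h(x) = 1$ and show that an $\Omega(\delta)$-fraction of such inputs have strongly negative pre-activation outputs, with a single $\eps m n + \eps m$ modification budget flipping many inputs simultaneously rather than one at a time; coupled with the need for the subnetwork approximation to be accurate uniformly over all restricted inputs \emph{and} all generators, this conditional counting argument will drive the bulk of the technical work.
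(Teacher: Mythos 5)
You have the high-level framework right: decompose $\PP$ into the per-generator sub-properties, run one sub-tester per element of $G_n$ on a shared sample, amplify each to failure $\lambda/f(n)$ and union bound over $|G_n| \le f(n)$. Your characterization of the monotone closure ($f \in \PP_n$ iff $f \ge h$ pointwise for some $h \in G_n$), your observation that the nearest dominating function to a candidate $g$ is $g \vee h$ at distance $\Pr_x[h(x)=1 \wedge g(x)=0]$, and your sketch of the structural lemma (one $\epsilon m n + \epsilon m$ modification budget lifts all-but-a-few outputs, so a far network must have an $\Omega(\delta)$-mass of strongly negative outputs among inputs with $h(x)=1$) all line up with the paper's Lemma 8.7 and the reduction in Section 8.2.

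The gap is in the sub-tester $T_h$ itself. You propose to mimic the $0$-function tester by enumerating the inputs $\tilde x$ supported on the sampled set $T$ and counting which fraction of them with $h(\tilde x)=1$ yield a non-positive estimated output. This cannot work. First, such restricted inputs have at most $|T|$ ones, so they are wildly unrepresentative of the uniform distribution you must estimate a $\delta$-fraction under. Second, and fatally, $h$ is an arbitrary Boolean function (the minimal elements of a monotone property need not be monotone functions), so there is no reason for $h(\tilde x)$ to equal $1$ on any restricted input even when $h(x)=1$ on an $\Omega(\delta)$-fraction of genuine inputs: if $h$ is any threshold-type function demanding more than $|T|$ ones, $h(\tilde x)=0$ for every $\tilde x$ supported on $T$, so $T_h$ vacuously accepts every network, including networks $(\epsilon,\delta)$-far from the property generated by $h$. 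The enumeration-over-restricted-inputs trick works for the constant $0$-function precisely because the bad event there is a \emph{single witness} whose restriction still has a large estimated value; here the bad event is a $\delta$-fraction under the uniform distribution, which you cannot detect by scanning the $2^{|T|}$ restricted inputs.

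The paper's sub-tester is different: it draws $r = O(\ln(f(n)/\lambda)/\delta)$ \emph{random full inputs} $x \in \{0,1\}^n$ (generated internally, at no query cost), evaluates $h(x)$ on the full $x$, and uses the weight sample only to compute the estimate $\frac{nm}{st}\,w^T\relu(TASx)$ for each such $x$; it rejects if some sampled $x$ has $h(x)=1$ and estimate below $-\tfrac18\epsilon nm$. The $1/\delta$ in the stated query bound reflects these $r$ random inputs (via the union bound driving the sample sizes $s,t$); it has nothing to do with the weight-sample sizes. Your structural-lemma analysis is otherwise in the right spirit; it is the input-sampling step you dropped when you replaced ``draw random full inputs and restrict them'' by ``enumerate inputs already supported on $T$.''
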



\subsection{Extension to multiple outputs and multiple hidden layers}
\label{subsection:Overview-multiple-outputs+layers}

We can extend our framework to more complex neural networks and demonstrate that our methodology can be applied to a natural generalization of our setting allowing \emph{multiple outputs} and \emph{multiple hidden layers}. We introduce this model in Sections~\ref{sec:multiple-outputs}--\ref{sec:ReLU-testing-constant-function-2-sided-mL-mo} and show the applicability of our techniques by extending the approach for testing the constant $0$-function (and the OR-function) to
testing near constant functions in ReLU networks with multiple hidden layers and outputs: testing for $f(x) = \sgn(\relu(W_{\ell} \cdot \ \dots \ \cdot \relu(W_1 \cdot \relu(W_0 \cdot \relu(x))) \dots ))$, where $W_i \in [-1,1]^{m_{i+1} \times m_i}$.

We begin with the analysis of ReLU networks with multiple outputs and \emph{one hidden layer}. Before we proceed, let us introduce the notion of \emph{near constant functions} with multiple outputs\footnote{We need this notion to exclude trivial cases; notice that since for any ReLU network $f(\bzero) = \bzero$ (and so there is no ReLU network with $\forall_x f(x) = 1$), we consider only \emph{near constant functions}; see also \cref{def:constant-functions-mo} and \cref{remark:constant-functions-and-ReLU}.}, which are constant functions on every input $x$ except $x = \bzero$. We first present a general reduction to single layer networks. The following theorem, central for our analysis, demonstrates that if a ReLU network with $r$ outputs is $(\epsilon, \delta)$-far from computing near constant function \bb, then for at least an $\frac{\epsilon}{2}$-fraction of the output bits $i$, the ReLU network restricted to the output node $i$ is $(\epsilon',\delta')$-far from computing an appropriate near constant function, where $\delta' \sim \delta/r$ and $\epsilon' = \epsilon^2/1025$.

\begin{restatable}{theorem}{TestingConstantFunctionReduction}
\label{thm:reduction-constant-ReLU-mo}
Let \NE be a ReLU network $(A,W)$ with $n$ input nodes, $m$ hidden layer nodes, and $r$ output nodes.
Let $e^{-n/16} \le \delta < 1$ and $16 \cdot \sqrt{\frac{\ln(2m)}{m}} \le \epsilon < \frac12$. Let $\bb = (\ob_1, \dots, \ob_r) \in \{0,1\}^r$. If \NE is $(\epsilon, \delta)$-far from computing a near constant function \bb then there are more than $\frac12 \epsilon r$ output nodes such that for any such output node $i$, the ReLU network \NE restricted to the output node $i$ is $(\frac{\epsilon^2}{1025}, \frac{\delta - e^{-n/16}}{r})$-far from computing near constant function $\ob_i$.
\end{restatable}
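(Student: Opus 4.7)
The plan is to argue the contrapositive. Suppose at most $\tfrac12 \epsilon r$ output nodes $i$ are such that the restriction of \NE to output $i$ is $(\epsilon', \delta')$-far from computing $\ob_i$, where $\epsilon' := \epsilon^2/1025$ and $\delta' := (\delta - e^{-n/16})/r$. Let $B$ be this ``bad'' set and $G = \{1,\dots,r\} \setminus B$ the ``good'' set, so $|B| \le \tfrac12 \epsilon r$ and $|G| \ge (1 - \tfrac12\epsilon)r$. For each $i \in G$, closeness provides a witness $(\tilde A^{(i)}, \tilde w^{(i)})$ with $\|\tilde A^{(i)}-A\|_0 \le \epsilon' mn$, $\|\tilde w^{(i)}-w_i\|_0 \le \epsilon' m$, whose $i$-th output matches $\ob_i$ on at least a $1-\delta'$ fraction of inputs in $\{0,1\}^n$. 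I would combine these witnesses, together with a direct treatment of the bad outputs, into a single network $(\tilde A, \tilde W)$ at edit distance at most $\epsilon$ from $(A,W)$ whose computation agrees with $\bb$ on at least a $1-\delta$ fraction of inputs, contradicting the assumption that \NE is $(\epsilon, \delta)$-far from $\bb$.

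The construction proceeds in two parts. For each $i \in B$, I would overwrite row $w_i$ of the second layer (and, if $\ob_i = 1$, a few first-layer entries via an OR-type gadget as in the analysis of \cref{thm:ReLU-testing-0-OR-function-2-sided}) so that output $i$ equals $\ob_i$ on every $x \ne \bzero$; when $\ob_i = 0$ it suffices to zero out the positive entries of $w_i$. This stage costs at most $|B| \cdot m \le \tfrac12 \epsilon rm$ edits to $W$ and only a bounded number of edits to $A$. For each $i \in G$, I would plug the witness row $\tilde w^{(i)}$ into row $i$ of $W$, contributing at most $|G| \cdot \epsilon' m$ additional $W$-edits. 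The total $W$-budget is then at most $\tfrac12 \epsilon rm + |G| \epsilon' m \le \epsilon rm$, since $\epsilon' = \epsilon^2/1025$. A union bound over $G$ together with $\Pr[x=\bzero] = 2^{-n} \le e^{-n/16}$ yields
$$
\Pr_x[\tilde\NE(x) \ne \bb] \;\le\; |G|\delta' + 2^{-n} \;\le\; (\delta - e^{-n/16}) + e^{-n/16} \;=\; \delta,
$$
as required.

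The hard part, and the place where the quadratic gap $\epsilon' \sim \epsilon^2$ is earned, is the choice of $\tilde A$. Naively taking $\tilde A$ to be simultaneously consistent with every witness gives up to $|G| \cdot \epsilon' mn \le r \epsilon^2 mn / 1025$ first-layer edits, which exceeds the budget $\epsilon mn$ whenever $r \gg 1/\epsilon$. My plan is a \emph{shared-modification} argument: select a single $\tilde A$ with $\|\tilde A - A\|_0 \le \epsilon mn$ such that for all but at most a $\tfrac12 \epsilon$-fraction of $G$, the hybrid network $(\tilde A, \tilde w^{(i)})$ still agrees with $\ob_i$ on at least a $1-\delta'$ fraction of inputs; the remaining good outputs get reclassified into $B$ without blowing the $W$-budget, thanks to the $1/1025$ slack. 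A natural way to construct such $\tilde A$ is probabilistically, e.g., by putting each candidate edit of $A$ into $\tilde A$ with a probability proportional to the fraction of witnesses that ``vote'' for it, and controlling the resulting per-output error via the same ReLU perturbation bounds that drive the 0-function analysis. Making this quantitative --- in particular showing that reverting the $\epsilon' mn$ first-layer changes of a typical witness costs at most $O(\delta')$ additional single-output error --- is where I expect the main technical effort to lie.
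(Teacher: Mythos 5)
You correctly identify the central obstruction --- naively unioning the per-output first-layer witnesses spends up to $|G| \cdot \epsilon' m n$ edits on $A$, which exceeds the $\epsilon m n$ budget once $r \gg 1/\epsilon$ --- but the fix you sketch is not filled in and, as sketched, would not work. The fundamental problem is that the witness pair $(\tilde A^{(i)}, \tilde w^{(i)})$ certifies closeness only \emph{as a pair}: once you swap in a different first-layer matrix $\tilde A \ne \tilde A^{(i)}$ you lose all control over the \emph{sign} of the hybrid output $\tilde w^{(i)T}\relu(\tilde A x)$, which is all that survives $\sgn(\relu(\cdot))$. For instance, if $\ob_i=0$ and you take $\tilde A = A$, reverting the witness's first-layer edits can only be guaranteed to leave the output value $\le 2\epsilon' m n$ --- bounded, but possibly positive, hence the wrong bit. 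A probabilistic choice of which witness edits to keep does not resolve this: the edits are real-valued and need not coincide across witnesses, so there is no well-defined ``vote,'' and the one-sidedness of ReLU means partial reversions do not cancel. What is missing is a mechanism that actively pushes each good output to the correct side of zero, and that mechanism is the entire content of the theorem.

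The paper never touches the witness networks. It uses the closeness hypothesis only to extract an a priori bound: if the restriction to output $i$ is $(\epsilon',\delta')$-close to $\ob_i$, then for a $1-\delta'$ fraction of inputs the raw output value $W_i^T \relu(A x)$ lies within $\pm\bigl(2\epsilon' mn + 2n\,\epsilon' m\bigr) < \tfrac{1}{256}\epsilon^2 nm$, because undoing the witness's edits cannot move it farther. It then constructs a \emph{single uniform} modification of $A$, independent of any witness: by the probabilistic method (this is where $\epsilon \ge 16\sqrt{\ln(2m)/m}$ is used) there exist disjoint sets $R_0, R_1$ of $\tfrac18\epsilon m$ hidden nodes each such that every good output with more than $\tfrac14\epsilon m$ positively (resp.\ non-positively) weighted second-layer edges has at least $\tfrac{1}{64}\epsilon^2 m$ of them landing in $R_0$ (resp.\ $R_1$). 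Setting all $\tfrac14\epsilon m n$ first-layer edges into $R_0\cup R_1$ to $+1$ forces each such hidden node to have value $\|x\|_1 \ge \tfrac14 n$ for most $x$, and a small batch of second-layer edits between $R_0\cup R_1$ and the outputs then shifts every good output by at least $\tfrac{1}{256}\epsilon^2 nm$ in the correct direction --- just enough to overwhelm the a priori bound. This simultaneous-overwhelming argument is what keeps the $A$-budget at $\epsilon mn$ rather than $r\epsilon' mn$, and it is also where the calibration $\epsilon' = \epsilon^2/1025$ is actually earned. Your bookkeeping for the bad set $B$, the $W$-budget, and the final union bound are fine, but they are not where the difficulty lies.
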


One can then combine \cref{thm:ReLU-testing-0-OR-function-2-sided} with \cref{thm:reduction-constant-ReLU-mo} to immediately obtain the following theorem for testing near constant functions in ReLU networks with \emph{multiple outputs and one hidden layer}.

\begin{restatable}{theorem}{TestingConstantFunctionShl}
\label{ReLU-testing-constant-function-2-sided-mo}
Let \NE be a ReLU network $(A,W)$ with $n$ input nodes, $m$ hidden layer nodes, and $r$ output nodes. Let $\bb \in \{0,1\}^r$, $(r+1) e^{-n/16} \le \delta < 1$, and $c \cdot \sqrt{\ln(2m)/m} < \epsilon < \frac12$ for a sufficiently large constant $c$. There is a property tester that queries $O(\frac{\ln^2(1/\epsilon \lambda)}{\epsilon^{13}})$ entries from $A$ and $W$ and
\begin{itemize}
\item accepts with probability at least $\frac23$, if \NE computes near constant function \bb, and
\item rejects with probability at least $\frac23$, if \NE is $(\epsilon,\delta)$-far from computing near constant function \bb.
\end{itemize}
\end{restatable}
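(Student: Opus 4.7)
The plan is to reduce testing near constant functions on multiple outputs to testing the constant $0$-function or OR-function on single outputs, using \cref{thm:reduction-constant-ReLU-mo} as the soundness amplifier and \cref{thm:ReLU-testing-0-OR-function-2-sided} as the base tester. Concretely, I would design the tester as follows: sample a set $S$ of $\Theta(1/\epsilon)$ output nodes uniformly at random (with repetition, for convenience), and for each $i \in S$, run the single-output tester from \cref{thm:ReLU-testing-0-OR-function-2-sided} on the ReLU network \NE restricted to output node $i$. The flavor of the sub-tester depends on $\ob_i$: if $\ob_i = 0$ we invoke the constant $0$-function tester, and if $\ob_i = 1$ we invoke the OR-function tester (because, on nonzero inputs, being near constant $1$ at output $i$ is exactly the OR function). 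Accept iff every invoked sub-tester accepts.

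For the sub-testers we run with parameters $\epsilon' := \epsilon^2/1025$, $\delta' := (\delta - e^{-n/16})/r$, and error parameter $\lambda' := \Theta(\epsilon \lambda)$ (so that a union bound over $O(1/\epsilon)$ sub-tests still gives overall error at most $\lambda$). Before this makes sense I need to check that the preconditions of \cref{thm:ReLU-testing-0-OR-function-2-sided} are satisfied at the sub-tester: the condition $\delta' \ge e^{-n/16}$ is guaranteed by the assumption $\delta \ge (r+1)e^{-n/16}$, and the condition $1/m < \epsilon' < 1/2$ follows from $\epsilon > c \sqrt{\ln(2m)/m}$ for $c$ large enough (this gives $\epsilon^2/1025 > 1/m$). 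The same inequality on $\epsilon$ also meets the hypothesis $16\sqrt{\ln(2m)/m} \le \epsilon$ required by \cref{thm:reduction-constant-ReLU-mo}.

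Completeness is the easy direction: if \NE computes near constant $\bb$, then each restricted single-output network computes the constant $\ob_i$-function (the constant $0$-function or, on nonzero inputs, exactly the OR-function), so each sub-tester accepts with probability $\ge 1 - \lambda'$ and a union bound over the $O(1/\epsilon)$ sub-tests gives overall acceptance probability $\ge 1 - \lambda$. Soundness is where the structural input from \cref{thm:reduction-constant-ReLU-mo} is used: if \NE is $(\epsilon,\delta)$-far from near constant \bb, more than an $\epsilon/2$-fraction of output nodes give a restricted network that is $(\epsilon', \delta')$-far from its target $\ob_i$-function. Sampling $\Theta(1/\epsilon)$ output nodes therefore hits at least one such ``witness'' node with probability $\ge 1 - \lambda/2$; on that node, the sub-tester rejects with probability $\ge 1 - \lambda'$, so the overall tester rejects with probability $\ge 2/3$ (in fact $\ge 1 - \lambda$).

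For the query complexity, each sub-tester makes $O(\ln^2(1/\epsilon' \lambda')/\epsilon'^6) = O(\ln^2(1/(\epsilon\lambda))/\epsilon^{12})$ queries after substituting $\epsilon' = \epsilon^2/1025$ and absorbing logarithmic factors, and we run $O(1/\epsilon)$ of them, yielding the claimed $O(\ln^2(1/\epsilon\lambda)/\epsilon^{13})$. The only genuine obstacle is verifying that \cref{thm:reduction-constant-ReLU-mo} is the right hammer: it is stated so that the soundness loss in $\epsilon$ (from $\epsilon$ to $\epsilon^2/1025$) and in $\delta$ (from $\delta$ to $(\delta - e^{-n/16})/r$) lines up exactly with the parameter regime where \cref{thm:ReLU-testing-0-OR-function-2-sided} still applies; the assumption $\delta \ge (r+1) e^{-n/16}$ in the statement is precisely what makes the $\delta'$ used in the sub-tester valid, so modulo that bookkeeping the reduction is essentially black-box.
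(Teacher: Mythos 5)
Your proposal matches the paper's own proof essentially step for step: the key lemma is \cref{thm:reduction-constant-ReLU-mo}, the base tester is \cref{thm:ReLU-testing-0-OR-function-2-sided}, you sample $\Theta(1/\epsilon)$ output nodes and run the appropriate one-output sub-tester (\textsc{AllZeroTester} or \textsc{ORTester}) on each restriction, and the parameter substitutions $\epsilon' = \epsilon^2/1025$, $\delta' = (\delta - e^{-n/16})/r$, $\lambda' = \Theta(\epsilon\lambda)$ give exactly the stated $O(\epsilon^{-13}\ln^2(1/\epsilon\lambda))$ query bound. The one place you are actually more careful than the paper is the aggregation rule: the paper says ``outputting the majority outcome,'' but a bare majority would fail soundness since only $\Theta(1)$ of the $\Theta(1/\epsilon)$ sampled nodes are expected to be witnesses; your ``reject if any sub-test rejects'' (with $\lambda'$ small enough for a union bound on the completeness side) is the correct reading.
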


Observe that the \emph{query complexity of our tester in \cref{ReLU-testing-constant-function-2-sided-mo} is independent of the network size}. Further, for constant \eps and $\lambda$, the query complexity is $O(1)$, and similarly as in \cref{thm:ReLU-testing-0-OR-function-2-sided}, one could remove the dependency for $\delta$ by introducing a small factor term in the query complexity.

\medskip

Our analysis so far has studied the model of ReLU networks with a single hidden layer. We can extend our analysis of this setting to allow \emph{multiple hidden layers}. A simplified claim can be stated in the following theorem.

\begin{restatable}{theorem}{MultipleLayers}
\label{thm:ReLU-testing-0-function-2-sided-mL-n}
Let $0 < \lambda < \frac{1}{\ell+1}$, $\frac{1}{n} < \epsilon < \frac12$, and $\delta \ge e^{-n/16} + e^{- (\epsilon/2)^{\ell} n /(342 (\ell+1)^2)}$.
Let \NE be a ReLU network $(W_0, \dots, W_{\ell})$ with $n$ input nodes, $\ell \ge 1$ hidden layers with $n$ hidden layer nodes each, and a single output node. Further, suppose that $n \ge 171 \cdot (\ell+1)^2 \cdot (2/\epsilon)^{2\ell} \cdot (\ln(2/\delta) + \ell \ln n)$.
Assuming that~$\ell$ is constant, there is a tester 
that queries $\Theta(\epsilon^{-8\ell} \cdot \ln^4(1/\lambda\epsilon))$ entries from $W_0, \dots, W_{\ell}$, and
%
\begin{enumerate}[(i)]
\item rejects with probability at least $1 - \lambda$, if \NE is $(\epsilon,\delta)$-far from computing the constant $0$-function,
\item accepts with probability at least $1 - \lambda$, if \NE computes the constant $0$-function.
\end{enumerate}
\end{restatable}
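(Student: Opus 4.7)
The plan is to proceed by induction on the depth $\ell$, using Theorem \ref{thm:ReLU-testing-0-OR-function-2-sided} as the base case $\ell=1$. The tester itself implements \emph{layered sampling}: for every layer $i \in \{0,1,\dots,\ell\}$, independently sample a uniform subset $S_i$ of $s = \Theta(\epsilon^{-4\ell}\ln(\ell/(\lambda\epsilon)))$ nodes from that layer's vertices; then query all edges of $W_i$ having one endpoint in $S_i$ and the other in $S_{i+1}$, obtaining $O(s^2)$ weights per matrix and $O(\ell s^2) = \Theta(\epsilon^{-8\ell}\ln^4)$ weights in total. The tester then evaluates the induced sub-network, augmented with a small positive output bias $\beta_\ell$, on every input $x \in \{0,1\}^n$ that is zero outside $S_0$, and rejects iff some such restricted input drives the biased output above zero. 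The per-layer sample size $s$ must scale with $\epsilon^{-4\ell}$ precisely to absorb the error accumulation through the $\ell$ nested $\relu$ operators, which is what produces the stated $\epsilon^{-8\ell}$ complexity.

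The soundness direction rests on the following structural lemma, which I would prove by induction on $\ell$: \emph{if $\NE=(W_0,\dots,W_\ell)$ is $(\epsilon,\delta)$-far from the constant $0$-function, then there exists $x^\star \in \{0,1\}^n$ whose pre-final-activation value $W_\ell \cdot \relu(W_{\ell-1}\cdots \relu(W_0 x^\star))$ has magnitude at least $\kappa_\ell := \Omega((\epsilon/2)^\ell\, n^{\ell+1}/(\ell+1)^2)$.} The $\ell=1$ case is exactly the first step in the proof of Theorem \ref{thm:ReLU-testing-0-OR-function-2-sided}. For the inductive step I would argue contrapositively: if every input yields pre-activation below $\kappa_\ell$, then viewing the outer layer $W_\ell$ as a linear functional applied to the $(\ell-1)$-layer sub-network's outputs, one can identify $O(\epsilon n)$ entries of $W_\ell$ whose modification forces the final output to be non-positive on all but an $e^{-\Omega((\epsilon/2)^\ell n/(\ell+1)^2)}$-fraction of inputs (the tail coming from a Hoeffding/Chernoff bound over random inputs, combined with the inductive hypothesis applied to the $(\ell-1)$-layer multi-output sub-network in the spirit of Theorem \ref{thm:reduction-constant-ReLU-mo}). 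Together with the allowable $e^{-n/16}$ tail, this contradicts $(\epsilon,\delta)$-farness, as permitted by the theorem's lower bounds on $\delta$ and $n$.

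Given the witness $x^\star$, the concentration analysis estimates the sub-network's output on the restriction of $x^\star$ to $S_0$ layer by layer. At each layer, Hoeffding's inequality gives an additive error of $O(\sqrt{s \ln(1/\lambda')})$ between the sampled and true (rescaled by $s/n$) contributions; since $\relu$ is $1$-Lipschitz, these errors compound additively across layers. A union bound over the $\ell$ layers, the $s$ sampled nodes per layer, and the $2^s$ restricted inputs yields total error below $\kappa_\ell/2$ (rescaled by $(s/n)^\ell$) with probability $\ge 1-\lambda$, provided $s$ is as chosen above. Picking the bias $\beta_\ell$ to be a fraction of this rescaled $\kappa_\ell/2$ then separates the two cases. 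Completeness is the easier direction: if \NE computes $0$ everywhere then the restricted sub-network corresponds exactly to evaluating \NE on inputs whose non-$S_0$ coordinates are zero, so its unbiased output is at most $0$, hence the biased output stays within $\beta_\ell$ and the tester accepts.

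The main technical obstacle is the structural lemma, because propagating farness backward through $\ell$ nested $\relu$ layers is delicate. An $\epsilon$-fraction modification of any middle matrix $W_i$ can either be magnified (when it sits on many active paths to the output) or vanish entirely (when the perturbation lies on paths zeroed out by subsequent $\relu$'s on the given input), and the inductive argument must show that \emph{no} distribution of $\epsilon$-fraction changes across $W_0,\dots,W_\ell$ can repair a network whose pre-activation is below $\kappa_\ell$ on every input. This is precisely where the multiplicative $(\epsilon/2)^\ell$ degradation and the $(\ell+1)^2$ denominator enter $\kappa_\ell$; the $(2/\epsilon)^{2\ell}$ lower bound on $n$ guarantees that $\kappa_\ell$ is still detectable by a $\poly(1/\epsilon)$-sized random sample, and the doubly-exponential-looking $e^{-(\epsilon/2)^\ell n/(342(\ell+1)^2)}$ term in the $\delta$ hypothesis is exactly the tail produced in the repair argument of the inductive step.
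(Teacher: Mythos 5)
Your high-level plan (layered sampling with a small output bias, a structural witness-existence lemma, a concentration argument plus a union bound over the $2^s$ inputs supported on $S_0$) matches the paper's architecture, and your treatment of completeness is essentially Lemma~\ref{lem:ReLU-testing-0-function-2-sided-compute-mL}. The gap is in the soundness direction, specifically your proposed proof of the structural lemma, which is precisely where the paper's authors note that the single-layer techniques stop working for $\ell \ge 3$.

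You propose to prove the witness lemma by induction on $\ell$, with the contrapositive step arguing that if every input has pre-activation below $\kappa_\ell$, then one can fix the network by modifying $O(\epsilon n)$ entries of $W_\ell$ and invoking the inductive hypothesis on the $(\ell-1)$-layer sub-network. This does not close. Modifications confined to $W_\ell$ subtract at most $O(\epsilon m_\ell)$ weight-entries times the \emph{nonnegative} layer-$\ell$ ReLU activations; an adversarially chosen network can keep those activations small on the very inputs one needs to repair, so there is no guaranteed leverage from $W_\ell$ alone. And the $(\ell-1)$-layer object you would recurse on has $m_\ell = n$ outputs, not one, while the inductive hypothesis is stated for single-output networks; Theorem~\ref{thm:reduction-constant-ReLU-mo} is a testing-task reduction for one-hidden-layer multi-output networks and is not a device for propagating a witness-existence lemma across layers. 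You correctly name the obstacle --- that inner-layer modifications either get magnified or vanish under subsequent ReLUs --- but the induction you sketch does not circumvent it.

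What the paper actually does (Lemma~\ref{lemma:output-mL}, via Lemma~\ref{lemma:sampling-mL-existance}) is non-inductive: it first fixes, existentially, a $(1-\tfrac12\epsilon)$-fraction subsample $\sset_k^*$ at each hidden layer whose scaled sub-network output tracks the original to within $\vartheta \prod_i m_i$ for all but a $\probp$-fraction of inputs, and then builds a decoupled ``amplifier'' through the unsampled $\IN_k := V_k \setminus \sset_k^*$ by setting all incoming edges of $\IN_k$ to $+1$, zeroing cross-edges $\IN_k \to \sset_{k+1}^*$, and setting the last-layer edges out of $\IN_\ell$ to $-1$. The amplifier contributes $-\|x\|_1 (\epsilon/2)^\ell \prod_i m_i$ while leaving the $\sset^*$-track value essentially unchanged; together with $(\epsilon,\delta)$-farness this yields the witness bound $f_{\ell+1}(x_0) > \tfrac18(\epsilon/2)^\ell \prod_i m_i$. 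This parallel-amplifier decoupling, not induction on depth, is the mechanism that resolves the magnified-or-vanished dichotomy, and it is the ingredient missing from your sketch.
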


The approach used in the proof of \cref{thm:ReLU-testing-0-function-2-sided-mL-n}, on a high level, follows the method presented in the analysis of a single hidden layer (see \cref{thm:ReLU-testing-0-OR-function-2-sided}), but the need of dealing with multiple layers makes the analysis significantly more complex and challenging (and much longer).
The approach relies on two steps of the analysis, first proving part
\begin{inparaenum}[(i)]
\item about ReLU networks that are $(\epsilon,\delta)$-far from computing the constant $0$-function (\cref{subsubsec:output-mL,subsubsec:ReLU-testing-0-function-2-sided-far-mL}) and then part
\item about ReLU networks that compute the constant $0$-function (\cref{subsubsec:ReLU-testing-0-function-2-sided-compute-mL}).
\end{inparaenum}

The analysis of part (i) is in four steps,
    see \cref{sec:ReLU-testing-0-function-2-sided-mL}.
We first show (\cref{subsubsec:sampling-mL}) that the use of randomly sampled nodes in each layer of the network results in a modified ReLU network that, after scaling, for any fixed input returns the value close to the value returned by the original network.
Next, (\cref{subsubsec:sampling-mL-delta}) we extend this claim to show that there is always a modified ReLU network that, after scaling, for all but a small fraction of the inputs, returns the value close to the value returned by the original network.
We then study ReLU networks $(W_0, \dots, W_{\ell})$ that are $(\epsilon, \delta)$-far from computing the constant $0$-function.
In that case, we first show
that there is always an input $x_0 \in \{0,1\}^{m_0}$ on which the $(\epsilon, \delta)$-far network returns a value which is (sufficiently) large.
Then, using the result from \cref{lemma:sampling-mL-existance} that the sampled network approximates the output of the original network, we will argue in \cref{subsubsec:ReLU-testing-0-function-2-sided-far-mL} that if $(W_0, \dots, W_{\ell})$ is $(\epsilon, \delta)$-far from computing the constant $0$-function then randomly sampled network on input $x_0$ returns a (sufficiently) large.

The study of part (ii), as described in \cref{lem:ReLU-testing-0-function-2-sided-compute-mL}, relies on a similar approach as that in part (i). We use the fact that the sampled network computes a function which on a fixed input is (after scaling) close to the original function on that input (\cref{lemma:sampling-mL}), and hence that value cannot be too large. An important difference with part (i) is that one must prove the claim to \emph{hold for all inputs}, making the arguments more challenging (the straightforward arguments would require the size of the sample to depend on $m_0$, giving a tester with super-constant query complexity).

A consequence of \cref{thm:ReLU-testing-0-function-2-sided-mL-n} is that it gives a tester with query complexity $1/\poly(\epsilon)$ for any constant number of layers.
Furthermore, as in our study for one hidden layer, essentially identical analysis as for the constant $0$-function can be applied to test the OR-function (see \cref{thm:ReLU-testing-OR-function-2-sided-mL} and its simpler form \cref{thm:ReLU-testing-OR-function-2-sided-mL-m0=n} in \cref{sec:ReLU-testing-OR-function-2-sided-mL}).

\medskip

Finally, we can extend the analysis above (\cref{ReLU-testing-constant-function-2-sided-mo,thm:ReLU-testing-0-function-2-sided-mL-n}) to networks with \emph{multiple hidden layers} and \emph{multiple outputs}. As in \cref{ReLU-testing-constant-function-2-sided-mo} (for ReLU networks with a single hidden layer and multiple outputs) the proof of the following \cref{ReLU-testing-constant-function-2-sided-mL-mo} relies on a reduction of the problem to testing a near constant function to testing the constant $0$-function and the OR-function.

\begin{restatable}{theorem}{MultipleOutputsLayers}\textbf{\emph{(Testing a near constant function in feedforward networks with ReLU activation function, multiple hidden layers, and multiple outputs)}}
\label{ReLU-testing-constant-function-2-sided-mL-mo}
Let \NE be a ReLU network $(W_0, \dots, W_{\ell})$ with $m_0$ input nodes, $\ell \ge 1$ hidden layers with $m_1, \dots, m_{\ell}$ hidden layer nodes each, and $m_{\ell+1} \ge 2$ output nodes.
Let $0 < \lambda < \frac{1}{\ell+1}$, $2 \sqrt[\ell]{\frac{17(\ell+1)}{\max_{0 \le k \le \ell} \{m_k\}}} < \epsilon < \frac12$, and $\delta \ge 2 m_{\ell+1} (\probp + e^{-m_0/16})$.
Let $\bb \in \{0,1\}^{m_{\ell+1}}$ and $c = c(\ell)$ be a sufficiently large positive constant.
Suppose that for an arbitrary parameter $0 < \probp < \frac{1}{\ell+1}$ it holds that $m_k \ge c \cdot (2/\epsilon)^{2\ell^2} \cdot \ln(\frac{1}{\probp} \prod_{i=1}^{\ell} m_i)$ for every $0 \le k \le \ell$.
If $\ell$ is constant then there is a tester that queries $\Theta(\epsilon^{-(8\ell^2+1)} \cdot \ln^4(1/\lambda \epsilon))$ entries from $W_0, \dots, W_{\ell}$, and
\begin{itemize}
\item accepts with probability at least $1-\lambda$, if \NE computes near constant function \bb, and
\item rejects with probability at least $1-\lambda$, if \NE is $(\epsilon,\delta)$-far from computing near constant function~\bb.
\end{itemize}
\end{restatable}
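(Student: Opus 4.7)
The plan is to prove \cref{ReLU-testing-constant-function-2-sided-mL-mo} by combining two ingredients from earlier in the paper: (i) the multi-layer single-output testers for the constant $0$-function (\cref{thm:ReLU-testing-0-function-2-sided-mL-n}) and the OR-function (the referenced \cref{thm:ReLU-testing-OR-function-2-sided-mL}); and (ii) a multi-layer analogue of the per-output reduction from \cref{thm:reduction-constant-ReLU-mo}, which I would establish first. At a high level, testing a near constant target $\bb = (\ob_1, \dots, \ob_{m_{\ell+1}})$ reduces to testing each induced single-output sub-network $\NE_i$ for the near constant $\ob_i$-function, and it suffices to sample only $\Theta(\epsilon^{-1} \ln(1/\lambda))$ output coordinates because of the reduction.

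The first step I would carry out is the reduction itself: if $\NE = (W_0, \dots, W_\ell)$ is $(\epsilon, \delta)$-far from computing the near constant function $\bb$, then more than $\tfrac{\epsilon}{2} m_{\ell+1}$ output coordinates $i$ have the property that the sub-network $\NE_i$ obtained by keeping only the $i$-th row of $W_\ell$ is $(\epsilon', \delta')$-far from computing the near constant $\ob_i$-function, where $\delta' = (\delta - e^{-m_0/16})/m_{\ell+1}$ and $\epsilon'$ is a polynomial in $\epsilon$ of degree $\Theta(\ell)$ (the target being $\epsilon' = \Theta(\epsilon^\ell)$, so that the final query complexity matches the claimed bound). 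The argument is a counting/averaging one in the spirit of \cref{thm:reduction-constant-ReLU-mo}: assuming fewer than $\tfrac{\epsilon}{2} m_{\ell+1}$ output coordinates were bad, I would exhibit an $(\epsilon,\delta)$-close network by cheaply correcting the bad output rows of $W_\ell$ locally and absorbing the good ones into the $\delta'$-per-output slack, contradicting farness.

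The second step is the tester. Sample a set $S$ of $s = \Theta(\epsilon^{-1} \ln(1/\lambda))$ output coordinates uniformly at random. For each $i \in S$, restrict to the single-output sub-network $\NE_i$ and invoke either \cref{thm:ReLU-testing-0-function-2-sided-mL-n} (if $\ob_i = 0$) or its OR-function analogue (if $\ob_i = 1$), with parameters $(\epsilon', \delta', \lambda/(2s))$; accept iff every sub-test accepts. Completeness: if $\NE$ computes $\bb$ exactly, each $\NE_i$ computes the near constant $\ob_i$-function, so by a union bound the overall acceptance probability is at least $1 - \lambda$. Soundness: with probability at least $1 - (1 - \epsilon/2)^s \ge 1 - \lambda/2$, the set $S$ contains one of the bad outputs guaranteed by the reduction, and the corresponding sub-tester rejects with probability at least $1 - \lambda/(2s)$. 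The parameter conditions of the sub-testers are met once the hypotheses on $m_k$, $\probp$, and $\delta$ in the statement are plugged in, which in particular accommodate the shrinkage from $\epsilon$ to $\epsilon'$ and from $\delta$ to $\delta'$.

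The main obstacle will be establishing the multi-layer reduction with a clean $\epsilon' = \epsilon^{\Theta(\ell)}$ dependence. The single-hidden-layer argument of \cref{thm:reduction-constant-ReLU-mo} achieved $\epsilon' = \Theta(\epsilon^2)$ by confining corrections to the output layer; with $\ell$ hidden layers one must show that any cheap network-correction can be rearranged so that its per-output cost is still bounded, which requires tracking how edits in the early matrices $W_k$ (with $k<\ell$) propagate through successive ReLU non-linearities and how they can flip specific output bits. I would argue that any global correction can, without increasing the edit cost, be replaced by per-output corrections localized to the rows of $W_\ell$ associated with the erroneous outputs, at the expense of a polynomial loss in $\epsilon$ per layer; this is where the $\epsilon^{\ell}$ factor arises. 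Plugging $\epsilon' = \Theta(\epsilon^\ell)$ into the $\Theta(\epsilon'^{-8\ell} \ln^4(\cdot))$ cost of the single-output multi-layer tester and multiplying by the $\Theta(\epsilon^{-1} \ln(1/\lambda))$ sample size then yields the stated query complexity $\Theta(\epsilon^{-(8\ell^2+1)} \ln^4(1/\lambda\epsilon))$.
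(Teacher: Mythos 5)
Your high-level plan — first prove a multi-layer analogue of the per-output reduction, then sample $\Theta(\epsilon^{-1}\ln(1/\lambda))$ output coordinates and invoke the single-output multi-layer testers (\cref{thm:ReLU-testing-0-function-2-sided-mL,thm:ReLU-testing-OR-function-2-sided-mL}) with boosted confidence on each — is exactly the paper's strategy, and your combination step (accept iff every sub-test accepts) is a clean and correct way to realize it. The parameter bookkeeping at the tester level ($\epsilon' = \Theta(\epsilon^{\ell})$ feeding into $\Theta(\epsilon'^{-8\ell})$ per sub-test, times $\Theta(\epsilon^{-1})$ sub-tests, giving $\Theta(\epsilon^{-(8\ell^2+1)}\ln^4(1/\lambda\epsilon))$) also matches the paper.

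The genuine gap is in how you propose to prove the reduction, \cref{thm:reduction-constant-ReLU-mL-mo}, which you rightly identify as the hard part. You suggest showing that ``any global correction can be replaced by per-output corrections localized to the rows of $W_\ell$, at the expense of a polynomial loss per layer.'' That rewiring idea does not appear to lead to a proof: an edit to $W_0, \dots, W_{\ell-1}$ changes the hidden-layer values seen by \emph{all} outputs simultaneously through $\ell$ ReLU nonlinearities, so one cannot in general simulate it by per-output edits to $W_\ell$ alone, and there is no clean layer-by-layer ``polynomial loss'' cascade to invoke. The paper instead proves the reduction via \cref{lemma:sampling-mL-existance-mo}: one first fixes (by an averaging argument, not localization) a deterministic near-full subsample $\sset^*_0, \dots, \sset^*_\ell$ with $|\sset^*_k| = (1-\tfrac{\epsilon}{2})m_k$ whose induced subnetwork faithfully tracks every $f^{(j)}_{\ell+1}$ on all but a $\probp\, m_{\ell+1}$ fraction of inputs, and then repurposes the $\epsilon/2$-fraction complements $\IN_k$ as a monotone ``ramp'' of value $\|x\|_1 \prod_{i<\ell}|\IN_i| = \Theta\!\left((\epsilon/2)^{\ell-1} m_0 \prod_{i<\ell} m_i\right)$ that can be wired with sign $\pm 1$ to each good output node to dominate the small residual output value of the faithful part. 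The $\left(\frac{\epsilon}{2-\epsilon}\right)^{\ell}$ factor in $\epsilon'$ comes from the single product $\prod |\IN_i|/\prod m_i$, not a per-layer cascade; and the $\delta'$ slack must include an additional $\probp\, m_{\ell+1}$ term from the existence lemma, which your $\delta' = (\delta - e^{-m_0/16})/m_{\ell+1}$ omits. As written, your sketch of the reduction would need to be replaced by this (or an equivalent) argument before the rest of the proof can go through.
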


\noindent
As with our earlier results in \cref{subsection:Overview-multiple-outputs+layers}, the \emph{query complexity of our tester in \cref{ReLU-testing-constant-function-2-sided-mL-mo} is independent of the network size}, and for constant \eps, $\lambda$, $\ell$, and $m_{\ell+1}$, the query complexity is $O(1)$.










%

\section{Organization of the rest of the paper}
\label{subsec:rest-organization}

The rest of the paper contains technical details, the proofs, the algorithms, and the analysis of the results promised in \cref{section:Overview}.

\cref{sec:ReLU-testing-0-OR-function} presents the analysis of testing constant 0-function and OR-function in our model of \emph{ReLU networks with one hidden layer and a single output}. It begins (\cref{sec:ReLU-testing-0-function-NP-completeness}) with the proof of \cref{thm:NP-completeness-of-0-function}, that deciding if the network computes 0 is \textsf{NP}-hard.  Then, we prove the main result here, \cref{thm:ReLU-testing-0-OR-function-2-sided}, that constant $0$-function and the OR-function are testable in our model.

\cref{subsec:comparisons-to-vanilla} contains the analysis of a \emph{vanilla testing} procedure, concluding with a proof of Corollary~\ref{corollary:hardness-vanilla} which implies that with the right sampling approach we can efficiently analyze the behavior of a neural network in some cases where it would not be detected by the vanilla tester.

\cref{sec:distribution-free-lower-bound} studies in details the \emph{distribution-free model} and \emph{its limitations for testing the constant 0-function}. The main result here, \cref{thm:testing-distribution-free-0}, gives a lower bound on testing the 0-function in the distribution-free model. This is arguably one of the most technically involved parts of the paper.

\cref{sec:ReLU-testing-0-OR-function-1-sided} studies the complexity of testing the constant 0-function and OR-function with \emph{one-sided error}. We show that with one-sided error testing is possible with $O(m \ln n/\eps^2)$ queries (\cref{thm:ReLU-testing-0-OR-function-1-sided}), and at the same time, any tester for the constant $0$-function (or for the OR-function) that has one-sided error has query complexity of $\Omega(m)$ (\cref{thm:ReLU-testing-0-OR-function-1-sided-hardness}).

\cref{sec:monotonicity-symmetry} discusses and proves a structural theorem, \cref{theorem:0and1}, that all ReLU networks are close to the constant 0-function or close to the OR-function, and discusses its consequences for property testing (\cref{corollary:monotonicity,corollary:symmetry}).

\cref{sec:monotone-properties} analyzes \emph{testing of monotone properties}, concluding with the proof of \cref{thm:monotone-properties-general}.

The following \cref{sec:multiple-outputs,sec:multiple-layers,sec:ReLU-testing-constant-function-2-sided-mL-mo} extend our analysis from \cref{sec:ReLU-testing-0-OR-function} of testing constant 0-function and OR-function in ReLU networks with one hidden layer and a single output to \emph{ReLU networks with multiple hidden layers and multiple outputs}. This study, together with the analysis in \cref{sec:distribution-free-lower-bound}, is arguably technically the most involved part of our analysis. First, \cref{sec:multiple-outputs} considers ReLU networks with \emph{multiple outputs and one hidden layer}, beginning with a general reduction to single layer networks in \cref{thm:reduction-constant-ReLU-mo} and summarizing the entire analysis in the proof of \cref{ReLU-testing-constant-function-2-sided-mo}. Next, \cref{sec:multiple-layers} considers ReLU networks with \emph{multiple hidden layers and a single output}, summarizing the analysis in the proofs of \cref{thm:ReLU-testing-0-function-2-sided-mL-n} and \cref{thm:ReLU-testing-0-function-2-sided-mL} for the constant 0-function, and of \cref{thm:ReLU-testing-OR-function-2-sided-mL} and \cref{thm:ReLU-testing-OR-function-2-sided-mL-m0=n} for the OR-functions. Finally, \cref{sec:ReLU-testing-constant-function-2-sided-mL-mo} combines the two previous sections and study testing of ReLU networks with \emph{multiple layers and outputs}, see \cref{ReLU-testing-constant-function-2-sided-mL-mo}.

This analysis is then followed by \cref{sec:conclusions} containing final conclusions and thoughts about our framework and our results.

Finally, for the sake of completeness, we recall in \cref{app:concentration-bounds} three auxiliary concentration bounds used in the paper. 


\section{\emph{One hidden layer}: Testing constant 0-function and OR-function}
\label{sec:ReLU-testing-0-OR-function}


We begin our study with the analysis of ReLU networks with one hidden layer and a single output, deferring the study of more complex neural networks with multiple outputs and multiple hidden layers to Sections~\ref{sec:multiple-outputs}--\ref{sec:ReLU-testing-constant-function-2-sided-mL-mo}. While on a high level, the analysis of the simpler one-output-bit and one-hidden-layer ReLU networks shares many similarities with the analysis of the more general case from Sections~\ref{sec:multiple-outputs}--\ref{sec:ReLU-testing-constant-function-2-sided-mL-mo}, the study of ReLU networks with multiple outputs and multiple hidden layers is more refined and subtle. Therefore we begin with the simpler analysis here, providing useful intuitions and tools which will be used later in Sections~\ref{sec:multiple-outputs}--\ref{sec:ReLU-testing-constant-function-2-sided-mL-mo}.

We consider first two simplest functions we can think of, which are
\begin{itemize}
\item the \textbf{constant 0-function} $f^{(0)}_n: \{0,1\}^n \rightarrow \{0,1\}$ with $f^{(0)}_n(x) = 0$ for all $x\in \{0,1\}^n$, and
\item the \textbf{OR-function}\footnote{Observe that the OR-function $f^{(1)}_n$ can be seen as a complement of the constant $0$-function, since it differs from the constant $1$-function (defined as $f_n(x) = 1$ for all $x \in \{0,1\}^n$) only on a single input $x = \bzero$, on which we know (see \cref{def:ReLU}) that any ReLU network will always return 0 (see \cref{remark:constant-functions-and-ReLU} for a more detailed discussion).} $f^{(1)}_n: \{0,1\}^n \rightarrow \{0,1\}$ with $f^{(1)}_n(x) = \sgn(x) = \sgn(\sum_{i=1}^n x_i)$ for all $x = (x_1, \dots, x_n) \in \{0,1\}^n$
\end{itemize}
and study them from a property testing point of view. We define $P_n = \{f^{(0)}_n\}$ and $Q_n = \{f^{(1)}_n\}$, and the properties $\Pi^{(0)} = \bigcup_{n \in \NN} P_n$ and $\Pi^{(1)} = \bigcup_{n \in \NN} Q_n$. Our objective is to design property testers for $\Pi^{(0)}$ and $\Pi^{(1)}$.


\subsection{Computational complexity: Deciding if the network computes 0 is \textsf{NP}-hard}
\label{sec:ReLU-testing-0-function-NP-completeness}

Before we study the testing task, we first show that the problem of deciding whether a given ReLU network computes the constant $0$-function is \textsf{NP}-hard (under Cook reduction). We remark that similar proofs for related setting are known (see, for example, \cite{BDL22,BR88,DWX20,GKMR21,J88,MR18}).

\begin{theorem}
\label{thm:NP-completeness-of-0-function}
Let $(A,w)$ with $a_{ij} \in \mathbb Q$ and $w_j \in \mathbb Q$ be a ReLU network with $n$ input nodes and $m$ hidden layer nodes
that computes function $f(x) = \sgn(\relu(w^T \relu(Ax))$. Then the problem of deciding on input $(n,m,A,w)$ whether the network computes the constant $0$-function is \textsf{NP}-hard.
\end{theorem}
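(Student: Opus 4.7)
The plan is to reduce from SUBSET SUM. Given an instance $(a_1,\dots,a_n,T)$ with positive integers $a_i,T$, I would construct a ReLU network (in the sense of \cref{def:ReLU}) with $n+1$ input nodes, $3$ hidden layer nodes, and one output, whose computed function $f$ is \emph{not} identically zero iff there exists $S \subseteq \{1,\dots,n\}$ with $\sum_{i\in S} a_i = T$. Since SUBSET SUM is NP-complete, this shows that deciding $f\equiv 0$ is coNP-hard, hence NP-hard under Cook (Turing) reductions (which are closed under complementation).

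The construction treats the input as $(x_0,x_1,\dots,x_n)\in\{0,1\}^{n+1}$, where $x_0$ serves as a gating ``bias'' variable. The three hidden nodes compute
$$h_1 = \relu\Bigl(\textstyle\sum_{i=1}^n a_i x_i - T x_0\Bigr),\quad
h_2 = \relu\Bigl(-\textstyle\sum_{i=1}^n a_i x_i + T x_0\Bigr),\quad
h_3 = \relu(x_0),$$
and the output is $f(x) = \sgn(\relu(-h_1 - h_2 + h_3))$, i.e., the second-layer weight vector is $w=(-1,-1,1)$. Observe that on binary inputs $h_1+h_2 = \bigl|\sum_i a_i x_i - T x_0\bigr|$ and $h_3 = x_0$.

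Next I would verify correctness by case analysis on $x_0$. If $x_0=0$, then $\sum_i a_i x_i \ge 0$, so $h_2=h_3=0$ and the argument of the outer $\relu$ is $-h_1 \le 0$, giving $f(x)=0$. If $x_0=1$ and $(x_1,\dots,x_n)$ encodes a subset-sum solution, then $h_1=h_2=0$ and $h_3=1$, so $f(x)=\sgn(\relu(1))=1$. If $x_0=1$ but $\sum_i a_i x_i \ne T$, integrality yields $h_1+h_2 \ge 1$, so $-h_1-h_2+h_3 \le 0$ and $f(x)=0$. Thus $f\equiv 0$ precisely when the SUBSET SUM instance is infeasible, completing the reduction.

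The reduction produces rational weights of magnitude at most $\max\{\max_i a_i,T\}$. Although \cref{def:ReLU} restricts weights to $[-1,1]$, this is not an obstacle: scaling every first-layer weight by $1/M$ with $M=\max\{\max_i a_i,T\}$ leaves the sign at the output unchanged because $\relu$ is positive-homogeneous ($\relu(cz)=c\,\relu(z)$ for $c>0$), so $h_1,h_2,h_3$ all shrink by the same positive factor and the output-layer sign is preserved; the second-layer weights already lie in $[-1,1]$. I expect the main technical point to be handling the case $x_0=0$ cleanly: with positive $a_i$ and no additive bias available in the model, one needs the gate $h_3$ to anchor the output so that spurious positivity from $h_1$ cannot occur when $x_0=0$, and this is exactly what the coefficient pattern $(-1,-1,1)$ on $(h_1,h_2,h_3)$ achieves.
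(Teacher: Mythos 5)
Your proof is correct and takes essentially the same approach as the paper: the paper reduces from PARTITION with an identical three-hidden-node gadget (two nodes jointly encoding $|\sum_i a_i x_i - T x_0|$ via ReLU, one gate node fed by the extra input, weights normalized into $[-1,1]$, and second-layer weights making the two comparison nodes contribute negatively and the gate positively). The only cosmetic differences are the choice of SUBSET SUM over PARTITION and the precise normalization constant.
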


\begin{proof}
In order to show \textsf{NP}-hardness, we reduce PARTITION to our problem. Recall that in the partition problem we are given a set $I$ with $N$ positive integers our objective is to decide, if $I$ can be partitioned into two sets $S,T$ such that $\sum_{y\in S} y = \sum_{y\in T} y$.

We show that given an instance $I$ of PARTITION we can compute in polynomial time a ReLU network with $n= N+1$ input nodes and $3$ hidden layer nodes such that there exists $x\in\{0,1\}^n$ with $\sgn(\relu(w^T \relu(Ax))=1$ if and only if there exists a partition of $I=\{r_1,\dots,r_N\}$ into $S,T$ such that $\sum_{y\in S} y = \sum_{y\in T} y$.

Let $W=\sum_{y\in I} y$. Let us use $u_1, u_2, u_3$ to denote the hidden layer nodes. We set the weight of the edge connecting the $i$th input node to $u_1$ to be $\frac{r_i}{W}$ and to $u_2$ to be $-\frac{r_i}{W}$. Furthermore, we connect the $N+1$st input layer node via an edge of weight $-\frac12$ to node $u_1$, an edge of weight $\frac12$ to node $u_2$ and an edge of weight $1$ to node $u_3$. In the second layer, nodes $u_1$ and $u_2$ are connected with an edge of weight $-1$ to the output, while node $u_3$ is connected through an edge of weight $\frac1W$. All other edges have weight $0$.

Now assume that a partition of $I$ into $S$ and $T$ with $\sum_{y\in S} y = \sum_{y\in T} y$ exists. In this case, for $1\le i \le N$, we can set $x_i=1$ if $r_i\in S$ and $x_i=0$, otherwise. Furthermore, we set $x_{N+1}=1$. Since the edge weights connecting the $i$th node to $u_1$ and $u_2$ are $\frac{r_i}{W}$ and $-\frac{r_i}{W}$ they sum up to $\frac12$ and $-\frac12$. Taking the last input bit into account, the sum of inputs at $u_1$ and $u_2$ is $0$. However, $u_3$ contributes $\frac1W$ to the output and so we have $\sgn(\relu(w^T \relu(Ax))=1$.

Now consider a vector $x\in\{0,1\}^n$ with $\sgn(\relu(w^T \relu(Ax))=1$. Since $u_1$ and $u_2$ cannot contribute positively to the output and since $u_3$ contributes positively only if $x_{N+1}=1$ we know that $x_{N+1}=1$. Now consider the partition $S=\{r_i: x_i=1\}$ and $T= I \setminus S$ and assume that $\sum_{y\in S} y \ne \sum_{y\in T} y$. Since the numbers in $I$ are integers, it follows that $\sum_{y\in S} y \ne \sum_{y\in T} y$ implies that $|\sum_{y\in S} y - \sum_{y\in T} y|\ge 1$. This means that either $u_1$ or $u_2$ contribute at least $-\frac1W$ to the output value and so $\sgn(\relu(w^T \relu(Ax))=0$. A contradiction! Thus, we have $\sum_{y\in S} y = \sum_{y\in T} y$ and hence the reduction ensures that there exists $x\in\{0,1\}^{N+1}$ with $\sgn(\relu(w^T \relu(Ax))=1$ if and only if there is a partition of $I$ into sets of equal sum. Clearly, the reduction can be computed in polynomial time and so the problem is \textsf{NP}-hard.
\end{proof}

It is also not difficult to see that the arguments in the proof of \cref{thm:NP-completeness-of-0-function} show that also the problem of deciding whether the network computes the OR-function is \textsf{NP} hard.
%


\subsection{One hidden layer: Tester for the constant 0-function and the OR-function}
\label{sec:ReLU-testing-0-OR-function-2-sided}

While \cref{thm:NP-completeness-of-0-function} shows that it is hard to decide if a given ReLU network computes the constant $0$-function or not, in this section we prove that a \emph{relaxed} version of the problem, namely, that of \emph{(property) testing} if a given ReLU network computes the constant $0$-function, has significantly lower complexity. Our result can be easily extended to the OR-function.
The following theorem shows that properties $\Pi^{(0)}$ and $\Pi^{(1)}$ are testable for ReLU networks with one hidden layer.

\ZeroORFunction*

\paragraph{Constant 0-function.}
Our property testing algorithm \textsc{AllZeroTester}$(\epsilon,\lambda,n,m)$ below works by focusing its attention on a small ``subnetwork'' induced by some randomly sampled nodes. The algorithm first samples a subset of $\poly(1/\epsilon) \cdot \ln(1/\lambda)$ nodes from the input layer and from the hidden layer. Then we scale the weights of all edges between the sampled input and the sampled hidden layer nodes by a factor of $\frac{n}{s}$ and all edges from sampled hidden layer nodes to the output node by $\frac{m}{t}$. We then introduce a \emph{bias} $b = -\frac{1}{16}\epsilon nm$ to the output node, i.e., we subtract $\frac{1}{16}\epsilon nm$ before we apply the ReLU function. For the resulting network we check whether it computes the constant $0$-function. If this is the case, we accept. Otherwise, we reject.

In the following, we use matrix notation and write $S$ and $T$ for sampling matrices that select $s$ and $t$ input/hidden layer nodes at random. That is, $S$ is an $n \times n$ diagonal matrix whose diagonal entries $(i,i)$
are equal to 1 iff $x_i$ is in the sample set (and is 0 otherwise), and $T$ is an $m \times m$ diagonal matrix whose diagonal entries $(j,j)$ are equal to 1 iff hidden layer node $j$ is in the second sample set (and is 0 otherwise).
With this notation we can describe our algorithm as follows (we did not optimize constants and we assume that $\epsilon\in (\frac1m,\frac12)$ and $\lambda \in (0,\frac12)$).


\begin{algorithm}[htbp]
\SetAlgoLined\DontPrintSemicolon
\caption{\textsc{AllZeroTester}$(\epsilon,\lambda,n,m)$}
\label{alg:AllZeroTester}

Sample $s = \lceil \frac{2^{20}}{\epsilon^2} \cdot \ln(1/\epsilon\lambda) \rceil$ nodes from the input layer and $t = \lceil \frac{2^{30}}{\epsilon^4} \cdot \ln(1/\epsilon\lambda)\rceil$ nodes  from the hidden layer uniformly at random without replacement.

Let $S \in \NN_0^{n\times n}$ and $T \in \NN_0^{m \times m}$ be the corresponding sampling matrices.

\lIf{\emph{there is $x \in\{0,1\}^n$ with $\frac{nm}{st} \cdot  w^T \cdot \relu(TASx) - \frac{1}{16}\epsilon nm > 0$}}{\textbf{reject};}

\lElse{\textbf{accept}.}
\end{algorithm}


Before we will analyze the properties of the output of \textsc{AllZeroTester}, let us first observe that the query complexity of \textsc{AllZeroTester} is $(s+1) \cdot t = \Theta(st)$, and hence the query complexity is as stated. Indeed, after sampling $s$ input nodes and $t$ nodes from the hidden layer, the diagonal matrices $S$ and $T$ have $s$ and $t$ non-zero entries, respectively. Therefore $TAS$ can be computed by querying only $s \cdot t$ entries from matrix $A$ and it will have at most $t$ non-zero rows and at most $s$ non-zero columns. Hence, in order to compute $TASx$ one has to consider only $s$ bits from $\{0,1\}^n$, and to compute $\frac{nm}{st} \cdot  w^T \cdot \relu(TASx)$ one has to consider only $t$ entries from $w^T$.

\paragraph{The OR-function.}
Not surprisingly, the property tester for the OR-function is similar to the property tester for the constant $0$-function, since essentially, the OR-function is the constant $1$-function, except that for a single input: when $x$ is a $0$-vector $\bzero$ then the network returns 0 (cf. \cref{remark:constant-functions-and-ReLU}). As we will see below, a simple modification of the property tester for the constant $0$-function can be turned into a property tester for the OR-function. Our testing algorithm \textsc{ORTester} is almost identical to Algorithm \textsc{AllZeroTester} except that now we use \emph{bias} $b = \frac{1}{16}\epsilon nm$.


\begin{algorithm}[htbp]
\SetAlgoLined\DontPrintSemicolon
\caption{\textsc{ORTester}$(\epsilon,\lambda,n,m)$}
\label{alg:ORTester}

Sample $s = \lceil \frac{2^{20}}{\epsilon^2} \cdot \ln(1/\epsilon\lambda) \rceil$ nodes from the input layer and $t = \lceil \frac{2^{30}}{\epsilon^4} \cdot \ln(1/\epsilon\lambda)\rceil$ nodes  from the hidden layer uniformly at random without replacement.

Let $S \in \NN_0^{n\times n}$ and $T \in \NN_0^{m \times m}$ be the corresponding sampling matrices.

\lIf{\emph{there is $x \in\{0,1\}^n$ with $\frac{nm}{st} \cdot  w^T \cdot \relu(TASx) + \frac{1}{16} \epsilon nm < 0$}}{\textbf{reject};}

\lElse{\textbf{accept}.}
\end{algorithm}


\paragraph{Properties of the output.}
Next, we will study the properties of the output. First, notice that our sampling network without a bias would compute
\begin{align*}
    \sgn\left(\relu\left(\frac{m}{t} \cdot w^T \cdot T \cdot \relu\left(\frac{n}{s} \cdot ASx\right)\right)\right).
\end{align*}
Since the ReLU function is positive homogeneous, $T$ is a fixed matrix with non-negative entries after the sampling, $\frac{n}{s} >0$, and $\frac{m}{t}>0$, we have that
\begin{align*}
    \sgn\left(\relu\left(\frac{m}{t} \cdot  w^T \cdot T \cdot \relu\left(\frac{n}{s} \cdot ASx\right)\right)\right)
    &=
    \sgn\left(\relu\left(\frac{nm}{st} \cdot  w^T \cdot \relu(TASx)\right)\right).
\end{align*}
Introducing in Algorithm \textsc{AllZeroTester} the bias to the output node leads to a network computing
\begin{align*}
    \sgn\left(\relu\left(\frac{nm}{st} \cdot  w^T \cdot \relu(TASx)- \frac{1}{16} \epsilon nm\right)\right)
\end{align*}
and we notice that one can drop $\sgn$ and the outer $\relu$, if we check whether the computed value is greater than $0$.
Similarly, introducing the bias to the output node leads to the property tester Algorithm \textsc{ORTester} corresponding to a network computing
\begin{align*}
    \sgn\left(\relu\left(\frac{nm}{st} \cdot  w^T \cdot \relu(TASx) + \frac{1}{16}\epsilon nm \right)\right)
\end{align*}
and we notice that one can drop $\sgn$ and the outer $\relu$, if we check whether the computed value is smaller than or equal to $0$.


\subsubsection{One hidden layer: If $(\epsilon, \delta)$-far then there is a bad input}
\label{subsubsec:output-1L}

To analyze the correctness of our algorithms \textsc{AllZeroTester} and \textsc{ORTester}, we first present our structural lemma.

\begin{lemma}
\label{lemma:output}
Let $\delta \ge e^{-n/16}$ and $\frac{1}{m} < \epsilon < \frac{1}{2}$. Let $(A,w)$ be a ReLU network with $n$ input nodes and $m$ hidden layer nodes.
\begin{enumerate}[(1)]
\item If $(A,w)$ is $(\epsilon, \delta)$-far from computing the constant $0$-function then there exists an input $x \in \{0,1\}^n$ such that
\begin{align*}
    w^T \cdot \relu(Ax) &> \tfrac18\epsilon mn \enspace.
\end{align*}
\item If $(A,w)$ is $(\epsilon, \delta)$-far from computing the OR-function then there exists an input $x \in \{0,1\}^n$ such that
\begin{align*}
    w^T \cdot \relu(Ax) &< - \tfrac18 \epsilon mn \enspace.
\end{align*}
\end{enumerate}
\end{lemma}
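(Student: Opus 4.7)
I prove both parts by contrapositive, exhibiting an explicit close modification $(A',w')$ of $(A,w)$ whose computed function matches the target on at least a $(1-\delta)$-fraction of inputs. Set $B := \tfrac18\epsilon mn$, $g(x) := w^T\relu(Ax)$, $P := \{j : w_j \ge 0\}$ and $N := \{j : w_j \le 0\}$.

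For part (1), assume $g(x) \le B$ for every $x \in \{0,1\}^n$. If $|P| < \tfrac{\epsilon m}{2}$, I only modify $w$: set $w'_j := -1$ for every $j \in P$ and leave $A$ unchanged. This costs fewer than $\epsilon m$ weight changes, and now every weight is non-positive while $\relu((Ax)_j) \ge 0$, so $g'(x) \le 0$ holds deterministically for every $x$ and the modification computes the constant $0$ everywhere. Otherwise $|P| \ge \tfrac{\epsilon m}{2}$: pick $T \subseteq P$ with $|T| = \lceil\epsilon m/2\rceil$ and for each $j \in T$ replace $w_j$ by $-1$ and row $j$ of $A$ by the all-ones vector, staying within the weight budget $|T| \le \epsilon m$ and matrix budget $|T|\,n \le \epsilon mn$. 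Since $w_j \ge 0$ and $\relu((Ax)_j) \ge 0$ for $j \in T$, the original contribution $\sum_{j \in T} w_j \relu((Ax)_j)$ is non-negative, so the residual satisfies $\sum_{j \notin T} w_j \relu((Ax)_j) = g(x) - \sum_{j \in T} w_j \relu((Ax)_j) \le g(x) \le B$. The new contribution of $T$ is exactly $-|T|\sum_i x_i$, giving $g'(x) \le B - |T|\sum_i x_i$, which is $\le 0$ whenever $\sum_i x_i \ge B/|T| \le n/4$. A one-sided Chernoff bound on a sum of $n$ i.i.d.\ Bernoulli$(1/2)$ variables yields $\Pr_x[\sum_i x_i < n/4] \le e^{-n/16} \le \delta$, so the modification computes $0$ on at least a $(1-\delta)$-fraction of inputs. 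In either sub-case $(A,w)$ is $(\epsilon,\delta)$-close to the constant $0$, contradicting far-ness.

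For part (2) I mirror the argument with $N$ in place of $P$ and weight $+1$ in place of $-1$. If $|N| \ge \tfrac{\epsilon m}{2}$, pick $T \subseteq N$ of size $\lceil\epsilon m/2\rceil$ and for each $j \in T$ set $w'_j := +1$ and row $j$ to all ones; the sign choice gives $\sum_{j \in T} w_j \relu((Ax)_j) \le 0$, so the residual is $\ge g(x) \ge -B$ and $g'(x) \ge |T|\sum_i x_i - B$, positive whenever $\sum_i x_i > n/4$, with the Chernoff bound handling the exceptional $\delta$-fraction. If $|N| < \tfrac{\epsilon m}{2}$, enlarge $N$ to a set $S^* \supseteq N$ of size $\lceil\epsilon m/2\rceil$ by throwing in arbitrary nodes of $P$, and apply the same transformation to all of $S^*$; the untouched nodes now all have $w_j > 0$ and contribute a non-negative amount, so $g'(x) \ge |S^*|\sum_i x_i \ge 1$ whenever $x \ne \bzero$. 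Because any ReLU network outputs $0$ on $\bzero$ (matching the OR value there), the modification is $(\epsilon,\delta)$-close to the OR-function.

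The only delicate point will be the sign-based choice of $T$. If $T$ were chosen without aligning signs, the discarded portion $\sum_{j \in T} w_j \relu((Ax)_j)$ could have magnitude up to $|T|\cdot n \approx \tfrac12 \epsilon mn$ with the wrong sign, inflating the residual past $B + \tfrac12\epsilon mn$ and wiping out the $\pm|T|\sum_i x_i$ boost from the overwritten rows. The sign-aligned choice removes this leakage, leaving $B = \tfrac18\epsilon mn$ and $|T|\ge\tfrac{\epsilon m}{2}$ calibrated so that the witness threshold $B/|T| \le n/4$ sits safely below the Bernoulli mean $n/2$; the Chernoff exponent $n/16$ then matches the hypothesis $\delta \ge e^{-n/16}$ exactly, and the condition $\epsilon > 1/m$ only enters to guarantee $|T| \ge 1$.
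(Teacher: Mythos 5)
Your proof is correct and takes essentially the same approach as the paper's: argue by contradiction/contrapositive, and when the output is bounded by $\tfrac18\epsilon mn$ in absolute value, show that the network can be repaired within the $\epsilon$-budget by flipping $\Theta(\epsilon m)$ sign-aligned output edges and overwriting the corresponding rows of $A$ with all-ones, so that the $\|x\|_1$-proportional contribution of those rows dominates for all but an $e^{-n/16}$-fraction of inputs. The only cosmetic differences are the size of the modified set ($\lceil\epsilon m/2\rceil$ versus the paper's $\lfloor\epsilon m\rfloor$) and the handling of the "few negative weights" sub-case of part (2), where you pad the set to size $\lceil\epsilon m/2\rceil$ while the paper flips the few offending weights and overwrites one arbitrary row; both stay within budget.
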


\begin{proof}
The proofs of both parts of \cref{lemma:output} are by contradiction: we will assume that for every input $x$ the output value is at most $\frac18 \epsilon nm$, or at least $-\frac18 \epsilon nm$, respectively, and then we will show that the ReLU network is $(\epsilon, \delta)$-close to computing the constant $0$-function, or the OR-function, respectively; this would imply contradiction. We will begin with the analysis of the constant $0$-function and then show how the analysis should be modified to extend the proof to the OR-function.
\begin{enumerate}[(1)]
\item For the purpose of contradiction, we will assume that for every input $x$ the output value is at most $\frac18 \epsilon nm$, and then we will show that the ReLU network is $(\epsilon, \delta)$-close to computing the constant $0$-function, implying contradiction.

First, suppose that there are less than $\lfloor \epsilon m \rfloor$ positively weighted edges connecting the hidden layer to the output node. Then we set all of them to $0$ and the resulting network computes the constant $0$-function. Thus the network is $(\epsilon,\delta)$-close, contradicting to the lemma's assumption.

Otherwise, suppose that there are at least $\lfloor \epsilon m \rfloor$ positively weighted edges connecting the hidden layer to the output node. Select an arbitrary subset $E$ of $\lfloor \epsilon m \rfloor$ positively weighted edges and let $U$ denote the set of hidden layer nodes incident to the edges from $E$. Let $E'$ be the subset of first layer edges connecting input nodes to the nodes from $U$. Notice that $|E| = |U| = \lfloor \epsilon m \rfloor$ and $|E'| = n \cdot |U| = \lfloor \epsilon m \rfloor n$. Now we modify the weights of $|E| = \lfloor\epsilon m \rfloor$ edges between the hidden layer nodes and the output, and $|E'| = \lfloor \epsilon m \rfloor n$ edges between input nodes and hidden layer nodes: we assign weight $-1$ to every edge from $E$ and weight $1$ to every edge from $E'$.

Observe that originally, before the modifications, every hidden layer node in $U$ had a non-negative contribution to the output. Furthermore, after the modifications, every hidden layer node from $U$ contributes $-k$ to the output node, where $k$ is the number of 1s in the input vector $x$, i.e., $k = \|x\|_1$. Thus all nodes from $U$ contribute $- |U| \cdot \|x\|_1 = - \lfloor \epsilon m \rfloor \cdot \|x\|_1$, and the changes of the weights of the edges incident to $U$ \emph{reduce the output value} by at least $\lfloor \epsilon m \rfloor \cdot \|x\|_1$.

Next, let us consider a random input vector $x \in \{0,1\}^n$.
Using Chernoff bounds (see auxiliary \cref{claim:NumberOfOnes} for details), with probability at least $1-e^{-n/16} \ge 1 - \delta$, we have $\|x\|_1 > \frac14 n$.
Conditioned on this event, after modifying the weights in the network, the sum of inputs to the output node is \emph{reduced} by at least $\frac14 n \cdot \lfloor \epsilon m \rfloor \ge \frac18 \epsilon nm$ for our range of parameters. However, we have assumed that for every input $x$ the output value is at most $\frac18 \epsilon nm$, and thus the network computes the constant $0$-function. Hence, we modified an $\epsilon$-fraction of the network's edge weights and obtained a function that disagrees with the
constant $0$-function only on inputs with $\|x\|_0 \le n/4$, which is at most a $\delta$-fraction of all inputs. This is a contradiction to the assumption that the network is $(\epsilon,\delta)$-far from that function.
Thus, part (1)
follows for $\delta \ge e^{-n/16}$.

\item The analysis for part (2) is almost identical: we will assume, for the purpose of contradiction, that for every input the output value is at least $- \frac18 \epsilon nm$, and then we will show that the ReLU network is $(\epsilon, \delta)$-close to computing the OR-function; this is a contradiction.

First, suppose that there are fewer than $\lfloor \epsilon m\rfloor $ non-positively weighted edges that connect the hidden layer to the output node. We set all of them to 1 making all edges connecting the hidden layer to the output node to have positive values. Next, we take an arbitrary hidden layer node and make
all its edges incident to the input nodes to become 1. Observe that the resulting network computes the OR-function. Thus, the original network is $(\epsilon,\delta)$-close, contradicting to the lemma's assumption.

Otherwise, proceeding as for part (1), suppose that there are at least $\epsilon m$ non-positively weighted edges connecting the hidden layer to the output node. Take an arbitrary subset $E$ of $\lfloor \epsilon m \rfloor$ non-positively weighted edges and let $U$ denote the set of hidden layer nodes incident to the edges from $E$. Let $E'$ be the subset of first layer edges connecting input nodes to the nodes from $U$. Notice that $|E| = |U| = \lfloor \epsilon m \rfloor$ and $|E'| = n \cdot |U| = \lfloor \epsilon m \rfloor \cdot n$. We modify the weights of $|E| = \lfloor \epsilon m \rfloor$ edges between the hidden layer nodes and the output, and $|E'| = \lfloor \epsilon m\rfloor n$ edges between input nodes and hidden layer nodes: we assign weight 1 to every edge from $E \cup E'$.

Arguing as in part (1), one can show that all nodes from $U$ contribute $|U| \cdot \|x\|_1 = \lfloor \epsilon m\rfloor \cdot \|x\|_1$.

Next, let us consider a random input vector $x \in \{0,1\}^n$
and as above, we can argue that with probability at least $1-e^{-n/16} \ge 1 - \delta$, we have $\|x\|_1 > \frac14 n$.
Conditioned on this event, after modifying the weights in the network, the sum of inputs to the output node is increased by more than $\frac14 n \cdot \lfloor \epsilon m\rfloor\ge \frac18 \epsilon nm$. However, we have assumed that for every input $x$ the output value is at least $-\frac18 \epsilon nm$, and thus the network computes the constant $1$-function; since this is for the inputs with $\|x\|_1 > \frac14 n$, the network computes the OR-function. Hence, we modified an $\epsilon$-fraction of the network's edge weights and obtained a function that may disagree with the OR-function only on inputs $x$ with $\|x\|_0 > n/4$, which is at most a $\delta$-fraction of all inputs. Thus the network is $(\epsilon,\delta)$-close to computing the OR function, which is a contradiction. 
Thus, part (2) follows for $\delta \ge e^{-n/16}$.
\end{enumerate}
\end{proof}


\subsubsection{One hidden layer: Random sampling concentration}
\label{subsubsec:sampling-1L}

Our next lemma considers the impact of the random sampling matrices $S$ and $T$ on the output value of a ReLU network for any fixed $x$ (and hence, for $x$ whose existence is argued in \cref{lemma:output}).

\begin{lemma}
\label{lemma:sampling}
Let $w \in [-1,1]^m$ and matrix $A \in [-1,1]^{m \times n}$. Let $S,T$ be sampling matrices corresponding to samples of size $t \ge \frac{2048 \ln(4/\lambda)}{\epsilon^2}$ and $s \ge \frac{2048 \ln(4t/\lambda)}{\epsilon^2}$. Then for any fixed $x \in \{0,1\}^n$ we have
\begin{align*}
\Pr\left[\left|
    \frac{mn}{st} \cdot w^T \cdot \relu(TASx) - w^T \cdot \relu(Ax) \right| >
    \tfrac{1}{16} \epsilon nm \right]
    &\le \lambda \enspace.
\end{align*}
\end{lemma}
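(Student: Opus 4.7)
The plan is to decompose the error into an ``outer'' sampling error (over the hidden-layer sample $T$) and an ``inner'' sampling error (over the input sample $S$), each of which I would bound by $\tfrac{1}{32}\epsilon nm$ using Hoeffding--Serfling (the concentration inequality for sampling without replacement recalled in the appendix). The key algebraic fact that permits such a clean split is the positive homogeneity of $\relu$: since $T$ is a $0/1$ diagonal matrix,
$$\tfrac{mn}{st}\, w^T \relu(TASx) \;=\; \tfrac{m}{t}\sum_{j \in T} w_j\, \tilde z_j, \qquad \tilde z_j := \relu\!\left(\tfrac{n}{s}\sum_{i \in S} a_{ij} x_i\right) = \tfrac{n}{s}\relu\!\left(\sum_{i\in S} a_{ij}x_i\right).$$
Writing $z_j := \relu(\sum_i a_{ij} x_i)$, I would add and subtract $\tfrac{m}{t}\sum_{j\in T} w_j z_j$ and apply the triangle inequality, producing the two error terms.

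For the \emph{outer} error, the quantities $w_j z_j$ lie in $[-n,n]$ (since $z_j \in [0,n]$ and $|w_j|\le 1$), so Hoeffding--Serfling applied to the uniform sample $T$ of size $t$ from the $m$ hidden-layer indices yields deviation at most $\tfrac{1}{32}\epsilon nm$ with failure probability at most $\lambda/2$, provided $t \ge 2048\ln(4/\lambda)/\epsilon^2$. For the \emph{inner} error, I would condition on the (independent) choice of $T$; for each $j \in T$, Hoeffding--Serfling applied to the values $a_{ij} x_i \in [-1,1]$ gives $|\tfrac{n}{s}\sum_{i\in S} a_{ij}x_i - \sum_i a_{ij}x_i| \le \tfrac{\epsilon n}{32}$ with probability at least $1 - 2\exp(-s\epsilon^2/2048)$. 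Since $\relu$ is $1$-Lipschitz, the same bound transfers to $|\tilde z_j - z_j|$. A union bound over the (at most) $t$ indices $j \in T$ drives the failure probability below $\lambda/2$ once $s \ge 2048\ln(4t/\lambda)/\epsilon^2$, and conditional on this good event, $|\tfrac{m}{t}\sum_{j\in T} w_j(\tilde z_j - z_j)| \le \tfrac{m}{t}\cdot t \cdot 1 \cdot \tfrac{\epsilon n}{32} = \tfrac{\epsilon nm}{32}$.

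Summing the two contributions via the triangle inequality yields the claimed bound $\tfrac{1}{16}\epsilon nm$ with total failure probability at most $\lambda$. The main obstacle is bookkeeping rather than any new idea: one has to carry the scaling factors $n/s$ and $m/t$ correctly through the ReLU (which is exactly where positive homogeneity is essential) and make sure the two Hoeffding--Serfling invocations compose without extra dependence, which is guaranteed by independence of $S$ and $T$ together with conditioning on $T$ first. The slightly inflated requirement $s \ge 2048\ln(4t/\lambda)/\epsilon^2$ (rather than $\ln(4/\lambda)/\epsilon^2$) is precisely the cost of this inner union bound.
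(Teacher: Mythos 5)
Your proposal is correct and follows essentially the same route as the paper's proof: split the error into an outer term (sampling $T$, controlled by Hoeffding with the rescaled range $[-n,n]$) and an inner term (sampling $S$, conditioned on $T$, controlled per-row by Hoeffding plus $1$-Lipschitzness of $\relu$, with a union bound over the $t$ sampled rows), then combine by triangle inequality; the positive homogeneity of $\relu$ is invoked in exactly the same way to move the scaling factors inside. The constants, the $\ln(4t/\lambda)$ cost of the inner union bound, and the overall structure all match the paper's argument.
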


\begin{proof}
Fix an arbitrary $x\in\{0,1\}^n$.
We first show that
\begin{align}
\label{first_ineq}
    \Pr\left[
        \left|\frac{m}{t} \cdot w^T \cdot T \cdot \relu(Ax) - w^T \cdot \relu(Ax) \right|
            > \tfrac{1}{32}\epsilon nm \right]
    &\le \tfrac12 \lambda \enspace.
\end{align}
Observe that $w^T \cdot T \cdot \relu(Ax)$ is the sum of $t$ entries in $w \odot  \relu(Ax)$ (here $\odot$ denotes the Hadamard product) sampled uniformly without replacement. Next, we notice that $\Ex[w^T \cdot T \cdot\relu(Ax)] = \frac{t}{m} \cdot w^T \cdot \relu(Ax)$. Rescaling by $\frac1n$ ensures that the entries in $w \odot\relu(Ax)$ are in $[-1,1]$ and so we can apply Hoeffding bounds \cite{Hoeffding63} (we remark that the Hoeffding bound also holds for sampling without replacement, see \cref{lemma:Hoeffding}) to show that
\begin{align*}
    &\Pr\left[
        \left|\frac{1}{n} w^TT\relu(Ax) - \frac{t}{mn} w^T \relu(Ax)\right| > \tfrac{1}{32} \epsilon t
    \right]
    =\\
    &\Pr\left[
        \left|\frac{1}{n} w^TT\relu(Ax) - \Ex\left[\frac1n w^T \cdot T \cdot\relu(Ax)\right]\right| > \tfrac{1}{32} \epsilon t
    \right]
    \le 2 e^{-2\epsilon^2(t/32)^2/(4t)}
    \le \tfrac12 \lambda
\end{align*}
for our choice of $t \ge \frac{2048 \ln(4/\lambda)}{\epsilon^2}$. Multiplying both sides with $\frac{mn}{t}$ proves inequality (\ref{first_ineq}).

In the following, we view the sampling process as first sampling $t$ nodes from the hidden layer and then $s$ nodes from the input layer. Thus, we can condition on $T$ being fixed and assume that the $s$ nodes from the input layer are sampled uniformly at random. We will now prove the following:
\begin{align}
\label{second_ineq}
    \Pr\left[\left|
        \frac{mn}{ts} \cdot w^T \cdot \relu(TASx)- \frac{m}{t}
        w^T \cdot T \cdot \relu(Ax) \right|
        > \tfrac{1}{32}\epsilon nm \; \Big| \; T \right]
    &\le \tfrac12 \lambda \enspace.
\end{align}

Let $a_i$ be the $i$-th row of matrix $A$. The value $a_i \cdot x$ is the value at the $i$-th hidden layer node before applying ReLU. We may view $a_i S x$ as the sum of $s$ random entries from $a_i \odot x$. Since these entries are from $[-1,1]$ and $\Ex[a_iSx] = \frac{s}{n} \cdot a_i\cdot x$, Hoeffding's bound (\cref{lemma:Hoeffding}) implies that
\begin{align*}
    \Pr\left[
        \left|a_i S x - \frac{s}{n} \cdot a_i x \right|
        > \epsilon s/32 \right]
    \le 2e^{-2\epsilon^2(s/32)^2/(4s)}
    &\le \frac{\lambda}{2t},
\end{align*}
where the last inequality follows from our assumption that $s \ge \frac{2024 \ln(4t/\lambda)}{\epsilon^2}$.

Now we observe that for $w_i\in[-1,1]$ we have
\begin{align*}
    \left|w_i \cdot \left(\relu(a_i S x) - \relu\left(\frac{s}{n}a_i x\right) \right)\right|
        & \le
    \left|w_i \cdot \left(a_i S x - \frac{s}{n}a_i x\right)\right|
        \le
    \left|a_i S x - \frac{s}{n}a_i x\right|.
\end{align*}
Following our arguments above, for each of the rows we have the following bound for the contribution of the edge with weight $w_i$ that connects the $i$-th hidden layer node to the output node:
\begin{align*}
    \Pr\left[
        \left|w_i \cdot \left(\relu(a_i S x) - \relu\left(\frac{s}{n} \cdot a_ ix\right)\right)\right|
        > \frac{1}{32}\epsilon s \right]
    &\le 2e^{-2\epsilon^2/(s/32)^2/(4s)}
    \le \frac{\lambda}{2t}.
\end{align*}
Next, we observe that $TA$ equals $A$ in exactly $t$ rows (the sampled rows) and is $0$ otherwise. Taking a union bound over the contribution of these $t$ rows and observing that for the remaining rows the product with $Sx$ is $0$ for every $S$, we obtain
\begin{align*}
    \Pr\left[
        \left|w^T \cdot \relu(TASx) - \frac{s}{n} \cdot w^T \cdot \relu(TAx)\right| > \frac{1}{32} \epsilon st \; \Big| \; T
    \right]
    \le \frac{\lambda}{2}.
\end{align*}
Multiplying with $\frac{mn}{st}$ and observing that $\relu(TAx) = T \cdot \relu(Ax)$ (since $\relu$ is positively homogeneous), we obtain (\ref{second_ineq}):
\begin{align*}
    \Pr\left[
        \left|\frac{mn}{st} \cdot w^T \cdot \relu(TASx) -
            \frac{m}{t} \cdot w^T \cdot T \cdot \relu(Ax)
        \right| > \epsilon mn/32 \; \Big| \; T \right]
    &\le \frac{\lambda}{2}.
\end{align*}
Next observe that inequalities (\ref{first_ineq}) and (\ref{second_ineq}) are simultaneously satisfied with probability at least $1-\lambda$. Thus, the lemma follows from
\begin{align*}
    \lefteqn{
    \left|
        \frac{mn}{st} \cdot w^T \cdot \relu(TASx) - w^T \cdot \relu(Ax) \right| \le} & \\
    &\le
    \left|\frac{mn}{st}\cdot w^T\relu(TASx) - \frac{m}{t} \cdot w^T \cdot T \cdot \relu(Ax)\right| +
        \left|\frac{m}{t} \cdot w^T \cdot T \cdot \relu(Ax) - w^T \cdot \relu(Ax) \right|
    \\
    &\le \frac{1}{16} \epsilon mn
\end{align*}
with probability at least $1-\lambda$.
\end{proof}


\subsubsection{One hidden layer: If $(\epsilon, \delta)$-far then we reject}
\label{subsubsec:ReLU-testing-0-function-2-sided-far-1L}

Our next lemma incorporates \cref{lemma:output,lemma:sampling} to summarize the analysis of our testers for ReLU networks that are $(\epsilon,\delta)$-far from computing the constant $0$-function or the OR-function.

\begin{lemma}
\label{lem:ReLU-testing-0-OR-function-2-sided-far}
Let $\delta \ge e^{-n/16}$, $\frac{1}{m} < \epsilon < \frac{1}{2}$, and $0 < \lambda < \frac{1}{2}$. Let $(A,w)$ be a ReLU network with $n$ input nodes and $m$ hidden layer nodes.
\begin{enumerate}[(1)]
\item If $(A,w)$ is $(\epsilon,\delta)$-far from computing the constant $0$-function then algorithm \textsc{AllZeroTester} rejects with probability at least $1-\lambda$.
\item If $(A,w)$ is $(\epsilon,\delta)$-far from computing the OR-function then algorithm \textsc{ORTester} rejects with probability at least $1-\lambda$.
\end{enumerate}
\end{lemma}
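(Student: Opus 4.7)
The plan is to compose the two preceding structural results: Lemma \ref{lemma:output}, which guarantees the existence of a single witness input on which an $(\epsilon,\delta)$-far network produces a large (in absolute value) output, and Lemma \ref{lemma:sampling}, which says that the scaled sampled network approximates the true output to within $\frac{1}{16}\epsilon mn$ for any fixed input with high probability. Since the bias in both testers is set precisely to $\pm\frac{1}{16}\epsilon mn$ and the gap from Lemma \ref{lemma:output} is $\pm\frac{1}{8}\epsilon mn$, the two estimates leave a margin of $\frac{1}{16}\epsilon mn$, which causes the witness input to certify rejection.

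Concretely, for part (1) I would proceed as follows. Assume $(A,w)$ is $(\epsilon,\delta)$-far from computing the constant $0$-function. Lemma \ref{lemma:output}(1) supplies an input $x_0 \in \{0,1\}^n$ with $w^T\relu(Ax_0) > \tfrac{1}{8}\epsilon mn$. Invoking Lemma \ref{lemma:sampling} on this fixed $x_0$ with failure probability $\lambda$, with probability at least $1-\lambda$ the random sampling matrices $S,T$ produced by \textsc{AllZeroTester} satisfy
\[
    \tfrac{mn}{st}\, w^T \cdot \relu(TASx_0) \;\ge\; w^T\relu(Ax_0) - \tfrac{1}{16}\epsilon mn \;>\; \tfrac{1}{8}\epsilon mn - \tfrac{1}{16}\epsilon mn \;=\; \tfrac{1}{16}\epsilon mn.
\]
Hence $\tfrac{mn}{st}\, w^T \relu(TASx_0) - \tfrac{1}{16}\epsilon mn > 0$, so the check in \textsc{AllZeroTester} is satisfied by $x_0$ and the algorithm rejects with probability at least $1-\lambda$. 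Part (2) is entirely symmetric: Lemma \ref{lemma:output}(2) yields $x_0$ with $w^T\relu(Ax_0) < -\tfrac{1}{8}\epsilon mn$, and Lemma \ref{lemma:sampling} then gives $\tfrac{mn}{st}\, w^T \relu(TASx_0) < -\tfrac{1}{16}\epsilon mn$, making $x_0$ a witness for the rejection condition of \textsc{ORTester}.

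The only nontrivial bookkeeping is verifying that the sample sizes $s = \lceil 2^{20}\epsilon^{-2}\ln(1/\epsilon\lambda)\rceil$ and $t = \lceil 2^{30}\epsilon^{-4}\ln(1/\epsilon\lambda)\rceil$ from the algorithms satisfy the hypotheses $t \ge 2048\ln(4/\lambda)/\epsilon^2$ and $s \ge 2048\ln(4t/\lambda)/\epsilon^2$ of Lemma \ref{lemma:sampling} (used with failure probability $\lambda$). The bound on $t$ is immediate since our $t$ is larger by a factor polynomial in $1/\epsilon$. For $s$, one just observes that $\ln(4t/\lambda) = O\!\bigl(\ln(1/(\epsilon\lambda))\bigr)$, and the constant $2^{20}$ comfortably absorbs both the factor $2048$ and this $O(\cdot)$ constant.

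There is essentially no technical obstacle beyond this constant-checking: all the real work has been done in Lemmas \ref{lemma:output} and \ref{lemma:sampling}. The only conceptual point worth emphasizing in the write-up is that the sampling lemma is applied to the \emph{fixed} input $x_0$ produced by the (deterministic) structural lemma, so no union bound over inputs is needed for the rejection direction --- the witness $x_0$ is selected before sampling occurs, and one instance of the concentration bound suffices.
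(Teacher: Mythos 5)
Your proof is correct and matches the paper's argument exactly: both invoke Lemma~\ref{lemma:output} to obtain a deterministic witness $x_0$ with large $|w^T\relu(Ax_0)|$ and then apply Lemma~\ref{lemma:sampling} to that fixed $x_0$ to conclude the sampled value clears the bias with probability $1-\lambda$. Your remark that the parameter-checking (that $s,t$ satisfy the hypotheses of Lemma~\ref{lemma:sampling}) is the only bookkeeping required also mirrors the paper's own Remark~\ref{remark:constraints:ReLU-testing-0-OR-function-2-sided}.
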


\begin{proof}
We prove only part (1), since the analysis of part (2) is identical.

By part (1) of \cref{lemma:output}, since $(A,w)$ is $(\epsilon,\delta)$-far from computing the constant $0$-function, there exists an $x_0 \in\{0,1\}^n$ such that $w^T \cdot \relu(Ax_0) > \frac18\epsilon mn$. Next, by \cref{lemma:sampling}, with probability at least $1-\lambda$ (over the random choice of $S$ and $T$ in the algorithm) we have
\begin{align*}
    \left|\frac{mn}{st} \cdot w^T \cdot \relu(TASx_0) - w^T \cdot \relu(Ax_0) \right| \le \frac{1}{16}\epsilon nm\enspace.
\end{align*}
This implies that with probability $1-\lambda$ we obtain
\begin{align*}
    \frac{mn}{st} \cdot w^T \cdot \relu(TASx_0) > \frac{1}{16}\epsilon nm.
\end{align*}
Thus, with probability at least $1-\lambda$ we have
\begin{align*}
    \frac{mn}{st} \cdot w^T \cdot \relu(TASx_0) - \frac{1}{16}\epsilon mn > 0.
\end{align*}
and so \textsc{AllZeroTester} rejects with probability at least $1-\lambda$. This concludes part (1).
\junk{
The proof of part (2) mimics the proof of part (1).

By \cref{lemma:output-OR}, since $(A,w)$ is $(\epsilon,\delta)$-far from computing the OR-function, there exists an $x_0 \in\{0,1\}^n$ such that $w^T \cdot \relu(Ax_0) < - \frac14 \epsilon mn$. Next, by \cref{lemma:sampling}, with probability at least $1-\lambda$ (over the random choice of $S$ and $T$ as defined by the algorithm) we have
\begin{align*}
    \left|\frac{mn}{st} \cdot w^T \cdot \relu(TASx_0) - w^T \cdot \relu(Ax_0) \right| \le \frac18\epsilon nm\enspace.
\end{align*}
This implies that with probability $1-\lambda$ we obtain
\begin{align*}
    \frac{mn}{st} \cdot w^T \cdot \relu(TASx_0) < - \frac18\epsilon nm\enspace.
\end{align*}
Thus, with probability at least $1-\lambda$ we have
\begin{align*}
    \frac{mn}{st} \cdot w^T \cdot \relu(TASx_0) + \frac18\epsilon mn < 0\enspace.
\end{align*}
and so \textsc{ORTester} rejects with probability at least $1-\lambda$. This concludes the proof.
}
\end{proof}


\subsubsection{One hidden layer: If network computes 0 then we accept}
\label{subsubsec:ReLU-testing-0-function-2-sided-compute-1L}

Our final lemma of this section summarizes the second part of the analysis of \textsc{AllZeroTester} and of \textsc{ORTester} for ReLU networks and deals with the networks that correctly compute the constant $0$-function and the OR-function, respectively.

\begin{lemma}
\label{lem:ReLU-testing-0-OR-function-2-sided-compute}
Let $(A,w)$ be a ReLU network with $n$ input nodes and $m$ hidden layer nodes.
 Let $S,T$ be sampling matrices corresponding to samples of size $t \ge \frac{512 \ln(2^{s+1}/\lambda)}{\epsilon^2}$ and $s \ge 1$.
\begin{enumerate}[(1)]
\item If $(A,w)$ computes the constant $0$-function, that is, for all $x\in \{0,1\}^n$ we have
\begin{align*}
    \sgn(\relu(w^T \relu(Ax))) &=0,
\end{align*}
then it holds with probability at least $1-\lambda$ (over the random choice of $S$ and $T$ as defined by the algorithm) that for all $x\in\{0,1\}^n$
\begin{align*}
    \frac{nm}{st} \cdot w^T \cdot \relu(TASx) - \frac{1}{16}\epsilon nm < 0
\end{align*}
and so algorithm {\sc AllZeroTester} accepts.
\item If $(A,w)$ computes the OR-function, that is, such that for all $x \in \{0,1\}^n \setminus \bzero$ we have
\begin{align*}
    \sgn(\relu(w^T \relu(Ax))) &= 1,
\end{align*}
then it holds with probability at least $1-\lambda$ (over the random choices of $S, T$) that for all $x\in\{0,1\}^n \setminus \bzero$
\begin{align*}
    \frac{nm}{st} \cdot w^T \cdot \relu(TASx) + \frac{1}{16} \epsilon nm & > 0
\end{align*}
and so algorithm \textsc{ORTester} accepts.
\end{enumerate}
\end{lemma}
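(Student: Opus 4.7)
The plan is to handle both parts at once by means of a reduction that replaces the a priori quantification over $2^n$ input vectors $x$ with a quantification over only $2^s$ effective inputs, which makes a union bound feasible against the sample size $t$. Condition on the choice of $S$ and observe that, since $S$ is diagonal with 0/1 entries and $\relu$ is positively homogeneous, $\relu(TASx) = T\relu(ASx)$. Setting $x' := Sx$, the vector $x'$ is just $x$ with every coordinate outside the sampled input set zeroed out. Thus, as $x$ ranges over $\{0,1\}^n$, $x'$ ranges over at most $2^s$ binary vectors supported on the sampled input nodes. Crucially, each such $x'$ is itself a legitimate input in $\{0,1\}^n$, so the hypothesis on $(A,w)$ applies directly to $x'$: for part (1), $w^T \relu(Ax') \le 0$ for every $x'$; for part (2), $w^T \relu(Ax') > 0$ for every $x' \neq \bzero$ (while $x' = \bzero$ makes $\relu(TASx) = \bzero$, so the target inequality holds trivially thanks to the positive bias term).

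For each fixed $x'$ supported on the sample, $w^T T \relu(Ax')$ is the sum of $t$ entries drawn uniformly without replacement from $w \odot \relu(Ax')$. The key quantitative point is the bound on the range of these entries: since $x'$ has at most $s$ nonzero coordinates and $\|a_i\|_\infty \le 1$, we have $|a_i \cdot x'| \le s$, so each sampled entry lies in $[-s,s]$. Consequently, Hoeffding's inequality for sampling without replacement (as used in \cref{lemma:sampling} via \cref{lemma:Hoeffding}) gives
\begin{align*}
\Pr\!\left[\,\bigl|\,w^T T \relu(Ax') - \tfrac{t}{m}\,w^T \relu(Ax')\,\bigr| > \tfrac{1}{32}\,\epsilon s t\,\right] \;\le\; 2\exp\!\bigl(-\Theta(\epsilon^2 t)\bigr).
\end{align*}

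Apply this bound simultaneously to all $\le 2^s$ effective inputs $x'$ by a union bound; the total failure probability is at most $2^{s+1} \exp(-\Theta(\epsilon^2 t))$, which is $\le \lambda$ under the hypothesis $t \ge \frac{512 \ln(2^{s+1}/\lambda)}{\epsilon^2}$ (with a possibly sharper constant in the exponent than my back-of-envelope $2048$). On the good event, multiplying through by $\frac{nm}{st}$ and using the sign information from the first paragraph finishes both parts: in (1) one obtains $\frac{nm}{st}\,w^T \relu(TASx) \le \tfrac{\epsilon nm}{32} < \tfrac{\epsilon nm}{16}$, while in (2) one obtains $\frac{nm}{st}\,w^T \relu(TASx) \ge -\tfrac{\epsilon nm}{32} > -\tfrac{\epsilon nm}{16}$.

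The only conceptually delicate step is the input-sampling reduction in the first paragraph; this is exactly the obstacle flagged in the paper's own overview (where a naive union bound over $\{0,1\}^n$ would force $t$ to depend on $n$). Once one recognizes that $Sx$ depends only on the $s$ sampled coordinates of $x$, and that the unsampled coordinates can be thought of as being fixed to $0$ so that the network's global guarantee still applies, the rest is a standard Hoeffding-plus-union-bound calculation, with the mild but essential subtlety that one should use the tight range bound $[-s,s]$ (not $[-n,n]$) when computing the concentration, because this is what makes the target deviation $\tfrac{\epsilon st}{32}$ the correct scale against which to prove the required inequality $\tfrac{nm}{st}\,w^T\relu(TASx) < \tfrac{\epsilon nm}{16}$.
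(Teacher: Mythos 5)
Your proof takes essentially the same route as the paper's: condition on $S$, observe that $Sx$ ranges over only $2^s$ effective inputs each of which is itself a legitimate input in $\{0,1\}^n$ (so the hypothesis on $(A,w)$ applies), use the range bound $[-s,s]$ on the entries of $w \odot \relu(ASx)$ together with Hoeffding for sampling without replacement, and union bound over the $2^s$ inputs.

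One small arithmetic point: you use deviation $\tfrac{1}{32}\epsilon s t$, which under the stated hypothesis $t \ge \frac{512\ln(2^{s+1}/\lambda)}{\epsilon^2}$ does \emph{not} quite close the union bound (it needs a constant of $2048$, as you note). The paper instead targets deviation $\tfrac{1}{16}\epsilon s t$ directly, which makes the constant $512$ come out exactly; your conclusion $\frac{nm}{st}\,w^T\relu(TASx) < \tfrac{1}{16}\epsilon nm$ only requires the $\tfrac{1}{16}$ deviation anyway, so the fix is a one-character change and does not affect the structure of the argument.
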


\begin{proof}
We prove only part (1), since the analysis of part (2) is identical.

Let $(A,w)$ be a ReLU network that computes the constant $0$-function. Let $I_S$ be the set of indices of the $s$ nodes sampled by our algorithm from the input layer and let $X_S = \{x = (x_1,\dots,x_n)^T \in\{0,1\}^n: x_i=0, \forall i\notin S\}$ be the set of different inputs to the sampled network. In what follows, we will condition on the choice of $S$ being fixed. Observe that
\begin{align*}
    \frac{nm}{st} \cdot w^T \cdot \relu(TASx) - \frac{1}{16} \epsilon nm < 0
\end{align*}
holds for all $x\in \{0,1\}^n$ if and only if this inequality holds for all $x\in X_S$. Therefore, in what follows we will show that the output value of a fixed $x\in X_S$ will be approximated with high probability and then we will apply a union bound over all $x\in X_S$ to conclude the analysis.

Let $x$ be an arbitrary fixed vector from $X_S$. Observe that $\frac{1}{s} \cdot w^T \cdot \relu(ASx)$ is an $m$-di\-men\-sional vector with entries from $[-1,1]$. We know that $w^T \relu(ASx)\le 0$ since the network computes the constant $0$-function. Since ReLU is positively homogeneous we can move $T$ out of the ReLU function. The fact that $T$ is a sampling matrix and that $w^T \relu(ASx)\le 0$ yields
\begin{align}
\label{ineq:expectation}
    \Ex\left[\frac{1}{s} \cdot w^T \cdot T \cdot \relu(ASx) \;\Big| \; S \right] \le 0.
\end{align}
Then, applying Hoeffding bound (\cref{lemma:Hoeffding}) gives us
\begin{align*}
    \Pr\left[
        \left|
            \frac{1}{s} \cdot w^T \cdot T \cdot \relu(ASx) -
            \Ex\left[\frac{1}{s} \cdot w^T \cdot T \cdot \relu(ASx)\right] \right|
        \ge
        \frac{1}{16} \epsilon t \; \Big| \; S\right] \le 2e^{-2 (\epsilon t/16)^2/(4t)},
\end{align*}
which is upper bounded by $\frac{\lambda}{2^s}$ for our choice of $t$, since $t \ge \frac{512 \ln(2^{s+1}/\lambda)}{\epsilon^2}$. By the union bound over the $|X_S|=2^s$ vectors in $X_S$, this implies the following inequality to hold for all $x \in \{0,1\}^n$:
\begin{align*}
    \Pr\left[
        \left| \frac{mn}{st} \cdot w^T \cdot T \cdot \relu(ASx) -
            \Ex\left[\frac{nm}{st} \cdot w^T \cdot T \cdot \relu(ASx)\right] \right|
        \ge \frac{1}{16} \epsilon nm
    \right] \le \lambda.
\end{align*}
Now the lemma follows from inequality (\ref{ineq:expectation}) and the fact that in the algorithm we are subtracting $\frac{1}{16} \epsilon nm$ from the result by taking a union bound over the $|X_S|=2^s$ vectors in $X_S$.

The proof of part (2) mimics the proof of part (1), completing the proof of \cref{lem:ReLU-testing-0-OR-function-2-sided-compute}.
\junk{
Let $(A,w)$ be a ReLU network that computes the OR-function. Let $I_S$ be the set of indices of the $s$ nodes sampled by our algorithm from the input layer and let $X_S = \{x = (x_1,\dots,x_n)^T \in\{0,1\}^n: x_i=0, \forall i\notin S\}$ be the set of different inputs to the sampled network. In what follows, we will condition on the choice of $S$ being fixed. Observe that
\begin{align*}
    \frac{nm}{st} \cdot w^T \cdot \relu(TASx) + \frac18 \epsilon nm &\ge 0
\end{align*}
holds for all $x\in \{0,1\}^n$ if and only if this inequality holds for all $x\in X_S$. Therefore, in what follows we will show that the output value of a fixed $x\in X_S$ will be approximated with high probability and then we will apply a union bound over all $x\in X_S$ to conclude the analysis.

Let $x$ be an arbitrary fixed vector from $X_S$. Observe that $\frac{1}{s} \cdot w^T \cdot \relu(ASx)$ is an $m$-di\-men\-sional vector with entries from $[-1,1]$. We know that $w^T \relu(ASx) \ge 0$ since the network computes the OR-function. Since ReLU is positively homogeneous we can move $T$ out of the ReLU function. The fact that $T$ is a sampling matrix and that $w^T \relu(ASx) \ge 0$ implies that
\begin{align}
\label{ineq:expectation-OR}
    \Ex\left[\frac{1}{s} \cdot w^T \cdot T \cdot \relu(ASx) \;\Big| \; S \right] &\ge 0.
\end{align}
Then, applying Hoeffding bound (\cref{lemma:Hoeffding}) gives us
\begin{align*}
    \Pr\left[
        \left|
            \frac{1}{s} \cdot w^T \cdot T \cdot \relu(ASx) -
            \Ex\left[\frac{1}{s} \cdot w^T \cdot T \cdot \relu(ASx)\right] \right|
        >
        \frac18 \epsilon t \; \Big| \; S\right] &\le 2e^{-2 (\epsilon t/8)^2/(4t)},
\end{align*}
which is upper bounded by $\frac{\lambda}{2^s}$ for our choice of $t$, since $t \ge \frac{128 \ln(2s/\lambda)}{\epsilon^2}$. By the union bound over the $|X_S|=2^s$ vectors in $X_S$, this implies the following inequality to hold for all $x \in \{0,1\}^n$:
\begin{align*}
    \Pr\left[
        \left| \frac{mn}{st} \cdot w^T \cdot T \cdot \relu(ASx) -
            \Ex\left[\frac{nm}{st} \cdot w^T \cdot T \cdot \relu(ASx)\right] \right|
        \le \frac18 \epsilon nm
    \right] &\le \lambda.
\end{align*}
Now the lemma follows from inequality (\ref{ineq:expectation-OR}) and the fact that in the algorithm we are subtracting $\frac18 \epsilon nm$ from the result by taking a union bound over the $|X_S|=2^s$ vectors in $X_S$.
}
\end{proof}


\subsubsection{One hidden layer: Completing the proof of \cref{thm:ReLU-testing-0-OR-function-2-sided}}
\label{subsubsec:proving-thm:ReLU-testing-0-function-2-sided-1L}

Now \cref{thm:ReLU-testing-0-OR-function-2-sided} follows immediately from \cref{lem:ReLU-testing-0-OR-function-2-sided-far,lem:ReLU-testing-0-OR-function-2-sided-compute} and from our analysis of the query complexity after the description of \textsc{AllZeroTester} on page \pageref{alg:AllZeroTester} and of \textsc{ORTester} on \pageref{alg:ORTester}.
\qed

\begin{remark}\rm
\label{remark:constraints:ReLU-testing-0-OR-function-2-sided}
In our analysis, we have three constraints for the values of $s$ and $t$: in \cref{lemma:sampling} we require $t \ge \frac{2048 \ln(4/\lambda)}{\epsilon^2}$ and $s \ge \frac{2048 \ln(4t/\lambda)}{\epsilon^2}$, and in \cref{lem:ReLU-testing-0-OR-function-2-sided-compute} we require $t \ge \frac{512 \ln(2^{s+1}/\lambda)}{\epsilon^2}$. One can easily verify that with our initial assumption that $0 < \lambda, \epsilon \le \frac12$, if $t = \lceil\frac{2^{30} \ln(1/\epsilon\lambda)}{\epsilon^4}\rceil$ and $s = \lceil\frac{2^{20} \ln(1/\epsilon\lambda)}{\epsilon^2}\rceil$, then these inequalities are satisfied.
\end{remark}

Our tester may be viewed as the structural result that any network that is $(\epsilon,\delta)$-far from computing the $0$-function (or from computing the OR-function) is far already on a small random subnetwork.


\section{A \emph{vanilla testing} procedure: Why random sampling is slow}
\label{subsec:comparisons-to-vanilla}

The routinely used and (arguably) most natural approach to test the functionality of a neural network is the following \emph{vanilla testing procedure}: one samples from an input distribution and then, after evaluating the network on the input samples, one determines the functionality of the neural network. This approach is equivalent to a different than our property testing model for such networks: instead of providing the oracle allowing to query the entries (weights) of the network, as we do in \cref{def:ReLU-testability}, the vanilla model allows to query the value of the network on any single input vector $x \in \{0,1\}^n$. Observe that this vanilla testing procedure is very standard for the property testing --- it corresponds to the main setting considered routinely in the property testing literature for testing functions, see, e.g., \cite{Goldreich17,BY22} and the references therein.

In this section, we compare our property tester from \cref{thm:ReLU-testing-0-OR-function-2-sided} to the vanilla testing approach and give an example that in some settings our approach will be able to detect a certain functionality (that is, not computing the 0-function), while the vanilla procedure will badly fail.

We will focus on sampling from the uniform distribution. We show that there are networks that are $(\epsilon,\delta)$-far from computing the constant $0$-function, for some constant $\epsilon>0$ and for $\delta = e^{-\Theta(n)}$, such that the vanilla tester will require $2^{\Omega(n)}$ samples (and so also exponential running time) to find an example on which the network does not evaluate to $0$, while our tester will reject the network in constant time.
In other words, the structural findings implied by our tester will lead in some cases to significantly improved testing compared to the vanilla procedure.


\subsection{The construction of a ReLU network $(A_L,w_L)$}
\label{subsubsec:comparisons-to-vanilla-construction}

We begin with an auxiliary construction of a ReLU network $(A_L,w_L)$ for which we will later show that (\cref{lemma:Vanilla1}) the probability that $(A_L,w_L)$ does not return $0$ is exponentially small (and hence it will not be detected by the vanilla tester) and at the same time that (\cref{lemma:Vanilla2}) $(A_L,w_L)$ is $(\epsilon,\delta)$-far from computing the constant $0$-function (and hence it will be detected by our tester \textsc{AllZeroTester}).


\begin{figure}[th]
\centerline{\includegraphics[width=.99\textwidth]{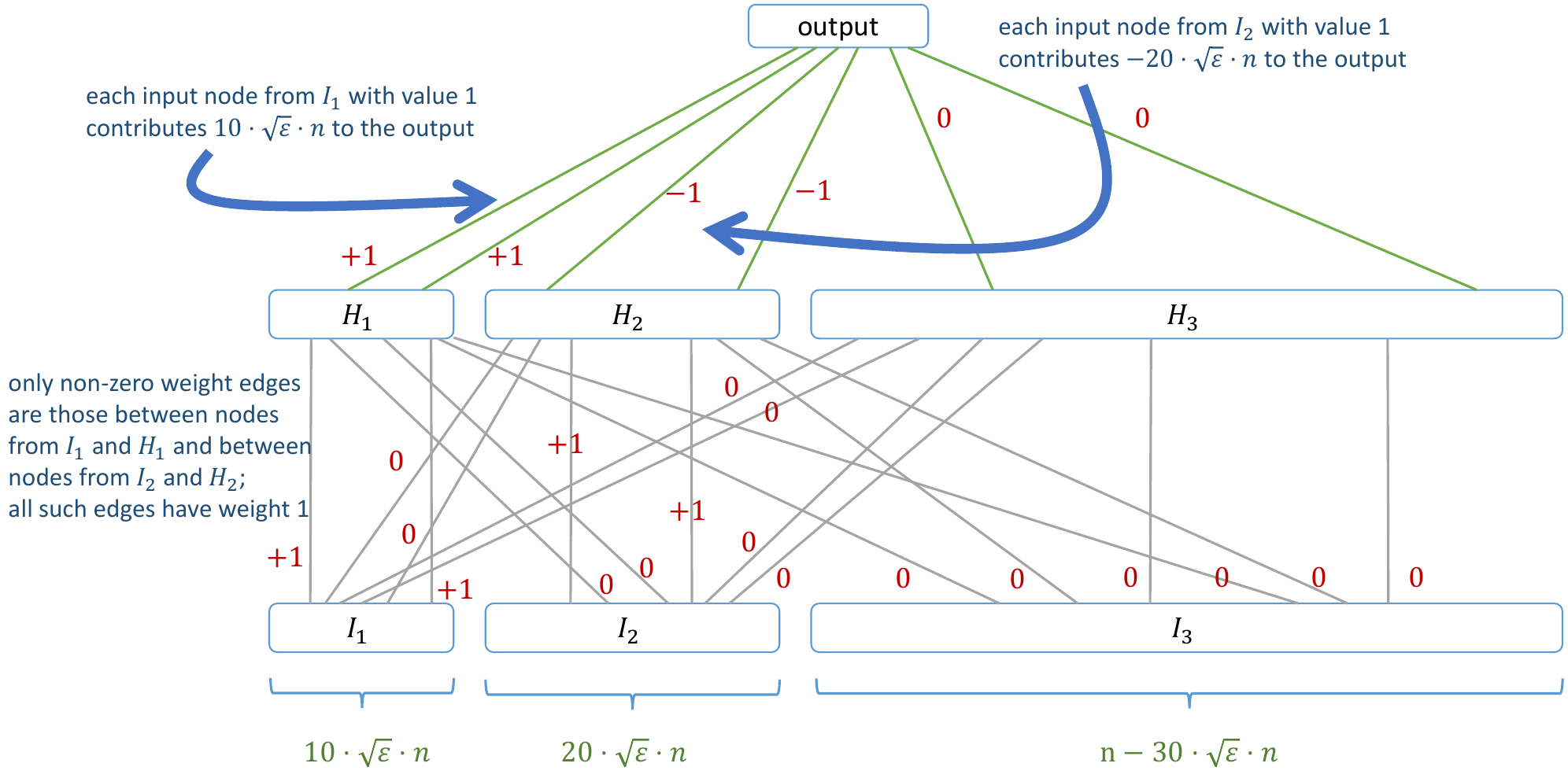}}
\caption{\small{}Construction of the ReLU network $(A_L,w_L)$ used in \cref{corollary:hardness-vanilla}.}
\label{fig:ReLU-network-vanilla}
\end{figure}


Our network is very simple, see Figure \ref{fig:ReLU-network-vanilla}. We have $n$ input nodes and $n$ hidden layer nodes. We partition the $n$ input nodes into three sets $I_1, I_2, I_3$ and the hidden layer nodes into three sets $H_1, H_2$ and $H_3$. $I_1$ contains the first $10 \sqrt{\epsilon} n$ input nodes
(we assume that $\sqrt{\epsilon} n$ is integral - if this is not the case, we can take a smaller $\epsilon$ that satisfies this condition), $I_2$ the next $20 \sqrt{\epsilon} n $ input nodes, and $I_3$ the remaining input nodes. Similarly, $H_1$ contains the first $10 \sqrt{\epsilon} n$ hidden layer nodes, $H_2$ the next $20 \sqrt{\epsilon} n $ hidden layer nodes, and $H_3$ the remaining hidden layer. In the first layer all edges connecting nodes from $I_1$ to $H_1$ and connecting nodes from $I_2$ to $H_2$ are set to $1$; all other edges are set to $0$. In the second layer, the nodes from $H_1$ are connected to the output node with an edge of weight $1$, the nodes from $H_2$ are connected to the output node with an edge of weight $-1$, and the nodes from $H_3$ are connected with an edge of weight $0$. Let $(A_L,w_L)$ be the constructed network.

Observe that each input node from $I_1$ contributes $10 \sqrt{\epsilon} n$ to the output, each input node from $I_2$ contributes $20 \sqrt{\epsilon} n$ to the output, and each input node from $I_3$ contributes $0$ to the output. Thus, the network evaluates to non-zero, if the number of ones among the input nodes from $I_1$ is more than twice the number of ones among the nodes from $I_2$. These observations allow us to prove that the constructed ReLU network $(A_L,w_L)$ returns $0$ for all but a tiny fraction of the inputs.

\begin{lemma}
\label{lemma:Vanilla1}
Let $0 < \epsilon < \frac{1}{1000}$ and $\sqrt{\epsilon} n$ be integral. Let $(A_L,w_L)$ be the network with $n$ hidden layer nodes and $n$ input nodes described above. The probability that a vector $x$ chosen uniformly at random from $\{0,1\}^n$ satisfies
\begin{align*}
    w_L^T \cdot \relu(A_L x) &> 0
\end{align*}
is at most $e^{-10 \sqrt{\epsilon} n/8}$.
\end{lemma}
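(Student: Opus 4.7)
The plan is to translate the question into a standard concentration statement for sums of independent Bernoulli variables. First I would read off the output value of $(A_L,w_L)$ on a fixed input $x$ from the description of the network. Every hidden node in $H_1$ computes $k_1 := \sum_{i \in I_1} x_i$ (all its other first-layer edges are zero), every hidden node in $H_2$ computes $k_2 := \sum_{i \in I_2} x_i$, and every hidden node in $H_3$ computes $0$. Since $k_1, k_2 \ge 0$, the ReLU activation is the identity on the hidden values, and the output becomes
\begin{align*}
  w_L^T \cdot \relu(A_L x)
   \;=\; |H_1| \cdot k_1 - |H_2| \cdot k_2
   \;=\; 10\sqrt{\epsilon}\, n \cdot (k_1 - 2 k_2) \enspace.
\end{align*}
Hence the event in the lemma is equivalent to $\{k_1 > 2 k_2\}$.

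Next, I would use that for $x$ drawn uniformly from $\{0,1\}^n$, the coordinates $(x_i)_{i \in I_1 \cup I_2}$ are independent Bernoulli$(1/2)$ variables, so $k_1 \sim \mathrm{Binom}(10\sqrt{\epsilon}\, n, 1/2)$ and $k_2 \sim \mathrm{Binom}(20\sqrt{\epsilon}\, n, 1/2)$ are independent. In particular
\begin{align*}
  \Ex[k_1 - 2k_2] \;=\; 5\sqrt{\epsilon}\, n - 20\sqrt{\epsilon}\, n \;=\; -15\sqrt{\epsilon}\, n \enspace,
\end{align*}
so $\{k_1 > 2k_2\}$ is a deviation of $k_1 - 2k_2$ from its mean by strictly more than $15\sqrt{\epsilon}\, n$.

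The main step is then a single application of Hoeffding's inequality (\cref{lemma:Hoeffding}) to the independent-summand random variable
\begin{align*}
  Z \;=\; \sum_{i \in I_1} x_i \;-\; \sum_{i \in I_2} 2 x_i \enspace,
\end{align*}
whose summands lie in $[0,1]$ (for $i \in I_1$) and $[-2,0]$ (for $i \in I_2$). The sum of squared ranges is $10\sqrt{\epsilon}\, n \cdot 1^2 + 20\sqrt{\epsilon}\, n \cdot 2^2 = 90\sqrt{\epsilon}\, n$, so Hoeffding yields
\begin{align*}
  \Pr[k_1 > 2k_2]
   \;\le\; \Pr\!\left[ Z - \Ex[Z] > 15\sqrt{\epsilon}\, n \right]
   \;\le\; \exp\!\left( -\tfrac{2\,(15\sqrt{\epsilon}\, n)^2}{90\sqrt{\epsilon}\, n} \right)
   \;=\; e^{-5\sqrt{\epsilon}\, n}
   \;\le\; e^{-10\sqrt{\epsilon}\, n/8} \enspace,
\end{align*}
which is the claimed bound.

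There is no real obstacle here: the only thing to be careful about is the asymmetric gap between $k_1$ and $k_2$ introduced by the ratio $|H_2|/|H_1| = 2$, which is exactly what makes the expectation of $k_1 - 2k_2$ negative by a $\Theta(\sqrt{\epsilon}\, n)$ margin and leaves room for a Hoeffding-style tail bound with constant in the exponent far larger than the required $10/8$. (An essentially equivalent proof splits the event $\{k_1 > 2k_2\}$ via a union bound into $\{k_1 \ge \tfrac{15}{2}\sqrt{\epsilon}\, n\}$ and $\{k_2 \le \tfrac{15}{4}\sqrt{\epsilon}\, n\}$ and applies Chernoff to each Binomial separately; I would prefer the one-shot Hoeffding as it gives a cleaner constant and avoids the union bound.)
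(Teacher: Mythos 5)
Your proof is correct but takes a genuinely different route from the paper's. The paper's argument is cruder: it upper-bounds the positive contribution from $I_1$ deterministically by its worst case $|I_1|\cdot|H_1| = 100\epsilon n^2$ (i.e., pretending all of $I_1$ is set to 1), and then applies a one-sided Chernoff bound only to $k_2 = \sum_{i\in I_2}x_i$, noting that $k_2 \ge 5\sqrt{\epsilon}n$ already forces the output to be nonpositive; this yields $\Pr[k_2 \le \tfrac12\Ex[k_2]] \le e^{-\Ex[k_2]/8} = e^{-10\sqrt{\epsilon}n/8}$. You instead treat both $k_1$ and $k_2$ as random and apply a single bounded-differences bound to $Z = k_1 - 2k_2$, exploiting that $\Ex[Z] = -15\sqrt{\epsilon}n$. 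Your route gives a sharper constant ($e^{-5\sqrt{\epsilon}n}$ instead of $e^{-10\sqrt{\epsilon}n/8}$), at the cost of tracking both sources of randomness; the paper's route is shorter because it discards the slack coming from $k_1$ rarely being maximal. Both are clean and valid.

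One small citation point: the tool you need is the Hoeffding-type bound for independent variables with per-coordinate ranges $c_k$, i.e., $\Pr[Z - \Ex Z \ge t] \le \exp(-2t^2/\sum_k c_k^2)$. In this paper that is exactly McDiarmid's inequality, \cref{lemma:McDiarmid} (with $c_k = 1$ for $k\in I_1$ and $c_k = 2$ for $k\in I_2$), not \cref{lemma:Hoeffding}, whose stated form concerns without-replacement sampling from a single population with a uniform $[a,b]$ bound and does not directly accommodate summands with different ranges. The computation $\sum c_k^2 = 10\sqrt{\epsilon}n\cdot 1 + 20\sqrt{\epsilon}n\cdot 4 = 90\sqrt{\epsilon}n$ and the resulting exponent $-2(15\sqrt{\epsilon}n)^2/(90\sqrt{\epsilon}n) = -5\sqrt{\epsilon}n$ are correct.
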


\begin{proof}
Since every input node from $I_1$ that is $1$ contributes $10 \sqrt{\epsilon} n$ to the output node, the overall contribution of these nodes is at most $100 \epsilon n^2$. Every input node from $I_2$ that is $1$ contributes $-20 \sqrt{\epsilon} n$ to the output node. Thus, if  at least $5 \sqrt{\epsilon} n$ nodes from $I_2$ are $1$ then the output will be $0$. Let $X_i$ be the indicator random variable for the event that the $i$th input bit from $I_2$ is $1$ when the input is chosen according to the uniform distribution. For $X = \sum_{i=1}^{20 \sqrt{\epsilon} n} X_i$ we have $\Ex[X] = 10 \sqrt{\epsilon} n$. By Chernoff bounds we get
\begin{align*}
    \Pr[X \le 5 \sqrt{\epsilon} n] &= \Pr[X \le \tfrac12 \Ex[X]]
        \le
    e^{-E[X]/8} = e^{-10 \sqrt{\epsilon} n/8}
    \enspace.
    \qedhere
\end{align*}
\end{proof}

Our next lemma shows that the constructed network is $(\epsilon,\delta)$-far from computing $0$-function.

\begin{lemma}
\label{lemma:Vanilla2}
Let $n \in \NN$, $0 < \epsilon < \frac{1}{1000}$, and $\delta < 2^{-(16 \sqrt{\epsilon} n+1)}$. The ReLU network $(A_L,w_L)$ is $(\epsilon,\delta)$-far from computing the constant $0$-function.
\end{lemma}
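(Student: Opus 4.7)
The plan is to fix an arbitrary admissible modification $(A',w')$---one with $\|A'-A_L\|_0\le\epsilon n^2$ and $\|w'-w_L\|_0\le\epsilon n$---and exhibit a set $X\subseteq\{0,1\}^n$ of size strictly greater than $\delta\cdot 2^n$ on which the modified output $O'(x):=\sum_j w'_j\cdot\relu(v_j(x))$, with $v_j(x):=\sum_i a'_{ij}x_i$, is strictly positive; this forces the Boolean function $g$ computed by $(A',w')$ to disagree with the constant $0$-function everywhere on $X$, which by the farness definition implies $(\epsilon,\delta)$-farness.

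Write $M_w=\{j:w'_j\ne w_j\}$ (so $|M_w|\le\epsilon n$) and let $c_j$ denote the number of changed entries in the $j$th column of $A$ (so $\sum_j c_j\le\epsilon n^2$). The first step is to single out a \emph{robust} subset of $H_1$-nodes: $H_1^{\mathrm{good}}:=\{j\in H_1:\ w'_j=1 \text{ and } c_j\le\sqrt{\epsilon}\,n/4\}$. A Markov-type counting argument gives at most $4\sqrt{\epsilon}\,n$ columns with $c_j>\sqrt{\epsilon}\,n/4$ and at most $\epsilon n$ changed weights, so $|H_1^{\mathrm{good}}|\ge 10\sqrt{\epsilon}\,n-4\sqrt{\epsilon}\,n-\epsilon n\ge 5.9\sqrt{\epsilon}\,n$ for $\epsilon<1/1000$.

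Define $X:=\{x\in\{0,1\}^n:\ s_1(x)\ge 5\sqrt{\epsilon}\,n \text{ and } s_2(x)\le\sqrt{\epsilon}\,n\}$, where $s_k(x)=\sum_{i\in I_k}x_i$. Since unchanged edges reproduce the $A_L$-structure and each changed edge perturbs $v_j$ by at most $2$, for $j\in H_1^{\mathrm{good}}$ and $x\in X$ we get $v_j(x)\ge s_1(x)-2c_j\ge 4.5\sqrt{\epsilon}\,n$, contributing at least $4.5\sqrt{\epsilon}\,n$ to $O'(x)$, for a total positive contribution of at least $5.9\sqrt{\epsilon}\,n\cdot 4.5\sqrt{\epsilon}\,n\ge 26\,\epsilon n^2$. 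Conversely, every node $j$ with $w'_j<0$ lies in the disjoint union $H_2\cup(M_w\cap H_1)\cup(M_w\cap H_3)$, on which $|v_j(x)|$ is bounded by $s_2+2c_j$, $s_1+c_j$, and $c_j$ respectively. Plugging in $s_2\le\sqrt{\epsilon}\,n$, $s_1\le 10\sqrt{\epsilon}\,n$, $|M_w|\le\epsilon n$, and $\sum_j c_j\le\epsilon n^2$ gives aggregate negative contribution at most $22.4\,\epsilon n^2$, so $O'(x)>0$ and hence $g(x)=1$ for every $x\in X$.

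It remains to count $|X|=2^{|I_3|}\cdot N_1\cdot N_2$. Binomial symmetry at the mean gives $N_1\ge 2^{|I_1|-1}$, while the single-term bound $\binom{N}{k}\ge(N/k)^k$ gives $N_2\ge\binom{20\sqrt{\epsilon}\,n}{\sqrt{\epsilon}\,n}\ge 20^{\sqrt{\epsilon}\,n}$. Combining, $|X|/2^n\ge 2^{-(20-\log_2 20)\sqrt{\epsilon}\,n-1}\ge 2^{-15.7\sqrt{\epsilon}\,n-1}>2^{-16\sqrt{\epsilon}\,n-1}>\delta$, completing the argument. The main obstacle is that both the positivity margin and the probability estimate are tight in the constants: a Chernoff-type tail lower bound on $N_2/2^{|I_2|}$ would carry a $1/\mathrm{poly}(\sqrt{\epsilon}\,n)$ factor and miss the $2^{-16\sqrt{\epsilon}\,n-1}$ threshold for large $n$, so the single-term binomial estimate, plus careful bookkeeping of the overlap of $c_j$-budgets across the disjoint regions $H_2$, $M_w\cap H_1$, $M_w\cap H_3$, is essential.
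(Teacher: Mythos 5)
Your proof is correct, and it reaches the same conclusion via the same witness set $X = \{x: s_1(x)\ge 5\sqrt{\epsilon}n,\ s_2(x)\le\sqrt{\epsilon}n\}$ and the same single-term binomial count $N_2\ge\binom{20\sqrt{\epsilon}n}{\sqrt{\epsilon}n}\cdot 2^{-20\sqrt{\epsilon}n}\ge 2^{-16\sqrt{\epsilon}n}$; where it diverges from the paper is only in how it certifies $O'(x)>0$ on $X$. The paper uses a one-line Lipschitz-style observation: each first-layer edit shifts the output by at most $2$ and each second-layer edit by at most $2n$, so any admissible modification shifts any output value by at most $4\epsilon n^2$, and since the \emph{unmodified} output on $x\in X$ is already at least $5\sqrt{\epsilon}n\cdot|H_1| - \sqrt{\epsilon}n\cdot|H_2| = 30\epsilon n^2$, positivity follows. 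You instead analyze the modified network directly, carving out a robust set $H_1^{\mathrm{good}}$ of $H_1$-nodes whose $w$-weight is unchanged and whose column has $\le\sqrt{\epsilon}n/4$ edits, lower-bounding their positive contribution, and separately upper-bounding the total negative contribution over $H_2\cup(M_w\cap H_1)\cup(M_w\cap H_3)$. This works and the constants check out, but the bookkeeping you call ``essential'' is not: the paper's global sensitivity bound sidesteps it entirely and with more room to spare ($30\epsilon n^2$ vs.\ $4\epsilon n^2$). Your decomposition does have the minor virtue of showing \emph{where} the positive margin comes from structurally, but for this lemma the Lipschitz argument is the cleaner route.
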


\begin{proof}
Let us first observe that any modification of a first layer weight can change the output value by at most $2$ and any modification of a second layer weight can change the output value by at most $2n$. Therefore, any modifications of at most $\epsilon n^2$ first layer weights and of at most $\epsilon n$ second layer weights can change any output value by at most $4 \epsilon n^2$.

Next, notice that for any input that has at least $5 \sqrt{\epsilon} n$ ones among the input nodes in $I_1$ and at most $ \sqrt{\epsilon} n$ ones among the input nodes in $I_2$, the output value is at least $5 \sqrt{\epsilon} n \cdot |H_1| - \sqrt{\epsilon} n \cdot |H_2| = 30 \epsilon n^2$. Thus, the network will output $1$ on such inputs even after an $\epsilon$-fraction of modifications. The probability that a random input has at least $5 \sqrt{\epsilon} n$ ones among the input nodes in $I_1$ is at least $\frac12$. The probability that the input has at most $ \sqrt{\epsilon} n$ ones among the $20 \sqrt{\epsilon} n$ input nodes in $I_2$ is
\begin{align*}
    \binom{20 \sqrt{\epsilon} n}{\sqrt{\epsilon} n} \cdot \left(\frac12\right)^{20 \sqrt{\epsilon} n}
        &\ge
    \left(\frac{20 \sqrt{\epsilon} n}{\sqrt{\epsilon} n}\right)^{\sqrt{\epsilon} n} \cdot \left(\frac12\right)^{20 \sqrt{\epsilon} n}
        \ge
    2^{-16 \sqrt{\epsilon} n}
    \enspace.
\end{align*}
Therefore, the ReLU network $(A_L,w_L)$ evaluates to $1$ for at least a $2^{-(16 \sqrt{\epsilon} n+1)}$-fraction of the inputs, and so the network is $(\epsilon,\delta)$-far from computing the constant $0$-function for $\delta < 2^{-(16 \sqrt{\epsilon} n+1)}$.
\end{proof}

We can now combine \cref{lemma:Vanilla1,lemma:Vanilla2} to summarize our findings in the following corollary.

\begin{corollary}
\label{corollary:hardness-vanilla}
Let $n$ be the number of input layer (and hidden layer) nodes of the network $(A_L,w_L)$ as defined above. Let $\frac1n < \epsilon < 2^{-16}$ and $0 < \lambda < \frac12$. Then
\begin{itemize}
\item Algorithm {\sc AllZeroTester} rejects ReLU network $(A_L,w_L)$ with probability at least $1-\lambda$;
\item the probability that $\sgn(\relu(w_L^T \cdot \relu(A_L x)))$ evaluates to $1$ is at most $e^{-10 \sqrt{\epsilon} n/8}$.
\end{itemize}
\end{corollary}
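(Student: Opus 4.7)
The plan is to combine the two lemmas already established for the specific network $(A_L, w_L)$ with the main two-sided tester result of \Cref{thm:ReLU-testing-0-OR-function-2-sided}, applied with $m = n$. The second bullet of the corollary, concerning the probability that $\sgn(\relu(w_L^T \cdot \relu(A_L x)))$ evaluates to $1$ on a uniformly random input, is essentially a restatement of \Cref{lemma:Vanilla1}, so the only nontrivial task is to justify the first bullet: that \textsc{AllZeroTester} rejects $(A_L, w_L)$ with probability at least $1 - \lambda$.

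For that, I would choose $\delta := e^{-n/16}$ and aim to invoke part (1) of \Cref{thm:ReLU-testing-0-OR-function-2-sided}. All of the tester's hypotheses need to be checked against the parameter regime assumed in the corollary. Since $m = n$, the condition $\tfrac{1}{m} < \epsilon$ becomes $\tfrac{1}{n} < \epsilon$, which holds; the assumption $\epsilon < 2^{-16} < \tfrac{1}{2}$ gives the upper bound on $\epsilon$; and $\lambda < \tfrac12$ is given. The requirement $\delta \ge e^{-n/16}$ is satisfied by construction of $\delta$.

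To apply \Cref{lemma:Vanilla2} and conclude $(\epsilon, \delta)$-farness from the constant $0$-function, one then needs $\delta < 2^{-(16\sqrt{\epsilon} n + 1)}$. This is where the specific upper bound $\epsilon < 2^{-16}$ on $\epsilon$ is used: it yields $\sqrt{\epsilon} < 2^{-8}$, hence $16\sqrt{\epsilon} < \tfrac{1}{16}$, so $16\sqrt{\epsilon}\, n + 1 < \tfrac{n}{16} + 1$, and a routine comparison of $e^{-n/16}$ with $2^{-(n/16 + 1)}$ (using $\ln 2 < 1$) gives $e^{-n/16} < 2^{-(16\sqrt{\epsilon} n + 1)}$ for all sufficiently large $n$. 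Thus $(A_L, w_L)$ is indeed $(\epsilon, \delta)$-far from computing the constant $0$-function for our chosen $\delta$.

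With $(\epsilon, \delta)$-farness established and all hypotheses of \Cref{thm:ReLU-testing-0-OR-function-2-sided} verified, part (1) of that theorem immediately gives that \textsc{AllZeroTester} rejects $(A_L, w_L)$ with probability at least $1 - \lambda$, completing the first bullet. The main obstacle in the proof is not any new analysis but the bookkeeping of three parameter constraints (coming from the tester, from \Cref{lemma:Vanilla1}, and from \Cref{lemma:Vanilla2}), so the corollary reduces to checking that the chosen $\delta$ lies in the intersection of their admissible ranges under the hypothesis $\epsilon < 2^{-16}$.
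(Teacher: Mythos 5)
Your proposal is correct and follows essentially the same route as the paper's own proof: set $\delta := e^{-n/16}$, use $\epsilon < 2^{-16}$ to verify $e^{-n/16} < 2^{-(16\sqrt{\epsilon} n+1)}$ so that \cref{lemma:Vanilla2} yields $(\epsilon,\delta)$-farness, then apply \cref{thm:ReLU-testing-0-OR-function-2-sided}(1) for the rejection bound and quote \cref{lemma:Vanilla1} for the second bullet. (One small point you leave implicit is that the hypothesis $\tfrac1n < \epsilon < 2^{-16}$ already forces $n > 2^{16}$, which more than suffices to discharge your "for all sufficiently large $n$" caveat.)
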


\begin{proof}
  It follows from Lemma \ref{lemma:Vanilla2} that for
  $\delta = e^{-n/16} < 2^{-(16\sqrt{\epsilon}n+1)}$
  the network $(A_L,w_L)$ is $(\epsilon,\delta)$-far
  from computing the $0$-function.
  Thus, by Theorem \ref{thm:ReLU-testing-0-OR-function-2-sided}
our algorithm will reject with probability $1-\lambda$. The second item follows from Lemma \ref{lemma:Vanilla1}.
\end{proof}

\cref{corollary:hardness-vanilla} implies that with the right sampling approach
we can efficiently analyze the behavior of a neural network in some cases
where it would not be detected by the vanilla tester.



\section{\emph{Distribution-free} model and its limitations for testing 0-function}
\label{sec:distribution-free-lower-bound}

Our analysis so far has been assuming (see \cref{def:ReLU-farness-from-property}) that the notion of being $(\epsilon,\delta)$-far from a property defines the farness with respect to the second parameter $\delta$ in terms of the \emph{uniform distribution}. We say that a network is far from computing a certain function $f$, if one has to change more than an $\epsilon$-fraction of its description size in order to obtain a function that on an input chosen from the uniform distribution differs from $f$ with probability at most $\delta$. While the use of the uniform distribution is standard in the classical property testing with respect to the parameter $\epsilon$ (cf. \cite{BY22,Goldreich17}), this model may seem to be too restrictive for many applications, and in particular, in the study of neural networks. In such case one could prefer to consider the \emph{distribution-free model}, that is, the setting in which one assumes an arbitrary underlying distribution of the data and the distance with respect to the parameter $\delta$ is measured according to this distribution and where the analysis can be applied regardless of the true distribution of the data (see, e.g., \cite{GGR98}, Chapter~1.5.2 in \cite{BY22}, or Chapter~1.3.2 in \cite{Goldreich17}).

In this section, our aim is to extend our study to the \emph{distribution-free setting}, where an input can be generated from an arbitrary unknown distribution \DD. Since our goal is to develop testers whose complexity is sublinear in the number of input bits, we do not want to assume that one can sample and access an entire input $x \in \{0,1\}^n$ according to distribution \DD at once, but rather we want to be able to sample individual bits of $x$. In other words, we allow the tester to have sample access to the underlying distribution \DD as well as query access to every sample.

\paragraph{Distribution-free model.}
Our distribution-free model can be described as follows. We are given an unknown distribution \DD over $\{0,1\}^n$ to which we have query access through a function \textsf{query}$(i,j)$. When for a given $i$ the first \textsf{query}$(i,j)$ happens, an input $y_i \in \{0,1\}^n$ is sampled from \DD and \textsf{query}$(i,j)$ returns the $j$-th bit of $y_i$. For any further queries with the same index $i$, \textsf{query}$(i,j')$ returns the $j'$-th bit of $y_i$ (and no sampling of $y_i$ occurs since $y_i$ is already determined).

\medskip

We believe that the above model is the most natural extension of our framework to the distribution-free setting, if we aim for sublinear time and query testers. Unfortunately, as we will show below, testing in this distribution-free model is hard even for very simple functions. In the following, we will provide in \cref{thm:testing-distribution-free-0} an $\Omega(n^{1-1/k})$ lower bound on the query complexity of testing the constant $0$-function in the distribution-free setting, where $k$ is an arbitrary constant.

We start by defining the notion of $(\epsilon,\delta)$-far for the distribution-free model.

\begin{definition}[\textbf{Being far from computing a function under distribution \DD}]
Let $(A,w)$ be a ReLU network with $n$ input nodes and $m$ hidden layer nodes and let \DD be a distribution on $\{0,1\}^n$. $(A,w)$ is called \textbf{$(\epsilon,\delta)$-close to computing a function} $f:\{0,1\}^n \rightarrow \{0,1\}$ \textbf{under distribution \DD}, if one can change the matrix $A$ in at most $\epsilon nm$ places and the weight vector $w$ in at most $\epsilon m$ places  to obtain a ReLU network that computes a function $g$ such that $\Pr[g(x)\not= f(x))]\le \delta$, where $x$ is chosen at random according to distribution \DD.

If $(A,w)$ is not $(\epsilon,\delta)$-close to computing $f$ under distribution \DD we say that it is \textbf{$(\epsilon,\delta)$-far from computing $f$ under distribution \DD}.
\end{definition}

The definition extends in a straightforward way to properties of functions (cf. \cref{def:ReLU-farness-from-property}).


\subsection{The framework of proving hardness of distribution-free testing of 0-function}
\label{subsec:framework}

The main result of this section, \cref{thm:testing-distribution-free-0} proven in \cref{subsec:testing-distribution-free-0}, provides an $\Omega(n^{1-1/k})$ lower bound on the query complexity of testing the constant $0$-function in the distribution-free setting. Before we present the proof, let us first introduce the framework behind our lower bound.

Let $k\ge 2$ be an even constant that is not divisible by 4. We will assume that $n$ is a multiple of $k$. We will first define two distributions $\NE_1$ and $\NE_2$ over pairs $((A,w),\DD)$ of ReLU networks $(A,w)$ with $n$ input nodes an $m=n$ hidden layer nodes, and input distributions \DD over $\{0,1\}^n$.  Then, we will show that for a pair $((A,w),\DD)$ from $\NE_1$, high probability $(A,w)$ computes the $0$-function (not only with respect to $\DD$), while a pair $((A,w),\DD)$ from $\NE_2$ with high probability is $(\epsilon,\delta)$-far from computing the $0$-function under distribution $\DD$. In $\NE_1$, the distribution \DD will depend on a partition $M$ of the input nodes into subsets of size $k$ and which will be selected uniformly at random from the set of all such partitions. In $\NE_2$, both $(A,w)$ and \DD will depend on such a partition $M$.

We will then show that no algorithm with query complexity $o(n^{1-1/k})$ can decide with probability at least $\frac23$ whether a sample $((A,w),\DD)$ comes from $\NE_1$ or $\NE_2$, where the algorithm is given query access to the sampled network $(A,w)$ and to the sampled distribution $\DD$. However, since a pair $((A,w),\DD)$ from distribution $\NE_1$ computes with high probability the $0$-function and the one from distribution $\NE_2$ is $(\epsilon,\delta)$-far from computing the $0$-function with respect to $\DD$, any property testing algorithm with success probability, say, $\frac34$ has to distinguish between the two distributions (note that we can always boost the success probability of a property tester to any constant without blowing up the asymptotic query complexity).


\begin{figure}[th]
\centerline{\includegraphics[width=.99\textwidth]{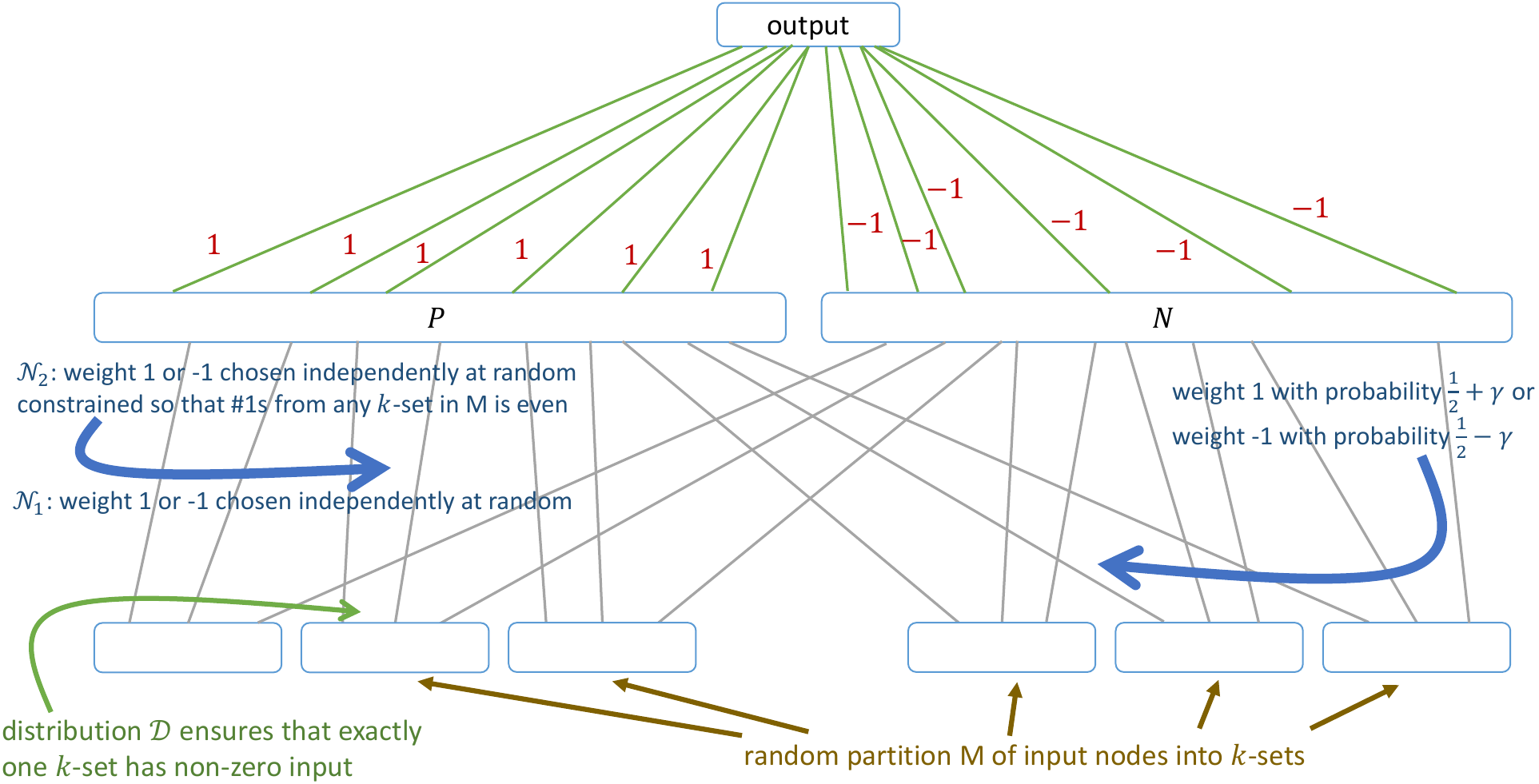}}
\caption{\small{}Construction of the networks used by the distributions $\NE_1$ and $\NE_2$. Notice that the $\frac{n}{2}$ nodes in $P$ have positive contribution to the output while the $\frac{n}{2}$ nodes in $N$ have negative contribution.}
\label{fig:ReLU-network-distribution-free}
\end{figure}


In the following, we describe how both distribution $\NE_1$ and $\NE_2$ generate pairs $((A,w),\DD)$ of a network $(A,w)$ and an input distribution $\DD$. For both $\NE_1$ and $\NE_2$, we first compute a random partition $M$ of the input nodes into subsets of size $k$. Each $k$-set in the partition will be denoted as $\{i_1,i_2,\dots,i_k\} \in M$, where $i_1, \dots, i_k$ are the indices of the input nodes in a given partition set in $M$. For both $\NE_1$ and $\NE_2$ the distribution \DD will be the uniform distribution on the set $\{x \in \{0,1\}^n: \exists \{i_1, \dots, i_k\} \in M \text{ such that } x = \sum_{j=1}^k e_{i_j} \}$. That is, distribution \DD assigns identical probability to $\frac{n}{k}$ inputs $x \in \{0,1\}^n$, each input containing $k$ 1s, as in one set of the partition~$M$.

Next, we describe how the ReLU networks $(A,w)$ are constructed (see also Figure~\ref{fig:ReLU-network-distribution-free}). Let $P$ denote the first $\frac{n}{2}$ hidden layer nodes and $N$ be the remaining $\frac{n}{2}$ hidden layer nodes. For both distributions $\NE_1$ and $\NE_1$ and $\NE_2$, we assign weight $1$ to the edges connecting the nodes from $P$ to the output node and weight $-1$ to connecting the nodes from $N$ to the output node.

In both distributions $\NE_1$ and $\NE_2$, each first layer edge incident to $N$ is assigned independently weight $1$ with probability $\frac12 + \gamma$ and weight $-1$ with probability $\frac12 - \gamma$, where $\gamma$ is a positive constant (that depends on an even $k \ge 2$ that is not divisible by $4$) defined as $\gamma := \frac{\Xi_k}{8k}$ with $\Xi_k := \frac{(-1)^{k/2-1}}{2^{k-1}} \cdot \binom{k-2}{k/2-1}$. Distributions $\NE_1$ and $\NE_2$ differ only in the way they assign weights to the first layer edges incident to $P$. In distribution $\NE_1$, for edges incident to $P$, we assign a weight from $\{-1,1\}$ chosen independently and uniformly at random. We will show that with high probability such a network will compute the $0$-function. The intuitive reason is that  each input node has in expectation more edges with weight $1$ to the nodes in $N$ than to the nodes in $P$, which means that the overall expected contribution per input node to the output will be negative. We will then show that this does not only hold in expectation but also with high probability (the reason is that we sample independently and for sufficiently large $n$ we have good concentration properties). The main challenge of the analysis following this approach is to control the impact of the ReLU function.

For $\NE_2$, the values assigned to the edges from $P$ depend on the sets in $M$. For every node $v \in P$ and every set $\{i_1, \dots, i_k\} \in M$, we select weights of edges $(i_1,v), \dots, (i_{k-1},v)$ independently and uniformly at random from $\{-1,1\}$. For the final edge $(i_k,v)$, we assign weight $1$ if the number of edges from $(i_1,v), \dots, (i_{k-1},v)$ that have been assigned weight $1$ is odd; otherwise, the weight is $-1$. Thus for example, in the case $k=2$, after assigning a random weight from $\{-1,1\}$ to $(i_1,v)$, we assign the same weight to $(i_2,v)$ as to $(i_1,v)$. Observe a useful property of this distribution 
that for every subset of at most $k-1$ from any set in $M$ the distribution is identical to sampling every weight independently and uniformly at random from $\{-1,1\}$. We will show that such a network  is with high probability $(\epsilon,\delta)$-far from computing the constant $0$-function under distribution \DD.

\paragraph{Building intuitions, case $k=2$.}
Before we go into the details of the proof let us first consider the setting $k=2$.
In this setting $\gamma = \frac{1}{32}$; for larger $k$ we will have much smaller values.
Thus, every edge between any input node and a node from $N$ has a probability $\frac23$ to be $1$ and $\frac13$ to be $-1$.
For samples $((A,w),\DD)$ from $\NE_1$, we observe that the expected number of edges with weight $1$ from a fixed input to the set $N$ is $\frac{17}{32} |N|$, while the number of edges with weight $-1$ is only $\frac{15}{32} |N|$. Thus, in expectation an input node contributes $\frac{1}{32} |N|$ to the values at the hidden layer. For the nodes in $P$, the expected number of edges with weight $1$ is the same as the number of edges with weight $-1$, and it is $\frac12 |P|$. Thus, the expected contribution to the hidden layer nodes is $0$. However, due to the ReLU, the expected contribution of the hidden layer nodes from $P$ to the output node is still positive. Nevertheless, the overall contribution will be significantly lower than to the nodes in $N$. Since the nodes in $N$ contribute negatively to the output node and the nodes in $P$ contribute positively (and both sets have the same size), the value at the output node will be negative in expectation (and also with high probability) and so the network  - with sufficiently high probability - outputs $0$ for all inputs, i.e. it computes the $0$-function. This can be proven by a properly parameterized union bound. For the samples from $\NE_1$ the distribution $\DD$ is irrelevant; it is only important to notice that the distribution of $\DD$ is the same as under $\NE_2$.

Next, consider a sample $((A,w), \DD)$ from distribution $\NE_2$. Recall that we first compute a random matching of the input nodes and the only change to the sampling from $\NE_1$ is that we assign identical (random) value to the edges $(u,w)$ and $(v,w)$
that connect matched input nodes $u$ and $v$ to the same hidden layer node $w$. This has two main reasons. Firstly, this will increase the expected contribution to the edges of $P$. This may seem counter-intuitive at first glance, but here the ReLU operation is important. Consider the contribution of two matched input nodes. If both nodes are $1$ (which happens with probability $\frac12)$, then the contribution to the output node is $2$, and if they are both $-1$, then the contribution is $0$. Thus the expected contribution is $1$. Now, when both edge weights are sampled independently, then the only value combination that contributes to the output node is when both edges are $1$, which happens with probability $\frac14$ (and contributes $2$). Thus, in this case the expected contribution is $\frac12$. In other words, when an input vector that is $1$ for two input nodes that belong to the same pair from $M$, then the value at the output node is very large and it requires a lot of changes to the network to fix this. We therefore pick distribution $\DD$ to be the uniform distribution on such inputs.

Next, we will show that it is hard to distinguish between a sample from $\NE_1$ and from $\NE_2$. We consider an arbitrary algorithm that determines whether a sample $((A,w),\DD)$ comes from $\NE_1$ or $\NE_2$. We assume that the algorithm has query access to $(A,w)$ and $\DD$. In order to prove that an algorithm with query complexity $o(\sqrt{n})$ (for $k=2$; larger $k$ we get $o(n^{1-1/k})$ cannot succeed we define two random processes that answer queries to assumed $(A,w)$ and $\DD$ on the fly rather than sampling them at the beginning. The two processes are giving identically distributed answers until for the first time all $k$ elements from one of the sets in $M$ have been queried. In the latter case, the answer is deterministic when the sample is supposed to come from $\NE_2$. Now, the key observation is that finding all elements of such a set requires $\Omega(n^{1-1/k})$ queries. Indeed, we may assume that the input nodes have been randomly perturbed, such that a new query to an input node is uniformly distributed (under this random permutation) among the unqueried nodes. Thus, we may assume that we are taking a uniform random sample from the input nodes. By a birthday-paradox type argument one can prove that we need $\Omega(n^{1-1/k})$ samples to have one set of a partition into sets of size $k$ completely contained in the sample. Thus, we require that many samples to distinguish between $\NE_1$ and $\NE_2$. Since the networks in $\NE_1$ compute the $0$-function (with high probability) and the networks in $\NE_2$ are $(\epsilon,\delta)$-far from computing the $0$-function under $\DD$, any property testing algorithm for the $0$-function in the distribution-free model must have a query complexity of $\Omega(n^{1-1/k})$.


\subsection{Analyzing distribution $\NE_1$}
\label{subsec:distribution-N1}

We begin with showing that if we sample $((A,w),\DD)$ from $\NE_1$ then with high probability $(A,w)$ computes the constant $0$-function. (The analysis in \cref{subsec:distribution-N1} works for any constant $0 < \gamma < \frac12$.)

We will study \emph{the expected value of a hidden layer node} (both from $P$ and $N$) \emph{for a fixed input vector $x$ with $\|x\|_1 = \ell$}. For this purpose, we introduce some random variables. Let $X_1, \dots, X_\ell$ be sampled independently and uniformly at random from $\{-1,1\}$ and let $X^{(\ell)} = \relu(\sum_{i=1}^\ell X_i)$. Let $Y_i$ be independent random variables with $\Pr[Y_i=1] = \frac12 + \gamma$ and $Y_i=-1$, otherwise; let $Y^{(\ell)} = \relu(\sum_{i=1}^\ell Y_i)$. Observe that our definition of the distribution $\NE_1$ implies that $X^{(\ell)}$ is the random variable for the value computed by a single node from $P$ and $Y^{(\ell)}$ is the random variable for the value computed by a single node from $N$. We first bound the difference between the expected value of a node from $P$ and from $N$ (\emph{notice that the use of $\relu$ makes the claim non-standard}.)

\begin{lemma}
\label{lemma:Expectation}
Let $\ell \in \NN$. Let $X_1, \dots, X_{\ell}$ be random variables taking a value from $\{-1,1\}$ independently and uniformly at random. Let $Y_1, \dots, Y_{\ell}$ be independent random variables such that $\Pr[Y_i=1] = \frac12 + \gamma$ and $\Pr[Y_i=-1] = \frac12 - \gamma$
for some constant $\gamma \in(0,1/2)$. Let $X^{(\ell)} = \relu(\sum_{i=1}^\ell X_i)$ and $Y^{(\ell)} = \relu(\sum_{i=1}^\ell Y_i)$. Then
\begin{align*}
    \tfrac14 \cdot \gamma \cdot \ell
        & \le
    \Ex[Y^{(\ell)}] - \Ex[X^{(\ell)}]
        \le
    4 \cdot \gamma \cdot \ell
    \enspace.
\end{align*}
\end{lemma}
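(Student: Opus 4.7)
The plan is to exploit the identity $\relu(z) = (z+|z|)/2$ so as to separate the linear part from the absolute-value part. Writing $S_Y := \sum_{i=1}^\ell Y_i$ and $S_X := \sum_{i=1}^\ell X_i$, one obtains
\begin{align*}
  \Ex[Y^{(\ell)}] - \Ex[X^{(\ell)}]
  \;=\; \tfrac12\bigl(\Ex[S_Y]-\Ex[S_X]\bigr) + \tfrac12\bigl(\Ex[|S_Y|]-\Ex[|S_X|]\bigr)
  \;=\; \gamma\ell + \tfrac12\bigl(\Ex[|S_Y|]-\Ex[|S_X|]\bigr),
\end{align*}
so the whole problem reduces to controlling $\Ex[|S_Y|]-\Ex[|S_X|]$ from above and from below.

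I would next install a monotone coupling. Let $W_1,\dots,W_\ell$ be i.i.d.\ $\mathrm{Bern}(2\gamma)$ random variables independent of the $X_i$'s, and define $Y_i := 1$ if $W_i=1$ and $Y_i := X_i$ otherwise. A one-line check shows $\Pr[Y_i=1] = \tfrac12+\gamma$ as required, and pointwise $Y_i \ge X_i$, so $S_Y = S_X + 2N$ with $N := |\{i : W_i=1 \text{ and } X_i=-1\}|$ and $\Ex[N]=\gamma\ell$. The \emph{upper bound} is immediate: since $\relu$ is non-decreasing and $1$-Lipschitz and $S_Y \ge S_X$, we get $\relu(S_Y)-\relu(S_X) \le S_Y - S_X$ pointwise, and taking expectations yields $\Ex[Y^{(\ell)}-X^{(\ell)}] \le \Ex[S_Y-S_X] = 2\gamma\ell \le 4\gamma\ell$.

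For the \emph{lower bound}, the key idea is to use the slack inequality $|a| \ge a\,\sgn(b)$, valid for all $a,b \in \RR$, applied with $a=S_Y$ and $b=S_X$: this gives $\Ex[|S_Y|] \ge \Ex[S_Y\,\sgn(S_X)] = \Ex[|S_X|] + 2\,\Ex[N\,\sgn(S_X)]$. Conditioning on $X$ and using the independence of $W$ and $X$ one computes $\Ex[N \mid X] = 2\gamma\cdot|\{i:X_i=-1\}| = \gamma(\ell - S_X)$; combined with the symmetry of the Rademacher sum $S_X$ (which gives $\Ex[\sgn(S_X)] = 0$ and $\Ex[S_X\,\sgn(S_X)] = \Ex[|S_X|]$), this yields $\Ex[N\,\sgn(S_X)] = -\gamma\,\Ex[|S_X|]$, hence $\Ex[|S_Y|] \ge (1-2\gamma)\Ex[|S_X|]$. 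Substituting back into the identity of the first paragraph gives $\Ex[Y^{(\ell)}] - \Ex[X^{(\ell)}] \ge \gamma\ell - \gamma\,\Ex[|S_X|] \ge \gamma(\ell - \sqrt{\ell})$ by Cauchy--Schwarz ($\Ex[|S_X|]^2 \le \Ex[S_X^2] = \ell$). For $\ell \ge 2$ one checks $\ell - \sqrt{\ell} \ge \ell/4$, delivering the required bound $\tfrac14\gamma\ell$; the edge case $\ell = 1$ is dispatched by a direct one-line computation that gives the difference equal to $\gamma$ exactly.

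The main obstacle is producing a quantitative lower bound at all: the monotonicity of the coupling alone only yields the trivial inequality $\Ex[Y^{(\ell)}-X^{(\ell)}] \ge 0$. The crucial observation is that, under the coupling, $N$ is correlated with $S_X$ in a very controlled way---its conditional mean $\Ex[N \mid X]$ is an \emph{affine} function of $S_X$---and pairing this linearity with the exact symmetry of the Rademacher sum cancels the dominant error term, leaving only an $O(\gamma\sqrt{\ell})$ residual that is dominated by the leading $\gamma\ell$ contribution for every $\ell \ge 2$.
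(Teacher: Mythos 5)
Your proof is correct, and while the coupling is the same as the paper's (your $W_i\sim\mathrm{Bern}(2\gamma)$ plays exactly the role of the paper's indicator $[Z_i=2]$), the route you take to the lower bound is genuinely different. The paper conditions on the event $\EPS=\{\sum_i X_i\ge 0\}$, shows by a direct count of sign vectors that $\Pr[X_i=-1\mid\EPS]\ge\frac18$, deduces $\Ex[Y_i\mid\EPS]-\Ex[X_i\mid\EPS]\ge\frac{\gamma}{2}$, and combines this with $\Pr[\EPS]\ge\frac12$. You instead split $\relu(z)=\frac12(z+|z|)$, reduce the problem to a lower bound on $\Ex[|S_Y|]-\Ex[|S_X|]$, and exploit the slack inequality $|a|\ge a\,\sgn(b)$ together with the exact symmetry of the Rademacher sum ($\Ex[\sgn(S_X)]=0$ and $\Ex[S_X\sgn(S_X)]=\Ex[|S_X|]$; both hold with this paper's convention $\sgn(0)=0$) and Cauchy--Schwarz. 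Your method buys quantitatively better constants: the intermediate bound $\gamma(\ell-\sqrt\ell)$ is asymptotically $\gamma\ell(1-o(1))$, sharper than the paper's $\frac14\gamma\ell$, and the Lipschitz argument for the upper direction gives $2\gamma\ell$ rather than the paper's $4\gamma\ell$ (both of course satisfy the stated $4\gamma\ell$). The paper's conditioning argument has the merit of being a single unified computation with no case split on $\ell$, while yours requires handling $\ell=1$ separately because $\ell-\sqrt\ell$ vanishes there; but this is a minor cost for the cleaner algebraic structure. Both proofs are sound.
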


\begin{proof}
We verify for $\ell = 1$ that
$\Ex[\relu(Y_1)] = \frac12 + \gamma$ and $\Ex[\relu(X_1)] = \frac12$, and therefore $\frac14 \gamma \le \Ex[Y^{(1)}] - \Ex[X^{(1)}] = \gamma \le 4 \gamma$.
In the following we will focus on the case $\ell \ge 2$.
In order to analyze the difference between the expected values of $Y^{(\ell)}$ and $X^{(\ell)}$, we sample the $Y_i$ according to a different process, which allows us to relate the expected values of $Y^{(\ell)}$ and $X^{(\ell)}$. Let us define an independent random variable $Z_i$ to be $2$ with probability $2\gamma$ and $0$ otherwise. We then define $Y_i= 1$, if $X_i=1$, and $Y_i = X_i + Z_i$, if $X_i=-1$. Observe that $Y_i$ only takes values $-1$ or $1$ and that we have (as required)
\begin{align}
\label{identity-useful-lemma:Expectation}
    \Pr[Y_i=1]
        &=
    \Pr[X_i=1] + \Pr[Z_i=2 | X_i=-1] \cdot \Pr[X_i=-1]
        =
    \tfrac12 + 2\gamma \cdot \tfrac12
        =
    \tfrac12 + \gamma
    \enspace,
\end{align}
where we use the fact that $X_i$ and $Z_i$ are independent. 
Therefore random variables $Y_1, \dots, Y_{\ell}$ have the right distribution and we defined a correlation between them and random variables $X_1, \dots, X_{\ell}$.

Next, since $Y_i \le X_i + Z_i$ and $Z \ge 0$, we can write
\begin{align*}
    \Ex[Y^{(\ell)}] & =
        \Ex[\relu(\sum_{i=1}^\ell Y_i)] \le
        \Ex[\relu(\sum_{i=1}^\ell (X_i + Z_i))] \le
        \Ex[\relu(\sum_{i=1}^\ell X_i) + \sum_{i=1}^\ell Z_i]
    \\& =
        \Ex[\relu(\sum_{i=1}^\ell X_i)] + \Ex[\sum_{i=1}^\ell Z_i] =
        \Ex[X^{(\ell)}] + 4 \gamma \ell
        \enspace.
\end{align*}
This proves the lower bound in \cref{lemma:Expectation}.

For the other direction, in order to process the ReLU function, let us define the event $\EPS$ that $\sum_{i=1}^{\ell} X_i \ge 0$ and notice that $\relu(\sum_{i=1}^\ell X_i \; | \;\EPS) = \sum_{i=1}^\ell X_i$. Next, since our process ensures that $Y_i \ge X_i$ for every $1 \le i \le \ell$, we obtain $\relu(\sum_{i=1}^\ell Y_i \; | \;\EPS) = \sum_{i=1}^\ell Y_i$. Next, notice the following:
\begin{align*}
    \Ex[Y^{(\ell)}]
        &=
    \Ex[\relu(\sum_{i=1}^\ell Y_i)]
        =
    \Ex[\relu(\sum_{i=1}^\ell Y_i) \; | \;\EPS] \cdot \Pr[\EPS] + \Ex[\relu(\sum_{i=1}^\ell Y_i) \; | \; \neg \EPS] \cdot \Pr[\neg \EPS]
        \\
        & \ge
    \Ex[\relu(\sum_{i=1}^\ell Y_i) \; | \;\EPS] \cdot \Pr[\EPS]
        =
    \Ex[\sum_{i=1}^\ell Y_i \; | \;\EPS] \cdot \Pr[\EPS]
        =
    \left(\sum_{i=1}^\ell \Ex[Y_i \; | \;\EPS]\right) \cdot \Pr[\EPS]
    \enspace.
\end{align*}
Similarly, since $\Ex[\relu(\sum_{i=1}^\ell X_i) \; | \;\EPS] = \Ex[\sum_{i=1}^\ell X_i \; | \;\EPS]$ and $\Ex[\relu(\sum_{i=1}^\ell X_i) \; | \; \neg \EPS] = 0$, we have
\begin{align*}
    \Ex[X^{(\ell)}]
        &=
    \Ex[\relu(\sum_{i=1}^\ell X_i)]
        =
    \Ex[\relu(\sum_{i=1}^\ell X_i) \; | \;\EPS] \cdot \Pr[\EPS] + \Ex[\relu(\sum_{i=1}^\ell X_i) \; | \; \neg \EPS] \cdot \Pr[\neg \EPS]
        \\
        & =
    \Ex[\relu(\sum_{i=1}^\ell X_i) \; | \;\EPS] \cdot \Pr[\EPS]
        =
    \Ex[\sum_{i=1}^\ell X_i \; | \;\EPS] \cdot \Pr[\EPS]
        =
    \sum_{i=1}^\ell \left(\Ex[X_i \; | \;\EPS]\right) \cdot \Pr[\EPS]
    \enspace.
\end{align*}
Therefore, we can combine the two bounds above to obtain the following:
\begin{align}
\label{inequality-useful-lemma:Expectation}
    \Ex[Y^{(\ell)}] - \Ex[X^{(\ell)}]
        &\ge
    \left(\sum_{i=1}^\ell \left(\Ex[Y_i \; | \;\EPS] - \Ex[X_i \; | \;\EPS]\right)\right) \cdot \Pr[\EPS]
    \enspace.
\end{align}
Conditioning on the event $\EPS$ can be modeled by identifying the possible outcomes for $X_1,\dots, X_\ell$ with the set of vectors $\alpha = (\alpha_1, \dots, \alpha_{\ell}) \in \{-1,1\}^{\ell}$ such that $\sum_{i=1}^\ell \alpha_i \ge 0$. Let $S^{\ge 0}$ denote this set. Thus, conditioning on $\EPS$ may be viewed as sampling uniformly at random a vector $\alpha = (\alpha_1, \dots, \alpha_{\ell})$ from $S^{\ge 0}$ and assigning $X_i = \alpha_i$ for every $1 \le i \le \ell$. Observe that $2^{\ell-1} \le |S^{\ge 0}| \le 2^{\ell}$. Furthermore,
notice that the number of vectors in $S^{\ge 0}$ with $\alpha_1 = -1$ is at least $2^{\ell-3}$.
(This follows from the fact that the number of vectors with $\alpha_1 = -1$, $\alpha_2 = 1$, and $\sum_{i=3}^\ell \alpha_i \ge 0$ is at least $2^{\ell-3}$.)
By symmetry, for any fixed $1 \le i \le \ell$, the number of vectors $\alpha \in S^{\ge 0}$ with $\alpha_i = -1$ is also at least $2^{\ell-3}$. This implies that $\Pr[X_i = -1 \; | \; \EPS] \ge \frac18$. This in turn, together with the fact that $\Pr[Z_i=2 \; | \; \EPS] = \Pr[Z_i=2] = 2\gamma$ and with the arguments used in (\ref{identity-useful-lemma:Expectation}), implies that
\begin{align*}
    \Pr[Y_i=1 \; | \; \EPS]
        &=
    \Pr[X_i=1 \; | \; \EPS] + \Pr[Z_i=2 | X_i=-1 \wedge \EPS] \cdot \Pr[X_i=-1 \; | \; \EPS]
        \\&=
    \Pr[X_i=1 \; | \; \EPS] + \Pr[Z_i=2] \cdot \Pr[X_i=-1 \; | \; \EPS]
        \\&\ge
    \Pr[X_i=1 \; | \; \EPS] + 2\gamma \cdot \tfrac18
    \enspace.
\end{align*}
This immediately yields
\begin{align*}
    \Pr[Y_i=-1 \; | \; \EPS]
        &=
    1 - \Pr[Y_i=1 \; | \; \EPS]
        \le
    1 - \left(\Pr[X_i=1 \; | \; \EPS] + \tfrac{\gamma}{4}\right)
        =
    \Pr[X_i=-1 \; | \; \EPS] - \tfrac{\gamma}{4}
    \enspace.
\end{align*}
Therefore,
\begin{align*}
    \Ex[Y_i \; | \; \EPS]
        &=
    \Pr[Y_i=1 \; | \; \EPS] - \Pr[Y_i=-1 \; | \; \EPS]
        \ge
    \left(\Pr[X_i=1 \; | \; \EPS] + \tfrac{\gamma}{4}\right) - \left(\Pr[X_i=-1 \; | \; \EPS] - \tfrac{\gamma}{4}\right)
        \\&=
    \Ex[X_i \; | \; \EPS] + \tfrac{\gamma}{2}
    \enspace.
\end{align*}
Now, we can plug the bound above in (\ref{inequality-useful-lemma:Expectation}) and combine it with $\Pr[\EPS] \ge \frac12$ to conclude
\begin{align*}
    \Ex[Y^{(\ell)}] - \Ex[X^{(\ell)}]
        &\ge
    \left(\sum_{i=1}^\ell \left(\Ex[Y_i \; | \;\EPS] - \Ex[X_i \; | \;\EPS]\right)\right) \cdot \Pr[\EPS]
        \ge
    \ell \cdot \tfrac{\gamma}{2} \cdot \tfrac12
    \enspace.
\end{align*}
which yields the first inequality and completes the proof of \cref{lemma:Expectation}.
\end{proof}

The following is a direct corollary of \cref{lemma:Expectation}.

\begin{lemma}
\label{lemma:Expectation-output}
For any input vector $x$ with $\|x\|_1 = \ell$, the expected value of $w^T \cdot \relu(Ax)$ satisfies
\begin{align*}
    - 2 \ell \gamma n
        & \le
    \Ex[w^T \cdot \relu(Ax)]
        \le
    - \tfrac18 \ell \gamma n
    \enspace.
\end{align*}
\end{lemma}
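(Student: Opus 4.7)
\textbf{Proof plan for \cref{lemma:Expectation-output}.} The plan is to decompose the output $w^T \cdot \relu(Ax)$ as a signed sum of the post-ReLU values at the hidden layer nodes, then use independence across hidden layer nodes together with \cref{lemma:Expectation} to reduce everything to the bounds on $\Ex[Y^{(\ell)}] - \Ex[X^{(\ell)}]$ already proved.

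First, I would write explicitly, for the network $(A,w)$ sampled from $\NE_1$,
\begin{align*}
    w^T \cdot \relu(Ax) &= \sum_{v \in P} \relu\Bigl(\sum_{i : x_i = 1} A_{v,i}\Bigr) - \sum_{v \in N} \relu\Bigl(\sum_{i : x_i = 1} A_{v,i}\Bigr),
\end{align*}
using the fact that the second-layer weight is $+1$ on edges from $P$ and $-1$ on edges from $N$, and that only input coordinates with $x_i = 1$ contribute to the pre-ReLU sum at a hidden node.

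Next, I would observe that under $\NE_1$ all first-layer edges are sampled independently. For each $v \in P$, the $\ell$ contributing weights $\{A_{v,i} : x_i = 1\}$ are i.i.d.\ uniform in $\{-1,1\}$, so $\relu(\sum_{i : x_i=1} A_{v,i})$ has exactly the distribution of $X^{(\ell)}$ from \cref{lemma:Expectation}. Similarly, for each $v \in N$, those $\ell$ weights are i.i.d.\ with $\Pr[\cdot = 1] = \tfrac12 + \gamma$, so $\relu(\sum_{i : x_i=1} A_{v,i})$ is distributed as $Y^{(\ell)}$. Moreover, the $v$-th summands for different $v$ are independent, though we only need linearity of expectation. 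Using $|P| = |N| = n/2$, linearity gives
\begin{align*}
    \Ex[w^T \cdot \relu(Ax)]
    &= \tfrac{n}{2}\,\Ex[X^{(\ell)}] - \tfrac{n}{2}\,\Ex[Y^{(\ell)}]
    = -\tfrac{n}{2}\bigl(\Ex[Y^{(\ell)}] - \Ex[X^{(\ell)}]\bigr).
\end{align*}

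Finally, I would plug in \cref{lemma:Expectation}, which tells us $\tfrac14 \gamma \ell \le \Ex[Y^{(\ell)}] - \Ex[X^{(\ell)}] \le 4\gamma\ell$. Multiplying through by $-\tfrac{n}{2}$ flips the inequalities and yields exactly
\begin{align*}
    -2\ell\gamma n \;\le\; \Ex[w^T \cdot \relu(Ax)] \;\le\; -\tfrac{1}{8}\ell\gamma n,
\end{align*}
as required. There is no real obstacle here: the lemma is essentially a direct corollary of \cref{lemma:Expectation}, and the only things to be careful about are (i) correctly identifying the distributions of the post-ReLU values at nodes in $P$ and $N$ (which follows from the definition of $\NE_1$ and the fact that only $\ell$ input coordinates are nonzero), and (ii) handling the sign flip coming from the $-1$ weights on edges out of $N$.
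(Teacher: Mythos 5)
Your proof is correct and takes the same route as the paper: decompose $w^T\cdot\relu(Ax)$ into the $P$-sum minus the $N$-sum, identify the post-ReLU value at each $P$-node (resp.\ $N$-node) as distributed like $X^{(\ell)}$ (resp.\ $Y^{(\ell)}$), apply linearity of expectation to get $\Ex[w^T\cdot\relu(Ax)] = -\tfrac{n}{2}\bigl(\Ex[Y^{(\ell)}]-\Ex[X^{(\ell)}]\bigr)$, and plug in \cref{lemma:Expectation}. You spell out the intermediate steps in a bit more detail than the paper does, but the argument is the same.
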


\begin{proof}
Since $\Ex[X^{(\ell)}]$ is the expected value computed by a single node from $P$, $\Ex[Y^{(\ell)}]$ is the expected value computed by a single node from $N$, all nodes from $P$ are connected to the output node by edges with weight $1$, all nodes from $N$ are connected to the output node by edges with weight $-1$, and $|P| = |N| = \frac{n}{2}$, the expected value of $w^T \cdot \relu(Ax)$ is equal to $\frac{n}{2} \cdot \left(\Ex[X^{(\ell)}] - \Ex[Y^{(\ell)}]\right)$. Then the claim follows directly from \cref{lemma:Expectation}.
\end{proof}

\cref{lemma:Expectation-output} implies that for any non-zero input vector $x$, the function computed according to the distribution $\NE_1$ satisfies $\Ex[w^T \cdot \relu(Ax)] < 0$, and now we extend these arguments to obtain a respective high-probability bound. We begin with a simple concentration bound.

\begin{lemma}
\label{lemma:distribution-free-N1-McDiarmind}
Let $x$ be an input with $\|x\|_1 = \ell$. Let $(A,w)$ be chosen from distribution $\NE_1$ for $n$ sufficiently large (as a function of $\gamma$). Let $X = w^T \cdot \relu(Ax)$ be the random variable for the output value. Then
\begin{align*}
    \Pr\left[X - \Ex[X] \ge \tfrac{\ell \gamma n}{8}\right] & \le n^{-10 \ell}
    \enspace.
\end{align*}
\end{lemma}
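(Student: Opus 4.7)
The plan is to apply McDiarmid's bounded-differences inequality (one of the concentration bounds listed in \cref{app:concentration-bounds}) to $X=w^T\cdot\relu(Ax)$, viewed as a function of the random edge weights incident to the $\ell$ input nodes on which $x$ is $1$. Since $w_v\in\{-1,+1\}$ is fixed by the construction and the values at input nodes with $x_i=0$ do not enter $Ax$, the only randomness contributing to $X$ lies in the $\ell n$ first-layer edge weights $\{a_{iv}:x_i=1,\ v\in P\cup N\}$, and under $\NE_1$ these $\ell n$ variables are mutually independent (each edge into $P$ is uniform in $\{-1,+1\}$, each edge into $N$ has $\Pr[a_{iv}=1]=\tfrac12+\gamma$).

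Next, I would verify the bounded-differences property. Flipping a single weight $a_{iv}$ (with $x_i=1$) changes the pre-ReLU sum $\sum_{j:x_j=1}a_{jv}$ at node $v$ by at most $2$, and leaves every other hidden-layer node untouched; since $\relu$ is $1$-Lipschitz and $|w_v|=1$, the net change to $X$ is at most $2$. Applying McDiarmid with $\ell n$ independent coordinates and bounded differences $c_i\le 2$ gives
\begin{align*}
\Pr\!\left[X-\Ex[X]\ge t\right]\ \le\ \exp\!\left(-\frac{2t^{2}}{\sum_i c_i^{2}}\right)\ \le\ \exp\!\left(-\frac{t^{2}}{2\,\ell n}\right).
\end{align*}

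Plugging $t=\tfrac{\ell\gamma n}{8}$ yields the bound $\exp(-\ell\gamma^{2}n/128)$. To match the target $n^{-10\ell}$ it suffices that $\gamma^{2}n/128\ge 10\ln n$, i.e.\ $n\ge 1280\gamma^{-2}\ln n$, which is exactly what the lemma's hypothesis ``$n$ sufficiently large as a function of $\gamma$'' allows me to assume. This gives the claimed tail bound.

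I do not expect a serious obstacle here: the only subtle point is choosing the right granularity for the bounded-differences argument. Doing it at the level of hidden-layer nodes (each contributing a term in $[-\ell,\ell]$) would give differences of order $\ell$ across $n$ coordinates and produce a $\gamma^{2}n/32$ exponent that is too weak when $\ell$ is large. Going down to the level of individual edge weights, where the difference is only $2$ but there are $\ell n$ independent coordinates, improves the exponent by a factor of $\ell$ and delivers the required $n^{-10\ell}$ decay. Everything else (independence, Lipschitzness of $\relu$, boundedness of $w_v$) is immediate from the construction of $\NE_1$.
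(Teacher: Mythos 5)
Your argument is correct and matches the paper's proof essentially verbatim: both apply McDiarmid at the level of the $\ell n$ independent first-layer edge weights $a_{iv}$ with $x_i=1$, observe that a single flip changes $X$ by at most $2$, obtain the bound $e^{-\ell\gamma^2 n/128}$, and convert this to $n^{-10\ell}$ under the ``$n$ sufficiently large as a function of $\gamma$'' hypothesis. Your closing remark about why the edge-level (rather than node-level) granularity is needed to recover the factor $\ell$ in the exponent is a correct and helpful observation, even if not part of the paper's proof text.
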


\begin{proof}
In distribution $\NE_1$, if $\|x\|_1 = \ell$ then random variable $X = w^T \cdot \relu(Ax)$ depends only on the choices of $\ell n$ values in matrix $A$ (entries $A[i,j]$ with $x_i = 1$). Observe that $\NE_1$ selects the values in $A$ independently for each entry, and that changing an entry in $A$ changes the output value by at most $2$. Thus, by McDiarmid's inequality (one-sided version of \cref{lemma:McDiarmid} with $c_k = 2$ for the $\ell n$ relevant entries in $A$), we get
\begin{align*}
    \Pr\left[X - \Ex[X] \ge \tfrac{\ell \gamma n}{8}\right]
        &\le
    e^{-2(\ell \gamma n/8)^2 / (2^2 \cdot \ell \cdot n)} = e^{- \ell \gamma^2 n / 128}
    \enspace.
\end{align*}
For sufficiently large $n$ such that $\frac{n}{\ln n} \ge \frac{1280}{\gamma^2}$, we get the lemma.
\end{proof}

We can now combine \cref{lemma:Expectation-output} and \cref{lemma:distribution-free-N1-McDiarmind} to obtain the following high-probability bound.

\begin{lemma}
\label{lemma:distribution-free-N1-computes-0-whp}
Let $((A,w),\DD)$ be chosen from distribution $\NE_1$ for $n$ sufficiently large (as a function of $\gamma$). Then $(A,w)$ computes the constant $0$-function with probability at least $1-n^{-8}$.
\end{lemma}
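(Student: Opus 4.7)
The plan is to combine the preceding two lemmas and then union-bound over all inputs, exploiting the fact that the concentration tail in \cref{lemma:distribution-free-N1-McDiarmind} decays as $n^{-10\ell}$, which comfortably beats the count $\binom{n}{\ell}\le n^{\ell}$ of inputs of Hamming weight $\ell$.

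First I would handle the trivial input $x = \bzero$: since $\relu(A\bzero) = \bzero$, we have $w^T\relu(A\bzero) = 0$ and hence $\sgn(\relu(w^T\relu(A\bzero))) = 0$ deterministically (independently of the draw of $(A,w)$ from $\NE_1$). So it suffices to bound the probability that some nonzero $x$ witnesses $w^T\relu(Ax) > 0$.

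Next, fix an arbitrary $x \in \{0,1\}^n$ with $\|x\|_1 = \ell \geq 1$ and let $X := w^T\relu(Ax)$. By \cref{lemma:Expectation-output}, $\Ex[X] \le -\tfrac{1}{8}\ell\gamma n$, so the event $\{X > 0\}$ is contained in $\{X - \Ex[X] \ge \tfrac{1}{8}\ell\gamma n\}$. Applying \cref{lemma:distribution-free-N1-McDiarmind} then yields
\begin{align*}
\Pr[X > 0] \;\le\; \Pr\!\left[X - \Ex[X] \ge \tfrac{\ell\gamma n}{8}\right] \;\le\; n^{-10\ell}\enspace.
\end{align*}

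Finally I would take a union bound over all $x \in \{0,1\}^n \setminus \{\bzero\}$, grouping inputs by Hamming weight. Since there are $\binom{n}{\ell} \le n^{\ell}$ inputs of weight $\ell$,
\begin{align*}
\Pr\!\left[\exists\, x\neq \bzero:\; w^T\relu(Ax) > 0\right] \;\le\; \sum_{\ell=1}^{n} n^{\ell}\cdot n^{-10\ell} \;=\; \sum_{\ell=1}^{n} n^{-9\ell} \;\le\; \frac{n^{-9}}{1 - n^{-9}} \;\le\; n^{-8}
\end{align*}
for $n$ sufficiently large. Combined with the deterministic treatment of $x=\bzero$, this shows that with probability at least $1-n^{-8}$, every input $x$ satisfies $w^T\relu(Ax) \le 0$, and hence the network $(A,w)$ computes the constant $0$-function. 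There is no real obstacle in this step: all the work is already absorbed in the expectation bound of \cref{lemma:Expectation-output} and the McDiarmid-type tail of \cref{lemma:distribution-free-N1-McDiarmind}; the only thing to verify is that the $n^{-10\ell}$ tail is strong enough to survive the union bound across all $2^n$ inputs, which it is because it shrinks geometrically in $\ell$ with ratio $n^{-9}$.
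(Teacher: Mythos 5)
Your proof is correct and follows essentially the same approach as the paper: fix the Hamming weight $\ell$, combine \cref{lemma:Expectation-output} with the McDiarmid tail of \cref{lemma:distribution-free-N1-McDiarmind} to get a per-input failure probability of $n^{-10\ell}$, and union-bound over the at most $n^{\ell}$ inputs of weight $\ell$ and then over $\ell \ge 1$. The only cosmetic difference is in the final summation (you use the geometric-series bound $\sum_{\ell\ge1} n^{-9\ell} \le n^{-9}/(1-n^{-9})$ whereas the paper crudely bounds it by $n\cdot n^{-9}$), but both yield $n^{-8}$.
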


\begin{proof}
We have at most $n^\ell$ choices of $x$ with $\|x\|_1 = \ell$. Since $\Ex[w^T \relu(Ax)] \le - \frac18 \ell \gamma n$ by \cref{lemma:Expectation-output}, a union bound applied to the bound from \cref{lemma:distribution-free-N1-McDiarmind} yields for any fixed value of $\ell>0$
\begin{align*}
    \Pr \left[ \exists x\in\{0,1\}^n : \|x\|_1 = \ell \text{ and } w^T \relu(Ax)> 0 \right] &\le
    n^\ell \cdot n^{-10\ell} =
    n^{-9\ell}
    \enspace.
\end{align*}
Next, we take a union bound over the $n$ choices for $\ell \ge 1$ (for $\ell=0$ the output value is always $0$) and obtain
\begin{align*}
    \Pr\left[\exists x\in\{0,1\}^n :  w^T \relu(Ax)> 0\right]
        & \le
    \sum_{\ell=1}^n \Pr\left[\exists x\in\{0,1\}^n : \|x\|_1 = \ell \text{ and } w^T \relu(Ax)> 0\right]  \\
        & \le
    n \cdot n^{-9}
        =
    n^{-8}
    \enspace.
    \qedhere
\end{align*}
\end{proof}


\subsection{Analyzing distribution $\NE_2$}
\label{subsec:distribution-N2}

We next consider distribution $\NE_2$ and observe that it depends strongly on $k$.
We will first determine the difference of the expected value of the ReLU of sum of $k$ i.i.d. binomial random variables and the sum of $k-1$ such variables plus the ``parity,'' i.e., a value that is $1$ if there is an odd number of ones among the first $k-1$ random variables and is $-1$ otherwise.

\begin{lemma}
\label{lemma:Parity}
Let $k \ge 2$ be even. Let $X_1, \dots, X_k$ be random variables such that
\begin{inparaenum}[(i)]
\item $X_1, \dots, X_{k-1}$ take values from $\{-1,1\}$ independently and uniformly at random and
\item let $X_k$ be $1$ if $|\{i \in \{1, \dots, k-1\} : X_i=1 \}|$ is odd and $X_k=-1$ otherwise.
\end{inparaenum}
Let $U_1, \dots, U_k$ be random variables that take values from $\{-1,1\}$ independently and uniformly at random. Then we have
\begin{align*}
    \Ex\left[\relu\left(\sum_{i=1}^k X_i\right)\right] - \Ex\left[\relu\left(\sum_{i=1}^k U_i\right)\right]
        &=
    \frac{(-1)^{k/2-1}}{2^{k-1}} \cdot \binom{k-2}{k/2-1}
    \enspace.
\end{align*}
\end{lemma}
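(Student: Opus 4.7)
My plan is to reduce the difference to a single partial alternating binomial sum by means of a coupling. First, I would \emph{couple} the two sequences: set $X_i = U_i$ for $1 \le i \le k-1$, and let $U_k$ be an independent uniform $\pm 1$ variable. Writing $S := \sum_{i=1}^{k-1} X_i = \sum_{i=1}^{k-1} U_i$, the variable $X_k$ is a deterministic function of $X_1,\dots,X_{k-1}$ (in fact of $S$ alone), whereas $U_k$ is uniform and independent of $S$; hence $\Ex[\relu(S+U_k)\mid S] = \tfrac12 \relu(S+1) + \tfrac12 \relu(S-1)$. A quick two-case check on $X_k \in \{-1,+1\}$ yields
\[
\relu(S+X_k) - \tfrac12\relu(S+1) - \tfrac12\relu(S-1) = \tfrac{X_k}{2}\bigl(\relu(S+1) - \relu(S-1)\bigr),
\]
so taking full expectations gives
\[
\Ex\bigl[\relu(\textstyle\sum_{i=1}^k X_i)\bigr] - \Ex\bigl[\relu(\textstyle\sum_{i=1}^k U_i)\bigr] = \tfrac{1}{2}\, \Ex\bigl[X_k \cdot (\relu(S+1) - \relu(S-1))\bigr].
\]

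The next step is to simplify the two factors inside this expectation. Because $k$ is even, $k-1$ is odd, so $S = k-1-2j$ is always odd (where $j$ denotes the number of $-1$'s among $X_1,\dots,X_{k-1}$); consequently $\relu(S+1) - \relu(S-1)$ equals $2$ when $S \ge 1$ and equals $0$ when $S \le -1$. Moreover, the parity rule defining $X_k$, combined with $k$ being even, yields the identification $X_k = \prod_{i=1}^{k-1} X_i = (-1)^j$ (this is where evenness of $k$ is essential). Since $j \sim \mathrm{Bin}(k-1, 1/2)$ and $\{S \ge 1\}$ corresponds exactly to $\{j \le k/2 - 1\}$, I obtain
\[
\Ex\bigl[\relu(\textstyle\sum_i X_i)\bigr] - \Ex\bigl[\relu(\textstyle\sum_i U_i)\bigr] = \Ex\bigl[(-1)^j \cdot \mathbf{1}[j \le k/2-1]\bigr] = \frac{1}{2^{k-1}} \sum_{j=0}^{k/2-1} (-1)^j \binom{k-1}{j}.
\]

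To close the proof, I would invoke the standard partial alternating binomial identity
$\sum_{j=0}^{m}(-1)^j \binom{n}{j} = (-1)^m\binom{n-1}{m}$
(a one-line consequence of Pascal's rule by telescoping) with $n = k-1$ and $m = k/2 - 1$. This produces exactly $\frac{(-1)^{k/2-1}}{2^{k-1}}\binom{k-2}{k/2-1}$, matching the claim. I expect the only subtle step to be the initial algebraic collapse, where the three $\relu$ terms combine into a single sign-weighted indicator of a half-line; the rest is bookkeeping once the identifications $X_k = (-1)^j$ (which crucially uses $k$ even) and the parity of $S$ have been observed.
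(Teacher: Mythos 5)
Your proof is correct and takes a genuinely different route from the paper's. The paper's proof decomposes $\relu\bigl(\sum_{i=1}^k X_i\bigr)$ additively as $\psi(X_1,\dots,X_{k-1}) + \relu\bigl(\sum_{i=1}^{k-1} X_i\bigr)$, where $\psi$ captures the effect of the parity bit only on the event $\{\sum_{i<k} X_i \ge 0\}$; it then matches $\Ex\bigl[\relu\bigl(\sum_{i=1}^{k-1} X_i\bigr)\bigr]$ against $\Ex\bigl[\relu\bigl(\sum_{i=1}^k U_i\bigr)\bigr]$ via a separate conditioning argument showing that appending the $k$-th uniform coordinate leaves the $\relu$-expectation unchanged when $k-1$ is odd. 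Your coupling collapses these two steps into one: after identifying the first $k-1$ coordinates and writing $\Ex\bigl[\relu(S+U_k)\mid S\bigr] = \tfrac12\relu(S+1)+\tfrac12\relu(S-1)$, the pointwise identity $\relu(S+X_k) - \tfrac12\relu(S+1) - \tfrac12\relu(S-1) = \tfrac{X_k}{2}\bigl(\relu(S+1)-\relu(S-1)\bigr)$ expresses the difference of expectations directly. Your identifications $X_k = \prod_{i<k} X_i = (-1)^j$ and $\relu(S+1)-\relu(S-1) = 2\cdot\mathbf{1}[S \ge 1]$ (both using that $k-1$ is odd) reduce everything to the same partial alternating binomial sum $\tfrac{1}{2^{k-1}}\sum_{j=0}^{k/2-1}(-1)^j\binom{k-1}{j}$ that the paper reaches (the paper cites GKP identity (5.16); you re-derive it by Pascal telescoping). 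What your approach buys is compactness: the entire difference is seen at once as a sign-weighted half-line indicator, with no need for the auxiliary function $\psi$ or the separate lemma equating the two $U$-expectations. The paper's additive decomposition, conversely, isolates the statement that the ``$U$''-baseline is invariant under adding the $k$-th coordinate, which is a reusable observation in its own right.
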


\begin{proof}
Let $k\ge 2$ be even.
Let $\psi:\{-1,1\}^{k-1} \rightarrow \{-1,1\}$ with
\begin{align*}
    \psi(x_1, \dots, x_{k-1}) &=
    \begin{cases}
        0 & \text{ if } \sum_{i=1}^{k-1} x_i < 0;\\
        1 & \text{ if } \sum_{i=1}^{k-1} x_i \ge 0 \text{ and } |\{1 \le i \le k-1 : x_i=1\}| \text{ is odd;}\\
        -1 & \text{ if } \sum_{i=1}^{k-1} x_i \ge 0 \text{ and } |\{1 \le i \le k-1 : x_i=1\}| \text{ is even.}
    \end{cases}
\end{align*}
The definition of $X_k$ ensures that conditioned on $X_1, \dots, X_{k-1}$, if $\sum_{i=1}^{k-1} X_i < 0$ then $\relu(\sum_{i=1}^k X_i) = 0$, and if $\sum_{i=1}^{k-1} X_i \ge 0$ then $X_k = \psi(X_1, \dots, X_{k-1})$.
We observe that since $k-1$ is odd, we get that $\sum_{i=1}^{k-1} X_i$ cannot be $0$. Thus, $\sum_{i=1}^{k-1} X_i\ge 0$ implies
$\sum_{i=1}^{k-1} X_i\ge 1$.
This gives us the following identity:
\begin{align*}
    \relu\left(\sum_{i=1}^k X_i\right) &=
    \relu\left(\psi(X_1, \dots, X_{k-1}) + \sum_{i=1}^{k-1} X_i\right) =
    \psi(X_1, \dots, X_{k-1}) + \relu\left(\sum_{i=1}^{k-1} X_i\right)
    \enspace.
\end{align*}
Hence, since $X_k$ is determined by $X_1, \dots, X_{k-1}$, by definition of expectation we have for even $k \ge 2$:
\begin{align}
        \notag
    \Ex\left[\relu\left(\sum_{i=1}^k X_i\right)\right]
        & = \!\!\!\!\!
    \sum_{x = (x_1, \dots, x_k) \in \{-1,1\}^{k}} \relu\left(\sum_{i=1}^k x_i\right) \cdot \Pr[X^{(k)} = x]
            \\\notag
        & = \!\!\!\!\!
    \sum_{x \in \{-1,1\}^{k-1}} \left(\psi(x_1, \dots, x_{k-1})+ \relu\left(\sum_{i=1}^{k-1} x_i\right) \right) \cdot \prod_{i=1}^{k-1} \Pr[X_i = x_i]
            \\\notag
        & = \!\!\!\!\!
    \sum_{x \in \{-1,1\}^{k-1}} \!\!\!\!\! \psi(x_1, \dots, x_{k-1}) \cdot \prod_{i=1}^{k-1} \Pr[X_i = x_i] +
            \relu\left(\sum_{i=1}^{k-1} x_i \right) \cdot \prod_{i=1}^{k-1} \Pr[X_i = x_i]
            \\
        \label{identity1:lemma:Parity}
        & = \!\!\!\!\!
    \sum_{x \in \{-1,1\}^{k-1}} \!\!\!\!\! \psi(x_1, \dots, x_{k-1}) \cdot \prod_{i=1}^{k-1} \Pr[X_i = x_i] +
            \Ex\left[\relu\left(\sum_{i=1}^{k-1} X_i\right)\right]
    \enspace.
\end{align}
Next, we will study the two terms in (\ref{identity1:lemma:Parity}).

Since $\psi(x_1, \dots, x_{k-1})$ depends only on the number of ones in $x_1, \dots, x_{k-1}$ and since $\psi(x_1, \dots, x_{k-1})$ is non-zero only if 
set $x_1, \dots, x_{k-1}$ has at least $\frac{k}{2}$ ones, we observe that
\begin{align}
    \notag
    \sum_{x \in \{-1,1\}^{k-1}} \psi(x_1, \dots, x_{k-1}) \cdot \prod_{i=1}^{k-1} \Pr[X_i = x_i]
        & =
    \sum_{i=k/2}^{k-1} (-1)^{i+1} \cdot \binom{k-1}{i} \cdot \left(\frac{1}{2}\right)^i \cdot \left(\frac{1}{2}\right)^{k-1-i} \\
        \notag
        &=
    \frac{1}{2^{k-1}} \cdot \sum_{i=0}^{k/2-1}(-1)^i \cdot \binom{k-1}{i} \\
        \label{identity2:lemma:Parity}
        &=
    \frac{1}{2^{k-1}} \cdot (-1)^{k/2-1} \cdot \binom{k-2}{k/2-1}
    \enspace.
\end{align}
The last identity follows, for example, from identity (5.16) in \cite{GKP94}.

For the study of the second term in (\ref{identity1:lemma:Parity}), let us first observe that our construction ensures that random variables $X_1, \dots, X_{k-1}$ are independent, and therefore
\begin{align}
\label{identity3:lemma:Parity}
    \Ex\left[\relu\left(\sum_{i=1}^{k-1} X_i\right)\right] &=
    \Ex\left[\relu\left(\sum_{i=1}^{k-1} U_i\right)\right]
    \enspace.
\end{align}
Next, since $k$ is even, for any $u = (u_1, \dots, u_k) \in \{-1,1\}^k$ the sum $\sum_{i=1}^{k-1} u_i$ is odd and therefore
\begin{inparaenum}[(i)]
\item if $\sum_{i=1}^{k-1} u_i \le 0$ then $\sum_{i=1}^{k} u_i \le 0$, and
\item if $\sum_{i=1}^{k-1} u_i \ge 0$ then $\sum_{i=1}^{k} u_i \ge 0$.
\end{inparaenum}
(i) implies directly that if $\sum_{i=1}^{k-1} u_i \le 0$ then $\relu(\sum_{i=1}^{k-1} u_i) = \relu(\sum_{i=1}^{k} u_i) = 0$ and (ii) yields
\begin{align*}
    \Ex\left[\relu\left(\sum_{i=1}^k U_i\right) \; \Big| \; \sum_{i=1}^{k-1} U_i \ge 0 \right] &=
    \Ex\left[\relu\left(\sum_{i=1}^k U_i\right) \; \Big| \; \sum_{i=1}^{k-1} U_i \ge 0 \wedge U_k = 1 \right] \cdot \Pr[U_k = 1]
        \\& \quad +
    \Ex\left[\relu\left(\sum_{i=1}^k U_i\right) \; \Big| \; \sum_{i=1}^{k-1} U_i \ge 0 \wedge U_k = -1 \right] \cdot \Pr[U_k = -1]
        \\&=
    \left(\Ex\left[\relu\left(\sum_{i=1}^{k-1} U_i\right) \; \Big| \; \sum_{i=1}^{k-1} U_i \ge 0 \wedge U_k = 1 \right] + 1 \right) \cdot \frac12
        \\& \quad +
    \left(\Ex\left[\relu\left(\sum_{i=1}^{k-1} U_i\right) \; \Big| \; \sum_{i=1}^{k-1} U_i \ge 0 \wedge U_k = -1 \right] - 1 \right) \cdot \frac12
        \\&=
    \Ex\left[\relu\left(\sum_{i=1}^{k-1} U_i\right) \; \Big| \; \sum_{i=1}^{k-1} U_i \ge 0 \right]
    \enspace.
\end{align*}
Therefore, we can summarize the discussion above to conclude that
\begin{itemize}
\item $\Ex\left[\relu\left(\sum_{i=1}^k U_i\right) \; \Big| \; \sum_{i=1}^{k-1} U_i < 0 \right] = \Ex\left[\relu\left(\sum_{i=1}^{k-1} U_i\right) \; \Big| \; \sum_{i=1}^{k-1} U_i < 0 \right]$ and
\item $\Ex\left[\relu\left(\sum_{i=1}^k U_i\right) \; \Big| \; \sum_{i=1}^{k-1} U_i \ge 0 \right] = \Ex\left[\relu\left(\sum_{i=1}^{k-1} U_i\right) \; \Big| \; \sum_{i=1}^{k-1} U_i \ge 0 \right]$.
\end{itemize}
This, when combined with (\ref{identity3:lemma:Parity}), directly implies the following identity
\begin{align*}
    \Ex\left[\relu\left(\sum_{i=1}^{k-1} X_i\right)\right] &=
    \Ex\left[\relu\left(\sum_{i=1}^{k-1} U_i\right)\right] =
    \Ex\left[\relu\left(\sum_{i=1}^k U_i\right)\right]
    \enspace.
\end{align*}

We can now plug this identity together with (\ref{identity2:lemma:Parity}) in (\ref{identity1:lemma:Parity}) to conclude the proof of \cref{lemma:Parity}.
\end{proof}

Let us remind that we defined earlier
$\Xi_k := \frac{(-1)^{k/2-1}}{2^{k-1}} \cdot \binom{k-2}{k/2-1}$ and thus \cref{lemma:Parity} says that $\Ex[\relu(\sum_{i=1}^k X_i)] - \Ex[\relu(\sum_{i=1}^k U_i)] = \Xi_k$ for even $k \ge 2$. Observe that $\Xi_2 = \frac12$, but as $k$ increases, the absolute value of $\Xi_k$ decreases. Further $\Xi_k$ is positive when $k$ is even and not divisible by $4$.
Since we set $\gamma := \frac{\Xi_k}{8k}$, $\gamma$ is also positive when $k$ is even and not divisible by $4$.

Our next lemma shows that in distribution $\NE_2$, the expected value of $w^T \cdot \relu(Ax)$ is large.

\begin{lemma}
\label{lemma:Expectation2}
Let $k\ge 2$ be even and not divisible by $4$. Let $((A,w),\DD)$ be chosen at random from distribution $\NE_2$. Let $x = \sum_k e_{i_j}$ for some set $\{i_1, \dots, i_k\} \in M$. Let $X = w^T \cdot \relu(Ax)$ be the random variable for the value at the output node. Then
\begin{align*}
    \Ex[X]
        &\ge
    2 \gamma k n
    \enspace.
\end{align*}
\end{lemma}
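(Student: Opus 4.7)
The plan is to decompose the output value according to contributions from hidden-layer nodes in $P$ (which connect to the output via weight $+1$) and nodes in $N$ (weight $-1$), apply \cref{lemma:Parity} to handle $P$, apply \cref{lemma:Expectation} to handle $N$, and finally use the definition $\gamma = \Xi_k/(8k)$ to make the two contributions cancel out to yield a bias of the desired magnitude.

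More concretely, fix the set $\{i_1,\dots,i_k\} \in M$ and write $x = \sum_{j=1}^{k} e_{i_j}$. Since edge weights are drawn independently across different hidden-layer nodes, and since $x$ is $0$ outside $\{i_1,\dots,i_k\}$, we have
\begin{align*}
\Ex[X] \;=\; \sum_{v \in P} \Ex[\relu(s_v)] \;-\; \sum_{v \in N} \Ex[\relu(s_v)] \;=\; \tfrac{n}{2}\bigl(E_P - E_N\bigr),
\end{align*}
where $s_v := \sum_{j=1}^{k} a_{i_j,v}$, and $E_P,E_N$ denote the common expectation of $\relu(s_v)$ for $v \in P$ and $v\in N$ respectively.

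For $v \in P$, the construction of $\NE_2$ assigns i.i.d.\ uniform $\pm 1$ weights to the first $k-1$ edges from $\{i_1,\dots,i_{k-1}\}$ and then sets the last weight to be $1$ if the number of $+1$'s among those $k-1$ is odd and $-1$ otherwise. This is exactly the setting of \cref{lemma:Parity}, so $E_P = \Ex[\relu(\sum_{j=1}^k U_j)] + \Xi_k$, with $U_j$ i.i.d.\ uniform on $\{-1,1\}$. For $v \in N$, all $k$ relevant edge weights are i.i.d.\ $+1$ w.p.\ $\frac12 + \gamma$ and $-1$ w.p.\ $\frac12 - \gamma$, which is precisely the setting of \cref{lemma:Expectation} with $\ell = k$, giving $E_N - \Ex[\relu(\sum_{j=1}^k U_j)] \le 4\gamma k$.

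Subtracting these two estimates yields
\begin{align*}
E_P - E_N \;\ge\; \Xi_k - 4\gamma k,
\end{align*}
and by our choice $\gamma = \Xi_k/(8k)$ (which is positive since $k$ is even and not divisible by $4$), we have $4\gamma k = \Xi_k/2$, so $E_P - E_N \ge \Xi_k/2 = 4\gamma k$. Substituting into the formula above gives $\Ex[X] \ge \tfrac{n}{2} \cdot 4\gamma k = 2\gamma k n$, which is the desired bound. The main subtlety is ensuring the two reductions to \cref{lemma:Parity} and \cref{lemma:Expectation} are applied with exactly matching reference distribution (i.i.d.\ uniform $\pm 1$ sums), so that the two ``$\Ex[\relu(\sum U_j)]$'' terms cancel cleanly, leaving only the $\Xi_k$ gain against the $4\gamma k$ loss; the parameter $\gamma$ is tuned precisely to make this work.
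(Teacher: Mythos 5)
Your proof is correct and follows essentially the same route as the paper: decompose $\Ex[X]$ into the $\tfrac{n}{2}(E_P - E_N)$ form, apply \cref{lemma:Parity} to get $E_P - \Ex[\relu(\sum U_j)] = \Xi_k$, apply \cref{lemma:Expectation} to get $E_N - \Ex[\relu(\sum U_j)] \le 4\gamma k$, and then use $\gamma = \Xi_k/(8k)$ so that $\Xi_k - 4\gamma k = \Xi_k/2 = 4\gamma k$. The arithmetic and the final bound match the paper's proof exactly.
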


\begin{proof}
Our definition ensures that the input $x$ has exactly $k$ ones. Next, let $\Ex[X^{(k)}]$ denote the expected contribution of a single node from $P$ and $\Ex[Y^{(k)}]$ denote the expected contribution of a single node from $N$. Since all nodes from $P$ are connected to the output node by edges with weight $1$, all nodes from $N$ are connected to the output node by edges with weight $-1$, and $|P| = |N| = \frac{n}{2}$, the expected value of $w^T \cdot \relu(Ax)$ is equal to $\frac{n}{2} \cdot \left(\Ex[X^{(k)}] - \Ex[Y^{(k)}]\right)$.

As in distribution $\NE_1$, $Y^{(k)} = \relu(\sum_{i=1}^k Y_i)$, where $Y_1, \dots, Y_k$ are independent random variables such that $\Pr[Y_i=1] = \frac12 + \gamma$ and $\Pr[Y_i=-1] = \frac12 - \gamma$. Therefore, \cref{lemma:Expectation} yields
\begin{align}
\label{ineq1-lemma:Expectation2}
    \Ex\left[Y^{(k)}\right] - \Ex\left[\relu\left(\sum_{i=1}^k U_i\right)\right] &\le 4 \gamma k \enspace,
\end{align}
where $U_1, \dots, U_k$ are random variables from $\{-1,1\}$ chosen independently and uniformly at random.

Observe that $X^{(k)} = \relu(\sum_{i=1}^k X_i)$, where $X_1, \dots, X_k$ are random variables that take values from $\{-1,1\}$ independently and uniformly at random and $X_k = 1$ if $|\{i \in \{1, \dots, k-1\} : X_i=1 \}|$ is odd and $X_k=-1$ otherwise. Therefore, we can use \cref{lemma:Parity} to obtain
\begin{align}
\label{eq1-lemma:Expectation2}
    \Ex\left[X^{(k)}\right] - \Ex\left[\relu\left(\sum_{i=1}^k U_i\right)\right] &= \Xi_k \enspace.
\end{align}
We can use (\ref{ineq1-lemma:Expectation2}) and (\ref{eq1-lemma:Expectation2}) to obtain
\begin{align*}
    \Ex\left[X^{(k)}\right] - \Ex\left[Y^{(k)}\right]
        &=
    \left(\Ex\left[X^{(k)}\right] - \Ex\left[\relu\left(\sum_{i=1}^k U_i\right)\right]\right) -
    \left(\Ex\left[Y^{(k)}\right] - \Ex\left[\relu\left(\sum_{i=1}^k U_i\right)\right]\right)
        \\&\ge
    \Xi_k - 4 \gamma k
        \enspace,
\end{align*}
which yields the lemma since the expected value of $w^T \cdot \relu(Ax)$ is $\frac{n}{2} \cdot \left(\Ex[X^{(k)}] - \Ex[Y^{(k)}]\right)$ and since we set $\gamma := \frac{\Xi_k}{8k}$.
\end{proof}

\cref{lemma:Expectation2} implies that for any non-zero input vector $x$, the function computed according to the distribution $\NE_2$ satisfies $\Ex[w^T \cdot \relu(Ax)] \ge 2 \gamma k n$, and now we extend these arguments to obtain a respective high-probability bound. We begin with a simple concentration bound.

\begin{lemma}
\label{lemma:N2-1}
Let $k\ge 2$ be even and not divisible by $4$. Let $((A,w),\DD)$ be chosen from distribution $\NE_2$. Let $x = \sum_k e_{i_j}$ for some set $\{i_1,\dots, i_k\} \in M$. Let $X = w^T \cdot \relu(Ax)$ be the random variable for the output value. Then
\begin{align*}
    \Pr\left[X \le \gamma k n \right]
        &\le
    e^{-\gamma^2 n/2}
    \enspace.
\end{align*}
\end{lemma}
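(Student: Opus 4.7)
\textbf{Proof proposal for \cref{lemma:N2-1}.} The plan is to decompose $X = w^T \cdot \relu(Ax)$ as a sum of independent bounded random variables, one per hidden layer node, and then apply a standard Hoeffding-type concentration bound, combining it with the expectation lower bound already established in \cref{lemma:Expectation2}.

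First, I would write out $X$ as a sum over hidden layer nodes. Since $x = \sum_{j=1}^k e_{i_j}$ is supported on the set $\{i_1,\dots,i_k\} \in M$, and since the output weights are $+1$ for nodes $v \in P$ and $-1$ for nodes $v \in N$, we have
\begin{align*}
    X \;=\; \sum_{v \in P} \relu\!\left(\sum_{j=1}^k a_{i_j,v}\right) \;-\; \sum_{v \in N} \relu\!\left(\sum_{j=1}^k a_{i_j,v}\right)
    \;=\; \sum_{v \in P \cup N} C_v,
\end{align*}
where $C_v$ denotes the (signed) contribution of hidden layer node $v$ to the output. Each $\sum_{j=1}^k a_{i_j,v}$ lies in $[-k,k]$, so $C_v \in [-k,k]$.

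Second, and this is the key structural observation, the $n$ random variables $\{C_v\}_{v \in P \cup N}$ are \emph{mutually independent}. Although within a single node $v \in P$ the $k$ weights $a_{i_1,v},\dots,a_{i_k,v}$ are correlated through the parity coupling of $\NE_2$, the construction samples these $k$-tuples \emph{independently across different hidden layer nodes} $v$; for nodes in $N$ the weights are fully independent. Hence the $C_v$'s are independent.

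Third, apply Hoeffding's inequality (\cref{lemma:Hoeffding}) to the sum of $n$ independent random variables each bounded in $[-k,k]$. By \cref{lemma:Expectation2}, $\Ex[X] \ge 2\gamma k n$, so
\begin{align*}
    \Pr\!\left[X \le \gamma k n\right]
    \;\le\;
    \Pr\!\left[\Ex[X] - X \;\ge\; \gamma k n\right]
    \;\le\;
    \exp\!\left(-\frac{2(\gamma k n)^2}{n \cdot (2k)^2}\right)
    \;=\;
    \exp\!\left(-\frac{\gamma^2 n}{2}\right),
\end{align*}
which is the claimed bound.

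There is no real obstacle here: the calculation is essentially routine once the decomposition is set up. The only point requiring care is the independence claim in the second step --- one must note that the parity coupling in $\NE_2$ acts only \emph{within} a single hidden layer node's weight vector and not across nodes, so that the per-node contributions remain independent. All the hard non-standard work (controlling $\Ex[X]$ in the presence of the ReLU together with the parity coupling) has already been carried out in \cref{lemma:Expectation,lemma:Parity,lemma:Expectation2}.
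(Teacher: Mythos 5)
Your proof is correct and follows essentially the same route as the paper: decompose $X$ into per-hidden-node contributions, observe these are mutually independent and bounded in $[-k,k]$, apply a concentration bound, and combine with the expectation lower bound from \cref{lemma:Expectation2}. The only cosmetic difference is that the paper invokes McDiarmid's inequality (\cref{lemma:McDiarmid}) rather than Hoeffding --- for a sum of independent bounded random variables the two give the identical bound, so this is immaterial (though note \cref{lemma:Hoeffding} in the appendix is stated for sampling without replacement, which isn't quite the version you want; citing the standard Hoeffding bound or McDiarmid would be cleaner).
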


\begin{proof}
Let $R_i$ be the random variable for $w_i \cdot \relu(a_i x)$, so that $X=\sum_{i=1}^n R_i$. We observe that the $R_i$ are mutually independent  and that changing one $R_i$ changes the output by at most $2k$.

We then apply McDiarmid's inequality (one-sided version of \cref{lemma:McDiarmid}) to obtain
\begin{align*}
    \Pr\left[X \le \Ex[X] - \gamma k n\right] &\le
    e^{-2 (\gamma k n)^2/((2k)^2n)} =
    e^{-\gamma^2 n/2}
    \enspace,
\end{align*}
which immediately yields the claim when combined with the bound $\Ex[X] \ge 2 \gamma k n$ (\cref{lemma:Expectation2}).
\end{proof}

We can now apply \cref{lemma:N2-1} to obtain the following.

\begin{lemma}
\label{lemma:N2-2}
Let $k$ be an even integer that is not divisible by $4$. Let $0 < \epsilon < \frac18 \gamma$ and $0 < \delta < \frac12$. Let $((A,w),\DD)$ be chosen from distribution $\NE_2$. Then with probability at least $1-n \cdot e^{- \gamma^2 n/2}$ the network $(A,w)$ is $(\epsilon,\delta)$-far under distribution \DD from computing the constant $0$-function.
\end{lemma}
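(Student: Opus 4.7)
The plan is to combine \cref{lemma:N2-1} with a union bound and then a simple edit-budget argument. The partition $M$ has $n/k$ sets, so the support of $\DD$ consists of the $n/k$ vectors $x_S := \sum_{j=1}^k e_{i_j}$, one per $S=\{i_1,\dots,i_k\}\in M$. First, I would apply \cref{lemma:N2-1} to each such $x_S$ and union-bound over $S \in M$: with probability at least $1-(n/k)\cdot e^{-\gamma^2 n/2} \ge 1-n\cdot e^{-\gamma^2 n/2}$, the sampled $(A,w)$ satisfies $w^T\relu(Ax_S) \ge \gamma k n$ \emph{simultaneously for every $S\in M$}. I condition on this event henceforth.

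The core step is then to show that no admissible modification $(A^*,w^*)$ with $\|A-A^*\|_0 \le \epsilon n^2$ and $\|w-w^*\|_0\le \epsilon n$ can bring the output $\le 0$ on more than a $(1-\delta)$-fraction of the support of $\DD$. For each $S \in M$, set $\Delta_S := w^T\relu(Ax_S) - (w^*)^T\relu(A^*x_S)$. Splitting the difference in the standard way into $(w^*-w)^T\relu(Ax_S) + (w^*)^T(\relu(A^*x_S)-\relu(Ax_S))$ and using $1$-Lipschitzness of $\relu$, the bounds $|w_i|,|w^*_i|\le 1$, $|a_{ij}|,|a^*_{ij}|\le 1$, together with $\|x_S\|_1=k$ (so $|\relu((Ax_S)_i)|\le k$ and $|\relu((A^*x_S)_i)-\relu((Ax_S)_i)|\le \sum_{j\in S}|a^*_{ij}-a_{ij}|$), I would obtain
\begin{align*}
    |\Delta_S| \;\le\; k\cdot \|w^*-w\|_1 \;+\; \sum_{i}\sum_{j\in S}|a^*_{ij}-a_{ij}|.
\end{align*}
Summing over $S \in M$ and using that $M$ partitions $\{1,\dots,n\}$ (so each column index $j$ appears in exactly one $S$), this yields
\begin{align*}
    \sum_{S\in M}|\Delta_S| \;\le\; n\cdot \|w^*-w\|_1 \;+\; \|A^*-A\|_1 \;\le\; 2\epsilon n^2 + 2\epsilon n^2 \;=\; 4\epsilon n^2.
\end{align*}

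Finally, I would close by contradiction. If some admissible $(A^*,w^*)$ had $(w^*)^T\relu(A^*x_S)\le 0$ for at least a $(1-\delta)$-fraction of the $n/k$ inputs, then $\Delta_S \ge \gamma k n$ for each such $S$, hence $\sum_{S\in M}|\Delta_S| \ge (1-\delta)\cdot (n/k)\cdot \gamma k n = (1-\delta)\gamma n^2$. Combined with the previous display, this forces $4\epsilon \ge (1-\delta)\gamma > \gamma/2 > 4\epsilon$ (using $\delta<\tfrac12$ and $\epsilon<\gamma/8$), a contradiction. Hence every admissible modification produces a function $g$ with $\Pr_{x\sim\DD}[g(x)\ne 0] > \delta$, which is precisely the $(\epsilon,\delta)$-far condition. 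The only delicate point in the accounting is the $w$-contribution: a single change to $w$ affects \emph{every} input in the support simultaneously, and what saves the bound is that $x_S$ has only $k$ ones, so the per-input effect of a $w$-change is at most $2k$; this matches exactly the $A$-contribution on the same scale $2\epsilon n^2$.
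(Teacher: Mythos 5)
Your proof is correct, and it reaches the same constant $\epsilon < \gamma/8$ as the paper's, but the bookkeeping is organized differently. The paper proceeds in two explicit stages: it first shows that any $\epsilon n$ second-layer changes can reduce the output value on a given $x_S$ by at most $2k\epsilon n < \tfrac12\gamma k n$, so the output stays $> \tfrac12 \gamma k n$; then it argues that killing each remaining $x_S$ requires $> \tfrac14\gamma k n$ \emph{first-layer} changes, and since the supports of the $x_S$'s are pairwise disjoint those first-layer changes are disjoint across $S$, yielding $> \tfrac18 \gamma n^2$ changes in total. You instead fold everything into a single aggregate inequality on $\sum_{S\in M}|\Delta_S|$: the per-$S$ Lipschitz bound $|\Delta_S|\le k\|w^*-w\|_1 + \sum_i\sum_{j\in S}|a^*_{ij}-a_{ij}|$, followed by the observation that $M$ partitions the input nodes so $\sum_{S}\sum_{j\in S}(\cdot) = \sum_j(\cdot)$, gives $\sum_S|\Delta_S|\le n\|w^*-w\|_1 + \|A^*-A\|_1 \le 4\epsilon n^2$, which is then contradicted by the lower bound $(1-\delta)\gamma n^2 > \tfrac12\gamma n^2$. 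The underlying insight — that disjointness of the supports of the $\DD$-support vectors prevents the edit budget from being amortized across inputs — is the same in both; your version packages it as a single $\ell_1$/telescoping bound while the paper reasons layer-by-layer. Your one-shot inequality is arguably cleaner; the paper's version makes the per-layer contributions of the budget more explicit. Both are valid.
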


\begin{proof}
Let $M$ be the partition of the input nodes associated with $((A,w),\DD)$. Observe that for every input node $v$ there is exactly one input $x \in \{0,1\}^n$ that with non-zero probability assigns a $1$ to $v$. This is the $x$ that is $1$ on $k$ input nodes: $v = i_1$ and some other input nodes $i_2, \dots, i_k$, where $\{i_1, \dots, i_k\} \in M$ is the unique subset that contains $v$.

Let $\mathbb{I}$ be the set of $x \in \{0,1\}^n$ that have non-zero probability under \DD; $|\mathbb{I}| = \frac{n}{k}$. Observe that by \cref{lemma:N2-1}, with probability at most $n \cdot e^{-\gamma^2 n/2}$, there is an $x \in \mathbb{I}$ on which the resulting network computes a value at most $\gamma k n$. In the following, we will assume that this is not the case, that is, that for every $x \in \mathbb{I}$ the output value of the network with input $x$ is greater than $\gamma k n$. We will show that then the network is $(\epsilon,\delta)$-far from computing the constant $0$-function.

Observe that any change of an edge weight in the second layer may reduce the output value by at most $2k$. Thus, if we modify weights of any $\epsilon n$ edges in the second layer, then can reduce the output value by at most $2 k \epsilon n$. Observe that if $\epsilon < \frac14 \gamma$, then $2 k \epsilon n < \frac12 \gamma k n$.

Now, consider an arbitrary $x \in \mathbb{I}$. Since on input $x$ the value of the output node in the network is greater than $\gamma k n$, and after changing at most $\epsilon n$ edge weights in the second layer (which reduced the value by $\frac12 \gamma k n$, and hence the value is more than $\frac12 \gamma k n$), in order to make the network compute $0$ on this input $x$, we have to change more than $\frac14 \gamma k n$ edges in the first layer.

Since these changes are disjoint for different $x, x' \in \mathbb{I}$ and since $|\mathbb{I}| = \frac{n}{k}$ and we have the uniform distribution on $\mathbb{I}$, we have to change more than $\frac{\gamma n^2}{8}$ edge weights in order to make $(A,w)$ \ $(\epsilon,\frac12)$-close to computing the constant $0$-function. Therefore, for $\epsilon <\frac18 \gamma$ and $\delta < \frac12$ the network is not $(\epsilon,\delta)$-close to computing the $0$-function.
\end{proof}


\subsection{Distinguishing between $\NE_1$ and $\NE_2$ requires $\Omega(n^{1-\frac{1}{k}})$ queries}
\label{subsec:distinguishing-distribution-free}

In this section, we study the complexity of distinguishing between distributions $\NE_1$ and $\NE_2$.
We will need the following claim shows that random variables $X_1, \dots, X_k$ are $(k-1)$-wise independent.

\begin{claim}
\label{claim:k-1-wise-independence}
Let $k \ge 2$ be even. Let $X_1, \dots, X_k$ be random variables such that $X_1, \dots, X_{k-1}$ take values from $\{-1,1\}$ independently and uniformly at random and let $X_k$ be $1$ if $|\{i \in \{1, \dots, k-1\} : X_i=1 \}|$ is odd and $X_k=-1$ otherwise. Then the random variables $X_1, \dots, X_k$ are $(k-1)$-wise independent.
\end{claim}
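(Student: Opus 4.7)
The plan hinges on a clean algebraic rewriting of $X_k$. Since $k$ is even, $k-1$ is odd, so for any realization of $X_1,\dots,X_{k-1}\in\{-1,1\}$ exactly one of $|\{i\le k-1:X_i=1\}|$ and $|\{i\le k-1:X_i=-1\}|$ is odd. Thus ``$|\{i\le k-1:X_i=1\}|$ is odd'' is equivalent to ``$|\{i\le k-1:X_i=-1\}|$ is even,'' which (using $\prod_{i=1}^{k-1}X_i=(-1)^{|\{i:X_i=-1\}|}$) is in turn equivalent to $\prod_{i=1}^{k-1}X_i=1$. Since $X_k\in\{-1,1\}$, I conclude the key identity $X_k=\prod_{i=1}^{k-1}X_i$. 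This is precisely the classical parity-based construction used to build $k$-wise independent $\pm 1$ random variables from $k-1$ independent unbiased bits.

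Armed with this identity, I will consider an arbitrary subset $S\subseteq\{X_1,\dots,X_k\}$ with $|S|=k-1$ and show its joint distribution is uniform on $\{-1,1\}^{k-1}$. If $X_k\notin S$, then $S=\{X_1,\dots,X_{k-1}\}$ and these are independent and uniform by construction. Otherwise $S$ omits some $X_j$ with $j\le k-1$; relabeling, say $S=\{X_1,\dots,X_{k-2},X_k\}$. The idea is to ``pivot'' on the omitted variable $X_{k-1}$: using $X_k=X_{k-1}\cdot\prod_{i=1}^{k-2}X_i$, for any target $(a_1,\dots,a_{k-2},a_k)\in\{-1,1\}^{k-1}$ the event $\{X_1=a_1,\dots,X_{k-2}=a_{k-2},X_k=a_k\}$ is identical to $\{X_1=a_1,\dots,X_{k-2}=a_{k-2},X_{k-1}=a_k\prod_{i=1}^{k-2}a_i\}$. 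The latter fixes all $k-1$ of the independent unbiased bits $X_1,\dots,X_{k-1}$ and therefore has probability exactly $2^{-(k-1)}$.

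Since every atom of $\{-1,1\}^{k-1}$ receives equal probability $2^{-(k-1)}$, the joint law of the variables in $S$ is uniform, which is the same as saying that they are mutually independent (each is marginally uniform on $\{-1,1\}$ and the joint factorizes). This establishes $(k-1)$-wise independence. The proof is essentially a one-line observation plus a bookkeeping argument, and I do not anticipate any real obstacle; the only substantive step is noticing the product identity, which in turn relies on the parity of $k-1$.
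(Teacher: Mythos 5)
Your proof is correct and takes essentially the same pivoting approach as the paper: both show that for any size-$(k-1)$ subset containing $X_k$, the constraint on $X_k$ can be transferred (via the parity relation) onto the missing coordinate $X_j$, reducing to the uniform distribution of the $k-1$ independent bits. The only cosmetic difference is that you distill the parity relation into the clean algebraic identity $X_k = \prod_{i=1}^{k-1} X_i$ (valid because $k-1$ is odd), which makes the pivoting step a one-liner, whereas the paper tracks parity of the prescribed values via an auxiliary sign $\chi$; the content is the same.
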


\begin{proof}
Let $\mathcal{I}$, $|\mathcal{I}| = k-1$, be a subset of $\{1, \dots, k\}$. We have to prove that for any $y_1, \dots, y_k \in \{-1,1\}$ and for any $\mathcal{I}$, we have
\begin{align}
\label{claim:k-1-wise-independence-def}
    \Pr\left[\bigwedge_{i \in \mathcal{I}} X_i = y_i\right] &= \frac{1}{2^{k-1}}
    \enspace.
\end{align}
Observe that our definition ensures that (\ref{claim:k-1-wise-independence-def}) holds for $\mathcal{I} = \{1, \dots, k-1\}$

Now assume that $k \in \mathcal{I}$ and let $\ell$ be the number from $\{1, \dots, k\}$ that is not in $\mathcal{I}$. Furthermore, let $\chi = 1$ if the number of $i \in \mathcal{I} \setminus \{k\}$ with $y_i = 1$ is even and $\chi = -1$ otherwise. Then we have
\begin{align*}
    \Pr\left[\bigwedge_{i \in \mathcal{I}} X_i = y_i \right]
        &=
    \Pr\left[X_k = y_k \; \Big| \; \bigwedge_{i \in \mathcal{I} \setminus \{k\}} X_i = y_i \right] \cdot
            \Pr\left[\bigwedge_{i \in \mathcal{I} \setminus \{k\}} X_i = y_i \right] \\
        & =
    \Pr\left[X_{\ell} = \chi \cdot y_i \; \Big| \; \bigwedge_{i \in \mathcal{I} \setminus \{k\}} X_i = y_i\right] \cdot
            \frac{1}{2^{k-2}} \\
        &=
    \frac{1}{2^{k-1}}
    \enspace.
    \qedhere
\end{align*}
\end{proof}

The above claim implies that if we access an arbitrary subset of size $k-1$ of th random variables $X_1,\dots X_k$ then their distribution is identical to a uniform distribution. This will be used in the proof of the following main lemma.

\begin{lemma}
\label{lemma:hardness}
Let $k$ be a constant that is an even integer and not divisible by $4$. Any algorithm
\begin{itemize}
\item that accepts with probability at least $\frac23$ a pair $((A,w),\DD)$ of a ReLU network $(A,w)$ and a distribution $\DD$ drawn from $\NE_1$ and
\item that rejects a pair $((A,w),\DD)$ from $\NE_2$ with probability at least $\frac23$
\end{itemize}
must perform at least $\frac{1}{10} \cdot n^{1-1/k}$ queries to $A$, $w$, and (samples from) $\DD$.
\end{lemma}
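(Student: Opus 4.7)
The plan is to invoke Yao's minimax principle and show that for every deterministic algorithm $\mathcal{A}$ with query complexity at most $q = \frac{1}{10} n^{1-1/k}$, the total variation distance between $\mathcal{A}$'s output distributions under $\NE_1$ and under $\NE_2$ is strictly less than $\frac13$. I will establish this by exhibiting an explicit coupling of $\NE_1$ and $\NE_2$ under which $\mathcal{A}$'s full query transcript agrees in both processes except on a single ``bad event,'' whose probability I then bound by a birthday-paradox-style argument.

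For the coupling, I generate the partition $M$, the samples drawn from $\DD$, the second-layer weights $w$, and the $N$-incident first-layer edge weights identically in both processes. For the $P$-incident first-layer edges I exploit the $(k-1)$-wise independence from \cref{claim:k-1-wise-independence}: for each $v \in P$ and each set $T = \{i_1,\dots,i_k\} \in M$ I use the same i.i.d.\ uniform $\pm 1$ values for any fixed choice of $k-1$ of the edges $a_{i_1 v},\dots,a_{i_k v}$ in both processes, and let the remaining edge be an independent uniform $\pm 1$ in the $\NE_1$ copy but the parity-dictated value in the $\NE_2$ copy. \Cref{claim:k-1-wise-independence} guarantees that all $(k-1)$-marginals agree, so this is a valid coupling, and since the parity relation is symmetric across all $k$ edges the choice of ``last'' edge is immaterial.

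Let $\mathcal{B}$ denote the event that during $\mathcal{A}$'s execution there exist $v \in P$ and $T \in M$ such that all $k$ first-layer edges joining the members of $T$ to $v$ get queried. On $\neg\mathcal{B}$, every query answer is identical in the two processes: weight queries and $N$-edge queries agree by construction, sample-bit queries depend only on the common $M$ and $\DD$-samples, and every $P$-edge query lies in a $(T,v)$-bundle where at most $k-1$ edges are queried, so its answer comes from the common uniform $\pm 1$ pool. Hence $\mathcal{A}$ produces the same output in both processes on $\neg\mathcal{B}$, and the total variation distance between the output distributions is at most $\Pr[\mathcal{B}]$; it then suffices to show $\Pr[\mathcal{B}] < \tfrac13$.

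To bound $\Pr[\mathcal{B}]$, let $I$ be the set of input-node indices that appear in at least one of $\mathcal{A}$'s queries; since each query touches at most one input node, $|I| \le q$, and $\mathcal{B}$ forces $T \subseteq I$ for some $T \in M$. I plan to argue step by step that at each moment when $\mathcal{A}$ first touches a fresh input node $i_t$, the conditional probability (given the history so far) that $i_t$ together with $k-1$ previously touched nodes forms a set of $M$ is $O\!\bigl(\binom{t-1}{k-1}/\binom{n-1}{k-1}\bigr)$; summing from $t=1$ to $q$ and using $\sum_{t=0}^{q-1}\binom{t}{k-1} = \binom{q}{k}$ then yields $\Pr[\mathcal{B}] = O(q^k/n^{k-1})$, which is less than $\frac13$ for our choice of $q$. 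The main obstacle will be the adaptive aspect of this last step: the sample queries reveal genuine information about $M$, so the posterior over $M$ given the history is not uniform. I plan to handle this by the principle of deferred decisions, revealing $M$ lazily in step with the answers and showing that each sample bit contributes only $O(k/n)$ posterior mass toward any specific partial set of $M$, which is insufficient to beat the birthday-paradox bound above by more than a constant factor.
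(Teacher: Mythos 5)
Your plan follows the same overall route as the paper's proof: invoke Yao's principle for deterministic algorithms, couple the two processes so that the query transcripts agree except on a bad event $\mathcal{B}$ (all $k$ first-layer edges from some $T\in M$ to some $v\in P$ get queried), and then bound $\Pr[\mathcal{B}]$ by a birthday-paradox count. Your identification of the bad event and the use of \cref{claim:k-1-wise-independence} are correct. One small caution on the coupling: as stated (``for any fixed choice of $k-1$ of the edges''), if the algorithm happens to query the pre-designated parity edge first, the two transcripts can diverge even when fewer than $k$ edges of the bundle have been seen, so the coupling must be phrased on-the-fly (the last queried edge of each bundle is the parity-determined one), exactly as you gesture at with your symmetry remark. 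The paper's ``generate the answers on the fly'' has the same content.

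The genuine gap is the final bound on $\Pr[\mathcal{B}]$, and you correctly flag it. Your per-step claim that the conditional probability of completing a set of $M$ is $O\bigl(\binom{t-1}{k-1}/\binom{n-1}{k-1}\bigr)$ is false in general under adaptivity: if earlier sample-bit queries have already revealed that several previously touched nodes lie in the same $M$-set as $i_t$, the conditional probability can be constant (or $1$). Your proposed repair (``deferred decisions'' plus ``each sample bit contributes only $O(k/n)$ posterior mass'') is not carried out, and it is not clear it closes the gap as stated, since a single sample bit can push the posterior for a specific partial set to $1$. The paper circumvents all of this with a strengthening-the-algorithm trick that you should adopt: upon any query that touches an input node $u$ (first-layer edge or sample bit), reveal \emph{all} first-layer edges incident to $u$ and \emph{all} $q$ sample entries at position $u$. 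This turns the effective query into ``inspect input node $u$.'' Since $M$ is the image of a fixed initial partition under a uniform random permutation $\pi$, and the algorithm's next choice is a deterministic function of what it has seen so far, a lazy revelation of $\pi$ shows that $\pi^{-1}$ applied to the queried nodes is a uniformly random sequence of distinct slots \emph{regardless of adaptivity}; hence the queried set is distributed as a uniform $q$-subset and the union bound over the $\frac{n}{k}$ sets of $M$ gives $\Pr[\mathcal{B}]\le \frac{n}{k}\cdot\frac{\binom{n-k}{q-k}}{\binom{n}{q}}\le\frac{1}{10}$ for $q\le\frac{1}{10}n^{1-1/k}$. That exchangeability step is what your outline is missing.
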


\begin{proof}
We prove the claim for any deterministic algorithm. Then the result follows directly from Yao's principle.

Let $A$ be an arbitrary deterministic algorithm that makes $q \le \frac{1}{10} \cdot n^{1-1/k}$ queries to the entries of $A$ and $w$. Recall that to construct sample from $\NE_1$ or $\NE_2$ we first need to determine a random partition of the input nodes into subsets of size $k$. We will construct this partition $M$ as follows: At the beginning $M$ consists of the sets $\{1, \dots, k\}$, $\{k+1, \dots, 2k\}$, etc. We then apply a random permutation $\pi$ of the inputs. This way, $M$ has the desired property. We assume that the at most $q$ samples from distribution $\DD$ are fixed at the beginning and we also apply $\pi$ when we access them.

We will assume that the algorithm knows the second layer edges.
We will further simplify the task for the algorithm if an edge $(u,v)$ from the first layer is queried. We will then return not only the weight of $(u,v)$ but all weights of edges incident to the input node $u$. Similarly, we will return all entries at position $u$ in the $q$ samples from $\DD$. Thus, the algorithm can be thought of to only query input nodes and we will show that it requires at least $\frac{1}{10} \cdot n^{1-1/k}$ such queries. In order to prove this, we need to combine the following steps:
\begin{itemize}
\item We observe that every input node queried (we will assume, without loss of generality, that no node is queried more than once) is uniformly distributed among the unqueried nodes (since we took a random permutation $\pi$). This is regardless of the queries made and answers received. In particular, if we access distribution $\DD$ then the next query is a random entry of the corresponding sample and follows the same distribution for $\NE_1$ and $\NE_2$.
\item We will generate the answers to the algorithm's queries on the fly. If we sample from distribution $\NE_1$ then, if the edge is incident to $P$ then we give a weight $1$ with probability $\frac12$ and $-1$ with probability $\frac12$, and if the edge is incident to $N$ then we give a weight $1$ with probability $\frac12 + \gamma$ and $-1$ with probability $\frac12-\gamma$.
\item If we sample from distribution $\NE_2$ then we proceed similarly for the first $k-1$ queries to a fixed set in $M$.  Claim \ref{claim:k-1-wise-independence} asserts that the first $k-1$ answers to queries to a fixed set in $m$ are distributed uniformly.
Only the last query is answered deterministically depending on the answers to the $k-1$ earlier queries.
\item Thus, the two processes are identical unless we eventually find $k$ nodes from one of the sets in $M$. Since our algorithm queries the nodes uniformly at random (with respect to our random permutation $\pi$), it suffices to analyze the threshold at which we will typically find/not find such a set. This is the case for our sample size as we will see below.
\end{itemize}
The probability of having a set of size $k$ out of $n$ elements fully contained in a sample of size $q \le \frac{1}{10} \cdot n^{1-1/k}$ is equal to
\begin{align*}
    \frac{\binom{n-k}{q-k}}{\binom{n}{q}} &=
    \frac{q \cdot \dots \cdot (q-k+1)}{n \cdot \dots \cdot (n-k+1)} \le
    \frac{q^k}{(n/2)^k} \le
    \frac{1}{10n}
\end{align*}
for $n \ge 2k$. Thus, the expected number of the $\frac{n}{k}$ sets that are fully contained in a sample of size $q \le \frac{1}{10} \cdot n^{1-1/k}$ is at most $\frac{1}{10}$. This implies that the statistical distance of the outputs generated by the two processes above is at most $\frac{1}{10}$. Hence, the algorithm cannot accept $\NE_1$ with probability $\frac23$ and at the same time reject $\NE_2$ with probability $\frac23$.
\end{proof}


\subsection{A lower bound on testing the 0-function in the distribution-free model}
\label{subsec:testing-distribution-free-0}

We are now ready to conclude our study and to combine the analysis of distributions $\NE_1$ and $\NE_2$ in \cref{subsec:distribution-N1,subsec:distribution-N2} with \cref{lemma:hardness} to prove the following theorem.

\LowerBound*

\begin{proof}
Let $M$ be a partition of the $n$ inputs of a ReLU network into subsets of size $k$ (where we assume that $k$ is even and not divisible by $4$; if this is not the case, we can choose the next bigger feasible value of $k$ instead). By Yao's principle, we know that if any deterministic algorithm that distinguishes between distributions $\NE_1$ and $\NE_2$ as in Lemma \ref{lemma:hardness} has query complexity $\Omega(n^{1-1/k})$ then this is also true for any randomized algorithm.

Now, for the purpose of contradiction, assume that there is a distribution-free property tester for the constant $0$-function with a query complexity $o(n^{1-1/k})$. Using standard amplification techniques we can increase the success probability of the algorithm to, say, $\frac{9}{10}$ while only increasing the query complexity by a constant factor. Next, we observe that with high probability a network from $\NE_1$ computes the constant $0$-function (\cref{lemma:distribution-free-N1-computes-0-whp}) and a network from $\NE_2$ is $(\epsilon,\delta)$-far from computing the constant $0$-function under distribution \DD (\cref{lemma:N2-2}). Let us use $\eta(n)$ to be the maximum of the probabilities that a network from $\NE_1$ does not compute the constant $0$-function and $\NE_2$ does not compute a network that is $\NE_2$ is $(\epsilon,\delta)$-far from computing the constant $0$-function under distribution \DD. Note that $\eta(n)$ goes to $0$ as $n$ goes to infinity.

Thus, the property tester accepts a random network from $\NE_1$ with probability at least $\frac{9}{10} - \eta(n)$ and rejects with probability at least $\frac{9}{10}-\eta(n)$ the ones from $\NE_2$. In particular, for sufficiently large $n$ it accepts networks from $\NE_1$ with probability at least $\frac23$ and rejects those from $\NE_2$ with probability at least $\frac23$. Since the query complexity of the property tester is $o(n^{1-1/k})$ this contradicts Lemma \ref{lemma:hardness} for sufficiently large $n$.
\end{proof}



\section{Testing 0-function and OR-function with \emph{one-sided error}}
\label{sec:ReLU-testing-0-OR-function-1-sided}

\cref{sec:ReLU-testing-0-OR-function-2-sided} and in particular, \cref{thm:ReLU-testing-0-OR-function-2-sided}, demonstrates that testing if a given ReLU network computes the constant $0$-function (or the OR-function) can be done very efficiently, with $\poly(1/\epsilon)$ query complexity (in the model of ReLU networks with one hidden layer and a single output). However, the tester in \cref{thm:ReLU-testing-0-OR-function-2-sided} is two-sided and it can err both when the network computes the constant $0$-function (or the OR-function) and when the network is $(\epsilon,\delta)$-far from computing the constant $0$-function (or the OR-function). It is then only natural to try to design a tester with \emph{one-sided error}, that is, to design a tester that can err when the network is $(\epsilon,\delta)$-far from computing the constant $0$-function (or the OR-function), but which always accepts any network that computes the constant $0$-function (or the OR-function). This section studies the complexity of this task. Our main result is that this task is significantly harder: we show that there is a tester with query complexity $\widetilde{O}(m/\epsilon^2)$ and give a near matching lower bound of $\Omega(m)$.


\subsection{Upper bound for testing 0-function and OR-function with one-sided error}
\label{sec:ReLU-testing-0-OR-function-1-sided-upper}

We first present a one-sided error tester for the constant $0$-function and the OR-function whose query complexity is $\widetilde{O}(m/\epsilon^2)$, that is, it almost linear in the number of hidden layer nodes. Observe that since the ReLU network is of size $\Theta(nm)$, the query complexity of our tester is \emph{sublinear}, but it is significantly worse than the query complexity of our testers with two-sided error, where the query complexity was $O(\ln^2(1/\epsilon) / \epsilon^6)$, which is \emph{independent} of the network size, see \cref{thm:ReLU-testing-0-OR-function-2-sided}.

The idea of the property tester is to randomly sample a subset of $O(\log m/\epsilon^2)$ input nodes and query all edges between the sampled input nodes and all the hidden layer nodes, as well as all edges that connect the hidden layer to the output node. We reject, if the resulting subsampled network evaluates an input to one and accept otherwise. We present the one-sided error algorithms for testing the constant $0$-function and for testing the OR-function below.


\begin{algorithm}[htbp]
\SetAlgoLined\DontPrintSemicolon
\caption{\textsc{OneSidedZeroTester}$(\epsilon,\delta, \lambda, n, m)$}

Sample $s = \lceil \frac{128 \ln(2m/\lambda)}{\epsilon^2} \rceil$ nodes from the input layer uniformly at random without replacement.

Let $S \in \NN_0^{n\times n}$ be the corresponding sampling matrix.

\lIf{\emph{there is $x \in\{0,1\}^n$ with
$w^T \cdot \relu(ASx) > 0$}}{\textbf{reject};}

\lElse{\textbf{accept}.}
\end{algorithm}



\begin{algorithm}[htbp]
\SetAlgoLined\DontPrintSemicolon
\caption{\textsc{OneSidedORTester}$(\epsilon, \lambda, n, m)$}

Sample $s = \lceil \frac{128 \ln (2m/\lambda)}{\epsilon^2} \rceil$ nodes from the input layer uniformly at random without replacement.

Let $S \in \NN_0^{n\times n}$ be the corresponding sampling matrix.

\lIf{\emph{there is $x \in\{0,1\}^n \setminus \bzero$ with
$w^T \cdot \relu(ASx) \le 0$}}{\textbf{reject};}

\lElse{\textbf{accept}.}
\end{algorithm}


\begin{restatable}{theorem}{OneSidedOOR}
\label{thm:ReLU-testing-0-OR-function-1-sided}
Let $(A,w)$ be a ReLU network with $n$ input nodes and $m$ hidden layer nodes. Let $e^{-n/16} \le \delta <1$, $\frac{1}{m} < \epsilon < \frac{1}{2}$ and $0 < \lambda < \frac{1}{2}$.
\begin{enumerate}[(1)]
\item There is a tester that queries $O(\frac{m \cdot\ln(m/\lambda) }{\epsilon^2})$ entries from $A$ and $w$ and always accepts, if the ReLU network $(A,w)$ computes the constant $0$-function, and rejects with probability at least $1-\lambda$, if the ReLU network $(A,w)$ is $(\epsilon,\delta)$-far from computing the constant $0$-function.
\item There is a tester that queries $O( \frac{m \cdot\ln(m/\lambda)}{\epsilon^2})$ entries from $A$ and $w$ and always accepts, if the ReLU network $(A,w)$ computes the OR-function and rejects with probability at least $1-\lambda$, if the ReLU network $(A,w)$ is $(\epsilon,\delta)$-far from computing the OR-function.
\end{enumerate}
\end{restatable}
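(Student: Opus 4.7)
The plan is to show that a random sample $S$ of $s = \lceil 128 \ln(2m/\lambda)/\epsilon^2 \rceil$ input nodes, together with the full second layer, already certifies violations of the target function on the full network. The tester reads all $s \cdot m$ first-layer edges incident to sampled inputs plus the $m$ second-layer weights, giving query complexity $O(sm) = O(m \ln(m/\lambda)/\epsilon^2)$. Brute-force enumeration of the $2^s$ sub-inputs $x = Sx \in \{0,1\}^n$ supported on the sample incurs no further queries. The crucial structural observation is that the sub-network evaluation $w^T \relu(ASx)$ uses unmodified second-layer weights and unmodified (sampled) first-layer weights, so it agrees exactly with what the full network would output on the sub-input $Sx \in \{0,1\}^n$.

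One-sided completeness is then immediate. If $(A,w)$ computes the constant $0$-function, then $w^T \relu(Ay) \le 0$ for every $y \in \{0,1\}^n$, hence for $y = Sx$; no witness exists and \textsc{OneSidedZeroTester} always accepts. Symmetrically, if $(A,w)$ computes the OR-function then $w^T \relu(Ay) > 0$ for every non-zero $y$, so every non-zero $Sx$ evaluates strictly positive, and \textsc{OneSidedORTester} always accepts (the quantifier $x \ne \bzero$ in the algorithm is exactly what rules out the spurious $Sx = \bzero$ case).

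For soundness of the $0$-function tester, suppose $(A,w)$ is $(\epsilon,\delta)$-far from computing $0$. By Lemma \ref{lemma:output}(1) there exists $x_0 \in \{0,1\}^n$ with $w^T \relu(Ax_0) > \tfrac{1}{8} \epsilon m n$, and I will show that $Sx_0$ is a witness for the algorithm with probability $\ge 1-\lambda$. For each fixed hidden node $j$, $a_j Sx_0$ is a sum of $s$ entries drawn uniformly without replacement from $a_{j1}(x_0)_1, \dots, a_{jn}(x_0)_n \in [-1,1]$, with expectation $\tfrac{s}{n}\, a_j x_0$. Hoeffding's inequality for sampling without replacement (Lemma \ref{lemma:Hoeffding}) gives
\[
    \Pr\!\left[\left| a_j Sx_0 - \tfrac{s}{n}\, a_j x_0 \right| \ge \tfrac{\epsilon s}{8}\right]
    \le 2 e^{-\epsilon^2 s/128}
    \le \tfrac{\lambda}{m}
\]
by our choice of $s$. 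Using $|w_j| \le 1$, the $1$-Lipschitz property of $\relu$, and positive homogeneity $\tfrac{s}{n} \relu(a_j x_0) = \relu(\tfrac{s}{n} a_j x_0)$, the same bound transfers to $|w_j(\relu(a_j Sx_0) - \tfrac{s}{n}\relu(a_j x_0))|$. Union-bounding over $j \in [m]$, with probability $\ge 1-\lambda$,
\[
    w^T \relu(ASx_0)
    > \tfrac{s}{n}\, w^T \relu(Ax_0) - \tfrac{\epsilon s m}{8}
    > \tfrac{\epsilon s m}{8} - \tfrac{\epsilon s m}{8} = 0,
\]
so the algorithm rejects.

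The OR-function tester follows the same scheme using Lemma \ref{lemma:output}(2), which produces $x_0$ with $w^T \relu(Ax_0) < -\tfrac{1}{8}\epsilon m n$; the identical Hoeffding+union-bound argument then gives $w^T \relu(ASx_0) < 0$ with probability at least $1-\tfrac{\lambda}{2}$. One extra step is required: the witness $Sx_0$ must be non-zero. Since $|w_j \relu(a_j x_0)| \le \|x_0\|_1$ for every $j$ (because entries of $A$ and $w$ are bounded by $1$), the far-from-OR bound forces $\|x_0\|_1 > \epsilon n/8$; hence the probability that an $s$-element sample misses every $1$-coordinate of $x_0$ is at most $(1-\epsilon/8)^s \le e^{-\epsilon s/8} \ll \lambda/2$, and a final union bound with the Hoeffding event completes the proof. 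The main technical obstacle, compared with the two-sided tester of Theorem \ref{thm:ReLU-testing-0-OR-function-2-sided}, is the absence of any bias term at the output to cushion the aggregated Hoeffding error: the per-hidden-node deviation tolerance must be strictly below the per-node signal share $\tfrac{\epsilon s}{8}$, which is precisely what pins the sample size at $s = \Theta(\ln(m/\lambda)/\epsilon^2)$ and forces the overall query complexity to $\widetilde{O}(m/\epsilon^2)$ rather than something independent of $m$.
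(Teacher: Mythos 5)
Your proof takes the same route as the paper's: the same one-sided tester with the same sample size, the same appeal to Lemma~\ref{lemma:output} for a witness $x_0$ with output value exceeding $\frac18\epsilon nm$ in magnitude, the same per-hidden-node Hoeffding bound followed by a union bound over the $m$ rows, and the same observation that a witness in the sub-network lifts to a witness $\bar{x}$ in the full network for one-sided completeness. The arithmetic (sample size $s$, the exponent $s\epsilon^2/128$, the transfer through $|w_j|\le1$, Lipschitzness, and positive homogeneity of $\relu$) all matches.

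There is one place where you go beyond what the paper writes, and it is a genuine contribution: for the OR-tester the paper merely says ``the analysis of part~(2) is identical,'' but soundness there requires the additional event that the sampled set hits at least one $1$-coordinate of $x_0$ (so that $Sx_0 \ne \bzero$ and hence is an admissible witness). You correctly derive $\|x_0\|_1 > \epsilon n/8$ from the bound $w^T\relu(Ax_0) < -\frac18\epsilon nm$ together with $|w_j\,\relu(a_j x_0)|\le \|x_0\|_1$, and then bound the miss probability by $(1-\epsilon/8)^s$. This fills a small gap that the paper leaves implicit. The only nitpick is bookkeeping of the failure probability: as stated, the Hoeffding union bound with your $s$ yields failure probability $\lambda$, not $\lambda/2$, so the extra miss event pushes the total slightly above $\lambda$; this is fixed by replacing $\ln(2m/\lambda)$ with $\ln(4m/\lambda)$ in the choice of $s$, which changes nothing asymptotically.
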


\begin{proof}
We prove only part (1), since the analysis of part (2) is identical.

The query complexity follows immediately from the description given above: Since $S$ has only $s$ non-zero entries, computing $w^T \cdot \relu(ASx) > 0$ for all $x \in\{0,1\}^n$ can be done by checking only $O(s m)$ entries of $A$ and $w$. Thus, we concentrate our analysis on the inputs on which the network accepts and on which it rejects. We first show that the tester has one-sided error and then prove that it rejects with probability at least $\frac23$ any input that is $(\epsilon,\delta)$-far from the constant $0$-function.

Let us assume that the tester rejects. Then there is an input $x$ for the sampled network on which the network outputs a 1. Now let $\bar{x}$ be the $n$-dimensional vector that agrees with $x$ on the sampled input nodes and is 0 otherwise. Note that the full network on input $\bar{x}$ computes the same output value as the sampled network on input $x$. Therefore, there exists an input on which the network computes a 1, implying that the network \emph{does not} compute the constant $0$-function. Hence, if the ReLU network computes the constant $0$-function then \textsc{OneSidedZeroTester} accepts.

Now let us assume that our network is $(\epsilon,\delta)$-far from computing the constant $0$-function. In this case, by \cref{lemma:output} for $\delta \ge e^{-n/16}$ there is an input $x$ on which the network has a value of more than $\frac18 \epsilon n m$ at the output node. Let $a_i$ denote a row of $A$. Using Hoeffding bounds (see \cref{lemma:Hoeffding}) we can show
\begin{align*}
    \Pr\left[\left|\frac{n}{s} \cdot a_iSx - a_ix\right| \ge \frac18 \epsilon n \right]
        &\le
    2 \cdot e^{-\frac{2s (\epsilon/8)^2}{4}}
        \le
    \frac{\lambda}{m}
\end{align*}
for the choice of $s$ in the algorithm. Taking a union bound over all $m$ hidden layer nodes implies this estimate holds  with probability at least $1-\lambda$ simultaneously for all hidden layer nodes. Thus, the overall error at the output node is less than $\pm \frac18\epsilon nm$ and so the value computed is bigger than $0$ and the tester rejects.
\end{proof}


\subsection{Lower bound for testing constant 0-function with one-sided error}
\label{sec:ReLU-testing-0-OR-function-1-sided-hardness}

Now we show that any tester for the constant $0$-function or for the OR-function that has one-sided error has a query complexity of $\Omega(m)$. We begin with an auxiliary claim.

\begin{lemma}
\label{lem:fixing-m/4-wont-suffice-to-make-it-all-0}
Let $A \in [-1,1]^{m \times n}$  and $w \in [-1,1]^m$ be arbitrary. Then for any choice of at most $\frac14m$ entries from $A$ and $w$, we can modify the remaining entries to obtain a matrix $A' \in [-1,1]^{m \times n}$ and vector $w' \in [-1,1]^m$ such that the ReLU network $(A',w')$ computes the constant $0$-function.
\end{lemma}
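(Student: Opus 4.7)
The plan is to exhibit $A'$ and $w'$ explicitly by classifying the hidden nodes according to which of their associated entries (the weight $w_j$ and the $n$ entries of row $j$ of $A$) lie in the fixed set $F$, where $|F|\le m/4$. The guiding idea is to make each hidden node's contribution $w_j'\cdot\relu((A'x)_j)$ to the output non-positive; for the few nodes where this fails directly, I will cancel their contribution by an explicit pairing with an entirely free node.

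First I partition. Let $J_w=\{j:w_j\in F\}$, so $|J_w|\le m/4$, and split it into $J_w^-=\{j\in J_w:w_j\le 0\}$ and $J_w^+=\{j\in J_w:w_j>0\}$. For $j\in J_w^-$, set the unfixed entries of row $j$ to $0$; since $w_j\le 0$ and $\relu(\cdot)\ge 0$, the contribution is automatically $\le 0$. For $j\in J_w^+$, if all fixed entries of row $j$ were non-positive one could set the unfixed $a_{ij}$ to $0$ and get $(A'x)_j\le 0$ for every $x\in\{0,1\}^n$, hence contribution $0$; but whenever row $j\in J_w^+$ contains a fixed entry $a_{ij}>0$, the input having $x_i=1$ and all other coordinates $0$ forces $(A'x)_j=a_{ij}>0$, so this direct approach fails. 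Resolving these ``bad'' nodes is the main obstacle.

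I handle them by a pairing argument. The number of distinct rows of $A$ containing at least one fixed entry is at most $|F|-|J_w|\le m/4-|J_w|$, so the set $J_{\text{free}}$ of hidden nodes whose weight $w_j$ is unfixed \emph{and} whose row $j$ of $A$ is entirely unfixed satisfies $|J_{\text{free}}|\ge m-|J_w|-(m/4-|J_w|)=3m/4$. Since $|J_w^+|\le m/4\le|J_{\text{free}}|$, fix an injection $\sigma:J_w^+\to J_{\text{free}}$. For each $j\in J_w^+$, set the unfixed entries of row $j$ of $A$ to $0$, copy the resulting row $j$ verbatim into row $\sigma(j)$, and set $w_{\sigma(j)}'=-w_j\in[-1,1]$; these assignments are legal because row $\sigma(j)$ and the weight $w_{\sigma(j)}$ are free. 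Now $(A'x)_j=(A'x)_{\sigma(j)}$ for every $x$, so the joint contribution of the pair is $w_j\,\relu((A'x)_j)-w_j\,\relu((A'x)_{\sigma(j)})=0$. For any remaining free node $j\in J_{\text{free}}\setminus\sigma(J_w^+)$, set $w_j'=0$ (its row may be completed arbitrarily), making its contribution $0$. Summing over all hidden nodes gives $(w')^T\relu(A'x)\le 0$ for every $x\in\{0,1\}^n$, so the constructed network $(A',w')$ computes the constant $0$-function. The only non-routine step is the counting inequality $|J_{\text{free}}|\ge 3m/4$, which is precisely where the $\tfrac14$ in the statement is used.
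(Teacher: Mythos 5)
Your proof is correct modulo one small omission, and it takes a genuinely different route from the paper. The paper's construction sets every \emph{unfixed} second-layer weight to $-1$ (not $0$), sets each unfixed first-layer weight to $+1$ or $-1$ depending on the sign of the incident $w_j$, and then argues by comparing magnitudes: the many hidden nodes with no incident fixed edge contribute a large negative amount (at least $-\tfrac12 m$ on any non-zero input), which strictly outweighs the at most $+\tfrac14 m$ that the fixed entries can add, so the output is at most $-\tfrac14 m < 0$. You instead achieve $(w')^T\relu(A'x)\le 0$ by \emph{exact cancellation}: for $J_w^-$ the sign of $w_j$ already forces a non-positive contribution, for $J_w^+$ you duplicate the row onto a fully free node with negated weight so the pair contributes $0$, and all other free weights are set to $0$. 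Both arguments spend the budget $\tfrac14 m$ in the same place (guaranteeing enough completely unconstrained hidden nodes), but the mechanisms differ: the paper overwhelms the fixed contribution, while you neutralize it. Your route avoids any estimate on the size of the fixed contribution, at the cost of needing an explicit injection into $J_{\text{free}}$.

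The omission: hidden nodes $j$ with $w_j\notin F$ whose row of $A$ contains at least one fixed entry are in neither $J_w$ nor $J_{\text{free}}$, and you never assign $w'_j$ for them. The fix is one line: for every such $j$ set $w'_j=0$ (allowed since $w_j$ is unfixed), which makes its contribution zero regardless of what its row looks like. With that added the proof is complete; the counting $|J_{\text{free}}|\ge m-|F|\ge 3m/4\ge m/4\ge|J_w^+|$ and the legality checks ($0\in[-1,1]$ and $-w_j\in[-1,1]$) are all correct.
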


\begin{proof}
Fix at most $\frac14 m$ entries from $A$ and $w$ and let $E^*$ denote the corresponding set of edges. Let $E^*_1$ be the set of edges from $E^*$ in the first layer (between the input nodes and the hidden layer) and $E^*_2$ be the set of edges from $E^*$ in the second layer.
We first set the weights of all edges in the second layer that are not in $E^*$ to be $-1$.
Then we consider every input node separately. For every edge from the first layer that is not in $E^*$: if it is connected to a positively weighted edge in the second layer, we change its weight to $-1$; if it is connected to a non-positively weighted edge, then we change its weight to 1.
Observe that every (non-zero) input node contributes at least $\frac12 m$ to hidden layer nodes that have a negatively weighted second layer edge and no incident edge from~$E^*$.

Now consider an arbitrary input $x$. If $x$ is the zero vector then the output value is also $0$. Otherwise, at least one bit is 1 and this node contributes no more than $-\frac12 m$ to the output value. At the same point using the originally fixed entries in $A$ and $w$, we can contribute at most $\frac14 m$ to the output value. Any edge not in $E^*$ does not contribute positively to the output and so the output value is at most $-\frac14 m$. Thus, the modified network computes the constant $0$-function.
\end{proof}

Almost identical arguments can be used to the prove the following.

\begin{lemma}
\label{lem:fixing-m/4-wont-suffice-to-make-it-OR}
Let $A \in [-1,1]^{m \times n}$  and $w \in [-1,1]^m$ be arbitrary. Then for any choice of less than $\frac14 m$ entries from $A$ and $w$, we can modify the remaining entries to obtain a matrix $A' \in [-1,1]^{m \times n}$ and vector $w' \in [-1,1]^m$ such that the ReLU network $(A',w')$ computes the OR-function.
\end{lemma}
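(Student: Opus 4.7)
The plan is to mirror the construction from the proof of Lemma~5.6 with all second-layer signs flipped: I would choose the free weights so that the zero input produces output $0$ (which happens automatically for any ReLU network since $\relu(A\bzero) = \bzero$) and every non-zero input produces a strictly positive value at the output node, forcing $f = \sgn(\relu(\cdot))$ to agree with the OR-function.

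First I would let $E^*$ denote the set of less than $\tfrac{1}{4}m$ fixed edges, split as $E_1^* \cup E_2^*$ by layer. I would set every second-layer edge $(j,o) \notin E_2^*$ to weight $+1$. Then, for each remaining first-layer edge $(i,j) \notin E_1^*$, I would assign it weight $+1$ if $w_j > 0$ (which includes every $j$ with $(j,o) \notin E_2^*$ and possibly some $j$ with $(j,o) \in E_2^*$), and weight $-1$ if $w_j \le 0$ (only possible for $j$ with $(j,o) \in E_2^*$). This is the reverse-sign analogue of the Lemma~5.6 construction.

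Next I would verify correctness by case analysis on an arbitrary non-zero input $x$. The hidden nodes split into three classes: (a) nodes $j$ with no incident $E^*$ edge --- at least $m - |E^*| > \tfrac{3}{4}m$ of them, each with $w_j = +1$ and every first-layer weight into it set to $+1$, so the pre-ReLU value is $\|x\|_1 \ge 1$ and each contributes at least $+1$ to the output, for a total of more than $\tfrac{3}{4}m$; (b) nodes $j$ with $(j,o) \notin E_2^*$ but some $(i,j) \in E_1^*$, which still have $w_j = +1$ and non-negative ReLU value, so contribute non-negatively; (c) nodes $j$ with $(j,o) \in E_2^*$, contributing non-negatively whenever $w_j > 0$ and potentially negatively when $w_j \le 0$.

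For a class-(c) node $j$ with $w_j \le 0$, the key observation (inherited from the Lemma~5.6 argument) is that every non-$E_1^*$ first-layer edge into $j$ has been set to $-1$, so the non-$E^*$ contribution to $j$'s pre-ReLU value is $\le 0$; hence $\relu(\text{value at }j) \le |\{i : (i,j) \in E_1^*\}|$ and $j$'s contribution to the output is at least $-|\{i : (i,j) \in E_1^*\}|$. Summing these bounds over all class-(c) nodes with $w_j \le 0$ gives a total negative contribution of at least $-|E_1^*| > -\tfrac{1}{4}m$, so the output value exceeds $\tfrac{3}{4}m - \tfrac{1}{4}m = \tfrac{1}{2}m > 0$, giving $f(x) = 1$. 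Combined with $f(\bzero) = 0$, this shows the modified network computes the OR-function. The only real obstacle is the sign bookkeeping: making sure the positive contribution is charged to the $m - |E^*|$ clean hidden nodes while the only possibly negative contribution is charged to $|E_1^*|$ first-layer edges, which is the mirror of the ``at most $\tfrac{1}{4}m$'' accounting already carried out for the constant $0$-function.
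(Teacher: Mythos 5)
Your proof is correct and follows the same plan as the paper's: split the fixed entries into first-layer and second-layer sets, set the free second-layer weights to $+1$, choose free first-layer weights by the sign of the modified second-layer weight, and then bound the output from below on any non-zero input by charging the positive contribution to the $\ge m - |E^*|$ untouched hidden nodes and the only possibly-negative contribution to the $|E^*_1|$ fixed first-layer edges. One thing worth noting: the paper's one-line proof of this lemma says only ``identical to Lemma~\ref{lem:fixing-m/4-wont-suffice-to-make-it-all-0} except that the non-$E^*$ second-layer edges are set to $+1$,'' which if read literally would leave the first-layer rule unchanged (sending edges into positively-weighted hidden nodes to $-1$) and would therefore fail -- the untouched hidden nodes would all have ReLU value $0$ and the output could be non-positive. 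You correctly recognize that the first-layer assignment must also mirror (free edges into $w_j>0$ go to $+1$, free edges into $w_j\le0$ go to $-1$), which is the intended but unstated flip, and your accounting $\tfrac34 m - \tfrac14 m > 0$ closes the argument cleanly.
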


\begin{proof}
The proof of \cref{lem:fixing-m/4-wont-suffice-to-make-it-OR} is identical to the proof of \cref{lem:fixing-m/4-wont-suffice-to-make-it-all-0}, except that in the construction, all edges in the second layer that are not in $E^*$ will be set to $1$.
\end{proof}

Using \cref{lem:fixing-m/4-wont-suffice-to-make-it-all-0,lem:fixing-m/4-wont-suffice-to-make-it-OR} we can now prove the following lower bound.

\begin{restatable}{theorem}{OneSidedTwo}
\label{thm:ReLU-testing-0-OR-function-1-sided-hardness}
Any tester for the constant $0$-function (or for the OR-function) that has one-sided error has query complexity of $\Omega(m)$.
\end{restatable}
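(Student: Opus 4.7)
The plan is to use the standard one-sided hardness template: a one-sided tester can only reject an input when the observed answers rule out \emph{every} network that computes the target function, and the completion lemmas \cref{lem:fixing-m/4-wont-suffice-to-make-it-all-0} and \cref{lem:fixing-m/4-wont-suffice-to-make-it-OR} guarantee that any set of at most $m/4$ queried entries can be extended to a full network that does compute the target function. Thus any tester making fewer than $m/4$ queries is forced to accept, even on inputs that are $(\epsilon,\delta)$-far.

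First I would exhibit explicit $(\epsilon,\delta)$-far networks for some choice of constants, to ensure the statement is non-vacuous. A natural candidate for farness from the constant $0$-function is the all-ones network $a_{ij}=w_j=1$, which computes OR and thus disagrees with the $0$-function on every non-zero input. A short edit-distance calculation (following the style of the proof of \cref{theorem:0and1}) shows that any modification of at most $\epsilon mn$ first-layer and $\epsilon m$ second-layer entries shifts the pre-sign value $w^T \relu(Ax)$ by at most $4\epsilon mn$, so for $\epsilon<\tfrac{1}{8}$ the modified network still outputs $1$ on every $x$ with $\|x\|_1>4\epsilon n$, a $1-e^{-\Omega(n)}$ fraction by Chernoff. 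Symmetrically, for farness from OR I would take $a_{ij}=1$, $w_j=-1$: the original pre-sign value is $-m\|x\|_1$, so after any $\epsilon$-modification the output can become positive only for $x$ with $\|x\|_1<4\epsilon n$, again a negligible fraction, witnessing farness from OR.

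The core argument then invokes Yao's minimax principle to reduce to deterministic testers. Fix a deterministic one-sided tester $T$, run it on the far network $(A,w)$, and suppose $T$ queries at most $q\le m/4$ entries adaptively and rejects. By \cref{lem:fixing-m/4-wont-suffice-to-make-it-all-0} there exists a completion $(A',w')$ of these $q$ entries into a network that computes the constant $0$-function. Because $T$ is deterministic and each query is a function only of the answers received so far, running $T$ on $(A',w')$ generates an identical query-and-answer transcript and hence the same reject decision. But $T$ has one-sided error and must accept every network computing the $0$-function---a contradiction. Hence $T$ must make at least $m/4 = \Omega(m)$ queries to reject the far network, and Yao's principle lifts this to randomized one-sided testers (each coin setting that rejects must query $\ge m/4$ entries, so a reject probability $\ge 2/3$ forces expected query complexity $\Omega(m)$).

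The OR case is entirely analogous, using the far-from-OR construction above together with \cref{lem:fixing-m/4-wont-suffice-to-make-it-OR}. The only substantive work is verifying farness of the two explicit networks under the edit distance; once that is in hand, the rest is a textbook certificate/adversary argument and I do not anticipate any further technical obstacle.
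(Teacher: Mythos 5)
Your proposal is correct and follows the same core strategy as the paper: use the completion lemmas (\cref{lem:fixing-m/4-wont-suffice-to-make-it-all-0}, \cref{lem:fixing-m/4-wont-suffice-to-make-it-OR}) to show that any transcript of at most $m/4$ queries can be extended to a network computing the target function, so a one-sided tester that queries that few entries can never reject, yet far networks exist and must be rejected with probability at least $2/3$. Two small remarks: the invocation of Yao's minimax principle is a detour — for one-sided error the direct ``certificate'' argument you give in your final parenthetical (every coin sequence that rejects must be safe, and safety requires more than $m/4$ queries) is already complete and is exactly what the paper does implicitly, without Yao; and your explicit far-network constructions (all-ones for far-from-$0$, all-ones first layer with $w_j=-1$ for far-from-OR) and the accompanying edit-distance/Chernoff calculation supply a useful level of rigor that the paper simply asserts.
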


\begin{proof}
For the constant $0$-function, we observe that after a series of $\frac14 m$ queries and answers the algorithm can only reject if it can guarantee that no input, which agrees with the series of queries and answers, computes a constant $0$-function. However, by \cref{lem:fixing-m/4-wont-suffice-to-make-it-all-0}, every network that has $\frac14 m$ entries fixed can be completed in a way that it computes the constant $0$-function. Thus, a tester that makes at most $\frac14 m$ queries has to always accept. However, there are inputs that are $(\epsilon,\delta)$-far from computing the $0$-function. Thus, the lower bound holds.

Similarly, for the OR-function, after a series of $\frac14 m$ queries and answers the algorithm can only reject if it can guarantee that no input that agrees with the series of queries and answers computes the OR-function. However, by \cref{lem:fixing-m/4-wont-suffice-to-make-it-OR}, every network that has $\frac14 m$ entries fixed can be completed in a way that it computes the OR-function. Thus, a tester that makes at most $\frac14 m$ queries has to always accept. However, there are inputs that are $(\epsilon,\delta)$-far from computing the OR-function. Thus, the lower bound holds.
\end{proof}



\section{A \emph{structural theorem} and its consequences for property testing}

\label{sec:monotonicity-symmetry}


In this section we prove that for $\epsilon \ge 2 \sqrt{\frac{\log(1/\delta)}{n}}$, every network $(A,w)$ is $(\epsilon,\delta)$-close to the $0$-function or to the OR-function (which can be thought of the counterpart of the $0$-function as for input $0$ any network outputs $0$). We remark that \textbf{if two networks are close to computing the $0$-function this does not means that they are $(2\epsilon,2\delta)$-close} as there are many networks computing the $0$-function (see also the comments in footnote~\ref{footnote:not-triangle-inequality}). The main result of this section is the following theorem.

\Closeness*

The above result has some immediate consequences on property testing, namely, every property that contains the constant $0$-function and the OR-function is trivially testable for $\epsilon \ge 2 \sqrt{\frac{\log(1/\delta)}{n}}$,
which implies a property testing for $\epsilon \in (0,1)$ with
query complexity $\poly(1/\epsilon) \cdot \polylog(1/\delta)$, assuming that $n$ and $m$ are polynomially related.
Before we present a proof of \cref{theorem:0and1}, let us discuss two interesting applications showing that monotonicity and symmetry for ReLU networks with $m = \poly(n)$ are trivially testable.

\paragraph{Testing monotone functions.}
Recall that a Boolean function $f$ is \emph{monotone}, if for every $x,y \in \{0,1\}^n$, $x\le y$ implies $f(x) \le f(y)$. Let $\PP = \bigcup_{n \in \NN} \PP_n$ be the set of monotone Boolean functions. We observe that the constant $0$-function as well as the OR-function are monotone. Thus, the $0$-function and the OR function are in $\PP_n$. We get the following corollary.

\begin{corollary}
\label{corollary:monotonicity}
Let $(A,w)$ be a ReLU network with $n=m^{O(1)}$ input nodes and $m=n^{O(1)}$ hidden layer nodes. Then monotonicity can be tested
with a trivial tester with $\emph{\text{poly}}(1/\epsilon) \cdot \emph{\polylog}(1/\delta)$
query complexity.
\end{corollary}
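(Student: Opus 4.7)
The plan is to combine Theorem~\ref{theorem:0and1} with a case split on whether the accuracy parameters are ``moderate'' or ``extreme,'' so that in the moderate regime the tester can just accept, and in the extreme regime it can afford to read the entire network.

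First, I would observe that both the constant $0$-function and the OR-function lie in $\PP_n$, the class of monotone Boolean functions. Thus, by Theorem~\ref{theorem:0and1}, whenever $\epsilon \ge \max\{\tfrac{1}{m},\, 2\sqrt{\log(2/\delta)/n}\}$, \emph{every} ReLU network $(A,w)$ with $n$ input nodes and $m$ hidden layer nodes is $(\epsilon,\delta)$-close to some function in $\PP_n$. In this regime, the tester that always accepts is a valid one-sided monotonicity tester with $0$ queries.

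Second, for the complementary regime $\epsilon < \max\{\tfrac{1}{m},\, 2\sqrt{\log(2/\delta)/n}\}$, the tester will simply query \emph{every} entry of $A$ and $w$, i.e., read the whole network, and then check deterministically whether the induced function is monotone (query complexity and computational complexity are independent concerns in property testing, so the brute-force decision is permissible). The whole network has $mn + m = \Theta(mn)$ entries, so it suffices to bound $mn$ in this regime. Using $n = m^{O(1)}$ and $m = n^{O(1)}$: if $\epsilon < 1/m$, then $m \le 1/\epsilon$ and $n = m^{O(1)} = (1/\epsilon)^{O(1)}$, giving $mn = \poly(1/\epsilon)$; if $\epsilon < 2\sqrt{\log(2/\delta)/n}$, then $n \le 4 \log(2/\delta)/\epsilon^2$, so $n = \poly(1/\epsilon) \cdot \polylog(1/\delta)$ and $m = n^{O(1)}$ is also $\poly(1/\epsilon) \cdot \polylog(1/\delta)$, yielding $mn = \poly(1/\epsilon) \cdot \polylog(1/\delta)$.

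Putting the two cases together, the overall tester has query complexity $\poly(1/\epsilon)\cdot\polylog(1/\delta)$, as claimed. There is essentially no technical obstacle here; the only point to be careful about is that the threshold separating the two cases in Theorem~\ref{theorem:0and1} translates, under the assumption $n = m^{\Theta(1)}$, into a polynomial bound on the network size, which is what makes brute-force reading affordable in the small-$\epsilon$ case.
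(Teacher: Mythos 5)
Your proof is correct and follows essentially the same route as the paper: apply \cref{theorem:0and1} to accept trivially when $\epsilon \ge \max\{\tfrac{1}{m},\, 2\sqrt{\log(2/\delta)/n}\}$, and otherwise read the entire network and decide by brute force, using the polynomial relation between $n$ and $m$ to bound $mn$ by $\poly(1/\epsilon)\cdot\polylog(1/\delta)$. You merely spell out the two sub-cases of the max more explicitly than the paper does, which is a helpful clarification but not a different argument.
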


\begin{proof}
When $\epsilon \ge  \max\left\{\frac1m, 2 \sqrt{\frac{\log(2/\delta)}{n}}\right\}$ then we can always accept since \cref{theorem:0and1} implies that every function is close to a monotone function. Otherwise, if $\epsilon < \max\left\{\frac1m, 2 \sqrt{\frac{\log(2/\delta)}{n}}\right\}$, we can query the whole network and check whether the function is monotone. Since $m= n^{O(1)}$ and $n=m^{O(1)}$, the query complexity is $O(nm) \le \left(\frac{\log(1/\delta)}{\epsilon}\right)^{O(1)}$, as promised.
\end{proof}

\paragraph{Testing symmetric functions.}
A Boolean function is \emph{symmetric}, if its value does not depend on the order of its arguments, i.e., it only depends on the number of ones. We observe that the $0$-function and OR-function are symmetric. This yields the following corollary with a similar proof as above.

\begin{corollary}
\label{corollary:symmetry}
Let $(A,w)$ be a ReLU network with $n = m^{O(1)}$ input nodes and $m=n^{O(1)}$ hidden layer nodes. Then symmetry can be tested with a trivial tester with $\emph{\poly}(1/\epsilon) \cdot \emph{\polylog}(1/\delta)$
query complexity.
\end{corollary}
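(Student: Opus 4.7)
The plan is to mirror the proof of \cref{corollary:monotonicity} essentially verbatim, leveraging the structural \cref{theorem:0and1} together with the observation that both witness functions (the constant $0$-function and the OR-function) are symmetric. Concretely, the set of symmetric Boolean functions forms a property $\PP = \bigcup_{n \in \NN} \PP_n$, and since neither the all-zero function nor $\sgn(\sum_i x_i)$ depends on the order of the input coordinates, both belong to $\PP_n$ for every $n$.

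The tester will then have the following trivial two-case structure. First, whenever $\epsilon \ge \max\{1/m, 2\sqrt{\log(2/\delta)/n}\}$, the tester accepts without any queries: by \cref{theorem:0and1}, the input network is $(\epsilon,\delta)$-close to computing the constant $0$-function or the OR-function, and both of these are in $\PP_n$, so the network is $(\epsilon,\delta)$-close to computing a symmetric function and must be accepted. Second, when $\epsilon < \max\{1/m, 2\sqrt{\log(2/\delta)/n}\}$, the tester reads every entry of $A$ and $w$ and decides deterministically whether the computed function is symmetric (for example, by evaluating $f$ on every input and verifying that $f(x)$ depends only on $\|x\|_1$); this is a zero-error exact decision procedure, so the tester answers correctly with probability $1$.

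For the query complexity, the only regime where we read from the network is $\epsilon < \max\{1/m, 2\sqrt{\log(2/\delta)/n}\}$. Under the hypothesis $n = m^{O(1)}$ and $m = n^{O(1)}$, the total number of weights is $\Theta(nm) = n^{O(1)} = m^{O(1)}$, and in this regime $1/\epsilon \ge \min\{m, \sqrt{n/(2\log(2/\delta))}\}$. A direct calculation (identical to the one carried out in the proof of \cref{corollary:monotonicity}) gives $nm \le \poly(1/\epsilon) \cdot \polylog(1/\delta)$, which is the claimed bound.

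There is no real obstacle here: the entire content of the corollary is reduced to the two observations that (i) symmetry is preserved by the two ``anchor'' functions of \cref{theorem:0and1}, and (ii) in the complementary small-$\epsilon$ regime the full network fits within the permitted query budget under the polynomial relation between $n$ and $m$. The only thing to be slightly careful about is stating the decision step of case two in a way that is clearly exact (no randomization), so that the ``always accept'' half and the ``fully read and decide'' half together yield a deterministic, zero-error tester as asserted.
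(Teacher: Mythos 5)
Your proposal is correct and takes essentially the same approach as the paper: the paper explicitly states the corollary follows from \cref{theorem:0and1} by the same reasoning as \cref{corollary:monotonicity}, using that the constant $0$-function and OR-function are both symmetric. The two-case trivial tester and the query-complexity calculation you give match the paper's argument exactly.
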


\begin{remark}\rm
We believe that the result in \cref{theorem:0and1} is interesting and highly non-trivial. As mentioned earlier, since our $(\eps,\delta)$-distance does not satisfy the triangle inequality (as it is typical in property testing, see also the comments in footnote~\ref{footnote:not-triangle-inequality}) and as there are many networks computing the $0$-function (or the OR-function), \cref{theorem:0and1} does not imply that any two ReLU networks are $(2\epsilon,2\delta)$-close to each other.

Furthermore, on the basis of the analysis in the proof of \cref{theorem:0and1}, one could, for example, hope that even though that are trivial examples of ReLU networks that are far from the constant $0$-function, or far from the OR-function, most (or many) networks are close to both the constant $0$-function and the OR-functions. For example, one could conjecture that such a claim holds for any ``balanced'' ReLU networks, networks that given a uniformly random input in $\{0,1\}^n$ have at least a 0.01 probability to output $+1$ and at least a 0.01 probability to output $-1$. However, this intuition is incorrect, at least for small $\delta$ (in the range where our tester works). This can be seen on the case of a network where a set $H_L$ of half of the hidden layer nodes has 1-edges to the output nodes and $H_R$ (the other half) has $-1$-edges to the output nodes, and where half of the input nodes have 1-edges to $H_L$ and the other half of the input nodes have 1-edges to $H_R$ (all other edges have 0-weights).
\end{remark}


\subsection{Every ReLU network is close to the 0-function or to the OR-function}
\label{subsec:everything-close}

In this section we prove our structural result, \cref{theorem:0and1}. We first start with a simple concentration bound, which is a direct application of McDiarmid's inequality (see \cref{lemma:McDiarmid}).

\begin{lemma}
\label{lemma:ReLUconcentration}
Let $A\in [-1,1]^{m\times n}$ and $w\in[-1,1]^m$. Let $x\in\{0,1\}^n$ be chosen uniformly at random. Then we have
\begin{align*}
    \Pr\left[\big|w^T\relu(Ax) - \Ex[w^T \cdot \relu(Ax)] \big| \ge \epsilon n m \right]
        &\le
    2 e^{-2\epsilon^2 n}
    \enspace.
\end{align*}
\end{lemma}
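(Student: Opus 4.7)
The plan is to apply McDiarmid's inequality (\cref{lemma:McDiarmid} in the appendix) to the function
\[
    F(x_1,\dots,x_n) \;:=\; w^T \cdot \relu(Ax) \;=\; \sum_{j=1}^m w_j \cdot \relu\Bigl(\sum_{i=1}^n a_{ji} x_i\Bigr),
\]
viewed as a function of the $n$ independent uniform bits $x_1,\dots,x_n \in \{0,1\}$. The main step is to bound, for each coordinate $i$, the quantity $c_i := \sup_{x, x'} |F(x) - F(x')|$ where $x$ and $x'$ differ only in the $i$-th coordinate.

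For a fixed $j$, changing $x_i$ shifts the pre-activation $\sum_{i'} a_{ji'} x_{i'}$ by exactly $\pm a_{ji}$, and since $\relu$ is $1$-Lipschitz, the value $\relu(\sum_{i'} a_{ji'} x_{i'})$ changes by at most $|a_{ji}| \le 1$. Multiplying by $|w_j| \le 1$ and summing over the $m$ hidden nodes yields $c_i \le m$ for every $i$.

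Feeding $c_i \le m$ and $t = \epsilon n m$ into McDiarmid's bound gives
\[
    \Pr\!\left[|F(x) - \Ex[F(x)]| \ge \epsilon n m\right] \;\le\; 2\,\exp\!\left(-\frac{2(\epsilon n m)^2}{\sum_{i=1}^n c_i^2}\right) \;\le\; 2\,\exp\!\left(-\frac{2\epsilon^2 n^2 m^2}{n\, m^2}\right) \;=\; 2\,e^{-2\epsilon^2 n},
\]
which is exactly the desired inequality. There is no serious obstacle here: the only subtlety is to correctly observe that $\relu$ being $1$-Lipschitz makes each hidden-layer contribution change by at most the weight of the affected first-layer edge, so the $c_i$ bound depends on $m$ but not on $n$, and this is precisely what produces the exponent $2\epsilon^2 n$ (rather than $2\epsilon^2 n m^2$ or similar) after the deviation parameter is scaled by $nm$.
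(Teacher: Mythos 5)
Your proof is correct and takes essentially the same approach as the paper: both apply McDiarmid's inequality with bounded-difference parameter $c_i \le m$ and deviation $t = \epsilon n m$. You simply spell out the $1$-Lipschitz argument for ReLU that the paper leaves implicit.
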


\begin{proof}
We observe that changing a single bit of $x$ changes $g(x) := w^T \cdot \relu(Ax)$ by a value in $[-m,m]$. Thus, McDiarmid's inequality (see \cref{lemma:McDiarmid} with $t = \epsilon n m$ and $c_k = m$) yields
\begin{align*}
    \Pr\left[\left| g(x) - \Ex[g(x)]\right| \ge \epsilon nm\right]
        &\le
    2 e^{-\frac{2(\epsilon n m)^2}{n m^2}}
        =
    2e^{-2\epsilon^2 n}
    \enspace,
\end{align*}
which proves the lemma.
\end{proof}

Now we complete the proof \cref{theorem:0and1}, which is an immediate consequence of the following.

\begin{lemma}
\label{lemma:ReLUClose}
Let $(A,w)$ be a ReLU network with $n$ input nodes and $m$ hidden layer nodes. Let $0 < \delta \le \frac12$ and let $\epsilon \ge \max\left\{\frac1m, 2 \sqrt{\frac{\log(2/\delta)}{n}}\right\}$. Let $x\in\{0,1\}^n$ be chosen uniformly at random. Then
\begin{itemize}
\item If $\Ex[w^T \cdot \relu(Ax)]>0$ then $(A,w)$ is $(\epsilon, \delta)$-close to computing the OR-function.
\item If $\Ex[w^T \cdot \relu(Ax)]\le0$ then $(A,w)$ is $(\epsilon, \delta)$-close to computing the constant $0$-function.
\end{itemize}
\end{lemma}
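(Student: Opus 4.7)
The plan is to combine the concentration bound of \cref{lemma:ReLUconcentration} with a modification argument in the spirit of the proof of \cref{lemma:output}. First I would apply \cref{lemma:ReLUconcentration} with a deviation parameter $\tau = c\epsilon$ for an appropriately small constant $c > 0$: the hypothesis $\epsilon \ge 2\sqrt{\log(2/\delta)/n}$ ensures $2 e^{-2\tau^2 n} \le \delta/2$, so that $|g(x) - \Ex[g(x)]| \le \tau nm$ holds with probability at least $1 - \delta/2$ over uniform $x \in \{0,1\}^n$, where $g(x) := w^T\relu(Ax)$. In parallel, a one-sided Hoeffding inequality applied to $\|x\|_1 = \sum_i x_i$ yields $\|x\|_1 \ge n/2 - O(\sqrt{n\log(2/\delta)})$ with probability at least $1 - \delta/2$; the hypothesis forces this lower bound to be of the form $n/2 - O(\epsilon n)$, i.e., at least a constant fraction of $n$.

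For Case~1 ($\Ex[g(x)] \le 0$), on the intersection $\EPS$ of the two good events (probability at least $1 - \delta$) we have $g(x) \le \tau nm$ and $\|x\|_1 = \Theta(n)$. I would then modify the network exactly as in the proof of \cref{lemma:output}(1): if fewer than $\lfloor \epsilon m \rfloor$ of the weights $w_j$ are strictly positive, setting all positive $w_j$ to $0$ costs at most $\epsilon m$ changes in $w$ and produces a network with $g \le 0$ on every input. Otherwise, choose an arbitrary set $J$ of $\lfloor \epsilon m \rfloor$ indices with $w_j \ge 0$ and set $w_j \leftarrow -1$ together with $a_{ij} \leftarrow 1$ for all $i$ and all $j \in J$; this uses at most $\epsilon m$ changes in $w$ and $\lfloor \epsilon m \rfloor n \le \epsilon nm$ changes in $A$. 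Since each old contribution $w_j \relu((Ax)_j) \ge 0$ for $j \in J$, the new output $g'(x)$ satisfies $g'(x) \le g(x) - \lfloor \epsilon m \rfloor \, \|x\|_1$, which on $\EPS$ is bounded above by $\tau nm - \Omega(\epsilon nm) < 0$ for a suitably small $c$. Hence the modified network outputs $0$ on the whole $(1-\delta)$-fraction $\EPS$, giving $(\epsilon,\delta)$-closeness to the constant $0$-function.

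Case~2 ($\Ex[g(x)] > 0$) is handled symmetrically: on the good event $g(x) \ge -\tau nm$, and I would pick $\lfloor \epsilon m \rfloor$ indices $j$ with $w_j \le 0$ and set $w_j \leftarrow 1$ together with $a_{ij} \leftarrow 1$, which increases the output by at least $\lfloor \epsilon m \rfloor \, \|x\|_1$ because the replaced contributions were non-positive. If fewer than $\lfloor \epsilon m \rfloor$ indices have $w_j \le 0$, the degenerate case is dealt with by first zeroing all non-positive $w_j$'s and then ``injecting'' one positive node $j^*$ via $w_{j^*} \leftarrow 1$ and $a_{ij^*} \leftarrow 1$; this uses fewer than $\epsilon m$ changes in $w$ and at most $n \le \epsilon nm$ changes in $A$, and guarantees $g'(x) \ge \|x\|_1$ on every input. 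Since every ReLU network returns $0$ on $x = \bzero$, matching the OR-function there, the modified network then agrees with the OR-function on at least a $(1-\delta)$-fraction of inputs.

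The main technical difficulty is the calibration of the constant $c$: the modification's reduction (or increase) of order $\Theta(\epsilon nm)$ on the good event must strictly dominate the concentration slack $\tau nm = c\,\epsilon nm$, while simultaneously both concentration failure probabilities must sum to at most $\delta$, which pins down the admissible range of $c$. The hypothesis $\epsilon \ge 2\sqrt{\log(2/\delta)/n}$ is precisely the regime in which such a $c$ can be chosen, and carefully bookkeeping the constants (in particular, using that $\lfloor \epsilon m\rfloor \ge \epsilon m/2$ since $\epsilon \ge 1/m$) yields the stated bound.
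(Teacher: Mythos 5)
Your network modification is essentially the same one the paper uses (flip $\lfloor\epsilon m\rfloor$ hidden nodes' outgoing weights to $\pm 1$ and their incoming edge weights to $1$, with the same style of degenerate cases), but your concentration accounting follows a different route. The paper applies \cref{lemma:ReLUconcentration} once, to the \emph{modified} network $g'(x)=w'^T\relu(A'x)$: after the change each selected node contributes $\mp\|x\|_1$, so $\Ex[g'(x)]$ is pushed past $\mp\Omega(\epsilon nm)$, and a single deviation bound of $\Theta(\epsilon nm)$ around that shifted mean finishes the proof. You instead keep concentration on the original $g$, add a second Hoeffding bound to ensure $\|x\|_1$ is close to $n/2$, and combine them through the pointwise inequality $g'(x)\le g(x)-\lfloor\epsilon m\rfloor\,\|x\|_1$. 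Both realize the same structural insight, but the paper's one-shot version is tighter: the expected shift $\lfloor\epsilon m\rfloor\cdot n/2$ is built into $\Ex[g']$, whereas you split the failure budget as $\delta/2+\delta/2$ and can only certify $\|x\|_1 \ge n/2 - O(\sqrt{n\log(2/\delta)})$, which erodes the reduction. Chasing the constants: the requirement $\tau nm < \lfloor\epsilon m\rfloor\|x\|_1$ with $\lfloor\epsilon m\rfloor\ge\epsilon m/2$ and $\|x\|_1$ at most about $n/2$ forces $\tau < \epsilon/4$, while $2e^{-2\tau^2 n}\le\delta/2$ forces $\tau\ge\sqrt{\ln(4/\delta)/(2n)}$; these two constraints are incompatible under the literal hypothesis $\epsilon\ge 2\sqrt{\log(2/\delta)/n}$ and one instead needs something like $\epsilon \ge 2\sqrt{2}\sqrt{\ln(4/\delta)/n}$. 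So the closing claim that careful bookkeeping recovers the stated bound is optimistic by a constant factor. This is a calibration mismatch rather than a structural gap --- your proof works with a modestly larger absolute constant in the hypothesis on $\epsilon$, and the paper's own proof is itself somewhat loose about the $\lfloor\epsilon m\rfloor$ rounding --- but you should not present it as matching the constant $2$ as stated.
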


\begin{proof}
Let us first consider the case when $\Ex[w^T \cdot \relu(Ax)]>0$ and modify the ReLU network into one that computes the OR-function for all but at most a $\delta$ fraction of the inputs.\footnote{Let us add, as an intuition, that for all but one input $\bzero$ the OR-function returns value 1 (cf. \cref{remark:constant-functions-and-ReLU}), and hence our task reduces to enforcing that the output of the ReLU network is 1 for all but an $\delta$ fraction of the inputs.}

If there are at most $\epsilon m$ non-positive entries in $w$ then we change all these entries into ones; then all entries in $w$ are positive. Consider an arbitrary hidden layer node $j$ and modify all its edges to the input nodes to have weight 1. Observe that the value of this node will become at least 1 unless $x^T = (0,0,\dots,0)$, and since all entries $w_r$ are positive, the output value will be $\sgn(\relu(w^T \relu(Ax)))=1$ for all input but one (when $x^T = (0,0,\dots,0)$). Therefore, we have edited the ReLU network in at most $n$ edges between the input nodes and the hidden layer nodes, and in at most $\epsilon m$ edges between the hidden layer nodes and the output node, and obtained a network that returns 1 for all inputs except $x^T = (0,0,\dots0)$. Therefore the obtained network is $(\epsilon, 0)$-close to computing the OR-function for any $\epsilon > \frac1m$.

Otherwise, if there are more than $\epsilon m$ non-positive entries in $w$ then we change $\epsilon m$ entries into ones and for the corresponding hidden layer nodes, change the weights of all of their edges to the input nodes to ones. These edges were previously contributing at most 0 to $\Ex[w^T \cdot \relu(Ax)]$ and after the change, every hidden layer node modified in this way contributes in expectation $\frac12 n$ (over the choice of $x$) to the output value. Thus $\Ex[w^T \relu(Ax)]$ increases by at least $\frac12 \epsilon m n$ with respect to the original network, and hence $\Ex[w^T \relu(Ax)] > \frac12 \epsilon m n$ after the modification.

Next, we observe that by \cref{lemma:ReLUconcentration} we obtain the following
\begin{align*}
    \Pr\left[ w^T\cdot\relu(Ax) \le 0 \right] & \le
    \Pr\left[\left| w^T \cdot \relu(Ax) -  \Ex[w^T \cdot \relu(Ax)] \right| > \tfrac12 \epsilon n m \right]
    \le
    2e^{-(\epsilon/2)^2 n}.
\end{align*}
Observe that if $\epsilon \ge 2 \sqrt{\frac{ \log(2/\delta)}{n}}$ then $\Pr\left[ w^T\cdot\relu(Ax) \le 0 \right] \le \delta$, and therefore we have modified the original ReLU network to obtain a network that computes the OR-function on all but at most a $\delta$ fraction of inputs. Thus the original network is $(\epsilon, \delta)$-close to computing the OR-function.

If $\Ex[w^T \cdot \relu(Ax)]\le 0$ we can do a similar approach and change an $\epsilon$-fraction of the hidden layer nodes to contribute $-\frac12 n$ in expectation. This results in $\Ex[w^T \relu(Ax)] \le - \frac12 \epsilon n m$ after the modification.
The concentration bounds follows in the same way as above; the details follow.

If there are at most $\epsilon m$ non-negative entries in $w$ then we change all these entries into $-1$s; then all entries in $w$ are negative. This ensures that with non-negatives values in the hidden layer nodes, we will have $w^T \relu(Ax) \le 0$. Hence, $\sgn(\relu(w^T \relu(Ax))) = 0$, and thus the obtained network is $(\epsilon, 0)$-close to computing the constant $0$-function for any $\epsilon > \frac1m$.

Otherwise, if there are more than $\epsilon m$ non-negative entries in $w$ then we change $\epsilon m$ entries into $-1$s and for the corresponding hidden layer nodes, change the weights of all of their edges to the input nodes to $-1$. These edges were previously contributing at least 0 to $\Ex[w^T \cdot \relu(Ax)]$ and after the change, every hidden layer node modified in this way contributes in expectation $-\frac12 n$ (over the choice of $x$) to the output value. Thus $\Ex[w^T \relu(Ax)]$ decreases by at least $\frac12 \epsilon m n$ with respect to the original network, and hence $\Ex[w^T \relu(Ax)] \le - \frac12 \epsilon m n$ after the modification.

Next, we observe that by \cref{lemma:ReLUconcentration} we obtain the following
\begin{align*}
    \Pr\left[ w^T\cdot\relu(Ax) > 0 \right] & \le
    \Pr\left[\left| w^T \cdot \relu(Ax) -  \Ex[w^T \cdot \relu(Ax)] \right| > \tfrac12 \epsilon n m \right]
    \le
    2e^{-(\epsilon/2)^2 n}.
\end{align*}
Observe that if $\epsilon \ge 2 \sqrt{\frac{ \log(2/\delta)}{n}}$ then $\Pr\left[ w^T\cdot\relu(Ax) > 0 \right] \le \delta$, and therefore we have modified the original ReLU network to obtain a network that computes the constant $0$-function on all but at most a $\delta$ fraction of inputs. Thus the original network is $(\epsilon, \delta)$-close to the constant $0$-function.

\end{proof}



\section{Testing \emph{monotone properties}}
\label{sec:monotone-properties}


In this section we study testability of \emph{monotone properties}. We start with some definitions and a useful characterization of monotone properties. For any pair of Boolean functions $f:\{0,1\}^n \rightarrow \{0,1\}$ and $g:\{0,1\}^n \rightarrow \{0,1\}$ define $(f \vee g) : \{0,1\}^n \rightarrow \{0,1\}$, $(f \vee g)(x) := f(x) \vee g(x)$.

\begin{definition}[\textbf{Monotone properties of functions}]
A property $\PP = \bigcup_{n \in \NN} \PP_n$ of Boolean functions is \textbf{monotone} if for every $n \in \NN$, $f \in \PP_n$ implies that $f \vee g \in \PP_n$ for every Boolean function $g: \{0,1\}^n \rightarrow \{0,1\}$.
\end{definition}

If we think of a Boolean function $f$ given as a truth table, that is, a $2^n$-dimensional vector that specifies $f$, then a property is monotone if $f \in \PP_n$ implies that every vector that can be obtained from the truth table of $f$ by turning any number of zeros into ones is also in the property. We next define two simple structural properties of monotone properties.

\begin{definition}
The \emph{monotone closure of a property \PP} is defined as the intersection $\bigcap_{M:{\PP} \subseteq M} M$ of all monotone properties $M$ with $\PP \subseteq M$.
\end{definition}

\begin{definition}
A property $G$ is called \textbf{generator} of a monotone property \PP if \PP is the monotone closure of $G$ and there is no property $G' \subsetneq G$ such that the monotone closure of $G'$ is \PP.
\end{definition}

We prove the following simple lemma that characterizes generators of monotone properties.

\begin{definition}
A monotone Boolean property $\PP = \bigcup \PP_n$ is \textbf{$f(n)$-generatible} if there exists a generator $G = \bigcup_{n \in \NN} G_n$ with $|G_n| \le f(n)$.
\end{definition}

\begin{lemma}
For every monotone property $\PP = \bigcup_n \PP_n$ there exists a generator $G = \bigcup_n G_n$ of \PP.
\end{lemma}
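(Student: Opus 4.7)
The plan is to construct $G$ by taking, for each $n$, the collection of \emph{minimal} functions in $\PP_n$ under the pointwise partial order $f \le g$ (equivalently, the coordinatewise order on $2^n$-dimensional truth tables). First I would fix $n$ and let $G_n \subseteq \PP_n$ be the set of functions $f \in \PP_n$ such that no $f' \in \PP_n$ with $f' \ne f$ satisfies $f' \le f$. Because there are only finitely many Boolean functions on $\{0,1\}^n$ (so $\PP_n$ is finite) and $\le$ is a partial order, every nonempty $\PP_n$ has minimal elements and, moreover, every element of $\PP_n$ dominates at least one minimal element. I would then set $G := \bigcup_n G_n$ and argue that $G$ is a generator.

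Next I would verify that the monotone closure of $G$ equals $\PP$. One direction is immediate: $G \subseteq \PP$ and $\PP$ is itself a monotone property, so $\PP$ appears in the intersection defining the closure, giving $\mathrm{cl}(G) \subseteq \PP$. For the reverse inclusion, I would take an arbitrary monotone property $M$ with $G \subseteq M$ and show $\PP_n \subseteq M_n$: for any $f \in \PP_n$, finiteness gives a minimal $f' \in G_n$ with $f' \le f$, so $f = f' \vee f$, and monotonicity of $M$ (applied to $f' \in G_n \subseteq M_n$ and $g := f$) places $f$ in $M_n$. Intersecting over all such $M$ yields $\PP \subseteq \mathrm{cl}(G)$.

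To show that $G$ is minimal, I would argue by contradiction: suppose $G' \subsetneq G$ also generates $\PP$, and pick $f \in G_n \setminus G'_n$. Observe that the monotone closure of $G'$ at level $n$ is exactly $\{h \vee g : h \in G'_n,\ g : \{0,1\}^n \to \{0,1\}\}$, since this family is closed under taking $\vee$ with arbitrary $g$ and is contained in every monotone property extending $G'$. If $f \in \mathrm{cl}(G')_n = \PP_n$ via this characterization, there exists $h \in G'_n$ with $h \le f$. But $h \in G'_n \subseteq G_n \subseteq \PP_n$ and $f$ is minimal in $\PP_n$, forcing $h = f$ and hence $f \in G'_n$, a contradiction.

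The only mild subtlety, which I expect to be the main point to get right (rather than a genuine obstacle), is to nail down the structural description $\mathrm{cl}(G')_n = \{h \vee g : h \in G'_n,\ g\}$ used in the minimality step: one must check that this family is itself monotone (immediate from $(h\vee g)\vee g' = h \vee (g \vee g')$) and is contained in every monotone property extending $G'$ (immediate from the definition of monotonicity). With this characterization in hand, every step above is a straightforward consequence of the finiteness of $\PP_n$ and the defining closure property of monotone families.
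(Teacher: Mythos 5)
Your proof is correct and follows essentially the same approach as the paper: both take $G_n$ to be the set of minimal elements of $\PP_n$ (your partial-order formulation is equivalent to the paper's single-flip characterization, precisely because $\PP_n$ is monotone), verify the closure is $\PP$ via the existence of a minimal element below any $f\in\PP_n$, and prove minimality of $G$ by contradiction. Your write-up is a bit more explicit than the paper's in the minimality step --- you pin down the structural characterization $\mathrm{cl}(G')_n = \{h\vee g : h\in G'_n\}$ and carefully handle the case $h=f$ --- but the underlying argument is the same.
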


\begin{proof}
Let $X_n \subset \PP_n$ be the set of functions whose truth table has a minimal number of ones, i.e., the set of functions $f \in \PP_n$ such that every function $f'$ that is obtained from $f$ by turning a $1$ in the truth table into a $0$ is not in $\PP_n$. We define $G_n=X_n$ for every $n \in \NN$.

We first show that the monotone closure of $G$ is \PP. Consider an arbitrary $f \in \PP_n$. Then either $f$ has a minimal number of ones, in which case $f\in G_n$, and so $f$ is also in the monotone closure of $G$, or $f$ does not have a minimal number of ones. In the latter case we can construct a minimal $f' \in \PP_n$ from $f$ by turning a finite number of ones into zeros. By construction $f' \in G_n$ and since the monotone closure of $G$ is the intersection of monotone properties that contain $G$, we have that $f$ is in the monotone closure (since $f$ is in all these properties).

Next we show that there is no property $G' \subsetneq G$ that has \PP as monotone closure. Assume that such a $G'$ exists. Then there must be $n \in \NN$ and $f \in G_n \subseteq \PP_n$ with $f \notin G_n'$. By construction $f$ is minimal, so there is no function $f' \in G$ such that one can obtain $f$ from $f'$ by turning zeros into ones. Since $G'\subseteq G$ it follows that there is not such function in $G'$. Hence, $f$ is not in the monotone closure of $G'$, but $f \in \PP$. A contradiction!
\end{proof}


\subsection{Testing 1-generatible properties}
\label{sec:monotone-properties-1-generatible}

Let us consider a property \PP with a generator $G$ such that $|G_n| = 1$ for all $n \in \NN$. In this case, we also write that \PP is \emph{1-generatible}. We first develop a tester for this case and later in \cref{sec:monotone-properties-general} generalize it to larger generators. We use $g_n$ to denote the unique function in $G_n$.

Let $(A,w)$ be the input network with $n$ input nodes and $m$ hidden layer nodes. The idea behind the property testing algorithm is related to the tester for the constant $0$-function (see \cref{sec:ReLU-testing-0-OR-function-2-sided}). In that tester, if the network is $(\epsilon,\delta)$-far from computing the constant 0 function then there must be an input on which the output value is $\Omega(\epsilon nm)$. We will show that if the network is $(\epsilon,\delta)$-far from computing a function from $\PP_n$ then there must be an $\frac12 \delta$-fraction of the inputs such that $w^T \relu(Ax)< - \frac18 \epsilon nm$ \emph{and} $g_n(x) = 1$. The algorithm samples $O(\frac1{\delta})$ input vectors $x \in \{0.1\}^n$ uniformly at random and evaluates them on the ReLU network by sampling a subset of input nodes and hidden layer nodes. The description of the algorithm is given below.


\begin{algorithm}[htbp]
\SetAlgoLined\DontPrintSemicolon
\caption{\textsc{MonotonePropertyTester}$(\epsilon,\delta,\lambda, n,m, g_n)$}
\label{alg:MonotonePropertyTester}

\lIf{$g_n(\bzero)=1$}{\textbf{reject}.}

Let $r = \frac{2 \ln (2/\lambda)}{\delta}$, $t = \frac{512 \ln(4r/\lambda)}{\epsilon^2}$, and $s= \frac{512 \ln(4tr/\lambda)}{\epsilon^2}$.

Sample $s$ nodes from the input layer and $t$ nodes from the hidden layer uniformly at random without replacement.

Let $S \in \NN_0^{n\times n}$ and $T \in \NN_0^{m \times m}$ be the corresponding sampling matrices.

Sample a set $X\subseteq \{0,1\}^n$ of size $r$ uniformly at random.

\lIf{\emph{there is $x \in X$ with $\frac{nm}{st} \cdot w^T \cdot \relu(TASx) + \frac18 \epsilon nm < 0$ and $g_n(x) = 1$}}{\textbf{reject};}

\lElse{\textbf{accept}.}
\end{algorithm}


In order to analyze the above algorithm, we first observe that every ReLU network (without bias) outputs $0$ if the input is the $\bzero$-vector. Thus there is a trivial tester that always rejects, if $g_n(\bzero)=1$.
We further need the following statement.
\begin{lemma}
\label{lemma:monotoneproperties1}
Let $\PP = \bigcup_n \PP_n$ be a 1-generatible monotone property. Let $G = \bigcup_n G_n$ be a generator of \PP with $|G_n| = 1$ for all $n \in \NN$ and let $G_n = \{g_n\}$ and let $g_n(\bzero) = 0$. Let $(A,w)$ be a ReLU network with $n$ input nodes and $m$ hidden layer nodes that is $(\epsilon,\delta)$-far from computing a function from $\PP_n$ for some $\frac1m < \epsilon < 1$ and $e^{-n/16} < \delta < 1$. Then for an $x \in \{0,1\}^n$ chosen uniformly at random we have
\begin{align*}
    \Pr\left[w^T \relu(Ax) < - \tfrac14\epsilon nm  \text{ and } g(x) = 1 \right] &> \tfrac12\delta
    \enspace.
\end{align*}
\end{lemma}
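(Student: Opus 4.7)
The plan is to prove the contrapositive: suppose $\Pr_x[w^T \relu(Ax) < -\tfrac14 \epsilon nm \wedge g_n(x)=1] \le \tfrac12 \delta$, and deduce that $(A,w)$ is $(\epsilon,\delta)$-close to some function in $\PP_n$. The key structural fact is that, because $\PP_n$ is the monotone closure of $\{g_n\}$, it coincides with $\{f : f \ge g_n \text{ pointwise}\}$. Hence it suffices to modify at most $\epsilon m$ entries of $w$ and $\epsilon nm$ entries of $A$ so that the resulting network computes some $g$ with $\Pr_x[g(x)=0 \wedge g_n(x)=1] \le \delta$; we can then take $f := g \vee g_n \in \PP_n$, for which $\Pr_x[f(x) \ne g(x)] \le \delta$, contradicting the $(\epsilon,\delta)$-farness of $(A,w)$.

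The edit strategy mirrors the OR-function construction from the proof of \cref{lemma:output}(2). If fewer than $\epsilon m$ entries of $w$ are non-positive, I replace each of them by $+1$ and set all $n$ first-layer edges incident to a single designated hidden node to $+1$ (fitting the $\epsilon nm$ first-layer budget because $\epsilon m \ge 1$); the resulting network computes the OR-function, which lies in $\PP_n$ since $g_n(\bzero)=0$, yielding even $(\epsilon,0)$-closeness. Otherwise, I fix an index set $U \subseteq [m]$ of exactly $\epsilon m$ non-positive entries, reset $w_j := +1$ for $j \in U$, and set $a_{ij} := +1$ for every $i \in [n]$ and every $j \in U$---saturating both budgets. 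Because the original contribution of $U$ to $w^T \relu(Ax)$ was non-positive, the modified pre-output value at any input $x$ is at least $w^T \relu(Ax) + \epsilon m \cdot \|x\|_1$.

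Now fix $x$ with $g_n(x)=1$. Provided (i) $w^T \relu(Ax) \ge -\tfrac14 \epsilon nm$ and (ii) $\|x\|_1 > n/4$ both hold, the modified pre-output strictly exceeds $0$ and therefore the modified network outputs $g(x)=1$. Consequently $g(x)=0 \wedge g_n(x)=1$ forces (i) or (ii) to fail. By the contrapositive hypothesis the probability that (i) fails jointly with $g_n(x)=1$ is at most $\tfrac12 \delta$; by the standard Chernoff bound from \cref{claim:NumberOfOnes} the probability that (ii) fails is at most $e^{-n/16}$. Summing, $\Pr_x[g(x)=0 \wedge g_n(x)=1] \le \tfrac12 \delta + e^{-n/16} \le \delta$, using the hypothesis $\delta > e^{-n/16}$ (absorbing a factor-of-two slack, consistent with the looseness already present in \cref{lemma:output}).

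The main obstacle is that the threshold $-\tfrac14 \epsilon nm$ in the lemma must line up with the output boost $\epsilon m \|x\|_1$ available from a single $\epsilon$-budget edit: the factor $\tfrac14$ is precisely what matches the Chernoff tail $\Pr[\|x\|_1 \le n/4] \le e^{-n/16}$, and the latter is what forces the $e^{-n/16}$ floor on $\delta$. Once this calibration is in place the remaining steps---the identification $\PP_n = \{f : f \ge g_n\}$, the case split on the sign pattern of $w$, and the OR-function fallback---are direct adaptations of the OR-function analysis in \cref{lemma:output}(2).
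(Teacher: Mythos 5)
Your proof is correct and follows essentially the same route as the paper's: contrapositive assumption, the OR-style edit (boost $\epsilon m$ non-positive second-layer weights plus their incoming first-layer edges to $+1$, with the all-positive-$w$ fallback to the OR-function), the Chernoff bound for $\|x\|_1>n/4$, and the union bound giving $\tfrac12\delta+e^{-n/16}$. You are actually slightly more explicit than the paper in naming the key structural fact $\PP_n=\{f:f\ge g_n\}$ and in exhibiting $f:=g\vee g_n$ as the witness in $\PP_n$ (which the paper leaves implicit); the constant-factor tension you flag in the last step ($\tfrac12\delta+e^{-n/16}\le\delta$ strictly needs $\delta\ge 2e^{-n/16}$) is present in the paper's proof as well.
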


\begin{proof}
Assume that the network is $(\epsilon,\delta)$-far from computing a function from $\PP_n$ and we have $\Pr\left[w^T \relu(Ax) < - \frac14\epsilon nm  \text{ and } g(x) = 1 \right]\le \frac12 \delta$. We show that in this case we can modify at most $\epsilon m$ entries in $w$ and at most $\epsilon nm$ entries in $A$ such that for all but a $\frac12 \delta$-fraction of the inputs the value $w^T \relu(Ax)$ is increased by at least $\frac14 \epsilon nm$. In order to do so, we proceed in similar way as in \cref{lemma:output}. If there are at most $\epsilon m$ non-positive entries in $w$ then we replace all of them with a 1. Furthermore, we replace the value of all edges connecting a single node to the input nodes by $1$. These are $n \le \epsilon n m$ changes for our range of $\epsilon$. We observe that the resulting network computes the OR-function, which is contained in any monotone property with $g_n(\bzero) = 0$. Thus, there must by more than $\epsilon m$ non-positive entries in $w$. We take $\epsilon m$ of them and change them into 1s. Then we replace all edges connecting them to the input nodes with 1s.

Similarly as in the proof of \cref{lemma:output}, we can apply Chernoff bounds and obtain that with probability at least $1-e^{-n/16}$ the number of ones in the input vector $x$ is at least $\frac14n$.  In this case, the sum of inputs to the output node is increased by at least $\frac14\epsilon nm$. This implies that for the modified network we have
\begin{align*}
    \Pr\left[w^T \relu(Ax) < 0 \text{ and } g(x) = 1 \right] &\le \delta \enspace.
\end{align*}
Thus, the network was not $(\epsilon,\delta)$-far from computing a function from \PP, which is a contradiction.
\end{proof}

With \cref{lemma:monotoneproperties1} at hand, we can prove that our tester works.

\begin{theorem}
\label{thm:monotone-properties-1-generatible}
Let $\frac1m < \epsilon < 1$ and $e^{-n/16} \le \delta \le 1$.
Let $\PP = \bigcup_n \PP_n$ be a 1-generatible monotone property with generator $G = \bigcup_n \{g_n\}$.
Then algorithm \textsc{MonotonePropertyTester} with query access to a ReLU network $(A,w)$ with $n$ input nodes and $m$ hidden layer nodes
\begin{itemize}
\item accepts with probability at least $1-\lambda$, if $(A,w)$ computes a function in \PP, and
\item rejects with probability at least $1-\lambda$, if $(A,w)$ is $(\epsilon,\delta)$-far from computing a function in \PP.
\end{itemize}
The algorithm has a query complexity of $O\left(\frac{\ln(\frac{1}{\delta\epsilon\lambda})}{\delta \epsilon^4}\right)$.
\end{theorem}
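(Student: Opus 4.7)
The plan is to verify the two guarantees of the tester separately, combining the structural Lemma \ref{lemma:monotoneproperties1} with the concentration bound of Lemma \ref{lemma:sampling} developed for the $0$-function. First, the unconditional rejection when $g_n(\bzero)=1$ is correct under both guarantees: every ReLU network outputs $0$ on $x=\bzero$, while every $f\in\PP_n$ satisfies $f\ge g_n$ pointwise and hence $f(\bzero)=1$, so no ReLU network can compute any function in $\PP_n$. This makes the completeness obligation vacuous, and rejection is trivially the desired behaviour for $(\epsilon,\delta)$-far networks. From here on assume $g_n(\bzero)=0$, which is the regime covered by Lemma \ref{lemma:monotoneproperties1}.

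For completeness, suppose $(A,w)$ computes some $f\in\PP_n$. The monotone-closure characterisation forces $f\ge g_n$ pointwise, so every $x$ with $g_n(x)=1$ has $f(x)=1$, equivalently $w^T\relu(Ax)>0$. I invoke Lemma \ref{lemma:sampling} with per-input failure probability $\lambda/r$ (this is why the algorithm sets $t$ and $s$ logarithmic in $r$), then union-bound over the $r$ elements of the sampled set $X$: with probability at least $1-\lambda$ the estimate satisfies $|\tfrac{nm}{st}w^T\relu(TASx)-w^T\relu(Ax)|\le\tfrac{1}{16}\epsilon nm$ for all $x\in X$ simultaneously. Under this event, every $x\in X$ with $g_n(x)=1$ gives $\tfrac{nm}{st}w^T\relu(TASx)+\tfrac18\epsilon nm>-\tfrac{1}{16}\epsilon nm+\tfrac18\epsilon nm>0$, so the rejection condition never triggers.

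For soundness, suppose $(A,w)$ is $(\epsilon,\delta)$-far from $\PP$. Lemma \ref{lemma:monotoneproperties1} tells us that the bad set $B:=\{x: w^T\relu(Ax)<-\tfrac14\epsilon nm\text{ and }g_n(x)=1\}$ has density strictly greater than $\tfrac12\delta$. With $r=\tfrac{2\ln(2/\lambda)}{\delta}$ independent uniform samples, the miss probability is $\Pr[X\cap B=\emptyset]\le(1-\tfrac12\delta)^r\le e^{-r\delta/2}\le\tfrac12\lambda$. Conditioned on $X\cap B\ne\emptyset$, pick any $x^*\in X\cap B$; a second application of Lemma \ref{lemma:sampling}, again union-bounded over $X$, ensures that outside an event of probability $\tfrac12\lambda$ we have $|\tfrac{nm}{st}w^T\relu(TASx)-w^T\relu(Ax)|\le\tfrac{1}{16}\epsilon nm$ for every $x\in X$. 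In particular $\tfrac{nm}{st}w^T\relu(TASx^*)+\tfrac18\epsilon nm<-\tfrac14\epsilon nm+\tfrac{1}{16}\epsilon nm+\tfrac18\epsilon nm<0$, and since $g_n(x^*)=1$ the tester rejects; a final union bound yields a total failure probability of at most $\lambda$.

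The query complexity follows by noting that both $TAS$ and $w^T T$ are queried once and reused across all $x\in X$, giving $O(st+t)=O(st)$ queries in total; substituting the algorithm's choices of $s,t,r$ yields the stated bound. The only real bookkeeping obstacle is that completeness requires a two-sided concentration bound applied uniformly to the random set $X$ using a single pair $(S,T)$, and soundness requires the same uniform bound to hold at the (unknown) element $x^*\in X\cap B$ produced by Lemma \ref{lemma:monotoneproperties1}; calibrating the failure-probability budget in Lemma \ref{lemma:sampling} so that both union bounds close simultaneously is the principal technical step.
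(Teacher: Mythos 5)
Your proof follows the paper's approach quite closely: dispose of the $g_n(\bzero)=1$ case by noting that no ReLU network can compute any $f\ge g_n$, reduce completeness to the pointwise characterization of the monotone closure plus \cref{lemma:sampling} union-bounded over $X$, and obtain soundness by combining \cref{lemma:monotoneproperties1} with \cref{lemma:sampling} applied to a witness $x^*\in X\cap B$. The decomposition of the failure budget, the role of the bias $\tfrac18\epsilon nm$, and the use of $r$ samples to hit a $\tfrac12\delta$-dense bad set are exactly the paper's argument.

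A few bookkeeping remarks. You invoke \cref{lemma:sampling} with tolerance $\tfrac{1}{16}\epsilon nm$, but the algorithm sets $t=\tfrac{512\ln(4r/\lambda)}{\epsilon^2}$ (not $2048/\epsilon^2$-scaled as \cref{lemma:sampling} literally requires), so the achievable tolerance with per-input failure $\lambda/r$ is only $\tfrac18\epsilon nm$; this is what the paper uses, and the same arithmetic still closes (completeness: $w^T\relu(Ax)>0$ gives a strict inequality; soundness: $-\tfrac14+\tfrac18=-\tfrac18$ is still strictly below the bias). Second, union-bounding \cref{lemma:sampling} over all $r$ elements of $X$ with per-element failure $\lambda/r$ gives total failure $\lambda$, not $\tfrac12\lambda$; the paper instead applies the concentration only to the single witness $x^*$ to get $\lambda/r\le\tfrac12\lambda$, keeping the soundness failure to $\lambda$ overall. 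Your version gives $\tfrac32\lambda$, a constant-factor slack. Finally, your observation that the query complexity is $O(st)$ rather than the paper's stated $O(rst)$ is correct — the $s\times t$ block of $A$ and the $t$ entries of $w$ are queried once and reused for every $x\in X$, which are generated internally and cost no queries — so you actually obtain a sharper bound than the paper claims; note however that $O(st)$ is $\Theta(\ln^2(\cdot)/\epsilon^4)$ and does not literally reduce to the $\tfrac{1}{\delta}$-factored expression in the theorem statement.
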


\begin{proof}
We may assume that $g_n(\bzero) = 0$ as otherwise there is no ReLU network that computes a property from \PP and the tester trivially works.

Thus, from now on we assume $g_n(\bzero) = 0$. We first prove the first statement. Let $(A,w)$ be a ReLU network that computes a function $f \in \PP_n$. Let us fix an arbitrary $x\in\{0,1\}^n$. By \cref{lemma:sampling}, we get for our choice of $s$ and $t$ that with probability at least $1-\frac{\lambda}{r}$ we have that
\begin{align*}
    \left|
        \frac{mn}{st} \cdot w^T \cdot \relu(TASx) - w^T \cdot \relu(Ax)
    \right|
    \le \tfrac18 \epsilon nm \enspace.
\end{align*}
We observe that if $g_n(x)=0$ then the tester does not reject. If $g_n(x)=1$ then we know that with probability at least $1-\frac{\lambda}{r}$ we have
\begin{align*}
    \frac{mn}{st} \cdot w^T \cdot \relu(TASx) \ge -\tfrac18 \epsilon nm
\end{align*}
and so the tester accepts. Taking a union bound over all $r$ different choices for $x$ the first part follows.

Now consider the case that $(A,w)$ is $(\epsilon,\delta)$-far from computing functions in $\PP_n$.
We know from \cref{lemma:monotoneproperties1} that a random $x\in\{0,1\}^n$ satisfies with probability at least $\frac12 \delta$ that $w^T\relu(Ax) < -\frac14 \epsilon nm$ and $g(x)=1$.
If we take $r$ samples then the probability that we do not sample any $x$ with the above property is at most $(1-\frac{\delta}{2})^{\frac{2}{\delta} \cdot \ln(2/\lambda)} \le \frac12 \lambda$.
By \cref{lemma:sampling} we get for our choice of
$s$ and $t$ that with probability at least $1-\frac{\lambda}{r} \ge 1-\frac12 \lambda$ we have that
\begin{align*}
    \left|
        \frac{mn}{st} \cdot w^T \cdot \relu(TASx) - w^T \cdot \relu(Ax)
    \right|
        \le \tfrac18\epsilon nm\enspace.
\end{align*}
This implies that
\begin{align*}
    \frac{mn}{st} \cdot w^T \cdot \relu(TASx) < - \tfrac18 \epsilon nm
\end{align*}
and so algorithm \textsc{MonotonePropertyTester} rejects.
The latter happens with probability at least $1-\lambda$.

Finally, observe that the query complexity is trivially $O(rst)$.
\end{proof}


\subsection{The general case: testing arbitrary $f(n)$-generatible properties}
\label{sec:monotone-properties-general}

Next we generalize our tester to arbitrary $f(n)$-generatible properties.


\begin{algorithm}[h]
\SetAlgoLined\DontPrintSemicolon
\caption{\textsc{FullMonotonePropertyTester}$(\epsilon,\delta,\lambda, n,m, G_n)$}
\label{alg:FullMonotonePropertyTester}

\lIf{$g_n(\bzero)=1$}{\textbf{reject}.}

Let $r = \frac{2 \ln (2 f(n)/\lambda)}{\delta}$, $t = \frac{512 \ln(4r/\lambda)}{\epsilon^2}$, and $s= \frac{512 \ln(4tr/\lambda)}{\epsilon^2}$.

Sample $s$ nodes from the input layer and $t$ nodes from the hidden layer uniformly at random without replacement.

Let $S \in \NN_0^{n\times n}$ and $T \in \NN_0^{m \times m}$ be the corresponding sampling matrices.

Sample a set $X\subseteq \{0,1\}^n$ of size $r$ uniformly at random.

\lIf{\emph{$\forall g_n \in G_n$} $\exists x \in X$ with $\frac{nm}{st} \cdot w^T \cdot \relu(TASx) + \frac18 \epsilon nm < 0$ and $g_n(x) = 1$}{\textbf{reject};}

\lElse{\textbf{accept}.}
\end{algorithm}


\MonotoneFull*

\begin{proof}
Our property testing algorithm is \textsc{FullMonotonePropertyTester} presented above.

We view $\PP_n$ as the union of 1-generatible properties generated by the elements of $G_n$. We test for all $g \in G_n$ whether the input ReLU network computes a function in the property generated by $g$. If one of the testers accepts, we accept. Otherwise, we reject. We observe that all testers work in the same way and so we can use the same sample for all testers and use a union bound to argue that all of them work properly. Thus, the probability that all of them work properly is at least $1-f(n) \cdot \lambda$.

If the network computes a function $f$ from \PP then there must be at least one 1-generatible property that contains that function and the corresponding tester accepts. If the computed function is $(\epsilon,\delta)$-far from $\PP_n$ then it is $(\epsilon,\delta)$-far from every sub-property of \PP. Thus, if all testers work properly, none of them accepts.
\end{proof}


\subsection{Relation to monotone DNF formulas}

Let us finally briefly remark that monotone properties are related to the standard notion of monotone DNF formulas.

A \emph{disjunctive normal form} formula, or DNF, is a disjunction of Boolean literals. A DNF formula is \emph{monotone} if it contains no negations.
We can relate monotone properties to monotone DNF formulas in the following way. We view a function $f: \{0,1\}^n \rightarrow \{0,1\}$ as an assignment of truth values to a set of $2^n$ variables. That is, for input $x$ the function $f$ assigns the value $f(x)$ to the $i$th variable where $x$ is the binary encoding of $i$.

With this assignment a 1-generatible monotone property $\PP = \bigcup_{n \in \NN} \PP_n$ is simply a property, where each $\PP_n$ can be described by a monomial $M_n$ that does not contain negations (a 1-term monotone DNF). Namely, $g \in \PP_n$ if $M_n(y) = 1$ where $y$ is the vector of assignments to $2^n$ variables that is described by $f$. More generally, a monotone property $\PP = \bigcup_{n \in \NN} \PP_n$ is a property where each $\PP_n$ can be described in the same way by a disjunctive normal form (DNF) without negations. And a $f(n)$-generatible monotone property $\PP = \bigcup_{n \in \NN} \PP_n$ is a property where each $\PP_n$ can be described in the same way by a $f(n)$-term monotone DNF.


\section{ReLU networks with \emph{multiple outputs} (and \emph{one hidden layer})}
\label{sec:multiple-outputs}

Our main analysis so far considered the model of ReLU networks with one hidden layer that returns a single output bit. However, it is natural to consider also an extension of this setting to allow multiple outputs. In this section we consider such extension and demonstrate that in some natural cases, this model can be reduced to the model with a single output. We begin with a formal definition of the model considered and then, in Sections~\ref{sec:multiple-outputs-reduction}--\ref{sec:multiple-outputs-applications}, present our main results.

\subsection{Model of ReLU networks with multiple outputs}
\label{sec:multiple-outputs-model}

We begin by generalizing our definitions from \cref{sec:property-testing-model}.
We first extend the definition of $\sgn$ to vectors $v = (v_1,\dots,v_r)$ by applying $\sgn$ element-wise and thus defining $\sgn(v) := (\sgn(v_1),\dots,\sgn(v_r))$.
Next, we extend \cref{def:ReLU} in a natural way to allow multiple outputs (output bits).

\begin{definition}[\textbf{ReLU network with multiple outputs}]
\label{def:ReLU-mo}
A ReLU network with $n$ input nodes, $m$ hidden layer nodes, and $r$ output nodes is a pair $(A,W)$, where $A\in[-1,1]^{m\times n}$ and $W \in [-1,1]^{m \times r}$. The function $f:\{0,1\}^n \rightarrow \{0,1\}^r$ computed by a ReLU network $(A,W)$ is defined as
\begin{align*}
    f(x) &:= \sgn(\relu(W^T \cdot \relu(Ax))) \enspace.
\end{align*}
\end{definition}


\paragraph{Query access to the network.}
Similarly to the case of a single output node, we will allow a property testing algorithm to query weights, i.e., entries in $A$ or $W$ in constant time per query.

\medskip

Next, we extend the definition of properties to functions with multiple output~bits.

\begin{definition}[\textbf{Property of functions}]
\label{def:props-of-functions-mo}
A \textbf{property of functions} from $\{0,1\}^n$ to $\{0,1\}^r$ is a family $\PP = \bigcup_{n \in \NN} \PP_n$, where $\PP_n$ is a set of functions $f:\{0,1\}^n \rightarrow \{0,1\}^r$.
\end{definition}

Next, we extend \cref{def:ReLU-farness-from-property} to define farness and distances to functions and properties for the case of multiple output bits.

\begin{definition}[\textbf{ReLU network being far from computing a function}]
\label{def:ReLU-farness-from-function-mo}
Let $(A,W)$ be a ReLU network with $n$ input nodes, $m$ hidden layer nodes, and $r$ output nodes. $(A,W)$ is called \textbf{$(\epsilon,\delta)$-close to computing a function} $f:\{0,1\}^n \rightarrow \{0,1\}^r$, if one can change the matrix $A$ in at most $\epsilon nm$ places and the matrix $W$ in at most $\epsilon mr$ places to obtain a ReLU network that computes a function $g$ such that $\Pr[g(x) \ne f(x))]\le \delta$, where $x$ is chosen uniformly at random from $\{0,1\}^n$. If $(A,W)$ is not $(\epsilon,\delta)$-close to computing $f$ we say that it is \textbf{$(\epsilon,\delta)$-far from computing $f$}.
\end{definition}

\begin{definition}[\textbf{ReLU network being far from a property of functions}]
\label{def:ReLU-farness-from-property-mo}
Let $(A,W)$ be a ReLU network with $n$ input nodes, $m$ hidden layer nodes, and $r$ output nodes. $(A,W)$ is called \textbf{$(\epsilon,\delta)$-close to computing a function} $f:\{0,1\}^n \rightarrow \{0,1\}^r$ with property $\PP = \bigcup_{n \in \NN} \PP_n$, if one can change the matrix $A$ in at most $\epsilon nm$ places and $W$ in at most $\epsilon mr$ places to obtain a ReLU network that computes a function $g$ such that $\Pr[g(x) \ne f(x)] \le \delta$ for some function $f \in \PP_n$, where $x$ is chosen uniformly at random from $\{0,1\}^n$. If $(A,W)$ is not $(\epsilon,\delta)$-close to computing $f$ with property \PP then we say that it is \textbf{$(\epsilon,\delta)$-far from computing $f$ with property \PP}.
\end{definition}


\subsection{Reducing multiple outputs networks to single output networks}
\label{sec:multiple-outputs-reduction}

In this section we show that our testers for computing the constant $0$-function and the OR-function (\cref{thm:ReLU-testing-0-OR-function-2-sided})
can be extended to the setting with multiple outputs to obtain a property tester for computing any near constant function. Before we proceed, let us introduce the notion of near constant functions with multiple outputs.

\begin{definition}[\textbf{Near constant functions}]
\label{def:constant-functions-mo}
We call a function $f: \{0,1\}^n \rightarrow \{0,1\}^r$ \textbf{near constant} if there is $\bb \in \{0,1\}^r$
such that $f(x) = \bb$ for every $x \in \{0,1\}^n \setminus\{\bzero\}$ and $f(\bzero)=0.$
\end{definition}

For $\ob_i \in \{0,1\}$ we will sometimes refer to the near constant function $\ob_i: \{0,1\}^n \rightarrow \{0,1\}$ as the function
$f(x)= \ob_i$ for all $x \in \{0,1\}^n \setminus \{\bzero\}$ and $f(\bzero) = 0$.

\begin{remark}\rm
\label{remark:constant-functions-and-ReLU}
Notice that our definition of ReLU networks (\cref{def:ReLU-farness-from-property,def:ReLU-mo}) implies that on input $\bzero$ the output is \emph{always} $0$. Hence, the only \emph{truly constant function} that can be computed by a ReLU network is the constant $\bzero$ function. For all other vectors \bb we will consider a function that is constant on all inputs but the $\bzero$-vector. We observe that if a network computes a near constant function $f(x) = \bb$ for $x \ne \bzero$ and $\bb=(\ob_1,\dots, \ob_r)$ then the restriction of the network to any output node $i$ with $\ob_i=0$ is the constant $0$-function and to any output node $i$ with $\ob_i=1$ is the OR-function (see a formal definition of restricted network below).
\end{remark}

For any ReLU network $(A,W)$ with $r$ output nodes, define a \emph{ReLU network $(A,W_i)$ restricted to the output node $i$} to be the sub-network of $(A,W)$ restricted to the output node $i$ as well as the input and hidden layer (here $W_i$ is the $m$-vector corresponding to the $i$-th column in $W$). That is, $(A,W_i)$ is the ReLU network which is a sub-network of $(A,W)$ computing $\sgn(\relu(W_i^T \cdot \relu(Ax)))$.

The following theorem, central for our analysis, demonstrates that if a ReLU network with $r$ outputs is $(\epsilon, \delta)$-far from computing near constant function \bb, then for at least an $\frac{\epsilon}{2}$-fraction of the output bits $i$, the ReLU network restricted to the output node $i$ is $(\epsilon',\delta')$-far from computing an appropriate near constant function, where $\delta' \sim \delta/r$ and $\epsilon' = \epsilon^2/1025$.

\TestingConstantFunctionReduction*

\junk{
\begin{theorem}
\label{thm:reduction-constant-ReLU-mo}
Let \NE be a ReLU network $(A,W)$ with $n$ input nodes, $m$ hidden layer nodes, and $r$ output nodes.
Let $e^{-n/16} \le \delta < 1$ and $16 \cdot \sqrt{\frac{\ln(2m)}{m}} \le \epsilon < \frac12$. Let $\bb = (\ob_1, \dots, \ob_r) \in \{0,1\}^r$. If \NE is $(\epsilon, \delta)$-far from computing a near constant function \bb then there are more than $\frac12 \epsilon r$ output nodes such that for any such output node $i$, the ReLU network \NE restricted to the output node $i$ is $(\frac{\epsilon^2}{1025}, \frac{\delta - e^{-n/16}}{r})$-far from computing near constant function $\ob_i$.
\end{theorem}
}

\begin{proof}
Fix $\bb \in \{0,1\}^r$ and let $\delta' := \frac{\delta-e^{-n/16}}{r} \ge 0$; hence $\delta = r \delta' + e^{-n/16}$.
Let $B_0$ be the set of output nodes $i$ with $\ob_i = 0$ and $B_1$ be the set of remaining output nodes. Let $Q$ be the set of output nodes $i$ such that the restricted network $(A,W_i)$ is $(\frac{\epsilon^2}{1025},\delta')$-far from computing the near constant function $\ob_i$. Let $Q_0 := Q \cap B_0$, $Q_1 := Q \cap B_1$, $L_0 := B_0 \setminus Q_0$, and $L_1 := B_1 \setminus Q_1$.

The proof is by contradiction. Assume that there are at most $\frac12 \epsilon r$ output nodes $i$ such that the restricted network $(A,W_i)$ is $(\frac{\epsilon^2}{1025},\delta')$-far from computing the near constant function $\ob_i$, that is, that $|Q| \le \frac12 \epsilon r$. We will show that then we can modify weights of at most $\epsilon n m$ first layer edges and of at most $\epsilon r m$ second layer edges to obtain a ReLU network that computes \bb for at least a $(1-\delta)$-fraction of the inputs. This would contradict the assumption that $(A,W)$ is $(\epsilon,\delta)$-far from computing the near constant function \bb, and thus, would complete the proof by contradiction.

In the following, we describe our modifications. We assume that $|Q| \le \frac12 \epsilon r$.
\begin{enumerate}[(I)]
\item For each $i \in Q$, we set to $\ob_i$ the weights of all second layer edges incident to the output node $i$. This modifies $|Q| \cdot m \le \frac12 \epsilon r m$ weights of second layer edges and ensures that each output node in $Q_0$ computes the constant $0$-function. For nodes in $Q_1$ we also need to make sure that on input $x \ne \bzero$ at least one hidden layer node outputs a $1$. At the very end of our construction, we will select a certain hidden layer node $y^*$ and set to $1$ the  weights of all its edges incident to the input nodes to take care of this.

Overall, this phase requires modification of weights of at most $\frac12 \epsilon mr$ second layer edges.

\item Consider the remaining output nodes - these are the output nodes in $L_0 \cup L_1$. For all such nodes $i$, the restricted network $(A,W_i)$ is $(\frac{\epsilon^2}{1025},\delta')$-close to computing near constant function $\ob_i$ (and before the changes applied to $y^*$).
\begin{itemize}
\item For any output node $i \in L_0$ incident to at most $\frac14 \epsilon m$ edges with positive edge weights in the second layer we set all these edge weights to $0$. Since no other edges incident to output node $i$ have positive weights, the output node computes the constant $0$-function on all inputs.

\item Similarly, for any output node $i \in L_1$ incident to at most $\frac14 \epsilon m$ edges with non-positive edge weights in the second layer we set all these edge weights to $1$. Since all other edges incident to output node $i$ have positive weights, this implies that after the modification all edges incident to $i$ have positive weights. Therefore, the modification of  $y^*$ at the end of the construction will also make sure that the output value will be positive for non-zero inputs.
\end{itemize}
Overall, this phase requires modification of weights of at most $\frac14 \epsilon mr$ second layer edges.

\item It remains to deal with the output nodes $i$ in $L_0 \cup L_1$ that are either incident to more than $\frac14 \epsilon m$ edges with positive weight (if $i \in L_0$) or incident to more than $\frac14 \epsilon m$ edges with non-positive weights (if $i \in L_1$). Let $T$ be the set of such output nodes and let $T_0 := T \cap L_0$ and $T_1 := T \cap L_1$. We will show that there exist two disjoint sets $R_0$ and $R_1$ such that $R_0$ is a set of $\frac18 \epsilon m$ second layer nodes such that every node from $T_0$ has at least $\frac1{64} \epsilon^2 m$ positive edges incident to nodes from $R_0$, and $R_1$ is a set of $\frac18 \epsilon m$ second layer nodes such that every node from $T_1$ has at least $\frac1{64}\epsilon^2 m$ non-positive edges incident to nodes from $R_1$.

We use the probabilistic method to show that $R_0$ and $R_1$ exist. Select uniformly at random two disjoint sets $R_0'$ and $R_1'$ of $\frac18 \epsilon m$ hidden layer nodes each. Note that one may first sample $R_0'$ uniformly at random and then $R_1'$ uniformly at random from the remaining nodes or the other way around. Thus, we may think of each of the sets $R_0'$ and $R_1'$ to be sampled uniformly at random without replacement from the hidden layer nodes.

Fix an output node $i \in T_0$ and let $X_j$ be the indicator random variable for the event that the end-point of the $j$th edge with positive weight incident to $i$ is selected in $R_0'$. Notice that $\Ex[X_j] = \frac18 \epsilon$. Since any output node in $T_0$ is incident to more than $\frac14 \epsilon m$ positive edges, we have $\Ex\left[\sum_j X_j \right] \ge \frac1{32}\epsilon^2 m$. Thus, by Chernoff bounds (we remark that Chernoff bounds also apply to the setting of sampling without replacement; see, for example, \cite{BM15}) we get
\begin{align*}
    \Pr\left[\sum_j X_j \le \frac1{64} \epsilon^2 m \right] & \le
    \Pr\left[\sum_j X_j \le \frac{1}{2} \Ex\left[\sum_j X_j\right] \right]
    \le e^{-\Ex[\sum_j{X_j}]/8}
    \le \frac{1}{2m},
\end{align*}
where the last inequality holds for $1/2\ge \epsilon \ge 16 \cdot \sqrt{\frac{\ln(2m)}{m}}$. Identical arguments can be used to show that (assuming $1/2\ge \epsilon \ge 16 \cdot \sqrt{\frac{\ln(2m)}{m}}$) for any fixed output node $i \in T_1$, the probability that $i$ is incident to at most $\frac1{64}\epsilon^2 m$ edges with non-positive weight connecting to nodes from $R$ is at most $\frac{1}{2m}$. By taking a union bound over all nodes from $T$, this implies that with probability at least $\frac12$ the random sets $R_0'$ and $R_1'$ satisfy the requirements of $R_0$ and $R_1$. In particular, there exist two disjoint sets $R_0$ and $R_1$ such that every output node $i \in T_0$ has at least $\frac1{64}\epsilon^2 m$ edges with positive weight connecting $i$ to nodes from $R_0$ and every output node $i \in T_1$ has at least $\frac1{64}\epsilon^2 m$ edges with non-positive weight connecting $i$ to nodes from~$R_1$.

Next, modify the edge weights of the ReLU network as follows:
\begin{itemize}\itemsep0em
\item set to $+1$ the weights of all $\frac14 \epsilon mn$ first layer edges incident to $R_0 \cup R_1$,
\item set to $-1$ the weights of all second layer edges between $R_0$ and $B_0$, and
\item set to $+1$ the weights of all second layer edges between $R_1$ and $B_1$.
\end{itemize}
Overall, this phase requires modification of weights of at most $|R_0| \cdot |B_0| + |R_1| \cdot |B_1| \le \frac14 \epsilon m r$ second layer edges and of $\frac14 \epsilon mn$ first layer edges. We also observe that our modification does not increase the value at output nodes from $Q_0$ and does not decrease the value at output nodes from $Q_1$.

Notice that the value of every node in $R_0 \cup R_1$ is equal to $\|x\|_1$, the number of 1s in the input. Therefore, since there were originally at least $\frac1{64}\epsilon^2m$ edges with positive weight between node $i$ and $R_0$, the modifications of edge weights ensure that every output bit corresponding to output node $i \in T_0$ is reduced by at least $\|x\|_1 \cdot \frac1{64}\epsilon^2m$. Similarly, since there were originally at least $\frac1{64}\epsilon^2m$ edges with non-positive weights between node $i$ and $R_1$, the modifications of edge weights ensure that every output bit corresponding to output node $i \in T_1$ is increased by at least $\|x\|_1 \cdot \frac1{64} \epsilon^2m$. Notice that (see, e.g., Claim \ref{claim:NumberOfOnes}) the probability that a random $x \in \{0,1\}^n$ satisfies $\|x\|_1 \ge \frac14 n$ is at most $e^{-n/16}$. Therefore our construction above ensures that for any output node $i \in T_0$, for at least a $(1-e^{-n/16})$-fraction of the inputs we \emph{reduce the value at $i$} by at least $\frac1{256}\epsilon^2mn$, and for any output node $i \in T_1$, for at least a $(1-e^{-n/16})$-fraction of the inputs we \emph{increase the value at $i$} by at least $ \frac1{256}\epsilon^2mn$.

Observe that a modification of an edge weight in the first layer changes the value of a single output node by at most $2$; one edge weight change in the second layer may change it by at most $2n$. Therefore, since for every output node $i \in T_0$, the restricted network $(A,W_i)$ is $(\frac{\epsilon^2}{1025},\delta')$-close to computing constant $0$-function, we obtain that before our modifications, for any output node $i \in T_0$, for at least a $(1-\delta')$-fraction of the inputs the output value of $i$ was at most $2 \cdot (\frac{1}{1025} \epsilon^2 m n) + 2n \cdot (\frac{1}{1025} \epsilon^2 m) < \frac1{256} \epsilon^2 n m$ as otherwise, the output node could not be $(\frac{\epsilon^2}{1025},\delta')$-close to computing the constant $0$-function. Similarly, identical arguments imply that for any output node $i \in T_1$, for at least a $(1-\delta')$-fraction of the inputs the output value of $i$ was greater than $- \frac1{256} \epsilon^2 n m$.

By the arguments above, for every output node $i \in T_0$, on one hand, before our modifications, for at least a $(1-\delta')$-fraction of the inputs the output value of $i$ was strictly less than $\frac1{256} \epsilon^2 n m$, and on the other hand, for at least a $(1-e^{-n/16})$-fraction of the inputs we \emph{reduce} the value at $i$ by at least $\frac1{256}\epsilon^2mn$. This implies that for every $i \in T_0$, after our modifications, for at least a $(1-\delta'-e^{-n/16})$-fraction of the inputs the output value of $i$ is negative, and hence the ReLU network $(A,W_i)$ restricted to the output node $i$ returns $0$. Identical arguments imply that for every output node $i \in T_1$, after our modifications, for at least a $(1-\delta'-e^{-n/16})$-fraction of the inputs the output value of $i$ is positive, and hence the ReLU network $(A,W_i)$ restricted to the output node $i$ returns~$1$.

\item We still need to argue how to choose the node $y^*$. If at the beginning of our construction we had $L_0 \cup L_1 \ne \emptyset$ any node from $R_0$ or $R_1$ will do and no further modification is necessary. Otherwise, We can take an arbitrary input node and set all of its $n$ first layer edges to $1$.
\end{enumerate}

Now let us summarize our discussion above. We have shown that if there are at most $\frac12 \epsilon r$ output nodes $i$ such that the restricted network $(A,W_i)$ is $(\frac{\epsilon^2}{1025},\delta')$-far from computing constant function $\ob_i$, then we can modify network $(A,W)$ by changing weights of at most $\epsilon n m$ first layer edges and at most $\epsilon r m$ second layer edges into a network that computes \bb on all but a $\delta$-fraction of the inputs.

This implies that $(A,W)$ is $(\epsilon, \delta)$-close to computing the near constant function \bb, which is a contradiction, completing the proof of \cref{thm:reduction-constant-ReLU-mo}.
\end{proof}


\subsection{Applications: \emph{multiple output nodes}}
\label{sec:multiple-outputs-applications}

It is now easy to combine \cref{thm:reduction-constant-ReLU-mo} with our analysis of ReLU networks with a single bit output for the constant $0$-function and the OR-function (\cref{thm:ReLU-testing-0-OR-function-2-sided}) to obtain the following theorem.

\TestingConstantFunctionShl*

\junk{
\begin{theorem}
\label{ReLU-testing-constant-function-2-sided-mo}
Let \NE be a ReLU network $(A,W)$ with $n$ input nodes, $m$ hidden layer nodes, and $r$ output nodes. Let $\bb \in \{0,1\}^r$, $(r+1) e^{-n/16} \le \delta < 1$, and $c \cdot \sqrt{\ln(2m)/m} < \epsilon < \frac12$ for a sufficiently large constant $c$. There is a property tester that queries $O(\frac{\ln^2(1/\epsilon \lambda)}{\epsilon^{13}})$ entries from $A$ and $W$ and
\begin{itemize}
\item accepts with probability at least $\frac23$, if \NE computes near constant function \bb, and
\item rejects with probability at least $\frac23$, if \NE is $(\epsilon,\delta)$-far from computing near constant function \bb.
\end{itemize}
\end{theorem}
}

\begin{proof}
Let $\bb = (\ob_1, \dots, \ob_r) \in \{0,1\}^r$. By \cref{thm:reduction-constant-ReLU-mo}, if $(A,W)$ is $(\epsilon,\delta)$-far from computing near constant function \bb then there are more than $\frac12 \epsilon r$ output nodes such that for any such output node $i$, the ReLU network $(A,W_i)$ restricted to the output node $i$ is $(\frac{\epsilon^2}{1025}, \frac{\delta - e^{-n/16}}{r})$-far from computing near constant function $\ob_i$.

Therefore \cref{ReLU-testing-constant-function-2-sided-mo} follows by sampling $O(1/\epsilon)$ output nodes uniformly at random and then, for each sampled output node $i$, testing whether the ReLU network $(A,W_i)$ restricted to the output node $i$ computes near constant function $\ob_i$ using the algorithm from \cref{thm:ReLU-testing-0-OR-function-2-sided} (either \textsc{AllZeroTester} or \textsc{ORTester}, depending on $\ob_i$). By outputting the majority outcome of the calls to \cref{thm:ReLU-testing-0-OR-function-2-sided}, we obtain the promised property testing algorithm.
\end{proof}

\junk{

Similarly, we can incorporate \cref{thm:reduction-constant-ReLU-mo} to extend \cref{theorem:0and1} to ReLU networks with multiple output bits.

\begin{theorem}
\label{theorem:close-to-constant-functions}
Let $(A,W)$ be a ReLU network with $n$ input nodes, $m$ hidden layer nodes, and $r$ output nodes. Let $0 < \delta \le \frac12$ and let $\epsilon \ge c \cdot \max\{\sqrt{\frac{\ln(2m)}{m}}, \sqrt{\frac{\log(2/\delta)}{n}}\}$ for a sufficiently large constant $c$. 
Then $(A,w)$ is $(\epsilon,\delta)$-close to computing some constant function. That is, for every such ReLU network $(A,W)$ there is $\bb \in \{0,1\}^r$ such that $(A,w)$ is $(\epsilon,\delta)$-close to computing function $f:\{0,1\}^n \rightarrow \{0,1\}^r$ with $f(x) = \bb$ for every $x \in \{0,1\}^n$.
\end{theorem}

\begin{proof}
\Artur{Why is this missing?}
\end{proof}
}


\section{ReLU networks with \emph{multiple hidden layers}}
\label{sec:multiple-layers}

Our analysis so far has studied the model of ReLU networks with \emph{one hidden layer}. In this section we consider a natural generalization of this setting to allow multiple hidden layers.


\subsection{ReLU networks with multiple hidden layers and multiple outputs}
\label{sec:multiple-layers-and-outputs-model}

We begin with a formal definition of the model of ReLU network with multiple layers, generalizing our definitions from \cref{sec:property-testing-model} and \cref{sec:multiple-outputs-model}.

We start with the extension of \cref{def:ReLU} and \cref{def:ReLU-mo} in a natural way.

\begin{definition}[\textbf{ReLU network with multiple hidden layers and multiple outputs}]
\label{def:ReLU-mol}
A ReLU network with $n$ input nodes, $\ell \ge 1$ hidden layers with $m_1, \dots, m_{\ell}$ hidden layer nodes each, and $r$ output nodes is a tuple $(W_0, W_1, \dots, W_{\ell})$, where $W_i \in [-1,1]^{m_{i+1} \times m_i}$ for every $0 \le i \le \ell$, with $m_0 = n$ and $m_{\ell+1} = r$. The function $f:\{0,1\}^n \rightarrow \{0,1\}^r$ computed by a ReLU network $(W_0, \dots, W_{\ell})$ is defined recursively as $f(x) := \sgn(\relu(f_{\ell+1}(x)))$,
where\footnote{Observe that unfolding the recurrence in \cref{def:ReLU-mo} gives (and notice that $\relu(x) = x$ for $x \in \{0,1\}^n$),
\begin{align*}
    f_i(x) &:=
        \begin{cases}
            x & \text{ if } i = 0, \\
            W_0 \cdot \relu(f_0(x)) =
                W_0 \cdot \relu(x) &
                \text{ if } i = 1, \\
            W_1 \cdot \relu(f_1(x)) =
                W_1 \cdot \relu(W_0 \cdot \relu(x)) &
                \text{ if } i = 2, \\
            W_2 \cdot \relu(f_2(x)) =
                W_2 \cdot \relu(W_1 \cdot \relu(W_0 \cdot \relu(x))) &
                \text{ if } i = 3, \\
            W_3 \cdot \relu(f_3(x)) =
                W_3 \cdot \relu(W_2 \cdot \relu(W_1 \cdot \relu(W_0 \cdot \relu(x)))) &
                \text{ if } i = 4
        \end{cases}
\end{align*}
Therefore, a non-recursive version of the value $f(x)$ computed by the network (in a form resembling \cref{def:ReLU}) is
\begin{align*}
    f(x) &:= \sgn(\relu(W_{\ell} \cdot \ \dots \ \cdot
        \relu(W_2 \cdot \relu(W_1 \cdot \relu(W_0 \cdot \relu(x))))
    \dots )).
\end{align*}
}
\begin{align*}
    f_i(x) &:=
        \begin{cases}
            W_{i-1} \cdot \relu(f_{i-1}(x)) & \text{ if } 1 \le i \le \ell+1, \\
            x & \text{ if } i=0 \enspace.
        \end{cases}
\end{align*}
\end{definition}


\paragraph{An intuitive description.}
\cref{def:ReLU-mol} can be also seen as a recursive description of the following computational process: first compute $W_0 \cdot x$, then apply the ReLU activation function, then multiply $W_1$ by it, then again apply the ReLU activation function, then multiply $W_2$ by it, and so, so forth, until after multiplying $W_{\ell}$ by the result, we apply the ReLU activation function, and then, finally, the $\sgn$ function.


\paragraph{Another intuitive description.}
\Artur{I like the description below, but most likely it's not needed.}%
While \cref{def:ReLU-mol} precisely and formally describes the model of ReLU network with multiple hidden layers and multiple outputs, it is good to keep in mind also its more informal tree-based definition (see also Figure~\ref{fig:ReLU-network-mL}).

The ReLU network $(W_0, \dots, W_{\ell})$ corresponds to a layered weighted network with $\ell+2$ layers: layer zero containing $m_0$ \emph{input nodes}, $\ell$ \emph{hidden layers} $1, \dots, \ell$, with layer $1 \le i \le \ell$ containing $m_i$ nodes, and layer $(\ell+1)$ containing $m_{\ell+1}$ \emph{output nodes}. The network has weighted edges connecting pairs of consecutive layers. For every $0 \le k \le \ell$ and for every $1 \le i \le m_k$ and $1 \le j \le m_{k+1}$, there is an edge with weight $W_k[j,i]$ connecting the $i$-th node at layer $k$ with the $j$-th node at layer $k+1$. We will be assuming that all weights are real numbers in $[-1,1]$.

We use such ReLU layered network as a computational device computing a function $f:\{0,1\}^{m_0} \rightarrow \{0,1\}^{m_{\ell+1}}$ as follows. The input is any vector $x \in \{0,1\}^{m_0}$, so that the $i$-th input node obtains as its input value $x_i$. Then, any node in the network at a layer $0 \le k \le \ell+1$ has a value determined by the input $x$ and the weights of the network, which is computed bottom-up in the network as follows:
\begin{itemize}
\item for any node $i$ at layer $0$, the \emph{value} of that node $\val^0_i(x)$ is $x_i$;
\item for any node $j$ at layer $1 \le k \le \ell+1$, the \emph{value} of that node $\val^k_j(x)$ is equal to
\begin{align*}
    \val^k_j(x) &= \sum_{i=1}^{m_{k-1}} W_{k-1}[j,i] \cdot \relu(\val^{k-1}_i(x));
\end{align*}
\item for any node $j$ at layer $\ell+1$, the binary \emph{output} computed by that node is equal to
\begin{align*}
    \out_j(x) &:= \sgn(\relu(\val^{\ell+1}_j(x))).
\end{align*}
\end{itemize}


\begin{figure}[t]
\centerline{\includegraphics[width=.9\textwidth]{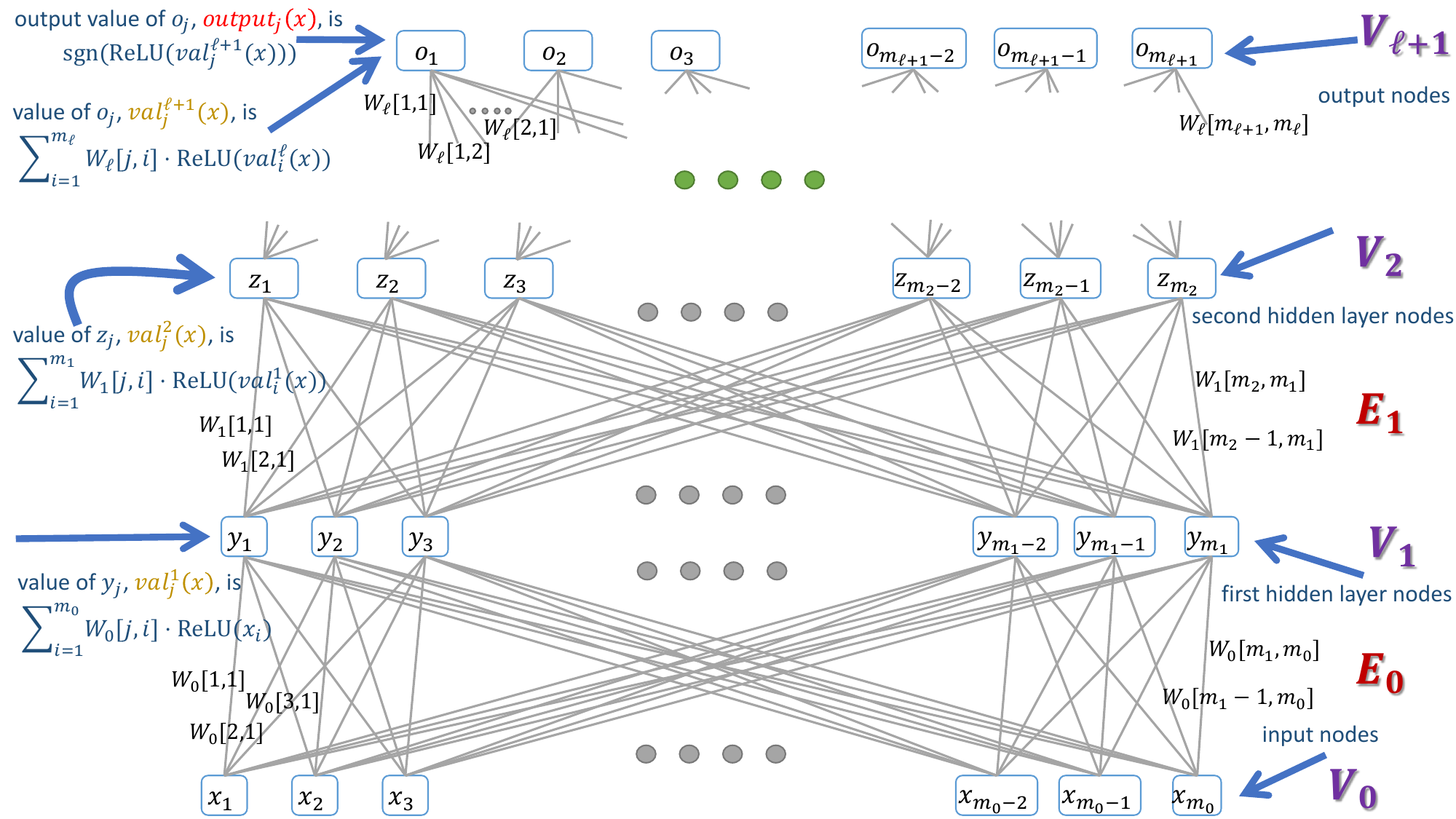}}
\caption{\small{}
A ReLU network $(W_0, \dots, W_{\ell})$ with $\ell+2$ layers: layer zero $V_0$ containing $m_0$ \emph{input nodes}, $\ell$ \emph{hidden layers} $1, \dots, \ell$, with layer $1 \le k \le \ell$ containing $m_k$ nodes $V_k$, and layer $(\ell+1)$ containing $m_{\ell+1}$ \emph{output nodes} $V_{\ell+1}$. The network has weighted edges connecting pairs of consecutive layers. For every $0 \le k \le \ell$ and for every $1 \le i \le m_k$ and $1 \le j \le m_{k+1}$, there is an edge with weight $W_k[j,i] \in [-1,1]$ connecting the $i$-th node at layer $k$ with the $j$-th node at layer $k+1$.
For a given input $x = (x_1, \dots, x_{m_0})^T \in \{0,1\}^{m_0}$, for any node $j$ at layer $0$, the \emph{value} of that node $\val^0_j(x) = x_j$ (which is also the $j$-th row of $f_0(x)$).
For any node $j$ at layer $1 \le k \le \ell+1$, the \emph{value} of that node is $\val^k_j(x) = \sum_{i=1}^{m_{k-1}} W_{k-1}[j,i] \cdot \relu(\val^{k-1}_i(x))$; notice that this is the $j$-th row of $f_k(x)$.
For any node $j$ at layer $\ell+1$ (i.e., for the $j$-th output node), the binary \emph{output} computed by that node equals $\out_j(x) = \sgn(\relu(\val^{\ell+1}_j(x)))$; notice that this is the $j$-th row of~$f(x)$.
}
\label{fig:ReLU-network-mL}
\end{figure}


It is easy to see that this definition matches \cref{def:ReLU-mol}. That is, let us consider a ReLU network with $m_0$ input nodes, $\ell \ge 1$ hidden layers with $m_1, \dots, m_{\ell}$ hidden layer nodes each, and $m_{\ell+1}$ output nodes. Then for the function $f(x) := \sgn(\relu(f_{\ell+1}(x)))$ as in \cref{def:ReLU-mol} computed by $(W_0, W_1, \dots, W_{\ell})$, where $W_i \in [-1,1]^{m_{i+1} \times m_i}$ for every $0 \le i \le \ell$, the $j$-th term of $f(x)$ (the $j$-th row of vector $f(x)$) is equal to $\out_j(x)$, the output of the $j$-th node at the final layer $\ell+1$. Furthermore, $\val_k(j)$, the value of node $j$ at layer $0 \le k \le \ell+1$, is equal to the $j$-th term of vector $f_k(x)$ (the $j$-th row of vector $f_k(x)$) defined by $(W_0, W_1, \dots, W_{\ell})$.


\paragraph{Query access to the network.}
Similarly to the case of a single hidden layer, we will allow a property testing algorithm to query weights, i.e., entries in $W_0, \dots, W_{\ell}$ in constant time per query.


\subsubsection{The notion of a ReLU network being far}
\label{sec:ReLU-farness-mL}

Next, we extend the definitions (\cref{def:ReLU-farness-from-property} and \cref{def:ReLU-farness-from-function-mo,def:ReLU-farness-from-property-mo}) of farness and distances to functions and properties for ReLU networks with multiple hidden layers and multiple outputs as follows.

\begin{definition}[\textbf{ReLU network being far from computing a function}]
\label{def:ReLU-farness-from-function-mol}
Let $(W_0, \dots, W_{\ell})$ be a ReLU network with $m_0 = n$ input nodes, $\ell \ge 1$ hidden layers with $m_1, \dots, m_{\ell}$ hidden layer nodes each, and $m_{\ell+1}$ output nodes. $(W_0, \dots, W_{\ell})$ is called \textbf{$(\epsilon,\delta)$-close to computing a function} $f:\{0,1\}^n \rightarrow \{0,1\}^{m_{\ell+1}}$, if one can change each matrix $W_i$, $0 \le i \le \ell$, in at most $\epsilon m_i m_{i+1}$ places to obtain a ReLU network that computes a function $g$ such that $\Pr[g(x) \ne f(x))] \le \delta$, where $x$ is chosen uniformly at random from $\{0,1\}^n$. If $(W_0, \dots, W_{\ell})$ is not $(\epsilon,\delta)$-close to computing $f$ we say that it is \textbf{$(\epsilon,\delta)$-far from computing $f$}.
\end{definition}

\begin{definition}[\textbf{ReLU network being far from a property of functions}]
\label{def:ReLU-farness-from-property-mol}
Let $(W_0, \dots, W_{\ell})$ be a ReLU network with $m_0 = n$ input nodes, $\ell \ge 1$ hidden layers with $m_1, \dots, m_{\ell}$ hidden layer nodes each, and $m_{\ell+1}$ output nodes. $(W_0, \dots, W_{\ell})$ is called \textbf{$(\epsilon,\delta)$-close to computing a function} $f:\{0,1\}^n \rightarrow \{0,1\}^{m_{\ell+1}}$ with property $\PP = \bigcup_{n \in \NN} \PP_n$, if one can change each matrix $W_i$, $0 \le i \le \ell$, in at most $\epsilon m_i m_{i+1}$ places to obtain a ReLU network that computes a function $g$ such that $\Pr[g(x) \ne f(x)] \le \delta$ for some function $f \in \PP_n$, where $x$ is chosen uniformly at random from $\{0,1\}^n$. If $(W_0, \dots, W_{\ell})$ is not $(\epsilon,\delta)$-close to computing $f$ with property \PP then we say that it is \textbf{$(\epsilon,\delta)$-far from computing $f$ with property \PP}.
\end{definition}


\subsection{Testing the constant 0-function of a ReLU network with multiple layers}
\label{sec:ReLU-testing-0-function-2-sided-mL}

We now present Algorithm \textsc{AllZeroTesterMHL} which extends Algorithm 
\textsc{AllZeroTester} for ReLU networks with a single hidden layer from \cref{sec:ReLU-testing-0-OR-function-2-sided} and which tests whether a given ReLU network with multiple hidden layers computes the constant $0$-function.

The tester follows the framework from 
\cref{sec:ReLU-testing-0-OR-function-2-sided}
and it samples nodes from each layer and computes the function generated by the corresponding sub-network of the original ReLU network.

As before, we use matrix notation and write $S_i$ to denote a sampling matrix that selects $s_i$ input/hidden layer nodes $\sset_i$ at layer $i$ of a ReLU network at random. That is, $S_i$ is an $m_i \times m_i$ diagonal matrix whose diagonal entries $(i,i)$ are equal to 1 iff node $j$ at the hidden layer $i$ (or the $j$-th input node if $i=0$) is in the sample set (and is 0 otherwise). With this notation we can describe our Algorithm \ref{alg:AllZeroTester-mL}  \textsc{AllZeroTesterMHL} below.


\begin{algorithm}[h]
\SetAlgoLined\DontPrintSemicolon
\caption{\textsc{AllZeroTesterMHL}$(\epsilon, \lambda, m_0, m_1, \dots, m_{\ell})$}
\label{alg:AllZeroTester-mL}

\KwIn{ReLU network $(W_0, \dots, W_{\ell})$ with $m_0$ input nodes, $\ell \ge 1$ hidden layers with $m_1, \dots, m_{\ell}$ hidden layer nodes each, and a single output node;
parameters $\epsilon$ and $\lambda$}

set $s_0 := c_0 \cdot \epsilon^{-2\ell} \cdot \ln (1/\lambda\epsilon)$ for an appropriate constant $c_0 = c_0(\ell) > 0$

for every $1 \le k \le \ell$, set $s_k := c_k \cdot \epsilon^{-4\ell} \cdot \ln^2(1/\lambda\epsilon)$ for an appropriate constant $c_k = c_k(\ell) > 0$

\For{$i=0$ \KwTo $\ell$}{
    sample $s_i$ nodes $\sset_i$ from $V_i$ uniformly at random without replacement

    let $S_i \in \NN_0^{m_i \times m_i}$ be the corresponding sampling matrix
}

let $h :\{0,1\}^{m_0} \rightarrow \{0,1\}$ be defined recursively as $h(x) := \sgn(\relu(h_{\ell+1}(x)))$, where
\begin{align*}
    h_i(x) &:=
        \begin{cases}
            W_{i-1} \cdot S_{i-1} \cdot \relu(h_{i-1}(x)) & \text{ if } 1 \le i \le \ell+1, \\
            x & \text{ if } i=0.
        \end{cases}
\end{align*}

\eIf{\emph{there is $x \in \{0,1\}^{m_0}$ with $\prod_{i=0}^{\ell} \frac{m_i}{s_i} \cdot h_{\ell+1}(x) - \frac{1}{16} \cdot (\epsilon/2)^{\ell} \cdot \prod_{i=0}^{\ell} m_i > 0$}}{\textbf{reject}}
{\textbf{accept}}
\end{algorithm}


Let us observe that this algebraic description of \textsc{AllZeroTesterMHL} can be also visualized using the following description: 
We take the ReLU network $(W_0, \dots, W_{\ell})$, and for every $0 \le i \le \ell$, sample $s_i$ nodes $\sset_i$ from $V_i$ uniformly at random without replacement. Then, $h_{\ell+1}$ is equal to (the modified value of) $\val^{\ell+1}_1(x)$. And so, \textsc{AllZeroTesterMHL} checks whether after randomly sampling sets $\sset_0, \dots, \sset_{\ell}$ of sufficiently large size from $V_0, \dots, V_{\ell}$, respectively, there is an input $x \in \{0,1\}^n$ for which the scaled value $\prod_{i=0}^{\ell} \frac{m_i}{s_i} \cdot \val^{\ell+1}_1(x)$ is greater than the bias $\frac{1}{16} \cdot (\epsilon/2)^{\ell}\cdot \prod_{i=0}^{\ell} m_i$, in which case we reject, or there is no such input, in which case we accept.


We present two theorems that extend \cref{thm:ReLU-testing-0-OR-function-2-sided}
to multiple hidden layers.

Since the main theorem is heavily parameterized, we begin with a simpler statement about tester \textsc{AllZeroTesterMHL} for ReLU networks with the same number of nodes at every layer.

\MultipleLayers*

\junk{
\begin{theorem}
\label{thm:ReLU-testing-0-function-2-sided-mL-n}
Let $0 < \lambda < \frac{1}{\ell+1}$, $\frac{1}{n} < \epsilon < \frac12$, and $\delta \ge e^{-n/16} + e^{- (\epsilon/2)^{\ell} n /(342 (\ell+1)^2)}$.
Let $(W_0, \dots, W_{\ell})$ be a ReLU network with $n$ input nodes, $\ell \ge 1$ hidden layers with $n$ hidden layer nodes each, and a single output node. Further, suppose that
$n \ge 171 \cdot (\ell+1)^2 \cdot (2/\epsilon)^{2\ell} \cdot (\ln(2/\delta) + \ell \ln n)$.
Assuming that~$\ell$ is constant, tester \textsc{AllZeroTesterMHL} queries $\Theta(\epsilon^{-8\ell} \cdot \ln^4(1/\lambda\epsilon))$ entries from $W_0, \dots, W_{\ell}$, and
\begin{enumerate}[(i)]
\item rejects with probability at least $1 - \lambda$, if the ReLU network $(W_0, \dots, W_{\ell})$ is $(\epsilon,\delta)$-far from computing the constant $0$-function, and
\item accepts with probability at least $1 - \lambda$, if the ReLU network $(W_0, \dots, W_{\ell})$ computes the constant $0$-function.
\end{enumerate}
\end{theorem}
}


\cref{thm:ReLU-testing-0-function-2-sided-mL-n} follows from the following more general theorem.

\begin{theorem}
\label{thm:ReLU-testing-0-function-2-sided-mL}
Let $0 < \lambda < \frac{1}{\ell+1}$ and $\frac{1}{\min_{0 \le k \le \ell}\{m_k\}} < \epsilon < \frac12$.
Let \NE be a ReLU network $(W_0, \dots, W_{\ell})$ with $m_0$ input nodes, $\ell \ge 1$ hidden layers with $m_1, \dots, m_{\ell}$ hidden layer nodes each, and a single output node. Further, suppose that for an arbitrary parameter $0 < \probp < \frac{1}{\ell+1}$, it holds that $m_0 \ge 128 \cdot (\ell+1)^2 \cdot (2/\epsilon)^{2\ell} \cdot \ln(\frac{1}{\probp} \prod_{i=1}^{\ell} m_i)$ and that $m_k \ge 171 \cdot (\ell+1)^2 \cdot (2/\epsilon)^{2\ell} \cdot \ln(\frac{1}{\probp} \prod_{i=k+1}^{\ell} m_i)$ for every $1 \le k \le \ell$.
Let $\delta \ge \probp + e^{-m_0/16}$.
Assuming that $\ell$ is constant, tester \textsc{AllZeroTesterMHL} queries $\Theta(\epsilon^{-8\ell} \cdot \ln^4(1/\lambda\epsilon))$ entries from $W_0, \dots, W_{\ell}$, and
\begin{enumerate}[(i)]
\item rejects with probability at least $1 - \lambda$, if \NE is $(\epsilon,\delta)$-far from computing the constant $0$-function, and
\item accepts with probability at least $1 - \lambda$, if \NE computes the constant $0$-function.
\end{enumerate}
\end{theorem}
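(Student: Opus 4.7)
The plan is to follow the three-step structure used for one hidden layer in \cref{sec:ReLU-testing-0-OR-function-2-sided}, namely a structural ``bad input'' statement (\cref{lemma:output}), a layer-level sampling concentration (\cref{lemma:sampling}), and a uniform ``acceptance'' bound (\cref{lem:ReLU-testing-0-OR-function-2-sided-compute}), but with each piece rebuilt to handle $\ell$ successive ReLU stages. The tester's bias $\tfrac{1}{16}(\epsilon/2)^{\ell}\prod_{i=0}^{\ell} m_i$ and the scaling factor $\prod_{i=0}^{\ell} \frac{m_i}{s_i}$ are precisely calibrated so that the two sources of error (structural perturbation at farness distance $\epsilon$, and sampling error across $\ell+1$ layers) each contribute at most half of the bias. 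The constants $c_k$ in the choice of $s_k$ will be selected at the end to absorb this calibration.

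First, I will prove a multi-layer analogue of \cref{lemma:output}: if \NE\ is $(\epsilon,\delta)$-far from computing the constant $0$-function, then there exists $x_0\in\{0,1\}^{m_0}$ with $f_{\ell+1}(x_0)> \tfrac{1}{8}(\epsilon/2)^{\ell}\prod_{i=0}^{\ell} m_i$. The proof proceeds by contradiction, modifying the network layer-by-layer: starting from the output, I identify at each layer $k$ an $\epsilon$-fraction of edges whose modification (to value $-1$ for a block of positively-weighted edges at the top, and to value $1$ for the corresponding supporting block in the layers below, analogous to the single-layer argument) forces the output to drop by the right multiplicative factor $(\epsilon/2)$ per layer. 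The $\ell$-fold Chernoff-type argument for $\|x\|_1>m_0/4$ (as in \cref{claim:NumberOfOnes}) contributes the $e^{-m_0/16}$ additive term to $\delta$. The remaining $\probp$ slack will come from the uniform concentration step, which is why $\delta\ge \probp+e^{-m_0/16}$.

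Second, I will establish two concentration lemmas (the references \cref{lemma:sampling-mL,lemma:sampling-mL-existance} in the overview point to these). For a \emph{fixed} input $x$, a layer-by-layer Hoeffding application shows that after rescaling by $\prod_{i=0}^{\ell} m_i/s_i$, the value $h_{\ell+1}(x)$ approximates $f_{\ell+1}(x)$ up to additive error $\tfrac{1}{32}(\epsilon/2)^{\ell}\prod_{i=0}^{\ell}m_i$ with failure probability $\le \lambda/(\ell+1)$, provided $s_k\gtrsim (\epsilon/2)^{-2\ell}\cdot\ln(1/\lambda)$ with a union bound over the $m_{k+1}$ pre-activation values at layer $k+1$; this is where the leading $\epsilon^{-4\ell}\ln^2$ size of $s_k$ (for $k\ge 1$) and the dependence on $\prod_{i=k+1}^{\ell} m_i$ in the lower bound on $m_k$ appear. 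Combining this with Step~1 gives part~(i): on the bad input $x_0$, the rescaled sampled value exceeds the bias with probability $\ge 1-\lambda$, so \textsc{AllZeroTesterMHL} rejects.

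Third, I will prove part~(ii) by a uniform variant of the concentration lemma. Conditioning on the choice $\sset_0$ of sampled input nodes, the relevant input space collapses to the $2^{s_0}$ vectors that are $0$ off $\sset_0$; the bias is positive while (for a $0$-computing network) $f_{\ell+1}(x)\le 0$ for every $x$, so I need the scaled sampled value to exceed $-\tfrac{1}{16}(\epsilon/2)^{\ell}\prod_i m_i$ simultaneously for all $2^{s_0}$ such inputs. A union bound over these inputs multiplies the failure probability by $2^{s_0}$, which is exactly why the concentration must be driven to $\exp(-\Omega(s_0))$; the condition $s_k\ge c_k\epsilon^{-4\ell}\ln^2(1/\lambda\epsilon)$ together with the lower bound $m_k\ge \mathrm{const}\cdot(\ell+1)^2(2/\epsilon)^{2\ell}\ln(\tfrac{1}{\probp}\prod_{i\ge k+1}m_i)$ is calibrated to absorb both this $2^{s_0}$ factor and the successive layer-union bounds over $m_1,\dots,m_\ell$. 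The $\probp$ ``bad'' fraction of inputs (on which the per-input concentration from \cref{lemma:sampling-mL} fails) is accommodated by the allowance $\delta\ge \probp+e^{-m_0/16}$ in Step~1, so it never interferes with the acceptance side.

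The query complexity is then $\sum_{k=0}^{\ell} s_k s_{k+1} = \Theta(\epsilon^{-8\ell}\ln^4(1/\lambda\epsilon))$ for constant $\ell$, and the simpler \cref{thm:ReLU-testing-0-function-2-sided-mL-n} follows by specializing all $m_k=n$ and taking $\probp=e^{-(\epsilon/2)^{\ell}n/(342(\ell+1)^2)}$. The main obstacle I expect is the third step: controlling the amplification of sampling error through $\ell$ nested ReLUs tightly enough to afford the $2^{s_0}$ union bound while keeping $s_0$ independent of the network size $m_0$. This is where the key technical trick -- viewing the sampling $S_0$ as restricting to inputs supported on $\sset_0$ and only then invoking concentration at the hidden layers (mirroring the argument used for \cref{lem:ReLU-testing-0-OR-function-2-sided-compute}, but iterated) -- becomes essential.
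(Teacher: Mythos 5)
Your high-level decomposition into a farness lemma, a layer-by-layer sampling concentration, and a union-bound acceptance argument matches the paper's structure, and your Steps~2 and~3 are essentially correct — conditioning on $\sset_0$, applying the telescoping concentration over the remaining layers, and paying a $2^{s_0}$ union bound is exactly what the paper does in Lemma~\ref{lem:ReLU-testing-0-function-2-sided-compute-mL}, with $s_k$ sized so that the per-input failure probability is driven to $\lambda/2^{s_0}$.

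However, your Step~1 as written would fail. You describe modifying ``at each layer $k$ an $\epsilon$-fraction of edges\ldots analogous to the single-layer argument,'' chaining blocks of $+1$ edges between layers and $-1$ edges at the top, so that ``the output drops by $(\epsilon/2)$ per layer.'' The paper explicitly flags (in the proof of Lemma~\ref{lemma:output-mL}) that this naive extension breaks for $\ell \ge 3$. The reason is that the modified block $\IN_k \subseteq V_k$ is not isolated: its nodes feed into \emph{all} of $V_{k+1}$, not just $\IN_{k+1}$. Once you set the weights into $\IN_1$ to $+1$, the $\IN_1$ nodes carry value $\|x\|_1$, and this altered value — after a ReLU — perturbs the computation at every node of $V_2$, not just $\IN_2$. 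So you no longer have the clean inequality ``new output $\le$ old output $-$ guaranteed drop''; the contribution of the unmodified part of the network has also changed uncontrollably. What the paper actually does is \emph{decouple} the auxiliary chain from the rest of the network by additionally zeroing out the weights from $\IN_k$ into $V_{k+1}\setminus \IN_{k+1}$ (this stays within the $\epsilon$-fraction budget per layer). After decoupling, the main sub-network on $\sset^*_0,\ldots,\sset^*_\ell$ is exactly a \emph{sampled} version of the original network. The paper then invokes the existential sampling Lemma~\ref{lemma:sampling-mL-existance} — inside the farness proof, not just in the concentration step — to choose $\sset^*_0 = V_0$ and $\sset^*_k$ of size $(1-\tfrac12\epsilon)m_k$ for which $h^*_{\ell+1}(x) \approx (1-\tfrac12\epsilon)^\ell f_{\ell+1}(x)$ for all but a $\probp$-fraction of inputs; this is precisely why $\probp$ enters the $\delta$ lower bound and why the constraints on $m_0,\ldots,m_\ell$ appear in Lemma~\ref{lemma:output-mL} (they are the hypotheses Lemma~\ref{lemma:sampling-mL-existance} needs for the full-size $s^*_k$). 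Your plan relegates Lemma~\ref{lemma:sampling-mL-existance} to the concentration step and attributes the $\probp$ slack vaguely to ``the uniform concentration step,'' which misses this essential use.

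Make Step~1 precise: (a)~do not modify $\IN_k$'s edges in isolation but also zero out the $\IN_k \to V_{k+1}\setminus\IN_{k+1}$ weights; (b)~derive the $\probp$ slack and the $m_k$ lower bounds from applying Lemma~\ref{lemma:sampling-mL-existance} with $s^*_0 = m_0$, $s^*_k = (1-\tfrac12\epsilon)m_k$, and $\vartheta = \tfrac18(\epsilon/2)^\ell$; (c)~conclude by bounding the modified-network output above by $-\tfrac14(\epsilon/2)^\ell\prod m_i + (1-\tfrac12\epsilon)^\ell(f_{\ell+1}(x) + \vartheta\prod m_i)$ on the good inputs, which forces $f_{\ell+1}(x_0) > \tfrac18(\epsilon/2)^\ell \prod m_i$ for some $x_0$. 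With that correction, the rest of your outline goes through.
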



\begin{remark}\rm
\label{remark:thm:ReLU-testing-0-function-2-sided-mL}
\cref{thm:ReLU-testing-0-function-2-sided-mL} is heavily parameterized and we present it in a general form to on one hand demonstrate that one can study ReLU networks with arbitrary many hidden layers and to obtain for them efficient testers for some basic functions, and on the other hand, to make our result concrete. In many natural scenarios, and this is how one should read \cref{thm:ReLU-testing-0-function-2-sided-mL}, when the number of hidden layers is constant, then the tester works with each $s_k = \poly(1/\epsilon)$, having total number of entries of $W_0, \dots, W_{\ell}$ sampled to be polynomial in $\poly(1/\epsilon)$.

The dependency on $m_0, \dots, m_{\ell}$ is rather mild, and bar their dependency on $\probp$, it essentially requires that all $m_i$ are larger than some constant (depending on $\epsilon, \ell$, and $\lambda$) and the largest $m_i$ is not more than exponentially larger than any other $m_j$.
The dependency on $\probp$ and $\delta$ is discussed in \cref{remark:auxiliary-to-lemma:output-mL}. One would like to keep both $\probp$ and $\delta$ as small as possible and \cref{claim:auxiliary-to-lemma:output-mL} demonstrates that one can normally get $\delta$ as low as $e^{-m_0/16} + e^{-\Theta((\epsilon/2)^{2\ell}/\ell^2 \cdot \min m_k)}$; when all $m_k$ are the same, then $\probp$ and $\delta$ can be as small as $e^{-\Theta((\epsilon/2)^{2\ell}m_0/\ell^2)}$ and $e^{-m_0/16} + e^{-\Theta((\epsilon/2)^{2\ell}m_0/\ell^2)}$, respectively, as stated in \cref{thm:ReLU-testing-0-function-2-sided-mL-n}.
Observe that the lower bounds on $m_0, \dots, m_{\ell}$ put some mild constraints on~$\epsilon$.

We present \cref{thm:ReLU-testing-0-function-2-sided-mL} with the assumption that $\ell$ is a constant. However, our analysis below works for arbitrarily large $\ell$ and the use of constant $\ell$ is done (to simplify the calculations) only at the very final step. In the general case, we could define $s_0, \dots, s_{\ell}$ to satisfy the constraints that $s_k \ge 512 \cdot (\ell+1)^2 \cdot (2/\epsilon)^{2\ell} \cdot \ln\left(\frac{2 \prod_{i=k+1}^{\ell} s_i}{\lambda(\ell+1)}\right)$ for every $0 \le k \le \ell$, and that $s_k \ge 512 \cdot \ell^2 \cdot (2/\epsilon)^{2\ell} \cdot \ln\left(\frac{2^{s_0+1} \cdot \ell \cdot \prod_{i=k+1}^{\ell} s_i}{\lambda}\right)$ for every $1 \le k \le \ell$. Then \textsc{AllZeroTesterMHL} (with modified values of $s_0, \dots, s_{\ell}$) would be a tester with properties as in \cref{thm:ReLU-testing-0-function-2-sided-mL} and query complexity $\sum_{k=0}^{\ell-1} s_k \cdot s_{k+1}$.
\end{remark}


\subsubsection{Overview of the arguments used in the proof of \cref{thm:ReLU-testing-0-function-2-sided-mL}}
\Artur{If we had space, that text could fit well description in \cref{subsection:Overview-multiple-outputs+layers} in \cref{section:Overview}. But we have no space.}%
The proof of \cref{thm:ReLU-testing-0-function-2-sided-mL} follows the approach presented earlier in the proofs of \cref{thm:ReLU-testing-0-OR-function-2-sided}%
    
, though the need to deal with multiple hidden layers makes the analysis more tedious and complex. The approach relies on two steps of the analysis, first proving part
\begin{inparaenum}[(i)]
\item about ReLU networks that are $(\epsilon,\delta)$-far from computing the constant $0$-function (\cref{subsubsec:output-mL,subsubsec:ReLU-testing-0-function-2-sided-far-mL}) and then part
\item about ReLU networks that compute the constant $0$-function (\cref{subsubsec:ReLU-testing-0-function-2-sided-compute-mL}).
\end{inparaenum}

The analysis of part (i) is in four steps, as described in \cref{lemma:sampling-mL,lemma:sampling-mL-existance,lemma:output-mL,lem:ReLU-testing-0-function-2-sided-far-mL}.
We first show in \cref{subsubsec:sampling-mL} that the use of randomly sampled nodes in each layer of the network results in a modified ReLU network that, after scaling, for any fixed input returns the value close to the value returned by the original network.
Next, in \cref{subsubsec:sampling-mL-delta}, we extend this claim to show that there is always a modified ReLU network that, after scaling, for all but a small fraction of the inputs, returns the value close to the value returned by the original network.
We then study ReLU networks $(W_0, \dots, W_{\ell})$ that are $(\epsilon, \delta)$-far from computing the constant $0$-function.
In that case, we first show in \cref{subsubsec:output-mL} that there is always an input $x_0 \in \{0,1\}^{m_0}$ on which the $(\epsilon, \delta)$-far network returns a value which is (sufficiently) large.
Then, using the result from \ref{lemma:sampling-mL-existance} that the sampled network approximates the output of the original network, we will argue in \cref{subsubsec:ReLU-testing-0-function-2-sided-far-mL} that if $(W_0, \dots, W_{\ell})$ is $(\epsilon, \delta)$-far from computing the constant $0$-function then randomly sampled network on input $x_0$ returns a (sufficiently) large.

The study of part (ii), as described in \cref{lem:ReLU-testing-0-function-2-sided-compute-mL}, relies on a similar approach as that in part (i). We use the fact that the sampled network computes a function which on a fixed input is (after scaling) close to the original function on that input (\cref{lemma:sampling-mL}), and hence that value cannot be too large. An important difference with part (i) is that one must prove the claim to \emph{hold for all inputs}, making the arguments a bit more challenging (the most straightforward arguments would require the size of the sample to depend on $m_0$, giving a tester with super-constant query complexity).


\subsubsection{Random sampling concentration (extending \cref{lemma:sampling} to multiple layers)}
\label{subsubsec:sampling-mL}

We begin with an extension of \cref{lemma:sampling} to study the impact of random sampling of the nodes used in the network on multiple hidden layers ReLU networks.
\junk{
\cref{lemma:sampling} shows that in a ReLU with $n$ input nodes and $m$ hidden layer nodes, and a single output, if we sample $s \ge \frac{512 \ln(4t/\lambda)}{\epsilon^2}$ input nodes at random and $t \ge \frac{512 \ln(4/\lambda)}{\epsilon^2}$ hidden rows at random, defining sampling matrices $S$ and $T$, then for any fixed $x \in \{0,1\}^n$ we have that the probability that $\frac{mn}{st} \cdot w^T \cdot \relu(TASx)$ and $w^T \cdot \relu(Ax)$ differ by more than $\frac18\epsilon nm$ is upper bounded by $\lambda$. The following lemma generalizes \cref{lemma:sampling} to multiple layers (and also provides the omitted proof of \cref{lemma:sampling-2L} which was considering only two hidden layers).
}
It argues that on any fixed input $x \in \{0,1\}^{m_0}$, for random selection of nodes at each layer of the network $\sset_0, \dots, \sset_{\ell}$, the original ReLU network $(W_0, \dots, W_{\ell})$ returns the value $f_{\ell+1}(x)$ which is very close to the value $h_{\ell+1}(x)$ of the sampled network $(W_0 \cdot S_0, \dots, W_{\ell} \cdot S_{\ell})$ after appropriate scaling.

\begin{lemma}
\label{lemma:sampling-mL}
Let $(W_0, \dots, W_{\ell})$ be a ReLU network with $m_0$ input nodes, $\ell \ge 1$ hidden layers with $m_1, \dots, m_{\ell}$ hidden layer nodes each, and a single output node.
Let $f:\{0,1\}^{m_0} \rightarrow \{0,1\}$ be the function computed by ReLU network $(W_0, \dots, W_{\ell})$, which is $f(x) := \sgn(\relu(f_{\ell+1}(x)))$, where
\begin{align*}
    f_i(x) &:=
        \begin{cases}
            W_{i-1} \cdot \relu(f_{i-1}(x)) & \text{ if } 1 \le i \le \ell+1, \\
            x & \text{ if } i=0.
        \end{cases}
\end{align*}

Let $S_0, \dots, S_{\ell}$ be sampling matrices corresponding to samples $\sset_0, \dots, \sset_{\ell}$ of size $s_0, \dots, s_{\ell}$, respectively. Let $h :\{0,1\}^{m_0} \rightarrow \{0,1\}$ be the function computed by ReLU network $(W_0 \cdot S_0, \dots, W_{\ell} \cdot S_{\ell})$, that is, $h$ is defined recursively as $h(x) := \sgn(\relu(h_{\ell+1}(x)))$, where
\begin{align*}
    h_i(x) &:=
        \begin{cases}
            W_{i-1} \cdot S_{i-1} \cdot \relu(h_{i-1}(x)) & \text{ if } 1 \le i \le \ell+1, \\
            x & \text{ if } i=0.
        \end{cases}
\end{align*}
Let $0 < \vartheta \le 1$ and $0 < \lambda < \frac{1}{\ell+1}$.
Assume that $s_k \ge \frac{2 (\ell+1)^2 \ln(2 \prod_{i=k+1}^{\ell} s_i/(\lambda(\ell+1)))}{\vartheta^2}$ for every $0 \le k \le \ell$.
Then for any fixed $x \in \{0,1\}^{m_0}$ we have (where the probability is with respect to random choices of $S_0, \dots, S_{\ell}$)
\begin{align*}
    \Pr\left[\left| \prod_{i=0}^{\ell} \frac{m_i}{s_i} \cdot h_{\ell+1}(x) - f_{\ell+1}(x) \right|
        >
    \vartheta \cdot \prod_{i=0}^{\ell} m_i \right]
    &\le \lambda.
\end{align*}
\end{lemma}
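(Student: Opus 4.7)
The plan is to prove Lemma \ref{lemma:sampling-mL} by induction on the layer index, extending layer-by-layer the argument from the single-hidden-layer case \cref{lemma:sampling}. First I introduce the \emph{scaled sampled values} $y^{(k)}(x):=\prod_{i=0}^{k-1}\tfrac{m_i}{s_i}\cdot h_k(x)$ for $k\ge 1$ and $y^{(0)}=x$. Using positive homogeneity of $\relu$ to pull the positive scalar $\prod_{i<k}\frac{m_i}{s_i}$ inside the nonlinearity, one obtains the clean recursion $y^{(k+1)}=\tfrac{m_k}{s_k}W_k S_k\relu(y^{(k)})$, matching the shape of the true recursion $f_{k+1}=W_k\relu(f_k)$. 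Let $\mu_k:=\prod_{i=0}^{k-1}m_i$; since all weights lie in $[-1,1]$ and $x\in\{0,1\}^{m_0}$, a straightforward induction shows $|f_{k,i}|\le\mu_k$. The target inequality becomes $|y^{(\ell+1)}(x)-f_{\ell+1}(x)|\le\vartheta\,\mu_{\ell+1}$, so the goal is to maintain a normalized per-index error bound $\max_{i\in\sset_k}|y^{(k)}_i-f_{k,i}|\le\beta_k\mu_k$ with $\beta_k=\sum_{j<k}\tau_j$ for suitably chosen $\tau_j$ (and $\beta_0=0$).

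For the inductive step, conditional on $\sset_0,\dots,\sset_{k-1}$, I decompose each component:
\begin{align*}
y^{(k+1)}_j-f_{k+1,j}
&=\tfrac{m_k}{s_k}\sum_{i\in\sset_k}(W_k)_{j,i}\bigl[\relu(y^{(k)}_i)-\relu(f_{k,i})\bigr]\\
&\quad+\Bigl[\tfrac{m_k}{s_k}\sum_{i\in\sset_k}(W_k)_{j,i}\relu(f_{k,i})-\sum_{i=1}^{m_k}(W_k)_{j,i}\relu(f_{k,i})\Bigr].
\end{align*}
The first term is a \emph{propagation} error: by $|(W_k)_{j,i}|\le 1$ and the $1$-Lipschitz property of $\relu$ its magnitude is at most $m_k\cdot\max_{i\in\sset_k}|y^{(k)}_i-f_{k,i}|\le m_k\beta_k\mu_k=\beta_k\mu_{k+1}$. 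The second term is a \emph{pure sampling} error whose summands $(W_k)_{j,i}\relu(f_{k,i})$ lie in $[-\mu_k,\mu_k]$ and are fixed once $\sset_0,\dots,\sset_{k-1}$ are fixed; after normalizing by $\mu_k$, Hoeffding's inequality for sampling without replacement (\cref{lemma:Hoeffding}) bounds it by $\tau_k\mu_{k+1}$ with probability at least $1-2e^{-s_k\tau_k^2/2}$ for each fixed $j$. Taking $\tau_k:=\vartheta/(\ell+1)$ therefore yields $|y^{(k+1)}_j-f_{k+1,j}|\le(\beta_k+\tau_k)\mu_{k+1}=\beta_{k+1}\mu_{k+1}$, closing the induction and giving $\beta_{\ell+1}=\vartheta$ at the output.

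Finally, because $S_{k+1}$ zeros out all components of $\relu(h_{k+1})$ outside of $\sset_{k+1}$ in the subsequent layer's product, only the $s_{k+1}$ indices in $\sset_{k+1}$ must actually be controlled when transitioning from layer $k$ to $k+1$ (and only the single output node for $k=\ell$). A union bound over these indices and over all $\ell+1$ layer transitions yields total failure probability at most $\sum_{k=0}^{\ell}2\,s_{k+1}\,e^{-s_k\tau_k^2/2}\le\lambda$, which is implied by the stated lower bounds on $s_k$ since $\prod_{i=k+1}^{\ell}s_i\ge s_{k+1}$ (the product form absorbing the joint union bound conservatively). The main technical obstacle is not any single concentration estimate but the bookkeeping of errors across layers: the natural magnitude of values at layer $k$ scales as $\mu_k=\prod_{i<k}m_i$, so errors must be tracked in the normalized scale, and it is crucial that the $1$-Lipschitz property of $\relu$ makes the recurrence $\beta_{k+1}=\beta_k+\tau_k$ additive rather than multiplicative, keeping the sample sizes $s_k$ polynomial in $1/\vartheta$ and polylogarithmic in $1/\lambda$ for constant $\ell$.
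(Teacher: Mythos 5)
Your forward-pass induction is a valid and slightly different route from the paper's proof, and the reasoning is sound. Let me compare the two approaches and flag one small arithmetic concern.

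The paper uses a telescoping (hybrid-network) decomposition: it writes $\prod_i\tfrac{m_i}{s_i}h_{\ell+1}(x)-f_{\ell+1}(x)$ as a sum of $\ell+1$ terms, the $k$-th term comparing two networks that agree on the sampled layers $k{+}1,\dots,\ell$ but differ at layer $k$ (true $W_k$ vs.\ sampled $W_k S_k$). Each term is then controlled by a single-layer Hoeffding estimate together with a propagation claim (the paper's Claim~\ref{claim:ineq-mL-aux2}) that is itself probabilistic: it pushes per-index failure events $\eta$ through the remaining sampled layers via repeated union bounds, producing failure probability $\eta\prod_{i=k+1}^\ell s_i$ and deviation $\zeta\prod_{i=k+1}^\ell s_i$ (the latter conveniently cancelling with the $\prod\tfrac{m_i}{s_i}$ scaling). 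You instead track the per-index error $|y^{(k)}_i-f_{k,i}|$ on the sampled indices $\sset_k$ in a single forward pass: the contribution of previous layers (\emph{propagation error}) is bounded \emph{deterministically}, given the inductive event, via $1$-Lipschitzness of $\relu$ and $|W_{k,j,i}|\le1$, while the only new randomness at step $k$ is the fresh Hoeffding term over $\sset_k$, union-bounded over the $s_{k+1}$ indices of $\sset_{k+1}$ (and over the single output node when $k=\ell$). This is conceptually dual but not identical: you avoid propagating failure events forward because you commute the union bound with the deterministic Lipschitz estimate, which gives a cleaner and marginally tighter union bound (over $s_{k+1}$ indices per step versus $\prod_{i>k}s_i$). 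The magnitude bookkeeping, the $\mu_k=\prod_{i<k}m_i$ normalization, the additivity $\beta_{k+1}=\beta_k+\tau_k$ with $\tau_k=\vartheta/(\ell+1)$, and the Hoeffding calculation all check out.

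One caveat on the last sentence of your sketch: with the constraint $s_k\ge\tfrac{2(\ell+1)^2\ln(2\prod_{i>k}s_i/(\lambda(\ell+1)))}{\vartheta^2}$ and $\tau_k=\vartheta/(\ell+1)$ you get $e^{-s_k\tau_k^2/2}\le\tfrac{\lambda(\ell+1)}{2\prod_{i>k}s_i}$, and then $\sum_{k=0}^{\ell}2\,s_{k+1}\,e^{-s_k\tau_k^2/2}\le\sum_k\tfrac{\lambda(\ell+1)s_{k+1}}{\prod_{i>k}s_i}$, which is bounded by $(\ell+1)\cdot\lambda(\ell+1)$ rather than by $\lambda$ (note the $k=\ell$ and $k=\ell-1$ terms alone each contribute $\lambda(\ell+1)$). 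So the stated lower bound on $s_k$ is off by roughly a factor $(\ell+1)^2$ inside the exponent from what your final estimate actually needs; the paper's own choice $\lambda_k:=\tfrac{\lambda(\ell+1)}{\prod_{i>k}s_i}$ has the same issue, so this appears to be a harmless constant slip rather than a structural flaw in your argument. For constant $\ell$, as used everywhere downstream, the discrepancy is absorbed by enlarging the implicit constant in $s_k$, but you should replace ``implied by the stated lower bounds on $s_k$'' by the stronger constraint $s_k\ge\tfrac{2(\ell+1)^2\ln(2(\ell+1)\prod_{i>k}s_i/\lambda)}{\vartheta^2}$ (equivalently, set the per-step budget to $\lambda/(\ell+1)$ rather than $\lambda(\ell+1)$) if you want the final bound to come out as exactly $\le\lambda$.
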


\begin{proof}
Our proof extends the analysis from \cref{lemma:sampling}, though the analysis and the notation are more complicated because of the need of dealing with multiple layers.

In order to compare the value $f_{\ell+1}(x)$ of ReLU network $(W_0, \dots, W_{\ell})$ with the scaled value $h_{\ell+1}(x)$ of ReLU network $(W_0 \cdot S_0, \dots, W_{\ell} \cdot S_{\ell})$, we will compare the scaled values in $\ell$ different ReLU networks: network $(W_0, \dots, W_{\ell})$ and network $(W_0, \dots, W_{\ell-1}, W_{\ell} \cdot S_{\ell})$, then network $(W_0, \dots, W_{\ell-1}, W_{\ell} \cdot S_{\ell})$ and network $(W_0, \dots, W_{\ell-2}, W_{\ell-1} \cdot S_{\ell-1}, W_{\ell} \cdot S_{\ell})$, and so on so forth, until network $(W_0, W_1 \cdot S_1, \dots, W_{\ell} \cdot S_{\ell})$ and network $(W_0 \cdot S_0, \dots, W_{\ell} \cdot S_{\ell})$. Observe that if (after appropriate scaling) all these values are similar, then we will able to argue that network $(W_0, \dots, W_{\ell})$ has a very similar value as that of the scaled value of the sampling network $(W_0 \cdot S_0, \dots, W_{\ell} \cdot S_{\ell})$.

Our analysis of the difference between the values of $(W_0, \dots, W_r, W_{r+1} \cdot S_{r+1}, \dots , W_{\ell} \cdot S_{\ell})$ and the scaled value of $(W_0, \dots, W_{r-1}, W_r \cdot S_r, W_{r+1} \cdot S_{r+1}, \dots , W_{\ell} \cdot S_{\ell})$ is done in two steps: we first compare the values of $(W_0, \dots, W_{r-1}, W_r)$ and $(W_0, \dots, W_{r-1}, W_r \cdot S_r)$ using the \emph{randomness of $S_r$}, and only then we extend the analysis to the entire pair $(W_0, \dots, W_r, W_{r+1} \cdot S_{r+1}, \dots , W_{\ell} \cdot S_{\ell})$ and the scaled $(W_0, \dots, W_{r-1}, W_r \cdot S_r, W_{r+1} \cdot S_{r+1}, \dots , W_{\ell} \cdot S_{\ell})$, which works for \emph{arbitrary $S_{r+1}, \dots, S_{\ell}$}.

In order to facilitate the analysis of such networks, for any $0 \le k \le r \le \ell+1$, for any ReLU network $(W_k \cdot S_k, W_{k+1} \cdot S_{k+1}, \dots, W_{r-1} \cdot S_{r-1})$, we let $\phi_k^r(x)$ denote the function%
\footnote{$\phi_k^r(x)$ is the function computed by the ReLU network $(W_k \cdot S_k, \dots, W_{r-1} \cdot S_{r-1})$ and thus, in particular,
\begin{inparaenum}[(i)]
\item $\phi_0^r(x) = h_r(x)$ for every $0 \le r \le \ell+1$,
\item $\phi_k^{\ell+1}(f_k(x))$ is the function computed by $(W_0 \dots W_{k-1}, W_k \cdot S_k, \dots, W_{\ell} \cdot S_{\ell})$ for every $0 \le k \le \ell+1$, and
\item $\phi_{\ell+1}^{\ell+1}(f_{\ell+1}(x)) = f_{\ell+1}(x)$.
\end{inparaenum}
}%
\junk
{\footnote{
To provide some intuitions, observe also that
\begin{align*}
    \phi_k^k(x) &= x; \\
    \phi_k^{k+1}(x) &= W_k \cdot S_k \cdot \relu(x); \\
    \phi_k^{k+2}(x) &= W_{k+1} \cdot S_{k+1} \cdot \relu(W_k \cdot S_k \cdot x); \\
    \phi_k^{k+3}(x) &= W_{k+2} \cdot S_{k+2} \cdot \relu(W_{k+1} \cdot S_{k+1} \cdot \relu(W_k \cdot S_k \cdot x)); \\
    \phi_{\ell+1}^{\ell+1}(x) &= x; \\
    \phi_{\ell}^{\ell+1}(x) &= W_{\ell} \cdot S_{\ell} \cdot \relu(x); \\
    \phi_{\ell-1}^{\ell+1}(x) &= W_{\ell} \cdot S_{\ell} \cdot \relu(W_{\ell-1} \cdot S_{\ell-1} \cdot x); \\
    \phi_{\ell-2}^{\ell+1}(x) &= W_{\ell} \cdot S_{\ell} \cdot \relu(W_{\ell-1} \cdot S_{\ell-1} \cdot \relu(W_{\ell-2} \cdot S_{\ell-2} \cdot x)); \\
    \phi_0^r(x) &= h_r(x).
\end{align*}
}}%
computed by such network on input $x$. Observe that function $\phi_k^r: \RR^{m_k} \rightarrow \RR^{m_r}$ is defined recursively as follows:
\begin{align*}
    \phi_k^r(x) &:=
    \begin{cases}
        W_{r-1} \cdot S_{r-1} \cdot \relu(\phi_k^{r-1}(x)) & \text { if } k < r, \\
        x & \text { if } k = r.
    \end{cases}
\end{align*}


\paragraph{Difference of having a single random sampling matrix.}
We begin with the study of the relation between ReLU networks $(W_k)$ and $(W_k \cdot S_k)$ for \emph{random $S_k$}, with our later objective of using this to understand the relation between networks $(W_0, \dots, W_k)$ and $(W_0, \dots, W_{k-1}, W_k \cdot S_k)$.

We will prove the following claim, where we later will use with $y = \relu(f_k(x))$.

\begin{claim}
\label{claim:ineq-mL-aux1}
Let $0 \le k \le \ell$ and $1 \le j \le m_k$. Let $0 < \kappa, \lambda < 1$ and assume that $s_k \ge \frac{2 \ln(2/\lambda)}{\kappa^2}$. Let $y$ be an arbitrary $m_k$-vector in $\RR^{m_k}$. If we use $W_k^{(j)}$ to denote the $j$-th row of $W_k$, then the following inequality holds (where the probability is with respect to the random choice of $S_k$):
\begin{align}
\label{ineq-mL-aux1}
    \Pr\left[
            \left|\frac{m_k}{s_k} \cdot W_k^{(j)} \cdot S_k \cdot y
                -
                W_k^{(j)} \cdot y
            \right|
        > \kappa \cdot \|y\|_{\infty} \cdot m_k
        \right]
    &\le \lambda.
\end{align}
\end{claim}

Notice that (\ref{ineq-mL-aux1}) holds for any single node $j$ from $V_k$, but not for all nodes at the same time.

\begin{proof}
We study the relation between $W_k^{(j)} \cdot S_k \cdot y$ and $W_k^{(j)} \cdot y$.  Since $S_k$ is an $m_k \times m_k$ matrix corresponding to sampling of $s_k$ nodes from $V_k$ uniformly at random without replacement, each of $W_k^{(j)} \cdot S_k \cdot y$ and $W_k^{(j)} \cdot y$ is a sum of terms obtained by multiplying the $j$-th row of $W_k^{(j)} \cdot S_k$ and $W_k^{(j)}$, respectively, by the $j$-th column of vector $y$. The difference is that while the latter sum takes the entire column of $y$, the former samples $s_k$ random terms from that column. Hence we view $W_k^{(j)} \cdot S_k \cdot y$ as the sum of $s_k$ random terms defining $W_k^{(j)} \cdot y$; notice that $\Ex[W_k^{(j)} \cdot S_k \cdot y] = \frac{s_k}{m_k} \cdot W_k^{(j)} \cdot y$. Since every term in $W_k^{(j)} \cdot y$ is in $[-\|y\|_{\infty}, \|y\|_{\infty}]$, rescaling by $\frac{1}{\|y\|_{\infty}}$ ensures that every term in $\frac{1}{\|y\|_{\infty}} \cdot W_k^{(j)} \cdot y$ is in $[-1,1]$. Thus Hoeffding's bound (\cref{lemma:Hoeffding}) yields for any $\kappa$:
\begin{align*}
    &\Pr\left[
        \left|\frac{m_k}{s_k} \cdot W_k^{(j)} \cdot S_k \cdot y
            -
        W_k^{(j)} \cdot y \right|
        > \kappa \cdot \|y\|_{\infty} \cdot m_k \right]
        =
        \\\notag
    &\Pr\left[
        \left|\frac{1}{s_k \cdot \|y\|_{\infty}} \cdot W_k^{(j)} \cdot S_k \cdot y
            -
        \frac{1}{m_k \cdot \|y\|_{\infty}} \cdot W_k^{(j)} \cdot y \right|
        > \kappa \right]
    \le 2e^{-2 s_k \kappa^2/4}
    \le \lambda,
\end{align*}
where the last inequality follows from our assumption that $s_k \ge \frac{2 \ln(2/\lambda)}{\kappa^2}$.
\end{proof}


\paragraph{Multiplying by sparse matrices.}
We consider the impact of using arbitrary sampling matrices with a small number of non-zero entries on the functions computed by ReLU networks. We have the following claim about multiplying by arbitrary (not necessarily random) sampling matrices%
\footnote{Let us recall that $\phi_k^r(x)$ is the function computed by ReLU network $(W_k \cdot S_k, W_{k+1} \cdot S_{k+1}, \dots, W_{r-1} \cdot S_{r-1})$ on input $x$, defined as $\phi_k^r(x) = W_{r-1} \cdot S_{r-1} \cdot \relu(\phi_k^{r-1}(x))$ for $k < r$ and $\phi_k^r(x) = x$ for $k=r$.}.

\begin{claim}
\label{claim:ineq-mL-aux2}
Let $0 \le k \le \ell$. Let $y$ and $z$ be arbitrary random vectors in $\RR^{m_k}$ such that for some $\zeta, \eta$, for any $1 \le i \le m_k$ holds $\Pr[|y_i - z_i| > \zeta] \le \eta$.
For every $0 \le k < r \le \ell+1$, if we consider arbitrary sampling matrices $S_k, \dots, S_{r-1}$, each matrix $S_i$ being diagonal with $s_i$ non-zero entries,~then
\begin{align}
\label{ineq-mL-aux2}
    \Pr\Big[
        \left\|
            \phi_k^r(y) -
            \phi_k^r(z)
        \right\|_{\infty}
        > \zeta \cdot \prod_{i=k}^{r-1} s_i
    \Big]
        & \le
        \eta \cdot \prod_{i=k}^{r-1} s_i
    \enspace.
\end{align}
\end{claim}

\begin{proof}
Let us fix any $0 \le k \le \ell$ and now we will prove the claim by induction on $r - k \ge 1$.

For the basis of induction, when $r = k + 1$, observe that $\phi_k^{k+1}(y) = W_k \cdot S_k \cdot \relu(\phi_k^k(y)) = W_k \cdot S_k \cdot \relu(y)$ and $\phi_k^{k+1}(z) = W_k \cdot S_k \cdot \relu(z)$, and therefore let us consider
\begin{align*}
    \left\|
        W_k \cdot S_k \cdot (y - z)
    \right\|_{\infty}
        &=
    \left\|
        W_k \cdot S_k \cdot y - W_k \cdot S_k \cdot z
    \right\|_{\infty}
    \enspace.
\end{align*}
Observe that the column vector $S_k \cdot (y - z)$ is equal to $y-z$ on exactly $s_k$ entries (corresponding to the non-zero rows in $S_k$) and is $0$ otherwise. If we take a union bound over the contribution of these $s_k$ entries and observe that for the remaining rows are always $0$, then we obtain:
\begin{align*}
    \Pr\left[
        \left\|
            S_k \cdot (y - z)
        \right\|_{\infty}
        > \zeta
    \right]
        &\le
        s_k \cdot \eta
    \enspace.
\end{align*}
If we multiply the terms in the bound above by $W_k$, then since each entry in $W_k$ is in $[-1,1]$ and since each row in $S_k \cdot (y - z)$ has at most $s_k$ non-zero entries, we have
\begin{align*}
    \left\|
        W_k \cdot S_k \cdot (y - z)
    \right\|_{\infty}
        &\le
    s_k \cdot
    \left\|
        S_k \cdot (y - z)
    \right\|_{\infty}
    \enspace.
\end{align*}
We can combine this bound with our probability bound above to obtain,
\begin{align}
\label{ineq-mL-aux3}
    \Pr\left[
        \left\|
            W_k \cdot S_k \cdot (y - z)
        \right\|_{\infty}
        > s_k \cdot \zeta
    \right]
        &\le
        s_k \cdot \eta
    \enspace.
\end{align}
Next, let us observe that the bounds stay unchanged if we apply the ReLU activation function to the terms. In particular, firstly notice that
\begin{align*}
    \left\|
        W_k \cdot S_k \cdot \relu(y)
            -
        W_k \cdot S_k \cdot \relu(z)
    \right\|_{\infty}
        &\le
    \left\|
        W_k \cdot S_k \cdot y
            -
        W_k \cdot S_k \cdot z
    \right\|_{\infty}
\end{align*}
and therefore we can combine this bound with (\ref{ineq-mL-aux3}) to obtain the following:
\begin{align}
\label{ineq-mL-aux4}
    \Pr\left[
        \left\|
            W_k \cdot S_k \cdot \relu(y) - W_k \cdot S_k \cdot \relu(z)
        \right\|_{\infty}
        > s_k \cdot \zeta
    \right]
        &\le
        s_k \cdot \eta
    \enspace,
\end{align}
which is exactly (\ref{ineq-mL-aux2}), completing the analysis for the base of induction $r = k+1$.

Next, let us consider $0 \le k \le r \le \ell$ and assume, by induction, that (\ref{ineq-mL-aux2}) holds for $k$ and $r$; we will then show that (\ref{ineq-mL-aux2}) holds for $k$ and $r+1$.
Observe that $\phi_k^{r+1}(y) = W_r \cdot S_r \cdot \relu(\phi_k^r(y))$ and $\phi_k^{r+1}(z) = W_r \cdot S_r \cdot \relu(\phi_k^r(z))$. We will apply (\ref{ineq-mL-aux4}) with new $y' := \phi_k^r(y)$ and $z' := \phi_k^r(z)$. By induction, $\Pr[\|\phi_k^r(y) - \phi_k^r(z)\|_{\infty} > \zeta \cdot \prod_{i=k}^{r-1} s_i] \le \eta \cdot \prod_{i=k}^{r-1} s_i$, and hence we will use new $\zeta' := \zeta \cdot \prod_{i=k}^{r-1} s_i$ and $\eta' := \eta \cdot \prod_{i=k}^{r-1} s_i$. Then, (\ref{ineq-mL-aux4}) implies the following:
\begin{eqnarray*}
    \lefteqn{
    \Pr\left[
        \left\|
            \phi_k^{r+1}(y) -
            \phi_k^{r+1}(z)
        \right\|_{\infty}
        > s_r \cdot \left(\zeta \cdot \prod_{i=k}^{r-1} s_i\right)
    \right]
        =}
        \\
    &&
    \Pr\Big[
        \left\|
            W_r \cdot S_r \cdot \relu(\phi_k^r(y)) -
            W_r \cdot S_r \cdot \relu(\phi_k^r(z))
        \right\|_{\infty}
        > s_r \cdot \left(\zeta \cdot \prod_{i=k}^{r-1} s_i\right)
    \Big]
        \le
        \\
    &&
    s_r \cdot \left(\eta \cdot \prod_{i=k}^{r-1} s_i\right)
    \enspace,
\end{eqnarray*}
which is exactly the bound in  (\ref{ineq-mL-aux2}). This completes the proof of \cref{claim:ineq-mL-aux2}.
\end{proof}


\paragraph{Completing the proof of \cref{lemma:sampling-mL}.}
In what follows, we will consider an arbitrary \emph{fixed} $x \in \{0,1\}^{m_0}$.
In order to prove \cref{lemma:sampling-mL}, we study $\prod_{i=0}^{\ell} \frac{m_i}{s_i} \cdot h_{\ell+1}(x) - f_{\ell+1}(x)$.
Observe that the definition of $\phi$ implies that $h_{\ell+1}(x) = \phi_0^{\ell+1}(x)$ and $f_{\ell+1}(x) = \phi_{\ell+1}^{\ell+1}(f_{\ell+1}(x))$, and hence,
\begin{align*}
    \prod_{i=0}^{\ell} \frac{m_i}{s_i} \cdot h_{\ell+1}(x) - f_{\ell+1}(x) &=
    \prod_{i=0}^{\ell} \frac{m_i}{s_i} \cdot \phi_0^{\ell+1}(f_0(x)) - \phi_{\ell+1}^{\ell+1}(f_{\ell+1}(x))
    \enspace.
\end{align*}
Next, using that $\prod_{i=\ell+1}^{\ell} \frac{m_i}{s_i} = 1$, we can use telescoping sum to obtain
\junk
{\Artur{We have the following:
\begin{align*}
    \sum_{k=0}^{\ell} \left(
        \prod_{i=k}^{\ell} \frac{m_i}{s_i} \cdot \phi_k^{\ell+1}(f_k(x)) -
        \prod_{i=k+1}^{\ell} \frac{m_i}{s_i} \cdot \phi_{k+1}^{\ell+1}(f_{k+1}(x))
    \right)
        &=
    \sum_{k=0}^{\ell} \prod_{i=k}^{\ell} \frac{m_i}{s_i} \cdot \phi_k^{\ell+1}(f_k(x)) -
    \sum_{k=0}^{\ell} \prod_{i=k+1}^{\ell} \frac{m_i}{s_i} \cdot \phi_{k+1}^{\ell+1}(f_{k+1}(x))
        \\&=
    \sum_{k=0}^{\ell} \prod_{i=k}^{\ell} \frac{m_i}{s_i} \cdot \phi_k^{\ell+1}(f_k(x)) -
    \sum_{k=1}^{\ell+1} \prod_{i=k}^{\ell} \frac{m_i}{s_i} \cdot \phi_k^{\ell+1}(f_k(x))
        \\&=
    \prod_{i=0}^{\ell} \frac{m_i}{s_i} \cdot \phi_0^{\ell+1}(f_k(x)) -
    \prod_{i=\ell+1}^{\ell} \frac{m_i}{s_i} \cdot \phi_{\ell+1}^{\ell+1}(f_k(x))
        \\&=
    \prod_{i=0}^{\ell} \frac{m_i}{s_i} \cdot \phi_0^{\ell+1}(f_k(x)) -
    \phi_{\ell+1}^{\ell+1}(f_{\ell+1}(x))
    \enspace.
\end{align*}
}}
\begin{align*}
    \prod_{i=0}^{\ell} \frac{m_i}{s_i} \cdot h_{\ell+1}(x) - f_{\ell+1}(x) &=
    \sum_{k=0}^{\ell} \left(
        \prod_{i=k}^{\ell} \frac{m_i}{s_i} \cdot \phi_k^{\ell+1}(f_k(x)) -
        \prod_{i=k+1}^{\ell} \frac{m_i}{s_i} \cdot \phi_{k+1}^{\ell+1}(f_{k+1}(x))
    \right)
    \enspace,
\end{align*}
and hence
\begin{align}
    \notag
    \left|
        \prod_{i=0}^{\ell} \frac{m_i}{s_i} \cdot h_{\ell+1}(x) - f_{\ell+1}(x)
    \right|
        &\le
    \sum_{k=0}^{\ell}
        \left|
            \prod_{i=k}^{\ell} \frac{m_i}{s_i} \cdot \phi_k^{\ell+1}(f_k(x)) -
            \prod_{i=k+1}^{\ell} \frac{m_i}{s_i} \cdot \phi_{k+1}^{\ell+1}(f_{k+1}(x))
    \right|
        \\
        \label{ineq-mL-aux6}
        &=
    \sum_{k=0}^{\ell}
        \left(\prod_{i=k+1}^{\ell} \frac{m_i}{s_i} \cdot
        \left|
            \frac{m_k}{s_k} \cdot \phi_k^{\ell+1}(f_k(x)) -
            \phi_{k+1}^{\ell+1}(f_{k+1}(x))
        \right|
        \right)
    \enspace.
\end{align}

In view of (\ref{ineq-mL-aux6}), we want to bound the probability that $|\frac{m_k}{s_k} \cdot \phi_k^{\ell+1}(f_k(x)) - \phi_{k+1}^{\ell+1}(f_{k+1}(x))|$ is large for all $0 \le k \le \ell$; for that, we will use \cref{claim:ineq-mL-aux2}.
In order to incorporate \cref{claim:ineq-mL-aux1}, we will use \cref{claim:ineq-mL-aux2} with $y = \frac{m_k}{s_k} \cdot W_k \cdot S_k \cdot \relu(f_k(x))$ and $z = W_k \cdot \relu(f_k(x))$.
Observe that $\|\relu(f_k(x))\|_{\infty} \le \prod_{i=0}^{k-1} m_i$.
Therefore \cref{claim:ineq-mL-aux1} implies that for any $0 \le k \le \ell$ and $1 \le j \le m_k$, for any $0 < \kappa_k, \lambda_k < 1$ with $s_k \ge \frac{2 \ln(2/\lambda_k)}{\kappa_k^2}$, we have the following:
\begin{align*}
    \Pr\left[
        \left|
            \frac{m_k}{s_k} \cdot W_k^{(j)} \cdot S_k \cdot \relu(f_k(x)) -
            W_k^{(j)} \cdot \relu(f_k(x))
        \right|
        > \kappa_k \cdot \prod_{i=0}^k m_i
    \right]
    &\le
    \lambda_k
    \enspace.
\end{align*}
\junk{
\begin{align*}
    \Pr\left[
        \left\|
            \frac{1}{s_k} \cdot \relu\left(W_k \cdot S_k \cdot \relu(f_k(x)) \right) -
            \frac{1}{m_k} \cdot \relu\left(W_k \cdot \relu(f_k(x)) \right)
        \right\|_{\infty}
        > \kappa_k \cdot \prod_{i=0}^k m_i
    \right]
    &\le
    \lambda_k
    \enspace.
\end{align*}
}

We combine this with \cref{claim:ineq-mL-aux2} for $\phi_{k+1}^{\ell+1}(\cdot)$ with $y = \frac{m_k}{s_k} \cdot W_k \cdot S_k \cdot \relu(f_k(x))$, $z = W_k \cdot \relu(f_k(x))$, $\zeta = \kappa_k \cdot \prod_{i=0}^k m_i$, and $\eta = \lambda_k$. Since with our notation $\phi_{k+1}^{\ell+1}(W_k S_k \relu(f_k(x))) = \phi_k^{\ell+1}(f_k(x))$ and $\phi_{k+1}^{\ell+1}(W_k \relu(f_k(x))) = \phi_{k+1}^{\ell+1}(f_{k+1}(x))$, by \cref{claim:ineq-mL-aux2} we obtain the following:
\begin{eqnarray}
    \notag
    \lefteqn{
    \Pr\left[
        \left|
            \frac{m_k}{s_k} \cdot \phi_k^{\ell+1}(f_k(x)) -
            \phi_{k+1}^{\ell+1}(f_{k+1}(x))
        \right|
        > \kappa_k \cdot \prod_{i=0}^k m_i \cdot \prod_{i=k+1}^{\ell} s_i
    \right]
        =
    }
        \\
    \notag
        &&
    \Pr\left[
        \left|
            \phi_{k+1}^{\ell+1}\left(\frac{m_k}{s_k} \cdot W_k \cdot S_k \cdot \relu(f_k(x))\right) -
            \phi_{k+1}^{\ell+1}(W_k \cdot \relu(f_k(x)))
        \right|
        > \kappa_k \cdot \prod_{i=0}^k m_i \cdot \prod_{i=k+1}^{\ell} s_i
    \right]
        \\
        \label{ineq-mL-aux7}
        && =
    \Pr\left[
        \left|
            \phi_{k+1}^{\ell+1}(y) -
            \phi_{k+1}^{\ell+1}(z)
        \right|
        > \zeta \cdot \prod_{i=k+1}^{\ell} s_i
    \right]
        \le
        \lambda_k \cdot \prod_{i=k+1}^{\ell} s_i
    \enspace.
\end{eqnarray}
(Since the values of $\phi_{k+1}^{\ell+1}$ are real, we replaced the $\|\cdot\|_{\infty}$-norm from \cref{claim:ineq-mL-aux2} by $|\cdot|$, and since $\phi_{k+1}^{\ell+1}$ is homogenous, we have $\frac{m_k}{s_k} \cdot \phi_{k+1}^{\ell+1}(W_k \cdot S_k \cdot \relu(f_k(x))) = \phi_{k+1}^{\ell+1}(\frac{m_k}{s_k} \cdot W_k \cdot S_k \cdot \relu(f_k(x)))$.)

Let $\kappa_0, \dots, \kappa_{\ell}, \lambda_0, \dots, \lambda_{\ell} > 0$ and assume that $s_k \ge \frac{2 \ln(2/\lambda_k)}{\kappa_k^2}$ for every $0 \le k \le \ell$. Now, we plug the bound from (\ref{ineq-mL-aux7}) into (\ref{ineq-mL-aux6}), and after applying union bound with $\vartheta = \kappa_0 + \dots + \kappa_k$, we obtain
\begin{eqnarray*}
    \lefteqn{
    \Pr\left[\left| \prod_{i=0}^{\ell} \frac{m_i}{s_i} \cdot h_{\ell+1}(x) - f_{\ell+1}(x) \right|
        >
    \vartheta \cdot \prod_{i=0}^{\ell} m_i \right]
        \le
    }
        \\&&
    \sum_{k=0}^{\ell}
        \Pr\left[
            \left|
                \frac{m_k}{s_k} \cdot \phi_k^{\ell+1}(f_k(x)) -
                \phi_{k+1}^{\ell+1}(f_{k+1}(x))
            \right|
                >
            \kappa_k \cdot \prod_{i=0}^{\ell} m_i \cdot \prod_{i=k+1}^{\ell} \frac{s_i}{m_i}
        \right]
        =
        \\&&
    \sum_{k=0}^{\ell}
        \Pr\left[
            \left|
                \frac{m_k}{s_k} \cdot \phi_k^{\ell+1}(f_k(x)) -
                \phi_{k+1}^{\ell+1}(f_{k+1}(x))
            \right|
                >
            \kappa_k \cdot \prod_{i=0}^k m_i \cdot \prod_{i=k+1}^{\ell} s_i
        \right]
        \le
        \\&&
    \sum_{k=0}^{\ell}
        \left(\lambda_k \cdot \prod_{i=k+1}^{\ell} s_i\right)
    \enspace.
\end{eqnarray*}
Following the approach for single hidden layer ReLU networks, we take all $\kappa_k$ to be the same and all terms $\lambda_k \cdot \prod_{i=k+1}^{\ell} s_i$ to be the same. That is, for any $\vartheta, \lambda > 0$, for every $0 \le k \le \ell$, we set
$\kappa_k := \frac{\vartheta}{\ell+1}$ and $\lambda_k := \frac{\lambda (\ell+1)}{\prod_{i=k+1}^{\ell} s_i}$ with constraint $s_k \ge
\frac{2 \ln(2/\lambda_k)}{\kappa_k^2} = \frac{2 (\ell+1)^2 \ln(2 \prod_{i=k+1}^{\ell} s_i/(\lambda(\ell+1)))}{\vartheta^2}$, to get
\begin{align*}
    \Pr\left[\left| \prod_{i=0}^{\ell} \frac{m_i}{s_i} \cdot h_{\ell+1}(x) - f_{\ell+1}(x) \right|
        >
    \vartheta \cdot \prod_{i=0}^{\ell} m_i \right]
        &\le
    \lambda \enspace.
    \qedhere
\end{align*}
\end{proof}


\subsubsection{Extending \cref{lemma:sampling-mL} to hold (\emph{existentially}) for many inputs}
\label{subsubsec:sampling-mL-delta}


\cref{lemma:sampling-mL} shows that for any single input, the sub-network of any ReLU network with randomly selected nodes in each layer (with further conditions specified in the lemma) will return a scaled value which is very similar to the value returned by the original network. In this section, we transform that claim to hold for \emph{most of the inputs at the same time}, by showing an auxiliary \emph{existential} claim that there is at least one choice of $S_0, \dots, S_{\ell}$ such that for at least a $(1-\lambda)$-fraction of the inputs 
a scaled value is very similar to the value returned by the original network.

\begin{lemma}
\label{lemma:sampling-mL-existance}
Let $(W_0, \dots, W_{\ell})$ be a ReLU network with $m_0$ input nodes, $\ell \ge 1$ hidden layers with $m_1, \dots, m_{\ell}$ hidden layer nodes each, and a single output node.
Let $f:\{0,1\}^{m_0} \rightarrow \{0,1\}$ be the function computed by ReLU network $(W_0, \dots, W_{\ell})$, which is $f(x) := \sgn(\relu(f_{\ell+1}(x)))$, where
\begin{align*}
    f_i(x) &:=
        \begin{cases}
            W_{i-1} \cdot \relu(f_{i-1}(x)) & \text{ if } 1 \le i \le \ell+1, \\
            x & \text{ if } i=0.
        \end{cases}
\end{align*}
For any sampling matrices $S_0, \dots, S_{\ell}$ corresponding to samples $\sset_0, \dots, \sset_{\ell}$ of size $s_0, \dots, s_{\ell}$, respectively, let $h :\{0,1\}^{m_0} \rightarrow \{0,1\}$ be the function computed by ReLU network $(W_0 \cdot S_0, \dots, W_{\ell} \cdot S_{\ell})$, that is, $h$ is defined recursively as $h(x) := \sgn(\relu(h_{\ell+1}(x)))$, where
\begin{align*}
    h_i(x) &:=
        \begin{cases}
            W_{i-1} \cdot S_{i-1} \cdot \relu(h_{i-1}(x)) & \text{ if } 1 \le i \le \ell+1, \\
            x & \text{ if } i=0.
        \end{cases}
\end{align*}
Let $0 < \vartheta \le 1$ and $0 < \lambda < \frac{1}{\ell+1}$. Assume that $s_k \ge \frac{2 (\ell+1)^2 \ln(2 \prod_{i=k+1}^{\ell} s_i/(\lambda(\ell+1)))}{\vartheta^2}$ for every $0 \le k \le \ell$. Then there is at least one choice of sampling matrices $S_0, \dots, S_{\ell}$ such that for at most a $\lambda$-fraction of the inputs $x \in \{0,1\}^{m_0}$ it holds the following:
\begin{align*}
    \left| \prod_{i=0}^{\ell} \frac{m_i}{s_i} \cdot h_{\ell+1}(x) - f_{\ell+1}(x) \right|
        &>
    \vartheta \cdot \prod_{i=0}^{\ell} m_i
    \enspace.
\end{align*}
\end{lemma}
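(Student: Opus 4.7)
The plan is to deduce this existential statement from \cref{lemma:sampling-mL} by a straightforward probabilistic method / averaging (Fubini) argument. The hypotheses on $s_0, \dots, s_\ell$, on $\vartheta$, and on $\lambda$ in the current lemma are precisely those needed to invoke \cref{lemma:sampling-mL} for any single fixed input.

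First, I would introduce, for any choice of sampling matrices $S_0, \dots, S_\ell$ and any input $x \in \{0,1\}^{m_0}$, the indicator
\begin{align*}
    Z(x, S_0, \dots, S_\ell) &:=
    \mathbf{1}\!\left[\,\left|\prod_{i=0}^{\ell} \tfrac{m_i}{s_i} \cdot h_{\ell+1}(x) - f_{\ell+1}(x)\right| > \vartheta \cdot \prod_{i=0}^{\ell} m_i\,\right]
\end{align*}
(where $h_{\ell+1}$ depends on the sampling matrices as in the lemma statement). For every fixed $x$, \cref{lemma:sampling-mL} directly yields
\begin{align*}
    \Pr_{S_0, \dots, S_\ell}\!\left[Z(x, S_0, \dots, S_\ell) = 1\right] &\le \lambda \enspace,
\end{align*}
where the sampling matrices are drawn according to the random sampling process described in \cref{lemma:sampling-mL}.

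Next, I would average this bound over $x$ chosen uniformly from $\{0,1\}^{m_0}$ and exchange the order of expectations. Writing $\mathbf{S} = (S_0, \dots, S_\ell)$ for brevity, Fubini gives
\begin{align*}
    \mathbb{E}_{\mathbf{S}}\!\left[\Pr_x[Z(x, \mathbf{S}) = 1]\right]
    &= \mathbb{E}_{\mathbf{S}} \mathbb{E}_x [Z(x, \mathbf{S})]
    = \mathbb{E}_x \mathbb{E}_{\mathbf{S}} [Z(x, \mathbf{S})]
    = \mathbb{E}_x \!\left[\Pr_{\mathbf{S}}[Z(x, \mathbf{S}) = 1]\right]
    \le \lambda \enspace.
\end{align*}
Since the expectation over $\mathbf{S}$ of the quantity $\Pr_x[Z(x, \mathbf{S}) = 1]$ is at most $\lambda$, there must exist at least one realization $\mathbf{S}^* = (S_0^*, \dots, S_\ell^*)$ for which $\Pr_x[Z(x, \mathbf{S}^*) = 1] \le \lambda$. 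Equivalently, for that choice of sampling matrices, at most a $\lambda$-fraction of inputs $x \in \{0,1\}^{m_0}$ satisfy the bad inequality, which is exactly the conclusion of the lemma.

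There is no real technical obstacle here; the only things to check are that the hypotheses on $s_k$ stated in \cref{lemma:sampling-mL-existance} coincide with those required by \cref{lemma:sampling-mL} (they do), and that the random sampling model used in \cref{lemma:sampling-mL} is the same product distribution over $(S_0, \dots, S_\ell)$ assumed implicitly here, so that Fubini applies. The statement is purely existential, so no concentration over inputs (and hence no additional dependence on $m_0$) is needed — this is precisely what makes the bound useful for designing a tester whose query complexity does not blow up with the input dimension.
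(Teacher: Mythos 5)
Your argument is correct and is essentially the paper's proof, just phrased in the language of expectations and Fubini rather than as a discrete double-counting over the finite set of (input, sampling-tuple) pairs: both reduce to the observation that since for each fixed $x$ the probability of being bad over random $(S_0,\dots,S_\ell)$ is at most $\lambda$, the average (over random sampling tuples) of the fraction of bad inputs is also at most $\lambda$, so some sampling tuple achieves at most $\lambda$. No gap.
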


\begin{proof}
Let $\IN := \{0,1\}^{m_0}$ be the set of all input instances.
Let $\mathcal{S}$ be the set of all possible selections of the nodes in sets $\sset_0, \dots, \sset_{\ell}$, that is, $\mathcal{S} := \{\langle \sset_0, \dots, \sset_{\ell}\rangle: \sset_i \subseteq V_i  \wedge |\sset_i| = s_i \text{ for all } 0 \le i \le \ell\}$.

Next, we introduce some \emph{bad} sets and pairs, and \emph{obstructive} tuples. Let us call tuple $\langle \sset_0, \dots, \sset_{\ell}\rangle \in \mathcal{S}$ to be \emph{bad for input $x_0 \in \IN$} if after fixing sets $\sset_0, \dots, \sset_{\ell}$ to define function $h(x_0)$, we have
\begin{align*}
    \left| \prod_{i=0}^{\ell} \frac{m_i}{s_i} \cdot h_{\ell+1}(x_0) - f_{\ell+1}(x_0) \right|
        >
    \vartheta \cdot \prod_{i=0}^{\ell} m_i
    \enspace.
\end{align*}
For any $x \in \IN$ and $\mathfrak{s} \in \mathcal{S}$, we call pair $\langle x, \mathfrak{s} \rangle$ \emph{bad} if $\mathfrak{s}$ is bad for input $x$. Let us call a tuple $\mathfrak{s} \in \mathcal{S}$ \emph{$\lambda$-obstructive} if for more than a $\lambda$-fraction of inputs $x \in \IN$, pairs $\langle x, \mathfrak{s} \rangle$ are bad.

Observe that one way of reading \cref{lemma:sampling-mL} is that (under the assumptions of \cref{lemma:sampling-mL}) for every $x \in \IN$, at most a $\lambda$ fraction of the tuples in $\mathcal{S}$ are bad, that is, at most $\lambda \cdot |\mathcal{S}|$ tuples are bad. This immediately implies that at most $\lambda \cdot |\IN| \cdot |\mathcal{S}|$ pairs $\langle x, \mathfrak{s} \rangle$ with $x \in \IN$ and $\mathfrak{s} \in \mathcal{S}$ are bad.

Next, observe that if there are $t$ tuples in $\mathcal{S}$ that are $\lambda$-obstructive, then there are more than $t \cdot \lambda \cdot |\IN|$ pairs $\langle x, \mathfrak{s} \rangle$ with $x \in \IN$ and $\mathfrak{s} \in \mathcal{S}$ that are bad. However, since there are at most $\lambda \cdot |\IN| \cdot |\mathcal{S}|$ pairs $\langle x, \mathfrak{s} \rangle$ with $x \in \IN$ and $\mathfrak{s} \in \mathcal{S}$ that are bad, we must have $t \cdot \lambda \cdot |\IN| < \lambda \cdot |\IN| \cdot |\mathcal{S}|$, and thus $t < |\mathcal{S}|$. This means that there are less than $|\mathcal{S}|$ tuples $\mathfrak{s} \in \mathcal{S}$ that are $\lambda$-obstructive, or equivalently, there is at least one tuple $\mathfrak{s} \in \mathcal{S}$ that is \emph{not} $\lambda$-obstructive. This completes the proof.
\end{proof}


\subsubsection{If $(\epsilon, \delta)$-far then there is a bad input (extending \cref{lemma:output} to multiple layers)}
\label{subsubsec:output-mL}

To analyze the correctness of algorithm \textsc{AllZeroTesterMHL}, as in \cref{sec:ReLU-testing-0-OR-function-2-sided}, we prove a structural lemma (cf. \cref{lemma:output}), which argues that if a ReLU network is $(\epsilon, \delta)$-far from computing the constant $0$-function then there is an input $x$ on which the network computes a large value. (Observe that \cref{lemma:output-mL} is existential, and unlike our other arguments later in \cref{subsubsec:ReLU-testing-0-function-2-sided-far-mL} and \cref{subsubsec:ReLU-testing-0-function-2-sided-compute-mL}, it is independent of \textsc{AllZeroTester2HL} and of the sampling matrices $S_0, \dots, S_{\ell}$).

\begin{lemma}
\label{lemma:output-mL}
Let $(W_0, \dots, W_{\ell})$ be a ReLU network with $m_0$ input nodes, $\ell \ge 1$ hidden layers with $m_1, \dots, m_{\ell}$ hidden layer nodes each, and a single output node.
Let $0 < \epsilon < \frac12$.
Further, suppose that for an arbitrary parameter $0 < \probp < \frac{1}{\ell+1}$, it holds that $m_0 \ge 128 (\ell+1)^2 (2/\epsilon)^{2\ell} \ln(\frac{1}{\probp} \prod_{i=1}^{\ell} m_i)$ and that for every $1 \le k \le \ell$ holds $m_k \ge 171 (\ell+1)^2 (2/\epsilon)^{2\ell} \ln(\frac{1}{\probp} \prod_{i=k+1}^{\ell} m_i)$.
Let $\delta \ge \probp + e^{-m_0/16}$.

Let $f:\{0,1\}^{m_0} \rightarrow \{0,1\}$ be the function computed by ReLU network $(W_0, \dots, W_{\ell})$, which is defined recursively as $f(x) := \sgn(\relu(f_{\ell+1}(x)))$, where
\begin{align*}
    f_i(x) &:=
        \begin{cases}
            W_{i-1} \cdot \relu(f_{i-1}(x)) & \text{ if } 1 \le i \le \ell, \\
            x & \text{ if } i=0.
        \end{cases}
\end{align*}
If $(W_0, \dots, W_{\ell})$ is $(\epsilon, \delta)$-far from computing the constant $0$-function then there exists an input $x \in \{0,1\}^{m_0}$ such that
\begin{align*}
    f_{\ell+1}(x) & > \frac18 \cdot (\epsilon/2)^{\ell} \cdot \prod_{i=0}^{\ell} m_i
    \enspace.
\end{align*}
\end{lemma}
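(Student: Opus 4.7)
The plan is to argue by contradiction, following the template of the single-hidden-layer case (\cref{lemma:output}) but carefully propagating modifications through all layers. Assume $f_{\ell+1}(x)\le B_\ell := \tfrac{1}{8}(\epsilon/2)^\ell \prod_{i=0}^{\ell} m_i$ for every $x \in \{0,1\}^{m_0}$. I will exhibit modifications to $(W_0,\dots,W_\ell)$ within the allotted budget of $\epsilon m_k m_{k+1}$ entries of $W_k$ per layer such that the resulting network outputs $0$ on at least a $1-\delta$ fraction of inputs, contradicting $(\epsilon,\delta)$-farness. At the output layer, either fewer than $(\epsilon/2)m_\ell$ entries of $W_\ell$ are positive---in which case zeroing them out makes the output non-positive on every input and we are done---or we pick $U_\ell \subseteq V_\ell$ of size $\lfloor (\epsilon/2) m_\ell\rfloor$ with positive outgoing weights and reset those weights to $-1$. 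For each $1 \le k \le \ell-1$, pick $U_k\subseteq V_k$ of size $\lfloor (\epsilon/2) m_k\rfloor$ uniformly at random, and overwrite every row of $W_{k-1}$ indexed by $U_k$ (and of $W_0$ indexed by $U_1$) with the all-ones vector. This uses $|U_k|\cdot m_{k-1}\le(\epsilon/2)m_{k-1}m_k$ entries of $W_{k-1}$, well within the budget.

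A direct induction on $k$ shows that in the modified network every $u\in U_k$ satisfies $\val^k_u\ge \|x\|_1 \cdot (\epsilon/2)^{k-1}\prod_{i=1}^{k-1} m_i$: the $U_1$ nodes compute exactly $\|x\|_1$, and since all incoming weights into $u\in U_k$ are $+1$ for $k\ge 2$, we have $\val^k_u=\sum_{v\in V_{k-1}}\relu(\val^{k-1}_v)\ge\sum_{v\in U_{k-1}}\val^{k-1}_v$ by the induction hypothesis (which is non-negative). In particular the $U_\ell$ nodes, now connected to the output via $-1$ weights, contribute at most $-(\epsilon/2)^\ell\,\|x\|_1\prod_{i=1}^\ell m_i$ to the modified output, which for $\|x\|_1\ge m_0/4$ is bounded above by $-2B_\ell$. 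The Chernoff estimate (\cref{claim:NumberOfOnes}) guarantees that $\|x\|_1\ge m_0/4$ fails for at most an $e^{-m_0/16}$-fraction of inputs.

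The main technical obstacle is controlling the residual contribution of the unchanged nodes $V_\ell\setminus U_\ell$ to the modified output, because modifications at each intermediate layer propagate forward and perturb the values at every subsequent layer in a way that, in the worst case, can be as large as $\prod_{i=0}^{\ell-1}m_i$ per node---completely swamping the negative $-2B_\ell$ contribution of $U_\ell$. The resolution I would pursue is to exploit the randomness of $U_1,\dots,U_{\ell-1}$: for any fixed input $x$ and fixed node $v\in V_k\setminus U_k$, the perturbation $\val^{(\mathrm{mod})}_v-\val^{(\mathrm{orig})}_v$ is a sum over the uniformly random sample $U_{k-1}$ whose individual summands are bounded by $\prod_{i=0}^{k-2}m_i$, so Hoeffding/McDiarmid-style concentration (\cref{lemma:McDiarmid}), combined with a layerwise induction and a union bound over the $\sum_{k=1}^\ell m_k$ relevant nodes and the $2^{m_0}$ inputs, implies that except on a $\probp$-fraction of inputs the aggregate perturbation at every node of $V_\ell\setminus U_\ell$ is at most $\frac{B_\ell}{4 m_\ell}$. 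The size hypotheses $m_k\ge 171(\ell+1)^2(2/\epsilon)^{2\ell}\ln(\prod_i m_i/\probp)$ in the statement are tuned precisely so that this concentration-and-union-bound argument closes, and the factor $(2/\epsilon)^{2\ell}$ comes from needing Hoeffding deviations scaled by $(\epsilon/2)^\ell$ across $\ell$ layers.

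Combining the ingredients: since the entries of $W_\ell$ at $V_\ell\setminus U_\ell$ are unchanged, and the nodes of $U_\ell$ originally contributed non-negatively to $f_{\ell+1}(x)$ (their $W_\ell$-weights being positive and $\relu(\cdot)\ge 0$), the unperturbed restriction $\sum_{v\in V_\ell\setminus U_\ell}W_\ell^{\mathrm{orig}}[v]\,\relu(\val^{(\mathrm{orig}),\ell}_v)$ is upper bounded by the original output $f_{\ell+1}(x)\le B_\ell$; adding the $m_\ell$-weighted residual perturbation gives at most $B_\ell+\tfrac14 B_\ell$ for the modified contribution of $V_\ell\setminus U_\ell$. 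Together with the $-2B_\ell$ contribution of $U_\ell$, the modified output is strictly negative on a $(1-\probp-e^{-m_0/16})\ge 1-\delta$ fraction of inputs, so the modified network is $(\epsilon,\delta)$-close to the constant $0$-function. This contradicts the farness hypothesis and yields the claimed input~$x$.
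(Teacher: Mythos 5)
The overall contradiction strategy and the construction of the large negative contribution through the $U_k$ chain match the paper, but the proposed resolution of the ``residual contribution'' problem is not sound. The core difficulty you identify is genuine, and the paper's proof explicitly notes that the single-layer arguments from \cref{lemma:output} fail for $\ell\ge 3$ and that a new ingredient --- \cref{lemma:sampling-mL-existance} --- is required. Your concentration-plus-union-bound plan does not supply that ingredient, for two reasons.

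First, after you overwrite the incoming rows of $U_k$ with all-ones, the inflated values at $U_{k-1}$ still flow into every $v\in V_k\setminus U_k$ through the \emph{unchanged} edges $W_{k-1}[v,u]$. The direct perturbation $\sum_{u\in U_{k-1}}W_{k-1}[v,u]\bigl(\relu(\val^{\mathrm{mod}}_u)-\relu(\val^{\mathrm{orig}}_u)\bigr)$ is not a mean-zero random variable: over the random choice of $U_{k-1}$, it concentrates around roughly $(\epsilon/2)\,c_{k-1}\sum_u W_{k-1}[v,u]$, where $c_{k-1}$ is the common inflated value of $U_{k-1}$-nodes, and $\sum_u W_{k-1}[v,u]$ is a fixed number that can be as large as $m_{k-1}$ in absolute value. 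The resulting expectation is of order $(\epsilon/2)^{k-1}\prod_{i=0}^{k-1}m_i$, which is a factor $\Theta(1/\epsilon)$ larger than your target $B_\ell/(4m_\ell)=\frac{1}{32}(\epsilon/2)^{\ell}\prod_{i=0}^{\ell-1}m_i$ already for $k=\ell$. No amount of concentration can shrink a systematic bias.

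The paper avoids this by an additional modification you omit: it also sets to $0$ the edges from the modified nodes $\IN_k$ to the sampled nodes $\sset^*_{k+1}$, thereby \emph{disconnecting} the two subnetworks. After disconnection the $\sset^*$-part computes exactly the subsampled value $h^*_{\ell+1}(x)$, with no contamination from the inflated $\IN$-nodes, and the $\IN$-part produces the controlled negative contribution. This disconnection is indispensable and is not recoverable by concentration.

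Second, even after disconnection, the remaining $\sset^*$-subnetwork does not compute $f_{\ell+1}(x)$; it computes a randomly subsampled network whose value is close to $(1-\epsilon/2)^{\ell}f_{\ell+1}(x)$, not to $f_{\ell+1}(x)$. Relating the two requires the approximation \cref{lemma:sampling-mL-existance}. That lemma is proved by an averaging/counting argument showing there exists one tuple of samples that is good for all but a $\probp$-fraction of inputs, precisely because a union bound over all $2^{m_0}$ inputs --- which you invoke --- is not available: the stated lower bounds on $m_k$ are only logarithmic in $\prod m_i/\probp$, far short of the linear-in-$m_0$ growth a $2^{m_0}$-fold union bound would demand. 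You would also need to keep track of the scaling factors $(1-\epsilon/2)^{\ell}$, which your write-up does not.

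In short: the missing ideas are (i) disconnecting $\IN_k$ from $\sset^*_{k+1}$ so that the sampled subnetwork is exactly isolated, and (ii) the existential approximation argument of \cref{lemma:sampling-mL-existance} comparing the sampled subnetwork's output with the scaled original, in place of a per-input union bound.
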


\begin{remark}\rm
    \Artur{Do we need both \cref{remark:auxiliary-to-lemma:output-mL-short} and \cref{remark:auxiliary-to-lemma:output-mL}?}
\label{remark:auxiliary-to-lemma:output-mL-short}
The auxiliary parameter $\probp$ in our analysis has two uses: on one hand, it puts some constraints on $m_0, \dots, m_{\ell}$ and on the other hand, it allows us to parameterize $\delta$. In general, one would hope for $\probp$ to be close to $2^{-\Theta(m_0)}$, and in particular, the arguments used in \cref{claim:auxiliary-to-lemma:output-mL} allow us to obtain $\probp$ as low as $e^{-\Theta((\epsilon/2)^{2\ell} \cdot \min_{0 \le k \le \ell}\{m_k\}/\ell^2)}$; for example, if $m_0 = \dots = m_{\ell}$, then we have $\probp \ge e^{-\Theta(m_0 \cdot (\epsilon/2)^{2\ell^2}/\ell^2)}$.
\end{remark}

\begin{proof}[Proof of \cref{lemma:output-mL}]
While some high-level arguments used in the proof can be seen as an extensions of the proof of \cref{lemma:output}(1), rather surprisingly, some of the key arguments used in the
    proof of \cref{lemma:output}
are insufficient for the proof of \cref{lemma:output-mL} for $\ell \ge 3$, and so some new arguments are required. In particular, our analysis relies on \cref{lemma:sampling-mL-existance}.

For the purpose of contradiction, we will assume that $(W_0, \dots, W_{\ell})$ is $(\epsilon, \delta)$-far from computing the constant $0$-function and that for every input $x \in \{0,1\}^{m_0}$ the output value is at most $\tfrac12 \cdot (\epsilon/2)^{\ell} \cdot \prod_{i=0}^{\ell} m_i$, and then we will show that the ReLU network is $(\epsilon, \delta)$-close to computing the constant $0$-function, yielding contradiction.

Our approach is to modify the network in at most an $\epsilon$-fraction of edge weights at every layer, by splitting it into two separate parts by selecting some sets of nodes $\IN_1, \dots, \IN_{\ell}$ at hidden layers and then considering the ReLU network induced by these nodes and another network which ignores these nodes. We will be able to ensure that the network induced by the appropriately selected nodes contributes to a very low value, and the network induced by the remaining nodes contributes to a value close to the value returned by the original network. Since such a construction can be done using changes of at most an $\epsilon$-fraction of edge weights at every layer, and since the original network is $(\epsilon, \delta)$-far from computing the constant $0$-function, we will be able to argue that for some input $x$, the value returned by the original network is larger than $\tfrac18 \cdot (\epsilon/2)^{\ell} \cdot \prod_{i=0}^{\ell} m_i$.

Let us first introduce some useful notation (see also Figure~\ref{fig:ReLU-network-mL}).
Let $V_0$ be the set of input nodes, let $V_k$ the set of hidden layer nodes at layer $1 \le k \le \ell$, and let $V_{\ell+1}$ be the set of output nodes; if we set $m_{\ell+1} := 1$ then for every $0 \le k \le \ell+1$, we have $|V_k| = m_k$.
Let $E_0$ denote the set of edges between the input nodes and the first hidden layer, for any $1 \le k < \ell$, let $E_k$ denote the set of edges between the $k$-th hidden layer and the $(k+1)$-st hidden layer, and $E_{\ell}$ denote the set of edges between the $\ell$-th hidden layer and the output node.

Next, we will use \cref{lemma:sampling-mL-existance}. 
Let $s^*_0 := m_0$ and $s^*_k := (1-\frac12 \epsilon) m_k$ for $1 \le k \le \ell$. Our initial conditions of \cref{lemma:output-mL} ensure that with setting $\vartheta := \frac18 \cdot (\epsilon/2)^{\ell}$, we have $s^*_k \ge \frac{2 (\ell+1)^2 \ln(2 \prod_{i=k+1}^{\ell} s^*_i/(\probp(\ell+1)))}{\vartheta^2}$ for every $0 \le k \le \ell$.%
\footnote{Observe that conditions  $s^*_k \ge \frac{2 (\ell+1)^2 \ln(2 \prod_{i=k+1}^{\ell} s^*_i/(\probp(\ell+1)))}{\vartheta^2}$ for every $0 \le k \le \ell$ imply with our setting of $s^*_0, \dots, s^*_{\ell}$, the following constraints on $m_0, \dots, m_{\ell}$: $m_0 \ge \frac{2 (\ell+1)^2 \ln(2 (1-\frac12 \epsilon)^{\ell} \prod_{i=1}^{\ell} m_i/(\probp(\ell+1)))}{\vartheta^2}$ and that for $1 \le k \le \ell$ holds $m_k \ge \frac{2 (\ell+1)^2 \ln(2 (1-\frac12 \epsilon)^{\ell-k} \prod_{i=k+1}^{\ell} m_i/(\probp(\ell+1)))}{(1-\frac12 \epsilon) \vartheta^2}$. And thus our setting $\vartheta := \frac18 \cdot (\epsilon/2)^{\ell}$ combined with the lemma's assumptions that $m_0 \ge 128 (\ell+1)^2 (2/\epsilon)^{2\ell} \ln(\frac{1}{\probp} \prod_{i=1}^{\ell} m_i)$ and that for every $1 \le k \le \ell$ holds
$m_k \ge 171 (\ell+1)^2 (2/\epsilon)^{2\ell} \ln(\frac{1}{\probp} \prod_{i=k+1}^{\ell} m_i)$, imply these conditions.}
Therefore, by \cref{lemma:sampling-mL-existance}, there is a sequence of sets $\sset^*_0, \dots, \sset^*_{\ell}$ such that
\begin{itemize}
\item $s_k = |\sset^*_k|$ for every $0 \le k \le \ell$, and
\item if we take sampling matrices $S^*_0, \dots, S^*_{\ell}$ corresponding to the nodes from $\sset^*_0, \sset^*_1, \dots, \sset^*_{\ell}$, then for the function from $\{0,1\}^{m_0}$ to $\RR^{m_i}$:
    \begin{align*}
        h^*_i(x) :=
            \begin{cases}
                W_{i-1} \cdot S^*_{i-1} \cdot \relu(h^*_{i-1}(x)) & \text{ if } 1 \le i \le \ell+1, \\
                x & \text{ if } i=0,
            \end{cases}
    \end{align*}
for any $0 < \vartheta \le 1$, $0 < \delta < \frac{1}{\ell+1}$, for at most a $\probp$-fraction of the inputs $x \in \{0,1\}^{m_0}$ we have,
\begin{align*}
    \left| \prod_{i=0}^{\ell} \frac{m_i}{s^*_i} \cdot h^*_{\ell+1}(x) - f_{\ell+1}(x) \right|
        &>
    \vartheta \cdot \prod_{i=0}^{\ell} m_i
    \enspace.
\end{align*}
Since $s^*_0 := m_0$ and $s^*_k := (1-\frac12 \epsilon) m_k$ for $1 \le k \le \ell$, we can simplify this bound to obtain
\begin{align}
\label{ineq:combo-lemma:output-mL+lemma:sampling-mL-existance}
    \left|h^*_{\ell+1}(x) - (1-\tfrac12 \epsilon)^{\ell} \cdot f_{\ell+1}(x) \right|
        &>
    \vartheta \cdot (1-\tfrac12 \epsilon)^{\ell} \cdot \prod_{i=0}^{\ell} m_i
    \enspace.
\end{align}
\end{itemize}
Let $\IN_k := V_k \setminus \sset^*_k$ for every $1 \le k \le \ell$. Now, we will modify some edge weights in the ReLU network.

\paragraph{Construction of the modified ReLU network.}
\begin{itemize}
\item For every $1 \le k \le \ell$, we change to $+1$ the weights of all edges connecting the nodes from $V_{k-1}$ to the nodes from $\IN_k$;
    \begin{itemize}[$\blacktriangleright$]
    \item the value of every node in $\IN_k$ is at least $\|x\|_1 \cdot \prod_{i=1}^{k-1} |\IN_i|$;
        \begin{itemize}[$\diamond$]
        \item
        This can be proven by induction. First notice that since every node in $\IN_1$ is connected by edges with weight $+1$ to every input node, the value of each node in $\IN_1$ is exactly $\|x\|_1$, proving the base case of induction for $k=1$. Next, suppose that for $2 \le k \le \ell-1$, the claim holds for the nodes from $\IN_{k-1}$ (so that the value of every node in $\IN_{k-1}$ is at least $\|x\|_1 \cdot \prod_{i=1}^{k-2} |\IN_i|$). Since any node $i_k$ from $\IN_k$ is connected to the nodes from $V_{k-1}$ by edges with weight $1$, every node from $\IN_{k-1}$ contributes $\|x\|_1 \cdot \prod_{i=1}^{k-2} |\IN_i|$ to the value of $i_k$, and all other nodes from $V_{k-1} \setminus \IN_{k-1}$ have non-negative contributions to the value of $i_k$, giving the value of node $i_k$ from $\IN_k$ to be at least $|\IN_k| \cdot \|x\|_1 \cdot \prod_{i=1}^{k-2} |\IN_i| = \|x\|_1 \cdot \prod_{i=1}^{k-1} |\IN_i|$.
        \end{itemize}
    \item this involves $m_{k-1} \cdot |\IN_k| \le \frac12 \epsilon m_{k-1} m_k$ changes of weights of edges from $E_{k-1}$, $1 \le k \le \ell$.
    \end{itemize}
\item For the nodes in $\IN_{\ell}$, we change to $-1$ the weights of all edges connecting nodes from $\IN_{\ell}$ to the output node;
    \begin{itemize}[$\blacktriangleright$]
    \item the contribution to the output of every node in $\IN_{\ell}$ is negative, at most $- \|x\|_1 \cdot \prod_{i=1}^{\ell-1} |\IN_i|$;
    \item this involves $|\IN_{\ell}| \le \frac12 \epsilon m_{\ell}$ changes of weights of $E_{\ell}$ edges.
    \end{itemize}
\end{itemize}
The construction above ensures that the contribution of the nodes from $\IN_{\ell}$ to the final output is at most $- \|x\|_1 \cdot \prod_{i=1}^{\ell} |\IN_i| = - (\epsilon/2)^{\ell} \cdot \|x\|_1 \cdot \prod_{i=1}^{\ell} m_i$. However, we also want to ensure that the contributions of other nodes from the final hidden layer does not change much with respect to their original contribution. For that, we ``remove'' the contributions of the selected nodes $\bigcup_{i=1}^{\ell-1} \IN_i$.
\begin{itemize}
\item For every $1 \le k \le \ell-1$, we change to $0$ the weights of all edges connecting the nodes from $\IN_k$ to the nodes from $\sset^*_{k+1}$;
    \begin{itemize}[$\blacktriangleright$]
    \item this involves $|\IN_k| \cdot |\sset^*_{k+1}| \le \frac12 \epsilon m_k m_{k+1}$ changes of weights of $E_k$ edges.
    \end{itemize}
\end{itemize}

\paragraph{Properties of the modified ReLU network.}
Observe that in total, we modified
\begin{inparaenum}[(i)]
\item $m_0 \cdot |\IN_1| \le \frac12 \epsilon m_0 m_1$ weights of $E_0$ edges,
\item for every $1 \le k \le \ell-1$, $m_k \cdot |\IN_{k+1}| + |\IN_k| \cdot |\sset^*_{k+1}| \le \epsilon m_k m_{k+1}$ weights of $E_k$ edges, and
\item $|\IN_{\ell}| = \frac12 \epsilon m_{\ell}$ weights of $E_{\ell}$ edges.
\end{inparaenum}
Therefore, since the original ReLU network is $(\epsilon, \delta)$-far from computing the constant $0$-function, our network obtained after modifying at most an $\epsilon$-fraction of weights at each level cannot return $0$ for all but a $\delta$-fraction of the inputs.

Let us now look at the value returned by our modified network, focusing on the contributions of all $\ell$-th hidden layer nodes from $V_{\ell}$.
\begin{itemize}
\item[\textsf{Nodes in $\IN_{\ell}$:}] By our arguments above, the contribution of the nodes from $\IN_{\ell}$ is negative, and it is at most $- \|x\|_1 \cdot \prod_{i=1}^{\ell} |\IN_i| = - (\epsilon/2)^{\ell} \cdot \|x\|_1 \cdot \prod_{i=1}^{\ell} m_i$.
\item[\textsf{Nodes in $\sset^*_{\ell}$:}] Our construction ensures that the contribution of all nodes from $\sset^*_{\ell}$ is exactly the output value $h^*_{\ell+1}(x)$ of a ReLU network $(W_0 \cdot S^*_0, \dots, W_{\ell} \cdot S^*_{\ell})$ with the sampling matrices $S^*_0, \dots, S^*_{\ell}$ corresponding to the nodes $\sset^*_0, \sset^*_1, \dots, \sset^*_{\ell}$. Therefore, by (\ref{ineq:combo-lemma:output-mL+lemma:sampling-mL-existance}), for all but a $\probp$-fraction of the inputs it holds $\left|h^*_{\ell+1}(x) - (1-\tfrac12 \epsilon)^{\ell} \cdot f_{\ell+1}(x) \right| \le \vartheta \cdot (1-\tfrac12 \epsilon)^{\ell} \cdot \prod_{i=0}^{\ell} m_i$, and thus, $h^*_{\ell+1}(x) \le (1-\tfrac12 \epsilon)^{\ell} \cdot f_{\ell+1}(x) + \vartheta \cdot (1-\tfrac12 \epsilon)^{\ell} \cdot \prod_{i=0}^{\ell} m_i$ for all but a $\probp$-fraction of the~inputs.
\end{itemize}

Let $\mathfrak{I}$ be the set of inputs $x \in \{0,1\}^{m_0}$ for which $h^*_{\ell+1}(x) \le (1-\tfrac12 \epsilon)^{\ell} \cdot f_{\ell+1}(x) + \vartheta \cdot (1-\tfrac12 \epsilon)^{\ell} \cdot \prod_{i=0}^{\ell} m_i$ \emph{and} that $\|x\|_1 \ge \frac14 m_0$. By our arguments above and by \cref{claim:NumberOfOnes}, $|\mathfrak{I}| \ge (1 - \probp - e^{-m_0/16}) \cdot 2^{m_0}$.

Our construction ensures that for every input $x \in \{0,1\}^{m_0}$, the modified network returns the value of at most $- (\epsilon/2)^{\ell} \cdot \|x\|_1 \cdot \prod_{i=1}^{\ell} m_i + h^*_{\ell+1}(x)$, and therefore our definition of set $\mathfrak{I}$ implies that for every input $x \in \mathfrak{I}$, the modified network returns the value of at most
\begin{align}
\label{ineq:lemma:output-mL-bound-output-modified-network}
        \notag
    - \frac14 \cdot (\epsilon/2)^{\ell} \cdot \prod_{i=0}^{\ell} m_i + (1-\tfrac12 \epsilon)^{\ell} \cdot f_{\ell+1}(x) + \vartheta \cdot (1-\tfrac12 \epsilon)^{\ell} \cdot \prod_{i=0}^{\ell} m_i
        =
        \\
    (1-\tfrac12 \epsilon)^{\ell} \cdot
    \left(
        f_{\ell+1}(x) + \left(\vartheta - \frac14 \cdot \left(\frac{\epsilon}{2-\epsilon}\right)^{\ell} \cdot \right)\cdot \prod_{i=0}^{\ell} m_i
    \right)
    \enspace.
\end{align}
Since the original ReLU network $(W_0, \dots, W_{\ell})$ is $(\epsilon, \delta)$-far from computing the constant $0$-function, our network obtained after modifying at most an $\epsilon$-fraction of weights at each layer cannot return $0$ for more than a $\delta$-fraction of the inputs. Observe that since we have set $\delta \ge \probp + e^{-m_0/16}$, there must be at least one input $x_0 \in \mathfrak{I}$ for which the modified network returns positive value. However, since the value returned by $x_0$ is upper bounded in (\ref{ineq:lemma:output-mL-bound-output-modified-network}), we obtain that
\begin{align*}
    (1-\tfrac12 \epsilon)^{\ell} \cdot
    \left(
        f_{\ell+1}(x) + \left(\vartheta - \frac14 \cdot \left(\frac{\epsilon}{2-\epsilon}\right)^{\ell} \cdot \right)\cdot \prod_{i=0}^{\ell} m_i
    \right)
        &>
    0
    \enspace.
\end{align*}
Observe that since $\frac{\epsilon}{2-\epsilon} \ge \frac12 \epsilon$, we obtain the following final inequality:
\begin{align*}
    f_{\ell+1}(x_0)
        &>
    \left(\frac14 \cdot (\epsilon/2)^{\ell} - \vartheta\right)
        \cdot \prod_{i=0}^{\ell} m_i
        \enspace,
\end{align*}
which completes the proof of \cref{lemma:output-mL} after setting $\vartheta := \frac18 \cdot (\epsilon/2)^{\ell}$.
\end{proof}


\begin{remark}\rm
\label{remark:auxiliary-to-lemma:output-mL}
The auxiliary parameter $\probp$ has two uses in our analysis: on one hand, it puts some constraints on $m_0, \dots, m_{\ell}$ and on the other hand, it allows us to parameterize $\delta$. In our use of \cref{lemma:output-mL} in \cref{lem:ReLU-testing-0-function-2-sided-far-mL} in the next \cref{subsubsec:ReLU-testing-0-function-2-sided-far-mL}, we will want to have $\delta$ as small as possible, and with that, we will want to have $\probp$ as small as possible, possibly close to $2^{-\Theta(m_0)}$. In order to make our setting concrete, let us state an auxiliary claim which shows that under some mild conditions about $m_0, \dots, m_{\ell}$, we could get $\delta \sim e^{-m_0/16} + e^{-\Theta(\min m_k \cdot (\epsilon/2)^{2\ell}/\ell^2)}$ and $\probp \sim e^{-\Theta(\min m_k \cdot (\epsilon/2)^{2\ell}/\ell^2)}$.

\begin{claim}
\label{claim:auxiliary-to-lemma:output-mL}
Let $0 < \epsilon < \frac12$ and $\ell \ge 1$. Let $0 < \probp < \frac{1}{\ell+1}$ be allowed to depend on $m_0, \dots, m_{\ell}$. If
\begin{inparaenum}[(i)]
\item\label{aux-claim-cond-2} $\prod_{i=1}^{\ell} m_i \le \frac{1}{\probp}$ and
\item\label{aux-claim-cond-1} $\probp \ge e^{-\frac{(\epsilon/2)^{2\ell}}{342 (\ell+1)^2} \cdot \min_{0 \le k \le \ell}\{m_k\}}$,
\end{inparaenum}
then
\begin{itemize}
\item $m_0 \ge 128 \cdot (\ell+1)^2 \cdot (2/\epsilon)^{2\ell} \cdot \ln\left(\frac{1}{\probp} \prod_{i=1}^{\ell} m_i\right)$, and
\item $m_k \ge 171 \cdot (\ell+1)^2 \cdot (2/\epsilon)^{2\ell} \cdot \ln\left(\frac{1}{\probp} \prod_{i=k+1}^{\ell} m_i\right)$ for every $1 \le k \le \ell$.
\end{itemize}
In particular, if $n = m_0 = \dots = m_{\ell}$, then for a sufficiently large $n$ (since we require that $n^{\ell} \le e^{(\epsilon/2)^{\ell} n /(342 (\ell+1)^2)}$), the conditions on $m_0, \dots, m_{\ell}$ are satisfied as long as $\probp \ge e^{- (\epsilon/2)^{\ell} n /(342 (\ell+1)^2)}$.
\end{claim}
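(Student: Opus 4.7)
The proof is a short algebraic chain combining the two hypotheses. My plan is to first use (i) to absorb the product $\prod_{i=k+1}^{\ell} m_i$ into $1/\probp$, then use (ii) to bound $\ln(1/\probp)$ in terms of $\min_k m_k$, and finally read off the desired lower bounds on the $m_k$'s.

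Concretely, I would begin by observing that for every $0 \le k \le \ell$, hypothesis (i) gives $\ln(\frac{1}{\probp}\prod_{i=k+1}^{\ell} m_i) \le \ln(\frac{1}{\probp}\prod_{i=1}^{\ell} m_i) \le \ln(\frac{1}{\probp^2}) = 2\ln(1/\probp)$. Next, I would rewrite (ii) in the form $\ln(1/\probp) \le \frac{(\epsilon/2)^{2\ell}}{342(\ell+1)^2} \cdot \min_{0\le k\le \ell}\{m_k\}$. Chaining the two bounds yields $\ln(\frac{1}{\probp}\prod_{i=k+1}^{\ell} m_i) \le \frac{(\epsilon/2)^{2\ell}}{171(\ell+1)^2}\cdot m_k$ for every $0 \le k \le \ell$. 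Multiplying through by $171(\ell+1)^2(2/\epsilon)^{2\ell}$ gives exactly the second item's bound for every $1\le k \le \ell$, and the first item's bound (with the weaker constant $128 < 171$) follows a fortiori for $k = 0$.

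For the special case $m_0 = \dots = m_{\ell} = n$, condition (i) becomes $n^{\ell} \le 1/\probp$, and combining this with the hypothesized lower bound on $\probp$ reduces the joint feasibility to the inequality $n^{\ell} \le e^{(\epsilon/2)^{\ell} n/(342(\ell+1)^2)}$. Since the right-hand side grows exponentially in $n$ while the left-hand side only polynomially, this inequality holds for all sufficiently large $n$, so both general hypotheses are simultaneously satisfied and the promised specialization follows.

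I anticipate no real obstacle: the only point worth flagging is the mild bookkeeping issue that the two different constants $128$ and $171$ appearing in the conclusion arise from two slightly different inequalities used in \cref{lemma:output-mL}, and that since $128 < 171$ the bound on $m_0$ is automatically dominated by the chain used to establish the bound on the remaining $m_k$'s.
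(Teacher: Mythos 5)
Your proof is correct and follows essentially the same algebraic chain as the paper's: absorb the product into $1/\probp^2$ via hypothesis~(i), then bound $\ln(1/\probp)$ via hypothesis~(ii). The only cosmetic difference is that you run a single chain with constant $171$ for every $0 \le k \le \ell$ and then note $128 < 171$, whereas the paper handles $k=0$ separately with constant $256$; both yield the stated bounds. (One small point worth flagging, though it is inherited from the paper's own statement rather than introduced by you: in the ``in particular'' specialization the exponent should really be $(\epsilon/2)^{2\ell}$ rather than $(\epsilon/2)^{\ell}$, since it comes directly from hypothesis~(ii); this does not affect the ``sufficiently large $n$'' conclusion.)
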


\begin{proof}
Let us focus first on $m_0$. Since $\prod_{i=1}^{\ell} m_i \le \frac{1}{\probp}$ (by assumption \emph{(\ref{aux-claim-cond-1})}), we obtain the following
\begin{align*}
    128 \cdot (\ell+1)^2 \cdot (2/\epsilon)^{2\ell} \cdot
        \ln\left(\frac{1}{\probp} \prod_{i=1}^{\ell} m_i\right)
        &\le
    128 \cdot (\ell+1)^2 \cdot (2/\epsilon)^{2\ell} \cdot
        \ln\left(\frac{1}{\probp^2}\right)
        \\
        &=
    256 \cdot (\ell+1)^2 \cdot (2/\epsilon)^{2\ell} \cdot
        \ln\left(\frac{1}{\probp}\right)
        \\
        &\le
    m_0
    \enspace,
\end{align*}
where the last inequality follows from (assumption \emph{(\ref{aux-claim-cond-2})}).

Similarly, for $m_k$ with $1 \le k \le \ell$, by assumption \emph{(\ref{aux-claim-cond-1})} we have
\begin{align*}
    171 \cdot (\ell+1)^2 \cdot (2/\epsilon)^{2\ell} \cdot
        \ln\left(\frac{1}{\probp} \prod_{i=k+1}^{\ell} m_i\right)
        &\le
    171 \cdot (\ell+1)^2 \cdot (2/\epsilon)^{2\ell} \cdot \ln\left(\frac{1}{\probp^2}\right)
        \\
        &=
    342 \cdot (\ell+1)^2 \cdot (2/\epsilon)^{2\ell} \cdot \ln\left(\frac{1}{\probp}\right)
        \\
        &\le
    m_k
    \enspace,
\end{align*}
where the last inequality follows from (assumption \emph{(\ref{aux-claim-cond-2})}).

Finally, if all $m_k$ are the same, $n = m_0 = \dots = m_{\ell}$, then assumptions \emph{(\ref{aux-claim-cond-1})} and \emph{(\ref{aux-claim-cond-2})} simplify to $n^{\ell} \le \frac{1}{\probp}$ and $\probp \ge e^{-(\epsilon/2)^{2\ell}/(342 (\ell+1)^2)}$, yielding the promised claim.
\end{proof}

\junk{
\begin{claim}
Let $\probp := e^{-\xi m_0}$ for some positive $\xi$, $0 < \epsilon < \frac12$, and $\ell \ge 1$. If
\begin{inparaenum}[(i)]
\item\label{aux-claim-cond-2} $\prod_{i=1}^{\ell} m_i \le e^{\xi m_0}$ and
\item\label{aux-claim-cond-1} $\xi \le \frac{(\epsilon/2)^{2\ell}}{342 (\ell+1)^2} \cdot \min_{0 \le k \le \ell}\{\frac{m_k}{m_0}\}$,
\end{inparaenum}
then
\begin{itemize}
\item $m_0 \ge 128 \cdot (\ell+1)^2 \cdot (2/\epsilon)^{2\ell} \cdot \ln\left(\frac{1}{\probp} \prod_{i=1}^{\ell} m_i\right)$, and
\item $m_k \ge 171 \cdot (\ell+1)^2 \cdot (2/\epsilon)^{2\ell} \cdot \ln\left(\frac{1}{\probp} \prod_{i=k+1}^{\ell} m_i\right)$ for every $1 \le k \le \ell$.
\end{itemize}
In particular, if $n = m_0 = \dots = m_{\ell}$, then for a sufficiently large $n$ (since we require that $n^{\ell} \le e^{(\epsilon/2)^{\ell} n /(342 (\ell+1)^2)}$), the conditions on $m_0, \dots, m_{\ell}$ are satisfied as long as $\probp \le e^{- (\epsilon/2)^{\ell} n /(342 (\ell+1)^2)}$.
\end{claim}

\begin{proof}
Let us focus first on $m_0$. Since $\prod_{i=1}^{\ell} m_i \le e^{\xi m_0}$ (by assumption \emph{(\ref{aux-claim-cond-1})}), we obtain the following
\begin{align*}
    128 \cdot (\ell+1)^2 \cdot (2/\epsilon)^{2\ell} \cdot
        \ln\left(\frac{1}{\probp} \prod_{i=1}^{\ell} m_i\right)
        &\le
    128 \cdot (\ell+1)^2 \cdot (2/\epsilon)^{2\ell} \cdot
        \ln\left(e^{\xi m_0} \prod_{i=1}^{\ell} m_i\right)
        \\
        &\le
    128 \cdot (\ell+1)^2 \cdot (2/\epsilon)^{2\ell} \cdot
        \ln\left(e^{2 \xi m_0}\right)
        \\
        &=
    256 \cdot (\ell+1)^2 \cdot (2/\epsilon)^{2\ell} \cdot \xi \cdot m_0
        \\
        &\le
    m_0
    \enspace,
\end{align*}
where the last inequality follows from (assumption \emph{(\ref{aux-claim-cond-2})}).

Similarly, for $m_k$ with $1 \le k \le \ell$, by assumption \emph{(\ref{aux-claim-cond-1})} we have
\begin{align*}
    171 \cdot (\ell+1)^2 \cdot (2/\epsilon)^{2\ell} \cdot
        \ln\left(\frac{1}{\probp} \prod_{i=k+1}^{\ell} m_i\right)
        &\le
    171 \cdot (\ell+1)^2 \cdot (2/\epsilon)^{2\ell} \cdot
        \ln\left(e^{\xi m_0} \prod_{i=k+1}^{\ell} m_i\right)
        \\
        &\le
    171 \cdot (\ell+1)^2 \cdot (2/\epsilon)^{2\ell} \cdot \ln\left(e^{2 \xi m_0}\right)
        \\
        &=
    342 \cdot (\ell+1)^2 \cdot (2/\epsilon)^{2\ell} \cdot \xi \cdot m_0
        \\
        &\le
    m_k
    \enspace,
\end{align*}
where the last inequality follows from (assumption \emph{(\ref{aux-claim-cond-2})}).

Finally, if all $m_k$ are the same, $n = m_0 = \dots = m_{\ell}$, then assumptions \emph{(\ref{aux-claim-cond-1})} and \emph{(\ref{aux-claim-cond-2})} simplify to $n^{\ell} \le e^{\xi n}$ and $\xi \le \frac{(\epsilon/2)^{2\ell}}{342 (\ell+1)^2}$, yielding the promised claim.
\end{proof}
}
\end{remark}


\subsubsection{If $(\epsilon, \delta)$-far then we reject (extending \cref{lem:ReLU-testing-0-OR-function-2-sided-far} to multiple layers)}
\label{subsubsec:ReLU-testing-0-function-2-sided-far-mL}

We can now incorporate \cref{lemma:output-mL,lemma:sampling-mL} to finalize the analysis of our testers for ReLU networks that are $(\epsilon,\delta)$-far from computing the constant $0$-function.

\begin{lemma}
\label{lem:ReLU-testing-0-function-2-sided-far-mL}
Let $(W_0, \dots, W_{\ell})$ be a ReLU network with $m_0$ input nodes, $\ell \ge 1$ hidden layers with $m_1, \dots, m_{\ell}$ hidden layer nodes each, and a single output node.
Let $\frac{1}{\min_{0 \le k \le \ell}\{m_k\}} < \epsilon < \frac12$ and $0 < \lambda < \frac{1}{\ell+1}$.
Let $0 < \probp < \frac{1}{\ell+1}$ be an arbitrary parameter and let $\delta \ge \probp + e^{-m_0/16}$.
Suppose that $m_0, \dots, m_{\ell}$ and $s_0, \dots, s_{\ell}$ satisfy the following:
\begin{itemize}
\item $m_0 \ge 128 \cdot (\ell+1)^2 \cdot (2/\epsilon)^{2\ell} \cdot \ln(\frac{1}{\probp} \prod_{i=1}^{\ell} m_i)$,
\item $m_k \ge 171 \cdot (\ell+1)^2 \cdot (2/\epsilon)^{2\ell} \cdot \ln(\frac{1}{\probp} \prod_{i=k+1}^{\ell} m_i)$ for every $1 \le k \le \ell$, and
\item $s_k \ge 512 \cdot (\ell+1)^2 \cdot (2/\epsilon)^{2\ell} \cdot \ln\left(\frac{2 \prod_{i=k+1}^{\ell} s_i}{\lambda(\ell+1)}\right)$ for every $0 \le k \le \ell$.
\end{itemize}
If ReLU network $(W_0, \dots, W_{\ell})$ is $(\epsilon, \delta)$-far from computing the constant $0$-function then with probability at least $1 - \lambda$ (over the random choice of $S_0, \dots, S_{\ell}$) there is an $x \in \{0,1\}^{m_0}$ for which
\begin{align*}
    \prod_{i=0}^{\ell} \frac{m_i}{s_i} \cdot h_{\ell+1}(x) -
        \frac{1}{16} \cdot (\epsilon/2)^{\ell} \cdot \prod_{i=0}^{\ell} m_i
    &> 0
    \enspace,
\end{align*}
and so algorithm \textsc{AllZeroTester2HL} (page \pageref{alg:AllZeroTester-mL}) rejects with probability at least $1 - \lambda$.
\end{lemma}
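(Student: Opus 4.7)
The plan is to mirror the argument for the one-hidden-layer case (\cref{lem:ReLU-testing-0-OR-function-2-sided-far}), combining the existence of a ``bad'' input from \cref{lemma:output-mL} with the sampling concentration from \cref{lemma:sampling-mL}, being careful that all the size constraints on $m_0, \ldots, m_\ell$ and $s_0, \ldots, s_\ell$ line up with the $(\epsilon/2)^\ell$ scaling that appears in the multi-layer setting.

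First, I would invoke \cref{lemma:output-mL} on the network $(W_0, \ldots, W_\ell)$. Since the hypotheses on $m_0, \ldots, m_\ell$ match exactly the conditions required by \cref{lemma:output-mL} for the parameter $\xi$ defining $\delta \ge \xi + e^{-m_0/16}$, the lemma yields an input $x_0 \in \{0,1\}^{m_0}$ with
\begin{align*}
    f_{\ell+1}(x_0) > \tfrac{1}{8} \cdot (\epsilon/2)^\ell \cdot \prod_{i=0}^{\ell} m_i \enspace.
\end{align*}
Note that $x_0$ is determined by the network (independently of the random sampling matrices).

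Next, I would apply \cref{lemma:sampling-mL} to this fixed input $x_0$ with the parameter $\vartheta := \tfrac{1}{16} \cdot (\epsilon/2)^\ell$. Under this choice, the required condition $s_k \ge \frac{2(\ell+1)^2 \ln(2 \prod_{i=k+1}^\ell s_i/(\lambda(\ell+1)))}{\vartheta^2}$ becomes exactly $s_k \ge 512 \cdot (\ell+1)^2 \cdot (2/\epsilon)^{2\ell} \cdot \ln\big(\frac{2 \prod_{i=k+1}^{\ell} s_i}{\lambda(\ell+1)}\big)$, which is precisely the third hypothesis of the lemma. Thus with probability at least $1 - \lambda$ over the random choice of $S_0, \ldots, S_\ell$,
\begin{align*}
    \left| \prod_{i=0}^\ell \frac{m_i}{s_i} \cdot h_{\ell+1}(x_0) - f_{\ell+1}(x_0) \right| \le \tfrac{1}{16} \cdot (\epsilon/2)^\ell \cdot \prod_{i=0}^{\ell} m_i \enspace.
\end{align*}

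Combining the two bounds via the triangle inequality, on the $(1-\lambda)$-probability event we obtain
\begin{align*}
    \prod_{i=0}^{\ell} \frac{m_i}{s_i} \cdot h_{\ell+1}(x_0) > \tfrac{1}{8} \cdot (\epsilon/2)^\ell \cdot \prod_{i=0}^{\ell} m_i - \tfrac{1}{16} \cdot (\epsilon/2)^\ell \cdot \prod_{i=0}^{\ell} m_i = \tfrac{1}{16} \cdot (\epsilon/2)^\ell \cdot \prod_{i=0}^{\ell} m_i \enspace,
\end{align*}
so $\prod_{i=0}^\ell \frac{m_i}{s_i} \cdot h_{\ell+1}(x_0) - \frac{1}{16} (\epsilon/2)^\ell \prod_{i=0}^\ell m_i > 0$, and the tester rejects.

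There is no real obstacle in this particular lemma: the heavy lifting has already been absorbed into \cref{lemma:output-mL} (the existence of a bad input, which required the delicate ``split the network into $\mathfrak{I}_k$ and $\mathcal{S}_k^*$'' argument and relied on \cref{lemma:sampling-mL-existance}) and into \cref{lemma:sampling-mL} (the $\ell$-fold telescoping concentration over the hidden layers). The only care needed here is to verify that the threshold $\vartheta = \frac{1}{16}(\epsilon/2)^\ell$ is small enough to leave room after subtracting from the $\frac{1}{8}(\epsilon/2)^\ell$ lower bound guaranteed by \cref{lemma:output-mL}, and that plugging this $\vartheta$ into the hypothesis of \cref{lemma:sampling-mL} recovers exactly the stated lower bound on $s_k$. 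Both align by construction, so the proof reduces to a clean two-step combination.
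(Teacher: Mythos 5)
Your proposal is correct and follows essentially the same two-step route as the paper's proof: apply \cref{lemma:output-mL} to obtain a deterministic witness input $x_0$ with $f_{\ell+1}(x_0) > \frac{1}{8}(\epsilon/2)^{\ell}\prod_{i=0}^{\ell} m_i$, then apply \cref{lemma:sampling-mL} at $\vartheta = \frac{1}{16}(\epsilon/2)^{\ell}$ to this fixed $x_0$, and combine. Your accounting of the $s_k$ lower bound matching the hypothesis of \cref{lemma:sampling-mL} at this $\vartheta$ (via $2(\ell+1)^2/\vartheta^2 = 512(\ell+1)^2(2/\epsilon)^{2\ell}$) and of the $m_k$ conditions matching \cref{lemma:output-mL} is accurate, and your closing remarks correctly identify that the substantive work lives in those two cited lemmas.
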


\begin{proof}
Let $f_i(x)$, $f(x)$, $h_i(x)$, and $h(x)$ be defined as in \cref{lemma:output-mL}. 
By \cref{lemma:output-mL}, since $(W_0, \dots, W_{\ell})$ is $(\epsilon,\delta)$-far from computing the constant $0$-function, there exists an $x_0 \in \{0,1\}^n$ such that on input $x_0$, the value in the output node of the network is at least $\frac18 \cdot (\epsilon/2)^{\ell} \cdot \prod_{i=0}^{\ell} m_i$, that is, that $f_{\ell+1}(x_0) \ge \frac18 \cdot (\epsilon/2)^{\ell} \cdot \prod_{i=0}^{\ell} m_i$. Next, by \cref{lemma:sampling-mL}, with probability at least $1 - \lambda$ (over the random choice of $S_0, \dots, S_{\ell}$ in the algorithm) we have for $\vartheta := \frac{1}{16} \cdot (\epsilon/2)^{\ell}$:
\begin{align*}
    \Pr\left[
    \left| \prod_{i=0}^{\ell} \frac{m_i}{s_i} \cdot h_{\ell+1}(x_0) - f_{\ell+1}(x_0) \right|
        >
    \frac{1}{16} \cdot (\epsilon/2)^{\ell} \cdot \prod_{i=0}^{\ell} m_i
    \right]
        & \ge
    1- \lambda
    \enspace.
\end{align*}
(Here the use of \cref{lemma:sampling-mL} requires $s_k \ge 512 (\ell+1)^2 (2/\epsilon)^{2\ell} \ln\left(\frac{2 \prod_{i=k+1}^{\ell} s_i}{\lambda(\ell+1)}\right)$ for every $0 \le k \le \ell$.)

This implies that with probability at least $1 - \lambda$ we obtain
\begin{align*}
    \prod_{i=0}^{\ell} \frac{m_i}{s_i} \cdot h_{\ell+1}(x_0)
        & >
    \frac{1}{16} \cdot (\epsilon/2)^{\ell} \cdot \prod_{i=0}^{\ell} m_i
    \enspace,
\end{align*}
and so \textsc{AllZeroTester2HL} rejects with probability at least $1 - \lambda$.
\end{proof}


\subsubsection{If network computes 0 then accept (extending \cref{lem:ReLU-testing-0-OR-function-2-sided-compute} to multiple layers)}
\label{subsubsec:ReLU-testing-0-function-2-sided-compute-mL}

Our final lemma here summarizes the second part of the analysis of \textsc{AllZeroTester2HL} for ReLU networks and considers the networks that correctly compute the constant $0$-function.

\begin{lemma}
\label{lem:ReLU-testing-0-function-2-sided-compute-mL}
Let $0 < \lambda < \frac{1}{\ell+1}$, $\delta \ge e^{-m_0/16}$, and $\frac{1}{\min_{0 \le k \le \ell}\{m_k\}} < \epsilon < \frac12$.
Let $(W_0, \dots, W_{\ell})$ be a ReLU network with $m_0$ input nodes, $\ell \ge 1$ hidden layers with $m_1, \dots, m_{\ell}$ hidden layer nodes each, and a single output node.
Suppose that $s_k \ge 512 \cdot \ell^2 \cdot (2/\epsilon)^{2\ell} \cdot \ln\left(\frac{2^{s_0+1} \cdot \ell \cdot \prod_{i=k+1}^{\ell} s_i}{\lambda}\right)$ for any $1 \le k \le \ell$.
If $(W_0, \dots, W_{\ell})$ computes the constant $0$-function, then it holds with probability at least $1 - \lambda$ (over the random choice of the sampling matrices $S_0, \dots, S_{\ell}$) that for all $x \in \{0,1\}^{m_0}$
\begin{align*}
    \prod_{i=0}^{\ell} \frac{m_i}{s_i} \cdot h_{\ell+1}(x) -
        \frac{1}{16} \cdot (\epsilon/2)^{\ell} \cdot \prod_{i=0}^{\ell} m_i
    &\le 0
    \enspace,
\end{align*}
and so algorithm {\sc AllZeroTester2HL} (page \pageref{alg:AllZeroTester-mL}) accepts with probability at least $1 - \lambda$.
\end{lemma}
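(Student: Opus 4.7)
The plan is to extend the conditioning trick used in the proof of \cref{lem:ReLU-testing-0-OR-function-2-sided-compute} from one hidden layer to $\ell$ hidden layers. The first step is to fix the sampling matrix $S_0$ of input nodes and observe that the sparsity of $S_0$ drastically reduces the set of inputs that can produce distinct outputs in the sampled network. Let $I_{S_0}$ denote the indices of the sampled input nodes and $X_{S_0} := \{x \in \{0,1\}^{m_0} : x_i = 0 \text{ for all } i \notin I_{S_0}\}$, so that $|X_{S_0}| = 2^{s_0}$. Since $S_0$ is diagonal with $0/1$ entries and $x \in \{0,1\}^{m_0}$, one has $h_1(x) = W_0 S_0 \relu(x) = W_0 S_0 x = W_0 S_0 (S_0 x) = h_1(S_0 x)$, and because all higher layers depend only on $h_1$, we get $h_{\ell+1}(x) = h_{\ell+1}(S_0 x)$ for every $x$. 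As $S_0 x \in X_{S_0}$ always, it suffices to verify the desired inequality for $x \in X_{S_0}$.

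The second step is, for each fixed $x \in X_{S_0}$, to apply \cref{lemma:sampling-mL} with randomness taken only over $S_1, \dots, S_\ell$ (since $S_0$ has been conditioned on). This is the same argument as in \cref{lemma:sampling-mL} but with $\ell$ layers of randomness rather than $\ell+1$, which is exactly why the stated hypothesis features $\ell^2$ rather than $(\ell+1)^2$. We invoke the resulting bound with $\vartheta = \frac{1}{16}(\epsilon/2)^\ell$ and failure probability $\lambda/2^{s_0}$; the assumption $s_k \ge 512 \ell^2 (2/\epsilon)^{2\ell} \ln(2^{s_0+1} \ell \prod_{i=k+1}^{\ell} s_i / \lambda)$ for $1 \le k \le \ell$ is exactly the condition needed to validate this application. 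The third step then uses the hypothesis that the network computes the constant $0$-function: this means $\sgn(\relu(f_{\ell+1}(x))) = 0$ for every $x$, hence $f_{\ell+1}(x) \le 0$. Combined with the deviation bound, for each fixed $x \in X_{S_0}$ we obtain with probability at least $1 - \lambda/2^{s_0}$:
\begin{align*}
    \prod_{i=0}^{\ell} \frac{m_i}{s_i} \cdot h_{\ell+1}(x)
    &\le f_{\ell+1}(x) + \tfrac{1}{16}(\epsilon/2)^{\ell} \prod_{i=0}^{\ell} m_i
    \le \tfrac{1}{16}(\epsilon/2)^{\ell} \prod_{i=0}^{\ell} m_i \enspace.
\end{align*}
A union bound over the $2^{s_0}$ vectors in $X_{S_0}$ then yields the inequality simultaneously for every $x$ with probability at least $1-\lambda$, which is exactly the event under which \textsc{AllZeroTester2HL} accepts.

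The main obstacle will be ensuring that the size constraints on $s_1, \dots, s_\ell$ are large enough to absorb two compounding sources of blowup: the union bound over $2^{s_0}$ inputs contributes a $s_0$-factor inside the logarithm, and the multi-layer error propagation in \cref{lemma:sampling-mL} already introduces an $\ell^2 (2/\epsilon)^{2\ell}$ overhead together with a telescoping $\prod_{i=k+1}^\ell s_i$ term inside its own logarithm. The balancing of these two logarithmic contributions is delicate precisely because they feed back on each other (larger $s_0$ forces larger $s_k$ for $k \ge 1$), and the stated hypothesis has been tailored so that all these terms fit in a single clean logarithm of the form $\ln(2^{s_0+1} \ell \prod_{i=k+1}^\ell s_i /\lambda)$. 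Unlike the single-layer case, where the expectation $\Ex[\frac{1}{s} w^T T \relu(ASx) \mid S] \le 0$ could be used directly via a one-shot Hoeffding bound, here we must route through the deterministic-input version of the sampling lemma, so the ``network computes $0$'' hypothesis is only used at the very end to convert a two-sided deviation bound into the desired one-sided upper bound on the scaled sampled output.
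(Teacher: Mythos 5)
Your proposal is correct and follows the same approach as the paper: condition on $S_0$, reduce to the $2^{s_0}$ inputs supported on the sampled coordinates (noting these no longer involve $S_0$'s randomness, which is why $\ell^2$ replaces $(\ell+1)^2$ in the hypothesis), apply the multi-layer concentration argument over $S_1,\dots,S_\ell$ with per-input failure probability $\lambda/2^{s_0}$, union bound, and finish with $f_{\ell+1}(x)\le 0$. The only cosmetic difference is that the paper invokes the internal telescoping inequality from \cref{lemma:sampling-mL}'s proof directly (summing from $k=1$) rather than citing a ``one-fewer-layer'' variant of the lemma, which you already flag as the required routing.
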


\begin{proof}
Before we proceed with the proof, let us first notice the main difference between \cref{lem:ReLU-testing-0-function-2-sided-far-mL} and \cref{lem:ReLU-testing-0-function-2-sided-compute-mL}: the key point of \cref{lem:ReLU-testing-0-function-2-sided-compute-mL} (and the challenge in proving it) is that we have to prove that \emph{for every input} $x \in \{0,1\}^{m_0}$ we have $\prod_{i=0}^{\ell} \frac{m_i}{s_i} \cdot h_{\ell+1}(x) - \frac{1}{16} \cdot (\epsilon/2)^{\ell} \cdot \prod_{i=0}^{\ell} m_i \le 0$, whereas in \cref{lem:ReLU-testing-0-function-2-sided-far-mL}, we only have to find a single input $x$ for which the inequality above fails.

Let $(W_0, \dots, W_{\ell})$ be a ReLU network that computes the constant $0$-function. Let ${\sset_0}$ be the $s_0$ nodes sampled by our algorithm from the input layer and let $X_{\sset_0} = \{x = (x_1,\dots,x_n)^T \in\{0,1\}^{m_0} : x_i = 0, \forall i \notin \sset_0\}$ be the set of different inputs to the sampled network. In what follows, we will condition on the choice of $\sset_0$ being fixed (and the randomness comes from sampling $\sset_1, \dots, \sset_{\ell}$).

Observe that
\begin{align*}
    \prod_{i=0}^{\ell} \frac{m_i}{s_i} \cdot h_{\ell+1}(x) -
        \frac{1}{16} \cdot (\epsilon/2)^{\ell} \cdot \prod_{i=0}^{\ell} m_i \le 0
\end{align*}
holds for all $x\in \{0,1\}^{m_0}$ if and only if this inequality holds for all $x \in X_{\sset_0}$. Therefore, in what follows, we will show that the output value on a fixed input $x \in X_{\sset_0}$ will be approximated with high probability and then we will apply a union bound over all $x \in X_{\sset_0}$ to conclude the analysis.

\footnote{Notice that \emph{if we could use the randomness of $\sset_0$}, then by \cref{lemma:sampling-mL}, for any $x \in X_{\sset_0}$ we \emph{would have}
\begin{align*}
    \Pr\left[\left| \prod_{i=0}^{\ell} \frac{m_i}{s_i} \cdot h_{\ell+1}(x) - f_{\ell+1}(x) \right|
        >
    \frac{1}{16} \cdot (\epsilon/2)^{\ell} \cdot \prod_{i=0}^{\ell} m_i \right]
    &\le \lambda
    \enspace.
\end{align*}
However, since we are conditioning on fixed $\sset_0$, we have to use a bit more elaborate arguments.}%
Observe that for any $x \in X_{\sset_0}$, since $x$ has only $0$s outside set $\sset_0$, ReLU networks $(W_0, W_1 \cdot S_1, \dots, W_{\ell} \cdot S_{\ell})$ and $(W_0 \cdot S_0, W_1 \cdot S_1, \dots, W_{\ell} \cdot S_{\ell})$ are outputting the same value on any $x \in X_{\sset_0}$, and hence $h_{\ell+1}(x) = \phi_1^{\ell+1}(f_1(x))$. For random sampling sets $\sset_1, \dots, \sset_{\ell}$ (\emph{independently of the sampling set $\sset_0$}) we studied the relation between $\phi_1^{\ell+1}(f_1(x))$ and $f_{\ell+1}(x) = \phi_{\ell+1}^{\ell+1}(f_{\ell+1}(x))$ already in the proof of \cref{lemma:sampling-mL}, indirectly in (\ref{ineq-mL-aux7}). Let us consider (\ref{ineq-mL-aux7}) with $\kappa_k := \frac{(\epsilon/2)^{\ell}}{16 \ell}$ and $\lambda_k := \frac{\lambda}{\ell \cdot 2^{s_0} \cdot \prod_{i=k+1}^{\ell} s_i}$. Then for any $1 \le k \le \ell$, if $s_k \ge 512 \ell^2 (2/\epsilon)^{2\ell} \ln(2^{s_0+1} \ell (\prod_{i=k+1}^{\ell} s_i)/\lambda)$ then (\ref{ineq-mL-aux7}) yields the following
\begin{align*}
    \Pr\left[
        \left|
            \frac{m_k}{s_k} \cdot \phi_k^{\ell+1}(f_k(x)) -
            \phi_{k+1}^{\ell+1}(f_{k+1}(x))
        \right|
        > \frac{(\epsilon/2)^{\ell}}{16 \ell} \cdot \prod_{i=0}^k m_i \cdot \prod_{i=k+1}^{\ell} s_i
        \; \Big| \; \sset_0
    \right]
        &\le
        \frac{\lambda}{2^{s_0} \cdot \ell}
    \enspace.
\end{align*}
If we combine this inequality with the telescoping sum (\ref{ineq-mL-aux6}), then for any $x \in X_{\sset_0}$:
\begin{eqnarray*}
    \lefteqn{
    \Pr\left[\left| \prod_{i=0}^{\ell} \frac{m_i}{s_i} \cdot h_{\ell+1}(x) - f_{\ell+1}(x) \right|
        >
        \frac{1}{16} \cdot (\epsilon/2)^{\ell} \cdot \prod_{i=0}^{\ell} m_i \; \Big| \; \sset_0
    \right]
        =
    }
        \\&&
    \Pr\left[\left| \prod_{i=1}^{\ell} \frac{m_i}{s_i} \cdot \phi_1^{\ell+1}(f_1(x)) - \phi_{\ell+1}^{\ell+1}(f_{\ell+1}(x)) \right|
        >
        \frac{1}{16} \cdot (\epsilon/2)^{\ell} \cdot \prod_{i=0}^{\ell} m_i
    \right]
        \le
        \\&&
    \sum_{k=1}^{\ell}
        \Pr\left[
            \left|
                \frac{m_k}{s_k} \cdot \phi_k^{\ell+1}(f_k(x)) -
                \phi_{k+1}^{\ell+1}(f_{k+1}(x))
            \right|
                >
            \frac{(\epsilon/2)^{\ell}}{16 \ell} \cdot \prod_{i=0}^k m_i \cdot \prod_{i=k+1}^{\ell} s_i
        \right]
        \le
        \\&&
    \sum_{k=1}^{\ell} \frac{\lambda}{2^{s_0} \cdot \ell}
        = \frac{\lambda}{2^{s_0}}
    \enspace.
\end{eqnarray*}
(Observe that since $f_1(x)$ does not depend on $\sset_0$, there is no conditioning in the second term, and since we consider functions $f_k(x)$ with $k \ge 1$, the summation in the third term starts only from $1$.)

Therefore, by the union bound over the $|X_{\sset_0}| = 2^{s_0}$ vectors in $X_{\sset_0}$, this implies that the following inequality holds for all $x \in \{0,1\}^{m_0}$:
\begin{align*}
    \Pr\left[\left| \prod_{i=0}^{\ell} \frac{m_i}{s_i} \cdot h_{\ell+1}(x) - f_{\ell+1}(x) \right|
        >
        \frac{1}{16} \cdot (\epsilon/2)^{\ell} \cdot \prod_{i=0}^{\ell} m_i \; \Big| \; \sset_0
    \right]
        &\le
    2^{s_0} \cdot \frac{\lambda}{2^{s_0}}
        =
    \lambda
    \enspace.
\end{align*}
Now the lemma follows from the fact that for every $x \in \{0,1\}^{m_0}$ we have $f_{\ell+1}(x) \le 0$, and hence,
\begin{align*}
    \lefteqn{
    \Pr\left[\prod_{i=0}^{\ell} \frac{m_i}{s_i} \cdot h_{\ell+1}(x)
        >
        \frac{1}{16} \cdot (\epsilon/2)^{\ell} \cdot \prod_{i=0}^{\ell} m_i
    \right]
        \le
    }
        \\&
    \sum_{\sset_0}
    \left(
        \Pr[\sset_0] \cdot
        \Pr\left[\prod_{i=0}^{\ell} \frac{m_i}{s_i} \cdot h_{\ell+1}(x) >
            \frac{1}{16} \cdot (\epsilon/2)^{\ell} \cdot \prod_{i=0}^{\ell} m_i
            \; \Big| \; \sset_0 \right]
    \right)
        \le
    \lambda
    \enspace.
    \qedhere
\end{align*}
\end{proof}


\subsubsection{Combining \cref{lem:ReLU-testing-0-function-2-sided-far-mL,lem:ReLU-testing-0-function-2-sided-compute-mL} to complete the proof of \cref{thm:ReLU-testing-0-function-2-sided-mL}}
\label{subsubsec:proving-thm:ReLU-testing-0-function-2-sided-mL}

Let us first observe that (assuming $\ell$ is constant) the query complexity of Algorithm \textsc{AllZeroTesterMHL} (page \pageref{alg:AllZeroTester-mL}) is $s_{\ell} + \sum_{k=0}^{\ell-1} s_k \cdot s_{k+1} = \Theta(\sum_{k=0}^{\ell-1} s_k \cdot s_{k+1}) = \Theta(\epsilon^{-8\ell} \cdot \ln^4(1/\lambda\epsilon))$, since we have to consider only the original ReLU network induced by the nodes from $\sset_0, \dots, \sset_{\ell}$ and the single output node, and such network has exactly $s_{\ell} + \sum_{k=0}^{\ell-1} s_k \cdot s_{k+1}$ edges.%
\footnote{
Another way of arguing about this, is to observe that $h_{\ell+1}(x)$ can be also computed as $h_{\ell+1}(x) = g_{\ell+1}(x)$, where after defining $S_{\ell+1}$ to be the identity $1 \times 1$ matrix $1$, $g_k(x)$ is defined recursively as follows:
\begin{align*}
    g_k(x) &:=
        \begin{cases}
            S_k \cdot W_{k-1} \cdot S_{k-1} \cdot \relu(g_{k-1}(x)) & \text{ if } 1 \le k \le \ell+1, \\
            S_0 \cdot x & \text{ if } k=0 \enspace.
        \end{cases}
\end{align*}
(This follows since the ReLU function is positive homogeneous and $S_0, \dots, S_{\ell}$ have only non-negative entries, and thus $S_{k-1} \cdot \relu(y) = S_{k-1} \cdot \relu(S_{k-1} \cdot y)$.)
Since $S_k$ is a diagonal matrix with $s_k$ non-zero entries, $S_k \cdot W_{k-1} \cdot S_{k-1}$ can be computed by taking the $s_k$ rows (with indices from $\sset_k$) and $s_{k-1}$ columns (with indices from $\sset_{k-1}$) of matrix $W_{k-1}$. 
Therefore, in order to compute $g_{\ell+1}(x)$, and hence $h_{\ell+1}(x)$, we take only $s_k \cdot s_{k-1}$ entries of any matrix $W_{k-1}$, implying our claim.
}

Next, observe that for any constant $\ell \ge 1$, there exist positive constants $c_0, \dots, c_{\ell}$ such that if
\begin{itemize}
\item we set $s_0 := c_0 \cdot \epsilon^{-2\ell} \cdot \ln (1/\lambda\epsilon)$ and
\item for every $1 \le k \le \ell$ we set $s_k := c_k \cdot \epsilon^{-4\ell} \cdot \ln^2(1/\lambda\epsilon)$,
\end{itemize}
then $s_0, \dots, s_{\ell}$ satisfy the assumptions of \cref{lem:ReLU-testing-0-function-2-sided-far-mL,lem:ReLU-testing-0-function-2-sided-compute-mL}, that is,
\begin{itemize}
\item $s_k \ge 512 \cdot (\ell+1)^2 \cdot (2/\epsilon)^{2\ell} \cdot \ln\left(\frac{2 \prod_{i=k+1}^{\ell} s_i}{\lambda(\ell+1)}\right)$ for every $0 \le k \le \ell$, and
\item $s_k \ge 512 \cdot \ell^2 \cdot (2/\epsilon)^{2\ell} \cdot \ln\left(\frac{2^{s_0+1} \cdot \ell \cdot \prod_{i=k+1}^{\ell} s_i}{\lambda}\right)$ for every $1 \le k \le \ell$.
\end{itemize}

Therefore, with our conditions on $m_0, \dots, m_{\ell}$ and our choice of $s_0, \dots, s_{\ell}$, \cref{thm:ReLU-testing-0-function-2-sided-mL} follows immediately from \cref{lem:ReLU-testing-0-function-2-sided-far-mL,lem:ReLU-testing-0-function-2-sided-compute-mL} and from our arguments above that Algorithm \textsc{AllZeroTesterMHL} (page \pageref{alg:AllZeroTester-mL}) samples only $s_{\ell} + \sum_{k=0}^{\ell-1} s_k = \Theta(\epsilon^{-8\ell} \cdot \ln^4(1/\lambda\epsilon))$ entries of the input matrices $W_0, \dots, W_{\ell}$.
\qed


\subsection{Testing the \emph{OR-function} of a ReLU network with multiple layers}
\label{sec:ReLU-testing-OR-function-2-sided-mL}

It is straightforward to slightly modify Algorithm \textsc{AllZeroTesterMHL} and \cref{thm:ReLU-testing-0-function-2-sided-mL} to obtain a tester for the OR-function of a ReLU network with multiple layers.

Let us begin with the modification of Algorithm \textsc{AllZeroTesterMHL} to a tester for the OR-function. We follow the same approach as for single hidden layer ReLU networks, where we extended Algorithm \ref{alg:AllZeroTester} \textsc{AllZeroTester} for testing the constant $0$-function to Algorithm \ref{alg:ORTester} \textsc{ORTester} testing the OR-function. We randomly sample some subset of input nodes and of nodes at every hidden layer, and compute the function returned by the ReLU network induced by the sampled nodes \emph{adjusted by some small bias}. For the resulting network with bias we check whether it computes the OR-function. If this is the case, we accept. Otherwise, we reject.


\begin{algorithm}[h]
\SetAlgoLined\DontPrintSemicolon
\caption{\textsc{ORTesterMHL}$(\epsilon, \lambda, m_0, m_1, \dots, m_{\ell})$}
\label{alg:ORTester-mL}

\KwIn{ReLU network $(W_0, \dots, W_{\ell})$ with $m_0$ input nodes, $\ell \ge 1$ hidden layers with $m_1, \dots, m_{\ell}$ hidden layer nodes each, and a single output node;
parameters $\epsilon$ and $\lambda$}

set $s_0 := c_0 \cdot \epsilon^{-2\ell} \cdot \ln (1/\lambda\epsilon)$ for an appropriate positive constant $c_0$

for every $1 \le k \le \ell$, set $s_k := c_k \cdot \epsilon^{-4\ell} \cdot \ln (1/\lambda\epsilon)$ for an appropriate positive constant $c_k$

\For{$i=0$ \KwTo $\ell$}{
    sample $s_i$ nodes $\sset_i$ from $V_i$ uniformly at random without replacement

    let $S_i \in \NN_0^{m_i \times m_i}$ be the corresponding sampling matrix
}

let $h :\{0,1\}^{m_0} \rightarrow \{0,1\}$ be defined recursively as $h(x) := \sgn(\relu(h_{\ell+1}(x)))$, where
\begin{align*}
    h_i(x) &:=
        \begin{cases}
            W_{i-1} \cdot S_{i-1} \cdot \relu(h_{i-1}(x)) & \text{ if } 1 \le i \le \ell+1, \\
            x & \text{ if } i=0.
        \end{cases}
\end{align*}

\eIf{\emph{there is $x \in \{0,1\}^n$ with $\prod_{i=0}^{\ell} \frac{m_i}{s_i} \cdot h_{\ell+1}(x) + \frac14 \cdot (\epsilon/2)^{\ell} \cdot \prod_{i=0}^{\ell} m_i < 0$}}{\textbf{reject}}
{\textbf{accept}}
\end{algorithm}


The analysis 
essentially identical to that in the proof of \cref{thm:ReLU-testing-0-function-2-sided-mL} gives the following theorem.

\begin{theorem}
\label{thm:ReLU-testing-OR-function-2-sided-mL}
Let $0 < \lambda < \frac{1}{\ell+1}$ and $\frac{1}{\min_{0 \le k \le \ell}\{m_k\}} < \epsilon < \frac12$.
Let \NE be a ReLU network $(W_0, \dots, W_{\ell})$ with $m_0$ input nodes, $\ell \ge 1$ hidden layers with $m_1, \dots, m_{\ell}$ hidden layer nodes each, and a single output node. Further, suppose that for an arbitrary parameter $0 < \probp < \frac{1}{\ell+1}$, it holds that $m_0 \ge 128 \cdot (\ell+1)^2 \cdot (2/\epsilon)^{2\ell} \cdot \ln(\frac{1}{\probp} \prod_{i=1}^{\ell} m_i)$ and that $m_k \ge 171 \cdot (\ell+1)^2 \cdot (2/\epsilon)^{2\ell} \cdot \ln(\frac{1}{\probp} \prod_{i=k+1}^{\ell} m_i)$ for every $1 \le k \le \ell$.
Let $\delta \ge \probp + e^{-m_0/16}$.
Assuming that $\ell$ is constant, tester \textsc{ORTesterMHL} queries $\Theta(\epsilon^{-8\ell} \cdot \ln^4(1/\lambda\epsilon))$ entries from $W_0, \dots, W_{\ell}$, and
\begin{enumerate}[(i)]
\item rejects with probability at least $1 - \lambda$, if \NE is $(\epsilon,\delta)$-far from computing the OR-function, and
\item accepts with probability at least $1 - \lambda$, if \NE computes the OR-function.
\end{enumerate}
\end{theorem}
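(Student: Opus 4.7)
The plan is to mirror the four-step argument that proves \cref{thm:ReLU-testing-0-function-2-sided-mL}, adapting each step to the OR-function. The concentration lemma \cref{lemma:sampling-mL} and its existential version \cref{lemma:sampling-mL-existance} are function-agnostic — they only compare the scaled sampled value $\prod_{i=0}^{\ell} \frac{m_i}{s_i} \cdot h_{\ell+1}(x)$ to $f_{\ell+1}(x)$ — so they carry over verbatim. Similarly, the query complexity bound is identical, since \textsc{ORTesterMHL} has the same sample sizes $s_0,\dots,s_\ell$ as \textsc{AllZeroTesterMHL}.

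The first substantive step is an OR analog of the structural \cref{lemma:output-mL}: if $(W_0,\dots,W_\ell)$ is $(\epsilon,\delta)$-far from computing the OR-function, then there exists $x \in \{0,1\}^{m_0}$ with $f_{\ell+1}(x) < -\tfrac{1}{8} (\epsilon/2)^{\ell} \prod_{i=0}^{\ell} m_i$. The proof is by contradiction: assume the output value is at least $-\tfrac{1}{2}(\epsilon/2)^{\ell}\prod m_i$ for every input. As in \cref{lemma:output-mL}, pick sets $\sset_0^*,\dots,\sset_\ell^*$ satisfying the existential conclusion of \cref{lemma:sampling-mL-existance} with $\vartheta := \tfrac{1}{8}(\epsilon/2)^{\ell}$, set $\IN_k := V_k\setminus \sset_k^*$, and modify the network so that (a) all edges entering any node in $\IN_k$ are $+1$ (so each such node carries value at least $\|x\|_1 \prod_{i=1}^{k-1}|\IN_i|$), (b) all edges from $\IN_\ell$ to the output are set to $+1$ (instead of $-1$ as in the 0-function proof), and (c) all edges from $\IN_k$ to $\sset_{k+1}^*$ are set to $0$ so the $\sset^*$-subnetwork's contribution is unchanged. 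The total number of edge changes at each layer is at most an $\epsilon$-fraction. For every input with $\|x\|_1 \ge \tfrac{1}{4} m_0$ (a $(1-e^{-m_0/16})$-fraction by \cref{claim:NumberOfOnes}) and for which \cref{lemma:sampling-mL-existance} holds (a further $(1-\probp)$-fraction), the modified output value is at least $\tfrac{1}{4}(\epsilon/2)^\ell\prod m_i + (1-\tfrac{1}{2}\epsilon)^\ell(f_{\ell+1}(x) - \vartheta \prod m_i) > 0$, i.e.\ the network outputs $1$. Since these inputs include all but a $\delta$-fraction (using $\delta \ge \probp + e^{-m_0/16}$) and every such input is non-zero, the modified network agrees with the OR-function on all but a $\delta$-fraction — contradicting $(\epsilon,\delta)$-farness.

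The second step combines this structural lemma with \cref{lemma:sampling-mL}, mirroring \cref{lem:ReLU-testing-0-function-2-sided-far-mL}: letting $x_0$ be the bad input just produced, with probability at least $1-\lambda$ over $S_0,\dots,S_\ell$ we have $|\prod (m_i/s_i) \cdot h_{\ell+1}(x_0) - f_{\ell+1}(x_0)| \le \tfrac{1}{16}(\epsilon/2)^\ell \prod m_i$, hence $\prod (m_i/s_i) \cdot h_{\ell+1}(x_0) < -\tfrac{1}{16}(\epsilon/2)^\ell \prod m_i$ and so $\prod (m_i/s_i) h_{\ell+1}(x_0) + \tfrac{1}{16}(\epsilon/2)^\ell \prod m_i < 0$, triggering the rejection condition of \textsc{ORTesterMHL}. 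The third step, the analog of \cref{lem:ReLU-testing-0-function-2-sided-compute-mL}, handles networks that compute the OR-function: for every $x \ne \bzero$ we have $f_{\ell+1}(x) \ge 0$ (and $f_{\ell+1}(\bzero) = 0$ trivially), and the same telescoping concentration argument conditioned on $\sset_0$ together with a union bound over $X_{\sset_0}$ gives, with probability at least $1-\lambda$, that $\prod (m_i/s_i) h_{\ell+1}(x) \ge -\tfrac{1}{16}(\epsilon/2)^\ell \prod m_i$ uniformly in $x\in\{0,1\}^{m_0}$, so the acceptance condition holds.

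The main obstacle is the OR analog of \cref{lemma:output-mL}: the sign-flipping in the construction (positive contribution rather than negative) and the careful accounting of the single input $\bzero$, on which any ReLU network necessarily outputs $0$ and so disagrees with OR. Fortunately, this single input contributes only $2^{-m_0}$ to the disagreement fraction, which is absorbed into the $e^{-m_0/16}$ slack we already need in $\delta$, so the argument goes through without additional complication.
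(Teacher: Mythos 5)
Your proposal is correct and follows exactly the same four-step route the paper takes: you reuse \cref{lemma:sampling-mL} and \cref{lemma:sampling-mL-existance} unchanged, prove the OR analog of \cref{lemma:output-mL} by flipping the sign of the weights on edges from $\IN_\ell$ to the output node, and then transcribe \cref{lem:ReLU-testing-0-function-2-sided-far-mL,lem:ReLU-testing-0-function-2-sided-compute-mL} with the inequalities reversed — which is precisely what the paper's one-paragraph proof of \cref{thm:ReLU-testing-OR-function-2-sided-mL} claims to do. One small note: you use a bias of $\tfrac{1}{16}(\epsilon/2)^\ell\prod m_i$ in the rejection condition, consistent with the structure of \cref{lem:ReLU-testing-0-function-2-sided-far-mL} (which yields $\tfrac18-\tfrac1{16}=\tfrac1{16}$), whereas the paper's pseudocode for \textsc{ORTesterMHL} (\cref{alg:ORTester-mL}) writes $\tfrac14$; the latter appears to be a typo, since with bias $\tfrac14$ the rejection condition would not be triggered by the bound $\prod(m_i/s_i)\,h_{\ell+1}(x_0) < -\tfrac{1}{16}(\epsilon/2)^\ell\prod m_i$ that the argument produces, so your constant is in fact the correct one. (Your side remark that the input $\bzero$ ``disagrees with OR'' is inaccurate — the paper's OR-function outputs $0$ at $\bzero$, so there is agreement there — but this is harmless and only makes your bound slightly conservative.)
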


\begin{proof}
The arguments mimic the proof of \cref{thm:ReLU-testing-0-function-2-sided-mL}. The only difference is in the use of \cref{lemma:output-mL}, where essentially identical arguments as in the proof of \cref{lemma:output-mL} can be applied to show that if $(W_0, \dots, W_{\ell})$ is $(\epsilon, \delta)$-far from computing the OR-function then there exists an input $x_0 \in \{0,1\}^{m_0}$ such that $f_{\ell+1}(x) < - \frac18 \cdot (\epsilon/2)^{\ell} \cdot \prod_{i=0}^{\ell} m_i$. Once we have such claim, then the arguments used in the proofs of \cref{lem:ReLU-testing-0-function-2-sided-far-mL} and \cref{lem:ReLU-testing-0-function-2-sided-compute-mL} directly yield \cref{thm:ReLU-testing-OR-function-2-sided-mL}.
\end{proof}

Similarly as for \cref{thm:ReLU-testing-0-function-2-sided-mL}, in the special case when $m_0 = \dots = m_{\ell}$, we can simplify the bounds and obtain the following extension of \cref{thm:ReLU-testing-0-function-2-sided-mL-n} to the OR function.

\begin{theorem}
\label{thm:ReLU-testing-OR-function-2-sided-mL-m0=n}
Let $0 < \lambda < \frac{1}{\ell+1}$, $\frac{1}{n} < \epsilon < \frac12$, and $\delta \ge e^{-n/16} + e^{- (\epsilon/2)^{\ell} n /(342 (\ell+1)^2)}$.
Let \NE be a ReLU network $(W_0, \dots, W_{\ell})$ with $n$ input nodes, $\ell \ge 1$ hidden layers with $n$ hidden layer nodes each, and a single output node. Suppose that
$n \ge 171 \cdot (\ell+1)^2 \cdot (2/\epsilon)^{2\ell} \cdot (\ln(2/\delta) + \ell \ln n)$.
Assuming that~$\ell$ is constant, tester \textsc{ORTesterMHL} queries $\Theta(\epsilon^{-8\ell} \cdot \ln^4(1/\lambda\epsilon))$ entries from $W_0, \dots, W_{\ell}$, and
\begin{enumerate}[(i)]
\item rejects with probability at least $1 - \lambda$, if \NE is $(\epsilon,\delta)$-far from computing the OR-function, and
\item accepts with probability at least $1 - \lambda$, if \NE computes the OR-function.
\qed
\end{enumerate}
\end{theorem}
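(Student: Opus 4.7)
The plan is to derive \cref{thm:ReLU-testing-OR-function-2-sided-mL-m0=n} directly from \cref{thm:ReLU-testing-OR-function-2-sided-mL} by specializing the general bound to the case $m_0 = m_1 = \dots = m_\ell = n$ and choosing the auxiliary parameter $\probp$ so as to match the promised lower bound on $\delta$. In particular, I would set
\[
    \probp \;:=\; e^{-(\epsilon/2)^\ell n / (342(\ell+1)^2)},
\]
so that $\delta \ge e^{-n/16} + \probp$ matches the hypothesis $\delta \ge e^{-n/16} + e^{-(\epsilon/2)^\ell n/(342(\ell+1)^2)}$ of the general theorem.

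Next, I would verify the two size constraints required by \cref{thm:ReLU-testing-OR-function-2-sided-mL}, namely $m_0 \ge 128 (\ell+1)^2 (2/\epsilon)^{2\ell} \ln(\tfrac{1}{\probp}\prod_{i=1}^\ell m_i)$ and $m_k \ge 171 (\ell+1)^2 (2/\epsilon)^{2\ell} \ln(\tfrac{1}{\probp}\prod_{i=k+1}^\ell m_i)$ for all $1 \le k \le \ell$. Since $m_k = n$ for every $k$, these reduce to checking a single inequality of the form $n \ge 171 (\ell+1)^2 (2/\epsilon)^{2\ell} \cdot \ln(n^\ell / \probp)$. Substituting our choice of $\probp$ and using $\ln(1/\probp) = (\epsilon/2)^\ell n / (342(\ell+1)^2) \le n/(342(\ell+1)^2)$, we get $\ln(n^\ell/\probp) = \ell \ln n + \ln(1/\probp)$, which by the theorem's hypothesis $n \ge 171(\ell+1)^2(2/\epsilon)^{2\ell}(\ln(2/\delta) + \ell \ln n)$ combined with $\ln(1/\probp) \le \ln(2/\delta)$ is bounded by the required quantity. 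This is essentially the content of \cref{claim:auxiliary-to-lemma:output-mL} specialized to equal layer sizes, and the argument is therefore a direct bookkeeping calculation.

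With both the $\delta$ condition and the layer-size conditions verified for $m_0 = \dots = m_\ell = n$, \cref{thm:ReLU-testing-OR-function-2-sided-mL} directly yields a tester that queries $\Theta(\epsilon^{-8\ell}\ln^4(1/\lambda\epsilon))$ entries of $W_0,\dots,W_\ell$, rejects with probability at least $1-\lambda$ any ReLU network that is $(\epsilon,\delta)$-far from computing the OR-function, and accepts with probability at least $1-\lambda$ any network that computes the OR-function. Since the conditions we verified are exactly those stated in \cref{thm:ReLU-testing-OR-function-2-sided-mL-m0=n}, the claim follows.

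The only non-routine step is ensuring that the chosen $\probp$ is simultaneously small enough to give a useful bound on $\delta$ and large enough so that the layer-size hypotheses remain satisfiable; this is exactly the balancing done in \cref{claim:auxiliary-to-lemma:output-mL} and needs only to be applied here verbatim. There is no genuinely new difficulty beyond this substitution.
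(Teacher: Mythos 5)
Your high-level approach — derive \cref{thm:ReLU-testing-OR-function-2-sided-mL-m0=n} by specializing \cref{thm:ReLU-testing-OR-function-2-sided-mL} to equal layer sizes and invoking the bookkeeping of \cref{claim:auxiliary-to-lemma:output-mL} — is exactly what the paper intends (the statement is given with a \qed and no separate proof, and the same specialization appears for the $0$-function as \cref{thm:ReLU-testing-0-function-2-sided-mL-n}).

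However, your bookkeeping has a gap in the choice of $\probp$, and the inequality you rely on does not hold. You set $\probp := e^{-(\epsilon/2)^\ell n/(342(\ell+1)^2)}$ — the \emph{smallest} value permitted by the $\delta$-hypothesis — and then claim $\ln(1/\probp) \le \ln(2/\delta)$. That inequality is equivalent to $\delta \le 2\probp$, which has no reason to hold: the theorem's hypothesis only says $\delta \ge e^{-n/16} + \probp$, so $\delta$ can be essentially arbitrary in $(0,1)$, while your $\probp$ is exponentially small. If, say, $\delta = \frac12$, then $\ln(2/\delta) = \ln 4$ but $\ln(1/\probp) = (\epsilon/2)^\ell n/(342(\ell+1)^2)$, which grows with $n$. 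Concretely, plugging your $\probp$ into the size condition gives a term $171(\ell+1)^2(2/\epsilon)^{2\ell}\cdot (\epsilon/2)^\ell n/(342(\ell+1)^2) = \tfrac12(2/\epsilon)^\ell n \geq 2n$ (for $\epsilon < \tfrac12$), which already exceeds $n$ on its own, so the required inequality $n \ge 171(\ell+1)^2(2/\epsilon)^{2\ell}\ln(n^\ell/\probp)$ fails outright.

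The fix is to choose $\probp$ at the \emph{other} end of the allowed interval. Take, for instance, $\probp := \min\{\delta - e^{-n/16},\, \tfrac{1}{2(\ell+1)}\}$. This keeps $\probp < \tfrac{1}{\ell+1}$ and satisfies $\delta \ge \probp + e^{-n/16}$. Moreover $\delta \ge 2 e^{-n/16}$ (because the second term in the $\delta$-hypothesis exceeds $e^{-n/16}$, since $(\epsilon/2)^\ell/(342(\ell+1)^2) < \tfrac{1}{16}$), so $\delta - e^{-n/16} \ge \delta/2$ and hence $\ln(1/\probp) \le \ln(2/\delta)$ whenever the first branch of the $\min$ is active; the other branch gives a constant. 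With that choice, $\ln(n^\ell/\probp) = \ell\ln n + \ln(1/\probp) \le \ell\ln n + \ln(2/\delta)$, and the theorem's $n$-hypothesis (with constant $171 \ge 128$) yields all the layer-size conditions of \cref{thm:ReLU-testing-OR-function-2-sided-mL} at once. In short: $\probp$ should track $\delta$ from above (capped by $\tfrac{1}{\ell+1}$), not sit at the lower endpoint that only makes $\delta$'s hypothesis tight.
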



\section{Testing ReLU networks with\emph{ multiple layers and outputs}}
\label{sec:ReLU-testing-constant-function-2-sided-mL-mo}

Let us finally demonstrate how \cref{thm:ReLU-testing-0-function-2-sided-mL,thm:ReLU-testing-OR-function-2-sided-mL} from \cref{sec:multiple-layers}
can be combined with the approach from \cref{sec:multiple-outputs-reduction} to obtain a tester of near constant functions in ReLU networks with multiple layers and multiple outputs. In this section, we will prove the following theorem that generalizes \cref{ReLU-testing-constant-function-2-sided-mo} (\cref{sec:multiple-outputs-applications}) to an arbitrary number of hidden layers.

\MultipleOutputsLayers*

\begin{remark}\rm
\label{remark:auxiliary-to-theorem:ReLU-testing-constant-function-2-sided-mL-mo}
The auxiliary parameter $\probp$ in our analysis, on one hand, puts some constraints on $m_0, \dots, m_{\ell}$ and on the other hand, allows us to minimize $\delta$. In general, one would hope for $\probp$ to be close to $2^{-\Theta(m_0)}$, and in particular, the arguments used in \cref{claim:auxiliary-to-lemma:output-mL} allow us to obtain $\probp$ as low as $e^{-\Theta((\epsilon/2)^{2\ell^2} \cdot \min_{0 \le k \le \ell}\{m_k\}/\ell^2)}$; for example, if $m_0 = \dots = m_{\ell}$, then we have $e^{-\Theta(m_0 \cdot (\epsilon/2)^{2\ell^2}/\ell^2)}$.
\end{remark}


As in \cref{sec:multiple-outputs-reduction} (for ReLU networks with a single hidden layer and multiple outputs) the proof of \cref{ReLU-testing-constant-function-2-sided-mL-mo} relies on a reduction (in \cref{thm:reduction-constant-ReLU-mL-mo}, \cref{sec:proof-thm:reduction-constant-ReLU-mL-mo}) of the problem to testing a near constant function to testing the constant $0$-function (\cref{thm:ReLU-testing-0-function-2-sided-mL}) and the OR-function (\cref{thm:ReLU-testing-OR-function-2-sided-mL}).

Before we will present our framework in \cref{sec:proof-thm:reduction-constant-ReLU-mL-mo}, let us first introduce some auxiliary tools for the study of ReLU networks with multiple layers and multiple outputs.


\subsection{Auxiliary tools in ReLU networks with multiple layers and \mbox{multiple outputs}}
\label{sec:ReLU-testing-constant-function-2-sided-mL-aux-tools}

Let us start with a useful claim that extends our arguments used in the proof of \cref{thm:reduction-constant-ReLU-mo} and shows a simple bound for the output value\footnote{The \emph{output value} of ReLU network $(W_0, \dots, W_{\ell})$ is equal to $f_{\ell+1}(x)$, using the terminology from \cref{def:ReLU-mol}.} of any ReLU network (with a single output node) that is $(\epsilon,\delta)$-close to computing a constant function.

\begin{claim}
\label{claim:if-close-then-not-too-large}
Let $(W_0, \dots, W_{\ell})$ be a ReLU network with $m_0$ input nodes, $\ell \ge 1$ hidden layers with $m_1, \dots, m_{\ell}$ hidden layer nodes each, and a single output node. Then,
\begin{itemize}
\item if $(W_0, \dots, W_{\ell})$ is $(\epsilon,\delta)$-close to computing the constant $0$-function, then for at least a $(1-\delta)$-fraction of the inputs the output value is at most $2 \cdot (\ell+1) \cdot \epsilon \cdot \prod_{i=0}^{\ell} m_i$, and
\item if $(W_0, \dots, W_{\ell})$ is $(\epsilon,\delta)$-close to computing the OR-function, then for at least a $(1-\delta)$-fraction of the inputs the output value is at least $- 2 \cdot (\ell+1) \cdot \epsilon \cdot \prod_{i=0}^{\ell} m_i$.
\end{itemize}
\end{claim}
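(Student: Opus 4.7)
My plan is to reduce the claim to a deterministic per-input propagation bound on the scalar pre-$\sgn$ value at the output. Unpacking Definition \ref{def:ReLU-farness-from-function-mol} produces a modified network $\mathcal{N}'=(W_0',\dots,W_\ell')$ that differs from $\mathcal{N}=(W_0,\dots,W_\ell)$ in at most $\epsilon m_i m_{i+1}$ entries of each $W_i$ and computes a function $g$ matching the target on at least a $(1-\delta)$-fraction of inputs. On any such matching input $x$, the scalar $f_{\ell+1}^{(\mathcal{N}')}(x)$ is $\le 0$ (for the constant $0$-function) or $\ge 0$ (for the OR-function; the only input on which the OR-function is $0$ is $x=\bzero$, where every ReLU network automatically outputs $0$, so that case is handled for free). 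Both parts of the claim then reduce to the uniform bound
\[
\bigl|f_{\ell+1}^{(\mathcal{N})}(x)-f_{\ell+1}^{(\mathcal{N}')}(x)\bigr|\ \le\ 2(\ell+1)\,\epsilon\,\prod_{i=0}^\ell m_i
\quad\text{for every }x\in\{0,1\}^{m_0}.
\]

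To establish this bound I interpolate between $\mathcal{N}$ and $\mathcal{N}'$ one changed edge at a time and apply the triangle inequality; it then suffices to control the impact of a single edge change in some $W_k$ (an $\ell_\infty$-weight change of at most $2$ on the arc from some node $u$ at layer $k$ to some node $v$ at layer $k+1$). Along any hybrid, $\|\relu(f_k)\|_\infty\le\prod_{j=0}^{k-1}m_j$ by the standard bound from the proof of Lemma \ref{lemma:sampling-mL} (weights in $[-1,1]$, Boolean input), so $\Delta f_{k+1}$ is supported at the single coordinate $v$ with magnitude at most $2\prod_{j=0}^{k-1}m_j$. The key observation is that ReLU preserves this one-sparse support, since $\relu(a)-\relu(b)=0$ whenever $a=b$; hence after multiplying by $W_{k+1}$ we still have $\Delta f_{k+2}[i]=W_{k+1}[i,v]\cdot\Delta\relu(f_{k+1})[v]$, whose $\ell_\infty$-norm is still at most $2\prod_{j=0}^{k-1}m_j$. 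From layer $k+2$ onwards the crude estimate $\|\Delta f_{j+1}\|_\infty\le m_j\,\|\Delta f_j\|_\infty$ (from weights in $[-1,1]$ and coordinatewise $1$-Lipschitzness of ReLU) takes over and gives
\[
|\Delta f_{\ell+1}|\ \le\ 2\prod_{j=0}^{k-1}m_j\cdot\prod_{j=k+2}^\ell m_j\ =\ \frac{2\prod_{j=0}^\ell m_j}{m_k\,m_{k+1}},
\]
with the convention $m_{\ell+1}=1$ handling the boundary cases $k=\ell-1$ and $k=\ell$ uniformly. Summing this one-edge bound over the $\le\epsilon m_k m_{k+1}$ changed edges in $W_k$ yields a contribution of at most $2\epsilon\prod_{j=0}^\ell m_j$ from layer $k$, and summing over the $\ell+1$ values $k=0,\dots,\ell$ gives the promised $2(\ell+1)\,\epsilon\,\prod_j m_j$.

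The one non-routine step is the ReLU-and-sparsity observation that saves a factor of $m_{k+1}$ in the propagation. A careless $\ell_\infty$-only argument would allow $\Delta f_{k+2}$ to gain a factor of $m_{k+1}$ when passing $\Delta f_{k+1}$ through $W_{k+1}$, after which summing the $\epsilon m_k m_{k+1}$ edge changes would only give a bound of order $\epsilon\cdot m_{k+1}\cdot\prod_j m_j$, missing the target by exactly the factor we need. Exploiting that the first post-change layer carries a one-sparse perturbation---which only becomes dense after one further matrix multiplication---is what cashes in the full sparsity budget; the remaining pieces (the hybrid triangle inequality, the internal-value $\ell_\infty$ bound already used in the paper, and the direct reduction from $(\epsilon,\delta)$-closeness) are routine.
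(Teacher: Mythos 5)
Your proof is correct, and it follows the same high-level strategy as the paper's: extract the modified network $\mathcal{N}'$ from the $(\epsilon,\delta)$-closeness definition, note that $f_{\ell+1}^{(\mathcal{N}')}(x)\le 0$ (resp.\ $\ge 0$) on the matching $(1-\delta)$-fraction of inputs, and bound $|f_{\ell+1}^{(\mathcal{N})}(x)-f_{\ell+1}^{(\mathcal{N}')}(x)|$ by summing per-layer contributions from the at most $\epsilon m_k m_{k+1}$ changed entries in each $W_k$, using $\|\relu(f_k)\|_\infty\le\prod_{i<k}m_i$ as the internal magnitude bound. Where you genuinely diverge is in the per-edge propagation estimate. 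The paper asserts that a single edge change in $E_k$ changes the output by at most $2\prod_{i=0}^{k-1}m_i$, which is correct for $k\in\{\ell-1,\ell\}$ but, as stated, neglects the factor $\prod_{j=k+2}^{\ell}m_j$ that the perturbation acquires while propagating from layer $k+2$ to layer $\ell+1$; your bound $2\prod_{i=0}^{k-1}m_i\cdot\prod_{j=k+2}^{\ell}m_j$ (equivalently $\frac{2\prod_j m_j}{m_k m_{k+1}}$) is the correct one-edge bound, derived via the clean observation that ReLU preserves one-sparse support so the perturbation is dense only from layer $k+2$ onward. Both routes land on exactly $2(\ell+1)\epsilon\prod_{i=0}^{\ell}m_i$ after summing over the changed edges in each layer and over the $\ell+1$ layers, so the paper's stated claim is correct; but your hybrid-with-sparsity argument is the rigorous derivation, and the one worth keeping. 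Your handling of the OR-function case (observing that every ReLU network outputs $0$ on $\bzero$, so $f_{\ell+1}^{(\mathcal{N}')}(x)\ge 0$ on the entire matching fraction) is also exactly right and matches the role $\sgn\circ\relu$ plays in the paper's definitions.
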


\begin{proof}
Observe that a change of a weight of an edge in $E_0$ changes the value of the output node by at most $2$; similarly, one edge weight change in $E_1$ may change it by at most $2m_0$, and in general, one edge weight change in $E_k$ may change the value of the output node by at most $2 \prod_{i=0}^{k-1} m_i$. Therefore, if $(W_0, \dots, W_{\ell})$ is $(\epsilon,\delta)$-close to computing constant $0$-function, then for at least a $(1-\delta)$-fraction of the inputs the output value is at most
\begin{align*}
    \sum_{k=0}^{\ell-1} \left( (\epsilon m_k m_{k+1}) \cdot 2 \prod_{i=0}^{k-1} m_i \right) +
        (\epsilon m_{\ell}) \cdot 2 \prod_{i=0}^{\ell-1} m_i
\end{align*}
as otherwise, the output node could not be $(\epsilon,\delta)$-close to computing the constant $0$-function. Now, the first part of the claim follows since the value above is upper bounded by $2 \cdot (\ell+1) \cdot \epsilon \cdot \prod_{i=0}^{\ell} m_i$.

Identical arguments imply that if $(W_0, \dots, W_{\ell})$ is $(\epsilon,\delta)$-close to computing the OR-function, then for at least a $(1-\delta)$-fraction of the inputs the output value is at least
\begin{align*}
    - \sum_{k=0}^{\ell-1} \left( (\epsilon m_k m_{k+1}) \cdot 2 \prod_{i=0}^{k-1} m_i \right) +
        (\epsilon m_{\ell}) \cdot 2 \prod_{i=0}^{\ell-1} m_i
        \enspace,
\end{align*}
for otherwise the output node could not be $(\epsilon,\delta)$-close to computing the OR-function.
\end{proof}

(Let us also observe that if $m_i \ge 2$ for every $0 \le i \le \ell-1$, then in fact the sum used can be upper bounded by $6 \cdot \epsilon \cdot \prod_{i=0}^{\ell} m_i$, independently of $\ell$.)

\medskip

Our next lemma is a straightforward generalization of \cref{lemma:sampling-mL-existance} to multiple outputs.

\begin{lemma}
\label{lemma:sampling-mL-existance-mo}
Let $(W_0, \dots, W_{\ell})$ be a ReLU network with $m_0$ input nodes, $\ell \ge 1$ hidden layers with $m_1, \dots, m_{\ell}$ hidden layer nodes each, and $m_{\ell+1}$ output nodes.
Let $f:\{0,1\}^{m_0} \rightarrow \{0,1\}^{m_{\ell+1}}$ be the function computed by ReLU network $(W_0, \dots, W_{\ell})$, which is $f(x) := \sgn(\relu(f_{\ell+1}(x)))$, where
\begin{align*}
    f_i(x) &:=
        \begin{cases}
            W_{i-1} \cdot \relu(f_{i-1}(x)) & \text{ if } 1 \le i \le \ell+1, \\
            x & \text{ if } i=0.
        \end{cases}
\end{align*}
For any sampling matrices $S_0, \dots, S_{\ell}$ corresponding to samples $\sset_0, \dots, \sset_{\ell}$ of size $s_0, \dots, s_{\ell}$, respectively, let $h :\{0,1\}^{m_0} \rightarrow \{0,1\}^{m_{\ell+1}}$ be the function computed by ReLU network $(W_0 \cdot S_0, \dots, W_{\ell} \cdot S_{\ell})$, that is, $h$ is defined recursively as $h(x) := \sgn(\relu(h_{\ell+1}(x)))$, where
\begin{align*}
    h_i(x) &:=
        \begin{cases}
            W_{i-1} \cdot S_{i-1} \cdot \relu(h_{i-1}(x)) & \text{ if } 1 \le i \le \ell+1, \\
            x & \text{ if } i=0.
        \end{cases}
\end{align*}
Let $0 < \vartheta \le 1$ and $0 < \lambda < \frac{1}{\ell+1}$. Assume that $s_k \ge \frac{2 (\ell+1)^2 \ln(2 \prod_{i=k+1}^{\ell} s_i/(\lambda(\ell+1)))}{\vartheta^2}$ for every $0 \le k \le \ell$. Then there is at least one choice of sampling matrices $S_0, \dots, S_{\ell}$ such that for at most a $(\lambda \cdot m_{\ell+1})$-fraction of the inputs $x \in \{0,1\}^{m_0}$, for some $1 \le j \le m_{\ell+1}$ holds the following:
\begin{align*}
    \left| \prod_{i=0}^{\ell} \frac{m_i}{s_i} \cdot h^{(j)}_{\ell+1}(x) - f^{(j)}_{\ell+1}(x) \right|
        &>
    \vartheta \cdot \prod_{i=0}^{\ell} m_i
    \enspace,
\end{align*}
where $h^{(j)}_{\ell+1}(x)$ and $f^{(j)}_{\ell+1}(x)$ denote the $j$-th rows of $h_{\ell+1}(x)$ and $f_{\ell+1}(x)$, respectively.
\end{lemma}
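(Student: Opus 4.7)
The plan is to reduce the multi-output claim to the single-output case (Lemma \ref{lemma:sampling-mL}) by restricting to one output node at a time, then combine via a union bound over the output nodes with the double-counting argument from the proof of Lemma \ref{lemma:sampling-mL-existance}.

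First, I would observe that for each $j \in \{1, \dots, m_{\ell+1}\}$, the $j$-th row $f^{(j)}_{\ell+1}(x)$ is exactly the output value of the \emph{single-output} ReLU network $(W_0, W_1, \dots, W_{\ell-1}, W^{(j)}_\ell)$, where $W^{(j)}_\ell \in [-1,1]^{1 \times m_\ell}$ denotes the $j$-th row of $W_\ell$ viewed as a $1 \times m_\ell$ matrix. This follows directly from the recursive definition of $f_i$: only the outermost matrix $W_\ell$ couples different rows of the output, so restricting to row $j$ of $W_\ell$ produces a bona fide single-output network computing precisely $f^{(j)}_{\ell+1}(x)$. The sampling matrices $S_0, \dots, S_\ell$ act identically on the full and restricted networks, so analogously $h^{(j)}_{\ell+1}(x)$ is the output of $(W_0 S_0, \dots, W_{\ell-1} S_{\ell-1}, W^{(j)}_\ell S_\ell)$.

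Next, I would invoke Lemma \ref{lemma:sampling-mL} for each of the $m_{\ell+1}$ restricted networks (the hypotheses on $s_0, \dots, s_\ell$ assumed here are precisely the hypotheses of Lemma \ref{lemma:sampling-mL}). This yields, for each fixed $x \in \{0,1\}^{m_0}$ and each fixed $j$,
\begin{align*}
\Pr\left[\left| \prod_{i=0}^{\ell} \frac{m_i}{s_i} \cdot h^{(j)}_{\ell+1}(x) - f^{(j)}_{\ell+1}(x) \right| > \vartheta \cdot \prod_{i=0}^{\ell} m_i \right] \le \lambda,
\end{align*}
where the probability is over the random choice of $S_0, \dots, S_\ell$. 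A union bound over the $m_{\ell+1}$ output nodes then gives, for each fixed $x$, that the probability over $S_0, \dots, S_\ell$ that \emph{some} $j$ violates the above bound is at most $\lambda \cdot m_{\ell+1}$.

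Finally, I would replicate the counting argument used in the proof of Lemma \ref{lemma:sampling-mL-existance}. Set $\IN := \{0,1\}^{m_0}$ and let $\mathcal{S}$ be the set of all admissible tuples $\langle\sset_0, \dots, \sset_\ell\rangle$. Call $(x, \mathfrak{s}) \in \IN \times \mathcal{S}$ \emph{bad} if, for the sampling matrices determined by $\mathfrak{s}$, some $j$ violates the bound above for input $x$; call $\mathfrak{s}$ \emph{obstructive} if it is bad for strictly more than a $(\lambda \cdot m_{\ell+1})$-fraction of inputs. The previous step shows that the total number of bad pairs is at most $\lambda \cdot m_{\ell+1} \cdot |\IN| \cdot |\mathcal{S}|$. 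If every $\mathfrak{s} \in \mathcal{S}$ were obstructive, the total number of bad pairs would strictly exceed $|\mathcal{S}| \cdot \lambda \cdot m_{\ell+1} \cdot |\IN|$, a contradiction; hence at least one non-obstructive $\mathfrak{s}$ exists, which is precisely the claimed existence. The main obstacle is essentially bookkeeping: verifying that the row-wise restriction to the $j$-th output yields a genuine single-output ReLU network whose sampled version lines up with the $j$-th row of the multi-output sampled network, so that Lemma \ref{lemma:sampling-mL} can be applied verbatim.
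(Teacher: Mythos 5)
Your proposal matches the paper's proof essentially step for step: invoke Lemma \ref{lemma:sampling-mL} for each output node (noting that fixing the $j$-th row of $W_\ell$ yields a genuine single-output network), take a union bound over the $m_{\ell+1}$ outputs to bound the fraction of bad tuples per input, and then re-run the pigeonhole/double-counting argument from Lemma \ref{lemma:sampling-mL-existance} to extract a non-obstructive choice of $\sset_0,\dots,\sset_\ell$. The only difference is that you spell out more explicitly why the row-wise restriction is a valid single-output ReLU network, which the paper leaves implicit.
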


\begin{proof}
This is a straightforward generalization of the arguments used in the proof of \cref{lemma:sampling-mL-existance}.

Let $\IN := \{0,1\}^{m_0}$ be the set of all input instances.
Let $\mathcal{S}$ be the set of all possible selections of the nodes in sets $\sset_0, \dots, \sset_{\ell}$.
Let us call tuple $\langle \sset_0, \dots, \sset_{\ell}\rangle \in \mathcal{S}$ to be \emph{bad for input $x_0 \in \IN$} if after fixing sets $\sset_0, \dots, \sset_{\ell}$ to define function $h(x_0)$, for some $1 \le j \le m_{\ell+1}$ we have
\begin{align*}
    \left| \prod_{i=0}^{\ell} \frac{m_i}{s_i} \cdot h^{(j)}_{\ell+1}(x_0) - f^{(j)}_{\ell+1}(x_0) \right|
        >
    \vartheta \cdot \prod_{i=0}^{\ell} m_i
    \enspace.
\end{align*}
For any $x \in \IN$ and $\mathfrak{s} \in \mathcal{S}$, we call pair $\langle x, \mathfrak{s} \rangle$ \emph{bad} if $\mathfrak{s}$ is bad for input $x$. Let us call a tuple $\mathfrak{s} \in \mathcal{S}$ \emph{$\alpha$-obstructive} if for more than a $\alpha$-fraction of inputs $x \in \IN$, pairs $\langle x, \mathfrak{s} \rangle$ are bad.

By \cref{lemma:sampling-mL} (and under the assumptions of \cref{lemma:sampling-mL}) for every $x \in \IN$ and every fixed output (corresponding to row $1 \le j \le m_{\ell+1}$ in $h_{\ell+1}(x)$ and $f_{\ell+1}(x)$), at most a $\lambda$ fraction of the tuples in $\mathcal{S}$ are bad, and hence, when we extend this over all $m_{\ell+1}$ outputs, we obtain that at most $\lambda \cdot m_{\ell+1} \cdot |\mathcal{S}|$ tuples are bad. Therefore at most $\lambda \cdot m_{\ell+1} \cdot |\IN| \cdot |\mathcal{S}|$ pairs $\langle x, \mathfrak{s} \rangle$ with $x \in \IN$ and $\mathfrak{s} \in \mathcal{S}$ are bad.

Next, observe that if there are $t$ tuples in $\mathcal{S}$ that are $(\lambda \cdot m_{\ell+1})$-obstructive, then there are more than $t \cdot \lambda \cdot m_{\ell+1} \cdot |\IN|$ pairs $\langle x, \mathfrak{s} \rangle$ with $x \in \IN$ and $\mathfrak{s} \in \mathcal{S}$ that are bad. However, since there are at most $\lambda \cdot m_{\ell+1} \cdot |\IN| \cdot |\mathcal{S}|$ pairs $\langle x, \mathfrak{s} \rangle$ with $x \in \IN$ and $\mathfrak{s} \in \mathcal{S}$ that are bad, we must have $t < |\mathcal{S}|$. This means that there are less than $|\mathcal{S}|$ tuples $\mathfrak{s} \in \mathcal{S}$ that are $(\lambda \cdot m_{\ell+1})$-obstructive, or equivalently, there is at least one tuple $\mathfrak{s} \in \mathcal{S}$ that is \emph{not} $(\lambda \cdot m_{\ell+1})$-obstructive. This completes the proof.
\end{proof}


\subsection{
Reducing multiple outputs networks to single output networks}
\label{sec:proof-thm:reduction-constant-ReLU-mL-mo}

In this section, we show that testing if a multiple-layer ReLU network with multiple outputs computes a near constant function \bb can be reduced to testing if some multiple-layer ReLU network with a single output compute constant $0$- or $1$-functions.
We begin with an auxiliary definition.

\begin{definition}
\label{def:network-restricted-to-output-node-j}
Let $\NE$ be an arbitrary ReLU network $(W_0, \dots, W_{\ell})$ with $m_0$ input nodes, $\ell \ge 1$ hidden layers with $m_1, \dots, m_{\ell}$ hidden layer nodes each, and $m_{\ell+1}$ output nodes. For any output node $j$, we define \textbf{ReLU network $\NE_j$ restricted to the output node $j$} to be the sub-network of $\NE$ restricted to the output node $i$ containing all input and all hidden layers. That is, $\NE_j$ is ReLU network $(W_0, \dots, W_{\ell-1}, W_{\ell}^{(j)})$, where $W_{\ell}^{(j)}$ is the $m_{\ell}$-vector corresponding to the $j$-th column in~$W_{\ell}$.
\end{definition}

The following theorem extends \cref{thm:reduction-constant-ReLU-mo} and shows that if a ReLU network with $m_{\ell+1}$ output bits is $(\epsilon, \delta)$-far from computing near constant function \bb, then for at least an $\frac12\epsilon$ fraction of the output bits $j$, the ReLU network restricted to the output node $j$ is uniformly $(\epsilon',\delta')$-far from computing an appropriate constant function, where $\epsilon' \sim (\epsilon/2)^{\ell}/\ell$, $\delta' \sim \delta/m_{\ell+1} - \probp$, and $\probp$ is a parameter bounding some inter-dependencies between $m_0, \dots, m_{\ell}$ (and one would hope for $\probp$ to be close to $2^{\Theta(m_0)}$).

\begin{theorem}
\label{thm:reduction-constant-ReLU-mL-mo}
Let \NE be a ReLU network with $m_0$ input nodes, $\ell \ge 1$ hidden layers with $m_1, \dots, m_{\ell}$ hidden layer nodes each, and $m_{\ell+1} \ge 2$ output nodes.
Let $\bb = (\ob_1, \dots, \ob_{m_{\ell+1}}) \in \{0,1\}^{m_{\ell+1}}$ and $\frac{1}{\min_{0 \le k \le \ell}\{m_k\}} < \epsilon < \frac12$.
Further, suppose that for an arbitrary parameter $0 < \probp < \frac{1}{\ell+1}$, it holds that $m_k \ge 145 \cdot (\ell+1)^2 \cdot (2/\epsilon)^{2\ell} \cdot \ln(\frac{1}{\probp} \prod_{i=k+1}^{\ell} m_i)$ for every $0 \le k \le \ell$.
If $\delta \ge \probp m_{\ell+1} + e^{-m_0/16}$ and \NE is $(\epsilon, \delta)$-far from computing near constant function \bb then there are more than $\frac12 \epsilon m_{\ell+1}$ output nodes such that for any such output node $j$, the ReLU network $\NE_j$ restricted to the output node $j$ is $(\epsilon',\delta')$-far from computing constant function $\ob_i$, where $\epsilon' := \frac{1}{17 \cdot (\ell+1)} \cdot \left(\frac{\epsilon}{2-\epsilon}\right)^{\ell}$ and $\delta' := \frac{\delta - e^{-m_0/16} - \probp \cdot m_{\ell+1}}{m_{\ell+1}}$.
\end{theorem}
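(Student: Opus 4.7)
My plan is to proceed by contradiction, generalising the phase-by-phase modification strategy of \cref{thm:reduction-constant-ReLU-mo} (the single-hidden-layer analogue) and incorporating the multi-layer signal-injection technique from the proof of \cref{lemma:output-mL}. Assume for contradiction that the set $Q$ of output nodes $j$ for which $\NE_j$ is $(\epsilon',\delta')$-far from computing constant $\ob_j$ satisfies $|Q| \le \tfrac12\epsilon m_{\ell+1}$. I will exhibit a modification of $\NE$ that changes at most an $\epsilon$-fraction of the edges at every layer and that computes the near constant function $\bb$ on at least a $(1-\delta)$-fraction of the inputs, contradicting $(\epsilon,\delta)$-farness.

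The construction first sets up the ``injection backbone'' analogously to the proof of \cref{lemma:output-mL}. Invoke \cref{lemma:sampling-mL-existance-mo} with sample sizes $s_0^*=m_0$, $s_k^*=(1-\tfrac{\epsilon}{2})m_k$ for $1\le k\le\ell$, and accuracy $\vartheta=\Theta((\epsilon/2)^\ell/(\ell+1))$; the hypotheses on the $m_k$ make its assumptions hold, so we obtain sampling sets $\sset^*_0,\dots,\sset^*_\ell$ whose complements $\IN_k := V_k\setminus\sset^*_k$ have size $|\IN_k|=\tfrac{\epsilon}{2}m_k$, and such that the scaled sampled output approximates $f^{(j)}_{\ell+1}(x)$ simultaneously for every output $j$ on all but a $\probp m_{\ell+1}$-fraction of inputs. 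Partition the outputs outside $Q$ according to $\ob_j$ into $L_0,L_1$ and within each into $T_0\cup T_1$ (those with more than $\tfrac14\epsilon m_\ell$ positive, resp.\ non-positive, $E_\ell$-edges) and their complements. Then perform four modifications: \begin{inparaenum}[(i)] \item for every $j\in Q$ set all $E_\ell$-edges incident to $j$ to $\ob_j$; \item for $j\in L_0\setminus T_0$ zero the at most $\tfrac14\epsilon m_\ell$ positive $E_\ell$-edges incident to $j$ and symmetrically for $j\in L_1\setminus T_1$; \item mirroring the probabilistic argument of \cref{thm:reduction-constant-ReLU-mo} (Chernoff + union bound over $\le m_{\ell+1}$ output nodes, made possible by the lower bound on $m_\ell$), find disjoint $R_0,R_1\subseteq\IN_\ell$ of size $\Theta(\epsilon m_\ell)$ such that every $j\in T_0$ (resp.\ $T_1$) has $\Omega(\epsilon^2 m_\ell)$ positive (resp.\ non-positive) $E_\ell$-edges into $R_0$ (resp.\ $R_1$), and flip those edges to $-1$ (resp.\ $+1$); \item perform the multi-layer injection by setting every $E_{k-1}$-edge from $V_{k-1}$ into $\IN_k$ to $+1$ and every $E_k$-edge from $\IN_k$ into $\sset^*_{k+1}$ to $0$, for $1\le k\le \ell$. \end{inparaenum} A direct edge count per layer yields at most an $\epsilon$-fraction of modifications.

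A simple induction (as in \cref{lemma:output-mL}) shows that after Phase (iv) every node of $\IN_\ell$ --- hence every node of $R_0\cup R_1$ --- carries value at least $\|x\|_1\cdot(\epsilon/2)^{\ell-1}\prod_{i=1}^{\ell-1}m_i$ on input $x$, while the ``sampled backbone'' sub-network on $\sset^*_0,\dots,\sset^*_\ell$ remains intact and thus by \cref{lemma:sampling-mL-existance-mo} contributes $(1-\tfrac{\epsilon}{2})^\ell f^{(j)}_{\ell+1}(x)$ to output $j$ up to an additive error $\vartheta(1-\tfrac{\epsilon}{2})^\ell\prod m_i$ on all but a $\probp m_{\ell+1}$-fraction of inputs. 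By \cref{claim:if-close-then-not-too-large} the backbone contribution at every ``close'' output $j$ has magnitude at most $2(\ell+1)\epsilon'\prod m_i$ on a $(1-\delta')$-fraction of inputs; for inputs with $\|x\|_1\ge m_0/4$ (which is all but an $e^{-m_0/16}$-fraction by \cref{claim:NumberOfOnes}) the Phase~(iii) injection contributes $\Omega((\epsilon/2)^{\ell+1}\prod_{i=0}^\ell m_i)$ of the correct sign to every $j\in T_0\cup T_1$, and Phases~(i), (ii), (iv) directly yield the correct sign at the remaining outputs. The choice $\epsilon'=\tfrac{1}{17(\ell+1)}(\tfrac{\epsilon}{2-\epsilon})^\ell$ is exactly calibrated so that the injection strictly dominates the backbone contribution plus the sampling error; a union bound over the at most $\probp m_{\ell+1}+e^{-m_0/16}+m_{\ell+1}\delta'=\delta$ fraction of exceptional inputs (sampling failure, low $\|x\|_1$, or any of the $\le m_{\ell+1}$ close outputs behaving abnormally) gives the required agreement of the modified network with $\bb$ on at least a $(1-\delta)$-fraction of inputs.

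The principal technical obstacle is coordinating Phase~(iii) with the existential choice of sampling sets from \cref{lemma:sampling-mL-existance-mo}: Phase~(iii) needs $\IN_\ell$ to contain enough positive (resp.\ non-positive) $E_\ell$-edges from every $j\in T_0$ (resp.\ $T_1$), which is not automatic from the existential choice there. I will resolve this either by (a) strengthening \cref{lemma:sampling-mL-existance-mo} into a probabilistic statement (with a tightened counting argument showing that a uniformly random sampling tuple satisfies its conclusion with probability at least $1/2$, and then union-bounding against the Chernoff event that $\IN_\ell$ covers $T_0\cup T_1$), or (b) by choosing $\IN_\ell$ first to contain adequate $R_0,R_1$ (by a separate Chernoff argument on a random $\tfrac{\epsilon}{2}m_\ell$-subset of $V_\ell$) and then invoking a variant of \cref{lemma:sampling-mL-existance-mo} that fixes $\sset^*_\ell$ and chooses only $\sset^*_0,\dots,\sset^*_{\ell-1}$ existentially. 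A secondary subtlety is calibrating $\vartheta$ small enough that the sampling error is absorbed by the injection dominance, yet keeping the required $s_k^*$ compatible with the hypothesised lower bounds on the $m_k$.
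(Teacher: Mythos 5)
Your proposal imports the $T_0/T_1$ and $R_0/R_1$ decomposition from the single-hidden-layer theorem (\cref{thm:reduction-constant-ReLU-mo}) into the multi-layer setting, but that is not how the paper's proof of \cref{thm:reduction-constant-ReLU-mL-mo} goes, and the hybrid you build is broken. Once you have the injection/backbone split of Phase~(iv) (set $E_{k-1}$-edges into $\IN_k$ to $+1$, zero $E_k$-edges from $\IN_k$ into $\sset^*_{k+1}$ for $1\le k\le\ell-1$), the paper's last-layer move is simply to change \emph{all} $E_\ell$-edges from $\IN_\ell$ to every output $j\in\OV_0\setminus\FN$ to $-1$, all $E_\ell$-edges from $\IN_\ell$ to $j\in\OV_1\setminus\FN$ to $+1$, and for $j\in\FN$ set all its incident $E_\ell$-edges to $\ob_j$. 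That is $|\FN|m_\ell+|\IN_\ell|\cdot|V_{\ell+1}\setminus\FN|\le \epsilon m_\ell m_{\ell+1}$ edges, within budget, and no $T_0/T_1$, no $R_0/R_1$, and no probabilistic selection is needed. The backbone contribution $h^{*(j)}_{\ell+1}(x)$ is small on most inputs directly from \cref{lemma:sampling-mL-existance-mo} together with \cref{claim:if-close-then-not-too-large}, so one never has to reason about the sign distribution of the original $E_\ell$-edges at all. In short, the multi-layer proof is \emph{simpler} than the single-layer one because the backbone sampling makes the contribution of $\sset^*_\ell$ explicitly controllable, which eliminates the entire apparatus you are carrying over.

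The combination you propose has a concrete flaw independent of the coordination issue you flag. Phase~(iv) inflates every node of $\IN_\ell$ to carry a value of order $\|x\|_1\prod_{i=1}^{\ell-1}|\IN_i|$, while Phase~(iii) only rewrites the $E_\ell$-edges from $R_0$ into $B_0$ and from $R_1$ into $B_1$. The edges from $\IN_\ell\setminus(R_0\cup R_1)$ to any output, and the cross-edges $R_0\to B_1$ and $R_1\to B_0$, keep their original weights in $[-1,1]$. Since those source nodes are now all inflated, these uncontrolled edges can contribute $\pm\Theta\bigl(\epsilon m_\ell\cdot\|x\|_1\prod_{i=1}^{\ell-1}|\IN_i|\bigr)=\pm\Theta\bigl((\epsilon/2)^\ell\prod_{i=0}^\ell m_i\bigr)$ to a single output, which is at least as large as the useful signal you get out of $R_0$ or $R_1$; no calibration of $\vartheta$ or $\epsilon'$ can dominate it. If you instead try to zero out those remaining $\IN_\ell$-edges (for example by reading your ``$1\le k\le\ell$'' in Phase~(iv) as zeroing $\IN_\ell\to V_{\ell+1}$ edges before Phase~(iii)), you exceed the $E_\ell$ budget: $\tfrac12\epsilon m_\ell m_{\ell+1}$ (Phase~i) $+\tfrac14\epsilon m_\ell m_{\ell+1}$ (Phase~ii) $+\Theta(\epsilon m_\ell m_{\ell+1})$ (Phase~iii) $+\tfrac12\epsilon m_\ell m_{\ell+1}$ (zeroing $\IN_\ell$) $>\epsilon m_\ell m_{\ell+1}$. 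Drop Phases (ii)--(iii) and replace them with the blanket reassignment of all $\IN_\ell\to(V_{\ell+1}\setminus\FN)$ edges, and the proof closes exactly as in the paper.
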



\begin{proof}
Fix $\bb \in \{0,1\}^{m_\ell+1}$ and recall that $\epsilon' := \frac{1}{17 \cdot (\ell+1)} \cdot \left(\frac{\epsilon}{2-\epsilon}\right)^{\ell}$ and $\delta' := \frac{\delta - e^{-m_0/16} - \probp \cdot m_{\ell+1}}{m_{\ell+1}}$; hence $\delta = (\probp + \delta') \cdot m_{\ell+1} + e^{-m_0/16}$. Let us set $\vartheta := 2 \cdot (\ell+1) \cdot \epsilon' = \frac{2}{17} \cdot \left(\frac{\epsilon}{2-\epsilon}\right)^{\ell}$, and observe that then,
\begin{align}
\label{ineq:ReLU-testing-constant-function-2-sided-mL-mo}
    \left(2 \cdot (\ell+1) \cdot \epsilon' + \vartheta\right) \cdot (1-\tfrac12 \epsilon)^{\ell} - \tfrac14 \cdot (\epsilon/2)^{\ell}
        &=
    - \frac{1}{58} \cdot \left(\frac{\epsilon}{2 - \epsilon}\right)^{\ell} \cdot (1-\tfrac12 \epsilon)^{\ell}
        < 0
        \enspace.
\end{align}

Let $\OV_0$ be the set of output nodes $j$ with $\ob_j = 0$ and $\OV_1$ be the set of remaining output nodes, that is, the output nodes $j$ with $\ob_j = 1$.
Let $\FN$ be the set of output nodes $j$ such that the restricted network $\NE_j$ is $(\epsilon',\delta')$-far from computing constant function $\ob_j$.
Let $\FN_0 := \FN \cap \OV_0 = \{ j \in \FN: \ob_j = 0 \}$ and let $\FN_1 := \FN \cap \OV_1 = \{ j \in \FN: \ob_j = 1 \} = \FN \setminus \FN_0$.

\medskip

The proof is by contradiction. Assume that there are at most $\frac12 \epsilon r$ output nodes $i$ such that the restricted network $\NE_i$ is $(\epsilon',\delta')$-far from computing constant function $\ob_i$, that is, that $|\FN| \le \frac12 \epsilon r$. We will show that then we can modify weights of at most an $\epsilon$-fraction of edges in every level to obtain a ReLU network that computes \bb for at least a $(1-\delta)$-fraction of the inputs. This would contradict the assumption that $\NE$ is $(\epsilon,\delta)$-far from computing constant function \bb, and thus, would complete the proof by contradiction.

\paragraph{Framework.}
Let us first define an auxiliary construction that we will use. We will rely on the approach presented earlier in the proof of \cref{lemma:output-mL} combined with \cref{lemma:sampling-mL-existance-mo}. In order to modify weights of some edges of the original ReLU network to obtain a  ReLU network that computes \bb for at least a $(1-\delta)$-fraction of the inputs, we will split the ReLU network \NE into two independent parts, one in which the nodes in the final hidden layer have a large value, and one in which the nodes in the final hidden layer have almost the same value as in the original network.

Consider applying \cref{lemma:sampling-mL-existance-mo} with $s^*_0 := m_0$ and $s^*_k := (1 - \frac12\epsilon) m_k$ for $1 \le k \le \ell$. The assumptions about the values of $m_0, \dots, m_{\ell}$ in \cref{thm:reduction-constant-ReLU-mL-mo} ensure%
\junk{
\Artur{\textcolor[rgb]{1.00,0.00,0.00}{Just for us:} For $s^*_0 = m_0$, our goal is to ensure $s^*_0 \ge \frac{2 (\ell+1)^2 \ln(2 \prod_{i=1}^{\ell} s^*_i/(\probp(\ell+1)))}{\vartheta^2}$:
\begin{align*}
    \frac{2 (\ell+1)^2 \ln(2 \prod_{i=1}^{\ell} s^*_i/(\probp(\ell+1)))}{\vartheta^2}
        &=
    \frac{289}{2} (\ell+1)^2 \left(\frac{2-\epsilon}{\epsilon}\right)^{2\ell}
            \ln\left(\frac{2 (1-\frac12\epsilon)^{\ell} \prod_{i=1}^{\ell} m_i}{\probp (\ell+1)}\right)
        \\&\le
    145 (\ell+1)^2 (2/\epsilon)^{2\ell} \ln\left(\frac{\prod_{i=1}^{\ell} m_i}{\probp}\right)
        \le
    m_0 = s_0
    \enspace.
\end{align*}
Further, for every $1 \le k \le \ell$ and $s^*_k = (1-\frac12\epsilon) m_k$, our goal is to ensure $s^*_k \ge \frac{2 (\ell+1)^2 \ln(2 \prod_{i=k+1}^{\ell} s^*_i/(\probp(\ell+1)))}{\vartheta^2}$, or equivalently, $m_k \ge \frac{2 (\ell+1)^2 \ln(2 \prod_{i=k+1}^{\ell} s^*_i/(\probp(\ell+1)))}{(1-\frac12\epsilon) \vartheta^2}$:
\begin{align*}
    &\frac{2 (\ell+1)^2 \ln\left(2 \prod_{i=k+1}^{\ell} s^*_i/(\probp(\ell+1))\right)}{\vartheta^2 (1-\frac12\epsilon)}
        =
    \frac{17^2}{2^2} \left(\frac{2-\epsilon}{\epsilon}\right)^{2\ell} \cdot \frac{2 (\ell+1)^2 \ln\left(2 (1-\frac12\epsilon)^{\ell-k} \prod_{i=k+1}^{\ell} m_i/(\probp(\ell+1))\right)}{1-\frac12\epsilon}
        \\&=
    \frac{289 (\ell+1)^2 \left(\frac{2-\epsilon}{\epsilon}\right)^{2\ell} \ln\left(2 (1-\frac12\epsilon)^{\ell-k} \prod_{i=k+1}^{\ell} m_i/(\probp(\ell+1))\right)}{2-\epsilon}
        \le
    145 (\ell+1)^2 (2/\epsilon)^{2\ell} \ln\left(\frac{\prod_{i=k+1}^{\ell} m_i}{\probp}\right)
        \le
    m_k
        \enspace.
\end{align*}
}
}
that with setting $\vartheta := 
\frac{2}{17} \cdot \left(\frac{\epsilon}{2-\epsilon}\right)^{\ell}$, we have $s^*_k \ge \frac{2 (\ell+1)^2 \ln(2 \prod_{i=k+1}^{\ell} s^*_i/(\probp(\ell+1)))}{\vartheta^2}$ for every $0 \le k \le \ell$. Therefore, by \cref{lemma:sampling-mL-existance-mo}, there is a sequence of sets $\sset^*_0, \dots, \sset^*_{\ell}$ such that
\begin{itemize}
\item $s_k = |\sset^*_k|$ for every $0 \le k \le \ell$, and
\item if we take sampling matrices $S^*_0, \dots, S^*_{\ell}$ corresponding to the nodes from $\sset^*_0, \sset^*_1, \dots, \sset^*_{\ell}$, then for the function from $\{0,1\}^{m_0}$ to $\RR^{m_i}$:
    \begin{align*}
        h^*_i(x) :=
            \begin{cases}
                W_{i-1} \cdot S^*_{i-1} \cdot \relu(h^*_{i-1}(x)) & \text{ if } 1 \le i \le \ell+1, \\
                x & \text{ if } i=0,
            \end{cases}
    \end{align*}
for any $0 < \vartheta \le 1$, $0 < \delta < \frac{1}{\ell+1}$, for at most a $(\probp \cdot m_{\ell+1})$-fraction of the inputs $x \in \{0,1\}^{m_0}$, for some $1 \le j \le m_{\ell+1}$, we have,
\begin{align*}
    \left| \prod_{i=0}^{\ell} \frac{m_i}{s^*_i} \cdot h^{*(j)}_{\ell+1}(x) - f{(j)}_{\ell+1}(x) \right|
        &>
    \vartheta \cdot \prod_{i=0}^{\ell} m_i
    \enspace,
\end{align*}
where $h^{*(j)}_{\ell+1}(x)$ and $f^{(j)}_{\ell+1}(x)$ denote the $j$-th rows of $h^*_{\ell+1}(x)$ and $f_{\ell+1}(x)$, respectively.

Since $s^*_0 := m_0$ and $s^*_k := (1-\frac12 \epsilon) m_k$ for $1 \le k \le \ell$, we can simplify this bound to obtain that for all but a $(\probp \cdot m_{\ell+1})$-fraction of the inputs $x \in \{0,1\}^{m_0}$, for all $1 \le j \le m_{\ell+1}$,
\begin{align}
\label{ineq:combo-lemma:output-mL+lemma:sampling-mL-existance-mo}
    \left|h^{*(j)}_{\ell+1}(x) - (1-\tfrac12 \epsilon)^{\ell} \cdot f^{(j)}_{\ell+1}(x) \right|
        &\le
    \vartheta \cdot (1-\tfrac12 \epsilon)^{\ell} \cdot \prod_{i=0}^{\ell} m_i
    \enspace.
\end{align}
\end{itemize}

The following simple claim follows directly from (\ref{ineq:combo-lemma:output-mL+lemma:sampling-mL-existance-mo}) and \cref{claim:NumberOfOnes}.
\begin{claim}
\label{claim:defining-I1}
Let $\mathfrak{I}_1$ be the set of all inputs $x \in \{0,1\}^{m_0}$ such that $\|x\|_1 \ge \frac14 m_0$ and
\begin{itemize}
\item $h^{*(j)}_{\ell+1}(x) \le (1-\tfrac12 \epsilon)^{\ell} \cdot f^{(j)}_{\ell+1}(x) + \vartheta \cdot (1-\tfrac12 \epsilon)^{\ell} \cdot \prod_{i=0}^{\ell} m_i$ for every $j \in \OV_0$, and
\item $h^{*(j)}_{\ell+1}(x) \ge (1-\tfrac12 \epsilon)^{\ell} \cdot f^{(j)}_{\ell+1}(x) - \vartheta \cdot (1-\tfrac12 \epsilon)^{\ell} \cdot \prod_{i=0}^{\ell} m_i$ for every $j \in \OV_1$.
\end{itemize}
Then $|\mathfrak{I}_1| \ge (1 - \probp \cdot m_{\ell+1} - e^{-m_0/16}) \cdot 2^{m_0}$.
\qed
\end{claim}

Let us also state a similar claim that follows directly from \cref{claim:if-close-then-not-too-large}.
\begin{claim}
\label{claim:defining-I2}
Let $\mathfrak{I}_2$ be the set of all inputs $x \in \{0,1\}^{m_0}$ for which for all output nodes $1 \le j \le m_{\ell+1}$,
\begin{itemize}
\item if $\NE_j$ is $(\epsilon',\delta')$-close to computing the $0$-function, then $f^{(j)}_{\ell+1}(x) \le 2 \cdot (\ell+1) \cdot \epsilon' \cdot \prod_{i=0}^{\ell} m_i$, and
\item if $\NE_j$ is $(\epsilon',\delta')$-close to computing the $1$-function, then $f^{(j)}_{\ell+1}(x) \ge - 2 \cdot (\ell+1) \cdot \epsilon' \cdot \prod_{i=0}^{\ell} m_i$.
\end{itemize}
Then $|\mathfrak{I}_2| \ge (1 - m_{\ell+1} \cdot \delta') \cdot 2^{m_0}$.
\qed
\end{claim}

We can combine Claims \ref{claim:defining-I1} and \ref{claim:defining-I2}, to immediately obtain the following.
\begin{claim}
\label{claim:defining-I}
Let $\mathfrak{I} := \mathfrak{I}_1 \cap \mathfrak{I}_2$. Then
\begin{itemize}
\item $|\mathfrak{I}| \ge (1 - (\probp + \delta') \cdot m_{\ell+1} - e^{-m_0/16}) \cdot 2^{m_0}$
\item $\|x\|_1 \ge \frac14 m_0$ for every $x \in \mathfrak{I}$, and
\item for all output nodes $1 \le j \le m_{\ell+1}$,
\begin{itemize}[$\blacktriangleright$]
\item if $\NE_j$ is $(\epsilon',\delta')$-close to computing the constant $0$-function, then
    \begin{align*}
        h^{*(j)}_{\ell+1}(x) &\le
            \left(2 \cdot (\ell+1) \cdot \epsilon' + \vartheta\right) \cdot (1-\tfrac12 \epsilon)^{\ell} \cdot \prod_{i=0}^{\ell} m_i
            \enspace,
    \end{align*}
\item if $\NE_j$ is $(\epsilon',\delta')$-close to computing the constant $1$-function, then
    \begin{align*}
        h^{*(j)}_{\ell+1}(x) &\ge
        - \left(2 \cdot (\ell+1) \cdot \epsilon' + \vartheta\right) \cdot (1-\tfrac12 \epsilon)^{\ell} \cdot \prod_{i=0}^{\ell} m_i
        \enspace.
    \end{align*}
\end{itemize}
\end{itemize}
\end{claim}

\paragraph{Construction of the modified ReLU network.}
Let $\IN_k := V_k \setminus \sset^*_k$ for every $1 \le k \le \ell$. Now, we will modify weights of some edges in the ReLU network.
\begin{itemize}
\item For every $1 \le k \le \ell$, we change to $+1$ the weights of all edges connecting the nodes from $V_{k-1}$ to the nodes from $\IN_k$;
    \begin{itemize}[$\blacktriangleright$]
    \item the value of every node in $\IN_k$ is at least $\|x\|_1 \cdot \prod_{i=1}^{k-1} |\IN_i|$;
    \item this involves $m_{k-1} \cdot |\IN_k| \le \frac12 \epsilon m_{k-1} m_k$ changes of edge weights from $E_{k-1}$, $1 \le k \le \ell$.
    \end{itemize}
\item For every $1 \le k \le \ell-1$, we change to $0$ the weights of all edges connecting the nodes from $\IN_k$ to the nodes from $\sset^*_{k+1}$;
    \begin{itemize}[$\blacktriangleright$]
    \item this involves $|\IN_k| \cdot |\sset^*_{k+1}| \le \frac12 \epsilon m_k m_{k+1}$ changes of weights of $E_k$ edges.
    \end{itemize}
\item If there are at most $\frac12 \epsilon m_{\ell+1}$ output nodes $i$ such that the restricted network $\NE_i$ is $(\epsilon',\delta')$-far from computing near constant function \bb (i.e., when $|\FN| \le \frac12 \epsilon m_{\ell+1}$), then for each $i \in \FN$, we set to $\ob_i$ weights of all edges from $E_{\ell}$ that are incident to the output node $i$;
    \begin{itemize}[$\blacktriangleright$]
    \item each output node in $\FN_0$ computes the constant $0$-function (since all edges incident to the output node $i$ have weight $0$);
    \item each output node in $\FN_1$ computes the OR-function (since all edges incident to the output node $i$ have weight $1$ and the nodes from $\IN_{\ell}$ have value at least $\|x\|_1 \cdot \prod_{i=1}^{\ell-1} |\IN_i|$); 
    \item this involves $|\FN| m_{\ell} \le \frac12 \epsilon m_{\ell} m_{\ell+1}$ changes of weights of edges from $E_{\ell}$.
    \end{itemize}
\item For the nodes in $\IN_{\ell}$, we change to $-1$ the weights of all edges connecting nodes from $\IN_{\ell}$ to the output nodes in $\OV_0 \setminus \FN$, and change to $1$ the weights of all edges connecting nodes from $\IN_{\ell}$ to the output nodes in $\OV_1 \setminus \FN$;
    \begin{itemize}[$\blacktriangleright$]
    \item the contribution of every node in $\IN_{\ell}$ to any output node in $\OV_0 \setminus \FN$ is negative, and is at most $- \|x\|_1 \cdot \prod_{i=1}^{\ell-1} |\IN_i|$;
    \item the contribution of every node in $\IN_{\ell}$ to any output node in $\OV_1 \setminus \FN$ is positive, and is at least $\|x\|_1 \cdot \prod_{i=1}^{\ell-1} |\IN_i|$;
    \item this involves $|\IN_{\ell}| \cdot (|V_{\ell+1} \setminus \FN|) \le \frac12 \epsilon m_{\ell} m_{\ell+1}$ changes of weights of $E_{\ell}$ edges.
    \end{itemize}
\end{itemize}

Observe that in total, we modified
\begin{inparaenum}[(i)]
\item $m_0 \cdot |\IN_1| \le \frac12 \epsilon m_0 m_1$ weights of $E_0$ edges,
\item for every $1 \le k \le \ell-1$, $m_k \cdot |\IN_{k+1}| + |\IN_k| \cdot |\sset^*_{k+1}| \le \epsilon m_k m_{k+1}$ weights of $E_k$ edges, and
\item $|\FN| \cdot m_{\ell} + |\IN_{\ell}| \cdot (|V_{\ell+1} \setminus \FN|) = \frac12 \epsilon m_{\ell} m_{\ell+1}$ weights of $E_{\ell}$ edges.
\end{inparaenum}
Therefore, since the original ReLU network is $(\epsilon, \delta)$-far from computing near constant function \bb, our network obtained after modifying at most an $\epsilon$-fraction of weights at each level cannot return \bb for all but a $\delta$-fraction of the inputs.

Let us look at the output values returned by our modified network on an arbitrary input $x \in \mathfrak{I}$ (as defined in \cref{claim:defining-I}); we will show that for inputs from $\mathfrak{I}$, it computes near constant function~\bb.
\begin{itemize}
\item each output node in $\FN_0$ computes the constant $0$-function;
    \begin{itemize}[$\blacktriangleright$]
    \item follows directly from our construction and arguments above;
    \end{itemize}
\item each output node in $\FN_1$ computes the constant $1$-function;
    \begin{itemize}[$\blacktriangleright$]
    \item each output node in $\FN_1$ computes the OR-function but since for any input $x \in \mathfrak{I}$ we have $\|x\|_1 \ge \frac14 m_0$, the computed function is the constant $1$-function for any $x \in \mathfrak{I}$;
    \end{itemize}
\item for any other output node $j \in \OV_0 \setminus \FN_0$, the modified network returns the value which is at most $- (\epsilon/2)^{\ell} \cdot \|x\|_1 \cdot \prod_{i=1}^{\ell} m_i + h^{*(j)}_{\ell+1}(x)$;
    \begin{itemize}[$\blacktriangleright$]
    \item by \cref{claim:defining-I}, for any output node $j \in \OV_0 \setminus \FN_0$, for every input $x \in \mathfrak{I}$, the modified network returns the value at most
        \begin{align*}
            \left(\left(2 \cdot (\ell+1) \cdot \epsilon' + \vartheta\right) \cdot (1-\tfrac12 \epsilon)^{\ell}
                - \frac14 \cdot (\epsilon/2)^{\ell}\right)
                \cdot \prod_{i=0}^{\ell} m_i
            \enspace,
        \end{align*}
        which is negative by (\ref{ineq:ReLU-testing-constant-function-2-sided-mL-mo}), and hence $\NE_j$ returns $0$.
    \end{itemize}
\item for any other output node $j \in \OV_1 \setminus \FN_1$, the modified network returns the value which is at least $(\epsilon/2)^{\ell} \cdot \|x\|_1 \cdot \prod_{i=1}^{\ell} m_i + h^{*(j)}_{\ell+1}(x)$;
    \begin{itemize}[$\blacktriangleright$]
    \item by \cref{claim:defining-I}, for any output node $j \in \OV_1 \setminus \FN_1$, for every input $x \in \mathfrak{I}$, the modified network returns the value at least
        \begin{align*}
            \left(\frac14 \cdot (\epsilon/2)^{\ell}
                - \left(2 \cdot (\ell+1) \cdot \epsilon' + \vartheta\right) \cdot (1-\tfrac12 \epsilon)^{\ell}\right)
                \cdot \prod_{i=0}^{\ell} m_i
            \enspace,
        \end{align*}
        which is positive by (\ref{ineq:ReLU-testing-constant-function-2-sided-mL-mo}), and hence $\NE_j$ returns $1$.
    \end{itemize}
\end{itemize}

Let us summarize the discussion above. We took a ReLU network \NE that is $(\epsilon, \delta)$-far from computing near constant function \bb and then assumed, for the purpose of contradiction, that there are at most $\frac12 \epsilon r$ output nodes $j$ such that the restricted network $\NE_j$ is $(\epsilon',\delta')$-far from computing constant function $\ob_j$, that is, that $|\FN| \le \frac12 \epsilon r$. Then, we have modified the original ReLU network \NE in at most an $\epsilon$-fraction of weights at each level and obtained a network that for all inputs $x \in \mathfrak{I}$ computes near constant function \bb:
\begin{inparaenum}[(i)]
\item every output node in $\FN_0$ always computes the constant $0$-function,
\item every output node in $\FN_1$ always computes the constant $1$-function,
\item every output node in $\OV_0 \setminus \FN_0$ computes the constant $0$-function for all inputs $x \in \mathfrak{I}$, and
\item every output node in $\OV_1 \setminus \FN_1$ computes the constant $1$-function for all inputs $x \in \mathfrak{I}$.
\end{inparaenum}
However, since $\delta' := \frac{\delta - e^{-m_0/16} - \probp \cdot m_{\ell+1}}{m_{\ell+1}}$, by \cref{claim:defining-I} we have $|\mathfrak{I}| \ge (1 - (\probp + \delta') \cdot m_{\ell+1} - e^{-m_0/16}) \cdot 2^{m_0} = (1 - \delta) \cdot 2^{m_0}$, this is a contradiction to the assumption that the original ReLU network is $(\epsilon, \delta)$-far from computing near constant function \bb. This implies that there are more than $\frac12 \epsilon r$ output nodes $i$ such that the restricted network $\NE_i$ is $(\epsilon',\delta')$-far from computing constant function $\ob_i$, completing the proof of \cref{ReLU-testing-constant-function-2-sided-mL-mo}.
\end{proof}


\subsection{Testing a near constant function: Proof of \cref{ReLU-testing-constant-function-2-sided-mL-mo}}
\label{sec:proof-of-ReLU-testing-constant-function-2-sided-mL-mo}

With \cref{thm:reduction-constant-ReLU-mL-mo} at hand, \cref{ReLU-testing-constant-function-2-sided-mL-mo} of testing a constant function in ReLU networks with multiple hidden layers and multiple outputs follows easily from our testing of the constant $0$-function (\cref{thm:ReLU-testing-0-function-2-sided-mL}) and the OR-function (\cref{thm:ReLU-testing-OR-function-2-sided-mL}) in ReLU networks with multiple hidden layers and a single output.

By \cref{thm:reduction-constant-ReLU-mL-mo}, if \NE is $(\epsilon,\delta)$-far from computing constant function \bb then there are more than $\frac12 \epsilon m_{\ell+1}$ output nodes such that for any such output node $j$, the ReLU network $\NE_j$ restricted to the output node $j$ is $\left(\frac{1}{17 (\ell+1)} \left(\frac{\epsilon}{2-\epsilon}\right)^{\ell}, \frac{\delta - e^{-m_0/16} - \probp \cdot m_{\ell+1}}{m_{\ell+1}}\right)$-far from computing constant function $\ob_j$.

Therefore \cref{ReLU-testing-constant-function-2-sided-mL-mo} follows by sampling $O(1/\epsilon)$ output nodes uniformly at random and then, for each sampled output node $j$, testing whether the ReLU network $\NE_j$ restricted to the output node $j$ computes constant function $\ob_j$ using the tester for the constant $0$-function (\cref{thm:ReLU-testing-0-function-2-sided-mL}) or the OR-function (\cref{thm:ReLU-testing-OR-function-2-sided-mL}) in ReLU networks with multiple hidden layers and a single output. By outputting the majority outcome of the calls to these testers, we obtain the promised property testing algorithm.
\junk{
\Artur{Remember the key constraints in \cref{thm:ReLU-testing-0-function-2-sided-mL,thm:ReLU-testing-OR-function-2-sided-mL}:
$0 < \probp < \frac{1}{\ell+1}$,
$\delta' \ge \probp + e^{-m_0/16}$,
$\frac{1}{\min_{0 \le k \le \ell}\{m_k\}} < \epsilon'$,
$m_0 \ge 128 \cdot (\ell+1)^2 \cdot (2/\epsilon')^{2\ell} \cdot \ln(\frac{1}{\probp} \prod_{i=1}^{\ell} m_i)$ and
$m_k \ge 171 \cdot (\ell+1)^2 \cdot (2/\epsilon')^{2\ell} \cdot \ln(\frac{1}{\probp} \prod_{i=k+1}^{\ell} m_i)$ for $1 \le k \le \ell$.
}
}

Observe that the complexity of the tester is $O(1/\epsilon)$ times the complexity of the tester of the constant $0$-function (\cref{thm:ReLU-testing-0-function-2-sided-mL}) or the OR-function (\cref{thm:ReLU-testing-OR-function-2-sided-mL}) with parameters $\epsilon' := \frac{1}{17 \cdot (\ell+1)} \cdot \left(\frac{\epsilon}{2-\epsilon}\right)^{\ell}$ and $\delta' := \frac{\delta - e^{-m_0/16} - \probp \cdot m_{\ell+1}}{m_{\ell+1}}$, and hence, for a constant $\ell$, it is $\Theta(\epsilon^{-(8\ell^2+1)} \cdot \ln^4(1/\lambda \epsilon))$.

Notice that the parameters $\epsilon' := \frac{1}{17 \cdot (\ell+1)} \cdot \left(\frac{\epsilon}{2-\epsilon}\right)^{\ell}$ and $\delta' := \frac{\delta - e^{-m_0/16} - \probp \cdot m_{\ell+1}}{m_{\ell+1}}$ used in \cref{thm:ReLU-testing-0-function-2-sided-mL,thm:ReLU-testing-OR-function-2-sided-mL} impose constraints $\frac{1}{\min_{0 \le k \le \ell}\{m_k\}} < \epsilon'$ and $\delta' \ge \probp + e^{-m_0/16}$, which we incorporate into the constraints in \cref{ReLU-testing-constant-function-2-sided-mL-mo} as $\epsilon > 2 \sqrt[\ell]{\frac{17(\ell+1)}{\max_{0 \le k \le \ell} \{m_k\}}}$ and $\delta \ge 2 m_{\ell+1} (\probp + e^{-m_0/16})$. Finally, the use of $\epsilon'$ and $\delta'$ in \cref{thm:ReLU-testing-0-function-2-sided-mL,thm:ReLU-testing-OR-function-2-sided-mL} impose constraints on $m_0, \dots, m_{\ell}$, which we incorporate in \cref{ReLU-testing-constant-function-2-sided-mL-mo} as
$m_k \ge 171 \cdot 34^{2\ell} \cdot (\ell+1)^{2(\ell+1)} \cdot (2/\epsilon)^{2\ell^2} \cdot \ln(\frac{1}{\probp} \prod_{i=k+1}^{\ell} m_i)$ for all $0 \le k \le \ell$.
\junk{
\textcolor[rgb]{0.50,0.00,0.00}{
\begin{itemize}
\item $\frac{1}{\min_{0 \le k \le \ell}\{m_k\}} < \epsilon'$ follows from $\frac{17 (\ell+1) 2^{\ell}}{\min_{0 \le k \le \ell}\{m_k\}} < \epsilon^{\ell}$, that is, $\epsilon > 2 \cdot \sqrt[\ell]{\frac{17(\ell+1)}{\max_{0 \le k \le \ell} m_k}}$;
\item $\delta' \ge \probp + e^{-m_0/16}$ (equivalently, $\delta' \cdot m_{\ell+1} \ge \probp \cdot m_{\ell+1} + e^{-m_0/16} \cdot m_{\ell+1}$) with $\delta' := \frac{\delta - e^{-m_0/16} - \probp \cdot m_{\ell+1}}{m_{\ell+1}}$ implies $\delta' \cdot m_{\ell+1} = \delta - e^{-m_0/16} - \probp \cdot m_{\ell+1} \ge \probp \cdot m_{\ell+1} + e^{-m_0/16} \cdot m_{\ell+1}$, which we bound as $\delta \ge 2 \cdot m_{\ell+1} \cdot (\probp + e^{-m_0/16})$;
\item $m_0 \ge 128 \cdot (\ell+1)^2 \cdot (2/\epsilon')^{2\ell} \cdot \ln(\frac{1}{\probp} \prod_{i=1}^{\ell} m_i)$ with $\epsilon' := \frac{1}{17 \cdot (\ell+1)} \cdot \left(\frac{\epsilon}{2-\epsilon}\right)^{\ell}$
    \begin{align*}
        128 \cdot (\ell+1)^2 \cdot (2/\epsilon')^{2\ell} \cdot \ln\left(\frac{1}{\probp} \prod_{i=1}^{\ell} m_i\right)
            &=
        128 \cdot (\ell+1)^2 \cdot \left(34 (\ell+1) \left(\frac{2-\epsilon}{\epsilon}\right)^{\ell}\right)^{2\ell} \cdot \ln\left(\frac{1}{\probp} \prod_{i=1}^{\ell} m_i\right)
            \\
            &=
        128 \cdot 34^{2\ell} \cdot (\ell+1)^{2(\ell+1)} \left(\frac{2-\epsilon}{\epsilon}\right)^{2\ell^2} \cdot \ln\left(\frac{1}{\probp} \prod_{i=1}^{\ell} m_i\right)
            \\
            &\le
        128 \cdot 34^{2\ell} \cdot (\ell+1)^{2(\ell+1)} (2/\epsilon)^{2\ell^2} \cdot \ln\left(\frac{1}{\probp} \prod_{i=1}^{\ell} m_i\right)
    \end{align*}
\item for $1 \le k \le \ell$, bound $m_k \ge 171 \cdot (\ell+1)^2 \cdot (2/\epsilon')^{2\ell} \cdot \ln(\frac{1}{\probp} \prod_{i=k+1}^{\ell} m_i)$ with $\epsilon' := \frac{1}{17 \cdot (\ell+1)} \cdot \left(\frac{\epsilon}{2-\epsilon}\right)^{\ell}$
    \begin{align*}
        171 \cdot (\ell+1)^2 \cdot (2/\epsilon')^{2\ell} \cdot \ln\left(\frac{1}{\probp} \prod_{i=k+1}^{\ell} m_i\right)
            &=
        171 \cdot (\ell+1)^2 \cdot \left(34 (\ell+1) \left(\frac{2-\epsilon}{\epsilon}\right)^{\ell}\right)^{2\ell} \cdot \ln\left(\frac{1}{\probp} \prod_{i=k+1}^{\ell} m_i\right)
            \\
            &=
        171 \cdot 34^{2\ell} \cdot (\ell+1)^{2(\ell+1)} \left(\frac{2-\epsilon}{\epsilon}\right)^{2\ell^2} \cdot \ln\left(\frac{1}{\probp} \prod_{i=k+1}^{\ell} m_i\right)
            \\
            &\le
        171 \cdot 34^{2\ell} \cdot (\ell+1)^{2(\ell+1)} (2/\epsilon)^{2\ell^2} \cdot \ln\left(\frac{1}{\probp} \prod_{i=k+1}^{\ell} m_i\right)
    \end{align*}
\end{itemize}
}
}
\qed




\section{Conclusions}
\label{sec:conclusions}

In this paper, we introduce a property testing model to study computational networks used as computational devices and illustrate our setting on a case study of simple ReLU neural networks. We believe that the property testing model developed in this paper establishes a novel and useful theoretical framework for the study of testing algorithms and offers a new and interesting viewpoint with a potential of advancing our understanding of the structure of neural networks (and, in general, of computational networks).

In the context of neural networks, given our poor understanding of the computational processes in modern deep networks (leading to serious questions on the trust and transparency of the underlying learning algorithms), we hope that the lens of property testing as developed in the framework considered in this paper will contribute to a better understanding of the relation of the network structure and the computed function or properties of that function. Indeed, since the analysis of property testing algorithms relies on establishing close links between the object's local structure and the tested property, we believe the study of simple yet fundamental models of ReLU networks may provide a new insight into such networks.

Although our paper makes the first step in the analysis of ReLU networks in the context of property testing, there is still a large number of open questions (from the property testing point of view) regarding such networks, including, among others, the following:
\begin{itemize}
\item What is the query complexity of testing dictatorship and juntas (see \cite{Blais2010} for a survey on testing juntas)?
\item Classify the constant time testable properties with one- and two-sided error.
\item Study tolerant testing \cite{PRR06} in our model.
\item Extend the results to inputs over the reals.
\end{itemize}
These are just a small number of immediate questions one can ask about our model. We believe that there is a wide variety of interesting questions and are confident to see further research in this direction.

\section{Acknowledgements}

We would like to thank Gereon Frahling for his valuable discussions regarding the computational model. We would like to thank Nithin Varma for helpful discussions during the initial phase of research that lead to this paper.




\phantomsection\addcontentsline{toc}{section}{References}
\bibliography{references}
\bibliographystyle{alpha}

\appendix

\newpage


\phantomsection\addcontentsline{toc}{section}{Appendix}
\centerline{\Huge\textbf{Appendix}}


\section{Concentration bounds}
\label{app:concentration-bounds}

For the sake of completeness, let us recall three auxiliary concentration bounds used in the paper, one by Hoeffding, another one by McDiarmid, and a straightforward application of Chernoff bound.

\subsection{Hoeffding inequality for random variables sampled without replacement}
\label{app:subsec-Hoeffind}
We begin with a version of the classical Hoeffding inequality \cite{Hoeffding63} applied to random variables sampled without replacement (we use here a formulation from Bardenet and Maillard \cite[Proposition~1.2]{BM15}).

\begin{lemma}
\label{lemma:Hoeffding}
Let $\mathcal{X} = (x_1, \dots, x_N)$ be a finite population of $N$ real points and $X_1, \dots, X_n$ be a random sample drawn without replacement from $\mathcal{X}$. Let
\begin{align*}
    a = \min_{1 \le i \le N} x_i
        \;\;\;\;\;\;\;\;\text{ and }\;\;\;\;\;\;\;
    b = \max_{1 \le i \le N} x_i.
\end{align*}
Then, for all $\epsilon > 0$,
\begin{align*}
    \Pr\left[\frac1n \sum_{i=1}^n X_i - \mu \ge \epsilon\right] &\le
    \exp\left(- \frac{2n\epsilon^2}{(b-a)^2}\right).
\end{align*}
where $\mu = \frac1N \sum_{i=1}^N x_1$ is the mean of $\mathcal{X}$.
\end{lemma}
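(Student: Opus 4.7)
The plan is to prove the lemma by following Hoeffding's original 1963 argument, via the Chernoff--Cramér moment generating function method combined with a reduction from sampling without replacement to sampling with replacement. Concretely, for any $s > 0$ I would first apply Markov's inequality to the convex increasing function $u \mapsto e^{su}$:
\begin{align*}
    \Pr\left[\tfrac{1}{n}\sum_{i=1}^n X_i - \mu \ge \epsilon\right]
    &= \Pr\left[e^{s \sum_{i=1}^n (X_i - \mu)} \ge e^{sn\epsilon}\right]
    \le e^{-sn\epsilon} \cdot \Ex\!\left[e^{s \sum_{i=1}^n (X_i - \mu)}\right].
\end{align*}
The task then reduces to bounding the moment generating function $\Ex[e^{s\sum_i (X_i - \mu)}]$ of a sum of $n$ samples drawn \emph{without replacement} from $\mathcal{X}$.

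The central step would be Hoeffding's reduction theorem, which states that for any continuous convex function $\phi$, if $X_1, \dots, X_n$ are drawn without replacement from $\mathcal{X}$ and $Y_1, \dots, Y_n$ are drawn i.i.d.\ uniformly from $\mathcal{X}$ (i.e., with replacement), then $\Ex[\phi(\sum_i X_i)] \le \Ex[\phi(\sum_i Y_i)]$. Applied to $\phi(u) = e^{s(u - n\mu)}$, this gives
\begin{align*}
    \Ex\!\left[e^{s \sum_{i=1}^n (X_i - \mu)}\right]
    &\le \Ex\!\left[e^{s \sum_{i=1}^n (Y_i - \mu)}\right]
    = \left(\Ex\!\left[e^{s(Y_1 - \mu)}\right]\right)^n,
\end{align*}
where the last equality uses independence of the $Y_i$. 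I would prove the reduction theorem by writing the without-replacement sum as an average, over permutations, of conditional expectations, and then invoking Jensen's inequality exploiting convexity of $\phi$ --- this is the step I expect to be the main technical obstacle, as one must set up the coupling carefully so that the reduction goes through for every fixed realization.

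The remaining steps are standard. Since $Y_1 - \mu$ is a centered random variable supported on $[a - \mu, b - \mu]$, which is an interval of length $b - a$, Hoeffding's lemma (proved via convexity of the exponential on a bounded interval, by bounding $e^{sy}$ by the line through its endpoints and noting $\Ex[Y_1 - \mu] = 0$) yields
\begin{align*}
    \Ex\!\left[e^{s(Y_1 - \mu)}\right] \le e^{s^2 (b-a)^2/8}.
\end{align*}
Combining the three bounds above gives $\Pr[\tfrac{1}{n}\sum_i X_i - \mu \ge \epsilon] \le \exp(-sn\epsilon + ns^2(b-a)^2/8)$. Optimizing over $s > 0$ by choosing $s = 4\epsilon/(b-a)^2$ yields the advertised bound $\exp(-2n\epsilon^2/(b-a)^2)$, completing the proof. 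I would note that since this is a well-established classical inequality, one may instead simply cite \cite{Hoeffding63} together with \cite{BM15} (as the paper already does) rather than reproducing the argument in full.
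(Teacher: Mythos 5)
Your proposal is correct: it is the classical Chernoff--moment-generating-function argument, using Hoeffding's convex-ordering reduction from sampling without replacement to i.i.d.\ sampling (his Theorem~4) together with Hoeffding's lemma, and the optimization $s = 4\epsilon/(b-a)^2$ does yield the stated exponent $-2n\epsilon^2/(b-a)^2$. The paper itself gives no proof of this lemma --- it is recalled in the appendix purely by citation to \cite{Hoeffding63} and \cite{BM15} --- so your closing remark that one may simply cite these sources is exactly the route the paper takes.
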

We remark that a simple transformation implies
\begin{align*}
    \Pr\left[\left|\frac1n \sum_{i=1}^n X_i - \mu \right| \ge \epsilon\right] &\le 2
    \exp\left(- \frac{2n\epsilon^2}{(b-a)^2}\right).
\end{align*}

\subsection{McDiarmid's method of bounded differences}
\label{app:subsec-McDiarmid}
Our next concentration bound is the so-called method of bounded differences by McDiarmid (Theorem~3.1 and inequalities (3.2--3.4) in  \cite{McDiarmid98}).

\begin{lemma}[\textbf{McDiarmid's method of bounded differences}]
\label{lemma:McDiarmid}
Let $\mathbf{X} = (X_1, \dots, X_n)$ be a family of independent random variables with $X_k$ taking values in a set $A_k$ for each $k$. Suppose that the real-valued function $f$ defined on $\prod A_k$ satisfies
\begin{align*}
    |f(\bx) - f(\bx')| & \le c_k
\end{align*}
whenever the vectors $\bx$ and $\bx'$ differ only in the $k$th coordinate. Let $\mu$ be the expected value of the random variable $f(\bx)$. Then for any $t \ge 0$,
\begin{align*}
    \Pr\left[|f(\mathbf{X}) - \mu| \ge t \right] &\le 2 e^{-2t^2/\sum_{k=1}^nc_k^2} \enspace.
\end{align*}
There are also ``one-sided'' versions of the inequality above that holds for any $t \ge 0$:
\begin{align*}
    \Pr\left[f(\mathbf{X}) - \mu \ge t \right] &\le e^{-2t^2/\sum_{k=1}^nc_k^2}  \enspace,
        \\
    \Pr\left[f(\mathbf{X}) - \mu \le -t \right] &\le e^{-2t^2/\sum_{k=1}^nc_k^2}  \enspace.
\end{align*}
\end{lemma}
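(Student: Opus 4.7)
The plan is to establish this by the standard martingale (Doob decomposition) approach, reducing the bounded-differences inequality to the Azuma--Hoeffding inequality for martingale difference sequences with almost-sure bounds.

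First I would construct the \emph{Doob martingale} associated with $f(\mathbf{X})$. Define $Y_0 := \mu = \Ex[f(\mathbf{X})]$ and, for $1 \le k \le n$,
\begin{align*}
    Y_k &:= \Ex[f(\mathbf{X}) \mid X_1, \dots, X_k] \enspace.
\end{align*}
By the tower property $Y_n = f(\mathbf{X})$ and $\Ex[Y_k \mid X_1, \dots, X_{k-1}] = Y_{k-1}$, so $(Y_k)$ is a martingale with respect to the filtration generated by $(X_k)$. Let $D_k := Y_k - Y_{k-1}$ be the martingale differences; then $f(\mathbf{X}) - \mu = \sum_{k=1}^n D_k$.

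Next I would prove the pointwise bound $|D_k| \le c_k$ (more precisely, that $D_k$ lies in an interval of length $c_k$) using the bounded-differences hypothesis together with the independence of the $X_k$. The key calculation is to write
\begin{align*}
    D_k &= \Ex[f(X_1, \dots, X_{k-1}, X_k, X_{k+1}, \dots, X_n) \mid X_1, \dots, X_k]
        - \Ex[f(X_1, \dots, X_{k-1}, X_k', X_{k+1}, \dots, X_n) \mid X_1, \dots, X_{k-1}]
\end{align*}
where $X_k'$ is an independent copy of $X_k$, and then upper/lower bound $D_k$ by
\begin{align*}
    L_k &:= \inf_{x \in A_k} \Ex[f(X_1, \dots, X_{k-1}, x, X_{k+1}, \dots, X_n) \mid X_1, \dots, X_{k-1}] - Y_{k-1} \enspace, \\
    U_k &:= \sup_{x \in A_k} \Ex[f(X_1, \dots, X_{k-1}, x, X_{k+1}, \dots, X_n) \mid X_1, \dots, X_{k-1}] - Y_{k-1} \enspace.
\end{align*}
The bounded-differences hypothesis forces $U_k - L_k \le c_k$, since for any two values $x, x' \in A_k$ the coordinate-swap bound gives $|f(\dots, x, \dots) - f(\dots, x', \dots)| \le c_k$ pointwise in the remaining coordinates, hence $c_k$ in conditional expectation. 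Because $D_k \in [L_k, U_k]$ almost surely, $D_k$ is a bounded conditional random variable with range at most $c_k$.

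Then I would apply Hoeffding's lemma to each conditional difference: if $Z$ is a random variable with $\Ex[Z \mid \mathcal{F}] = 0$ and $Z \in [a,b]$ almost surely given $\mathcal{F}$, then $\Ex[e^{\lambda Z} \mid \mathcal{F}] \le e^{\lambda^2 (b-a)^2/8}$. Applied to $D_k$ conditional on $(X_1, \dots, X_{k-1})$, this gives $\Ex[e^{\lambda D_k} \mid X_1, \dots, X_{k-1}] \le e^{\lambda^2 c_k^2/8}$. Iterating via the tower property, $\Ex[e^{\lambda(f(\mathbf{X}) - \mu)}] \le e^{\lambda^2 \sum_{k=1}^n c_k^2/8}$. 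A Markov/Chernoff step with the optimal choice $\lambda = 4t/\sum c_k^2$ then yields the one-sided bound
\begin{align*}
    \Pr[f(\mathbf{X}) - \mu \ge t] &\le e^{-2t^2/\sum_{k=1}^n c_k^2} \enspace,
\end{align*}
and the symmetric application to $-f$ gives the other one-sided bound; a union bound combines them into the two-sided statement with the factor $2$.

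The main obstacle is the bounded-differences step for $D_k$: one must argue carefully that the bound on $f$ under a single-coordinate change transfers from a deterministic pointwise bound to a bound on the conditional expectation, and this is exactly where independence of the $X_k$ (needed so that conditioning on $(X_1, \dots, X_{k-1})$ does not alter the distribution of $(X_{k+1}, \dots, X_n)$) is used. Everything else is a routine application of Hoeffding's lemma and the Chernoff method.
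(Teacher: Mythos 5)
Your proposal is correct: the Doob-martingale decomposition, the bound $U_k - L_k \le c_k$ on the conditional increments via independence, Hoeffding's lemma for the conditional moment generating function, and the final Chernoff step constitute the standard proof of McDiarmid's inequality, and this is precisely the argument behind the result the paper cites from \cite{McDiarmid98} (the paper itself states the lemma without proof, as an auxiliary concentration bound). No gaps; the one delicate point you correctly flag — that independence is what lets the pointwise bounded-differences hypothesis pass to conditional expectations — is handled properly.
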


\subsection{Bounding the number of ones in a random vector}
\label{app:NumberOfOnes}

In our analysis, we will also use the following simple (and well-known) auxiliary claim.

\begin{claim}
\label{claim:NumberOfOnes}
Let $x \in \{0,1\}^n$ be chosen uniformly at random. Then with probability at least $e^{-n/16}$ the vector $x$ contains more than $\frac14n$ ones.
\end{claim}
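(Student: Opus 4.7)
The statement should read ``with probability at least $1-e^{-n/16}$'', matching its usage elsewhere in the paper (e.g.\ in the proof of \cref{lemma:output}); with the literal statement $e^{-n/16}$ is so small that the claim is essentially vacuous, and the proof strategy outlined below actually establishes the stronger (and intended) bound.

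The plan is a straight application of the multiplicative Chernoff bound. Let $X_i$ be the indicator variable that the $i$-th coordinate of $x$ equals $1$, so that $X_1, \dots, X_n$ are i.i.d.\ $\mathrm{Bernoulli}(1/2)$ and $X := \sum_{i=1}^n X_i$ counts the number of ones in $x$. Then $\mu := \Ex[X] = n/2$, and the event ``$x$ contains at most $\frac14 n$ ones'' is exactly the event $X \le (1 - \tfrac12)\mu$.

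Next, I would invoke the standard lower-tail Chernoff bound $\Pr[X \le (1-\eta)\mu] \le e^{-\eta^2 \mu/2}$ with $\eta = 1/2$. Plugging in $\mu = n/2$ gives $\Pr[X \le n/4] \le e^{-(1/4)(n/2)/2} = e^{-n/16}$, so that $\Pr[X > n/4] \ge 1 - e^{-n/16}$, which is the intended conclusion. Equivalently, one could apply \cref{lemma:McDiarmid} to $f(X_1,\dots,X_n) = \sum_i X_i$ with $c_k = 1$ and $t = n/4$, obtaining $\Pr[X \le n/4] \le e^{-2(n/4)^2/n} = e^{-n/8}$, which is a weaker but still sufficient bound; however, Chernoff's multiplicative form gives precisely the exponent $n/16$ that appears throughout the rest of the paper, so it is the cleaner choice.

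There is no real obstacle here: the only ``choice'' is which tail inequality to use, and the multiplicative Chernoff bound delivers the constant $\tfrac{1}{16}$ exactly. I would therefore keep the proof to two or three lines. One minor bookkeeping remark worth including is that the claim is often applied after conditioning on some event defined over disjoint coordinates of $x$ (as in the proof of \cref{lemma:output}); since $X_1, \dots, X_n$ are independent, the same bound transfers verbatim to any sub-collection of coordinates that remain uniformly distributed after conditioning, so the claim is used in that slightly more general form without further comment.
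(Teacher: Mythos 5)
Your proof is correct and matches the paper's own argument: both apply the multiplicative Chernoff lower-tail bound $\Pr[Y \le (1-\eta)\mu] \le e^{-\eta^2\mu/2}$ with $\eta = \tfrac12$ and $\mu = n/2$ to obtain $e^{-n/16}$, and you are right that the claim as written should say ``with probability at least $1 - e^{-n/16}$'' --- the paper's own proof only establishes the tail bound $\Pr[Y \le \tfrac14 n] \le e^{-n/16}$ and the stated probability is a typo. One small slip in your side remark: McDiarmid with $t = n/4$, $c_k = 1$ gives $e^{-2(n/4)^2/n} = e^{-n/8}$, which is a \emph{stronger} (smaller) tail bound than $e^{-n/16}$, not a weaker one; your reason for preferring the Chernoff form (it reproduces the exponent $n/16$ used elsewhere in the paper) still stands.
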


\begin{proof}
Let $x \in \{0,1\}^n$ be chosen uniformly at random. Let $Y$ denote the random variable for the number of ones in $x$. We have $\Ex[Y] = \frac12 n$ and using Chernoff bound we obtain
\begin{align*}
    \Pr[Y \le \tfrac14 n] &= \Pr[Y \le \tfrac12 \cdot \Ex[Y]] \le e^{-\Ex[Y]/8} = e^{-n/16}. \qedhere
\end{align*}
\end{proof}

   




\end{document}